\definecolor{darkgreen}{rgb}{0, .5, 0}
\theoremstyle{plain} 
\newtheorem{thrm}{Theorem}[section] 
\newtheorem{lem}[thrm]{Lemma} 
\newtheorem{assumption}[thrm]{Assumption} 
\newtheorem{prop}[thrm]{Proposition} 
\theoremstyle{definition} 
\newtheorem{defn}{Definition}[section] 
\theoremstyle{remark} 
\newtheorem{oss}{Remark}[section]
\numberwithin{equation}{section}
\newcommand{\E}{\mathbb{E}} 
\newcommand{\p}{\mathbb{P}} 
\newcommand{\Q}{\mathbb{Q}} 
\newcommand{\FF}{\mathbb{F}} 
\newcommand{\R}{\mathbb{R}} 
\newcommand{\F}{\mathcal{F}} 
\newcommand{\cN}{\mathcal{N}}
\newcommand{\dd}{\mathrm d}
\author[1]{Alessandro Gnoatto}
\author[2]{Silvia Lavagnini}
\author[1]{Athena Picarelli}
\affil[1]{\textsc{Department of Economics, University of Verona, 37129 Verona, Italy}}
\affil[2]{\textsc{Department of Data Science and Analytics, BI Norwegian Business School, 0848 Oslo, Norway}}
\title{Deep Quadratic Hedging\footnote{Accepted Manuscript. Final version available at: \url{https://doi.org/10.1287/moor.2023.0213}}}
\begin{document}
\maketitle
	
\begin{abstract}

	
We propose a novel computational procedure for quadratic hedging in high-dimensional incomplete markets, covering mean-variance hedging and local risk minimization. Starting from the observation that both quadratic approaches can be treated from the point of view of backward stochastic differential equations (BSDEs), we (recursively) apply a deep learning-based BSDE solver to compute the entire optimal hedging strategies paths. This allows to overcome the curse of dimensionality, extending the scope of applicability of quadratic hedging in high dimension. We test our approach with a classical Heston model and with a multi-asset and multi-factor generalization of it, showing that this leads to high levels of accuracy.
\end{abstract}

\paragraph*{Keywords} Deep hedging; Quadratic hedging; Deep BSDE solver; Mean-variance hedging; Local risk minimization; Multidimensional Heston model.

\section{Introduction}
The problem of hedging a contingent claim is typically faced by a financial institution who sold a derivative contract to a customer and is confronted with a future liability. The liability is the payoff of the derivative at the terminal time and is represented by a positive random variable $H$. To find the optimal hedging strategy that replicates the contingent claim $H$ is a central problem in financial mathematics and operations research. It corresponds to finding a representation for $H$ in the form of a stochastic integral of the trading strategy with respect to a semimartingale \cite{harrisonPliska81}.  When such a representation holds for any possible choice of the contingent claim $H$, the market is said to be complete. The standard example in this case is the Black-Scholes market model \cite{blackScholes73}.

Completeness, however, is easily lost as soon as we consider even slight generalizations of the Black-Scholes market. For example, if we consider random coefficients due to stochastic volatility or stochastic interest rates, or if we allow for discontinuous price paths in the underlying security. In the last decades, this has motivated a multitude of alternative approaches for hedging in incomplete markets. Here we consider quadratic hedging in a pure diffusive setting, both in the form of mean-variance hedging, as introduced in \cite{bouleauLamberton89}, \cite{duffieRichardson91}, \cite{schweizer94}, and in the form of local risk minimization, as proposed in \cite{fs91}, \cite{foellmerSondermann86},  \cite{Schweizer91} and \cite{schweizer94}. For a survey on the two approaches, featuring also a complete list of early references, we refer to \cite{Schweizer01}. 

Both mean-variance hedging and local risk minimization have been studied and extended in different directions in the last decades. For stochastic volatility models, which are the focus of the present paper, we mention \cite{bgp2000}  and \cite{dmkr1995} for mean-variance hedging, and \cite{hps1992} for local risk minimization.  For the class of affine stochastic volatility models, we also refer to \cite{cerny2008} and \cite{kp2010}, and for recent results on mean-variance hedging we mention \cite{CziCer22}. Finally, approaches based on the theory of stochastic optimal control are treated in \cite{jbmss2012} and \cite{lim2004} for the mean-variance hedging case. We anticipate that the link with stochastic control theory will be very important in our approach as we shall see in the sequel.

Concerning the computation of quadratic hedging strategies, we mention the works of \cite{hps2001} and \cite{heath2001numerical}: here mean-variance hedging and  local risk minimization are compared in the context of certain stochastic volatility models. In particular, both approaches are studied numerically in a Markovian framework by partial differential equation (PDE) techniques in the form of finite difference methods. However, it is known that such numerical methods for PDEs suffer from the curse of dimensionality. Hence it seems problematic to apply quadratic hedging to contingent claims that depend on a high number of risk factors. This is the problem that we address with the present paper: namely, we propose a computational approach which makes the implementation of quadratic hedging feasible even when the number of underlying risk factors is large. 

Our strategy is based on two observations. First of all, both quadratic hedging approaches can be treated from the point of view of backward stochastic differential equations (BSDEs). This link is particularly evident for the mean-variance hedging, due to the very formulation of the problem as a stochastic control problem of linear-quadratic type. Here the solution involves a system of two BSDEs that need to be solved sequentially. On the other hand, the link with BSDEs is not immediately clear in the original definition of local risk minimizing strategy. However, it is known that finding local risk minimizing strategies is equivalent to finding the so-called  F{\"o}llmer-Schweizer decomposition of the discounted payoff, \textcolor{black}{ see \cite{fs91}}. This in turn involves {\color{blue}solving} a linear BSDE. In summary, both mean-variance hedging and local risk minimization can be treated from the point of view of BSDEs.

The second observation is that high-dimensional BSDEs (or, equivalently, the corresponding PDEs associated via the Feynman-Kac formula) can be efficiently solved by means of deep learning methods. In recent years, the introduction of advanced and highly parallelized hardware, in particular graphic processing units, have motivated, among other factors, an increasing relevance of statistical learning methods for high-dimensional problems. In the context of numerical methods for high-dimensional PDEs, this has lead in a very short time to a large stream of new deep learning-based methods. {\color{blue}In general, these approaches consist in approximating a specific quantity related to the solution function by deep artificial neural networks (ANNs)}.

The works in \cite{weinar2017} and \cite{Han2018} are of particular relevance for our task. Here the starting point is to solve a PDE, which is transformed via the Feynman-Kac formula to the associated BSDE. The solution of the BSDE is then viewed as the minimizer of a stochastic control problem involving a quadratic loss. First, the BSDE is discretized forward in time via a standard Euler scheme. Then the controls of the BSDE are parametrized at each point in time  by an ANN, hence introducing a family of ANNs. The initial condition and the parameters of the ANNs are optimized by minimizing the expected squared distance between the known terminal condition and the terminal value of the discretized BSDE.

In view of the two observations above, our procedure consists in expressing both quadratic hedging approaches by means of the associated BSDEs. Successively, we apply the deep BSDE solver of \cite{weinar2017} in order to compute all the quantities of interest in a diffusive setting of arbitrary dimension. As a running example, we consider a multi-asset and multi-factor version of the Heston model \cite{heston1993}. On the one hand, this allows to validate our procedure in the one-dimensional setting against well established semi-explicit solutions, notably the results of \cite{cerny2008} and \cite{heath2001numerical}. On the other hand, this hints at the possibility of applying our procedure to the more broad class of diffusive stochastic volatility models. 

In fact, the idea of parametrizing hedging strategies via neural networks is not new. The survey in \cite{Ruf2020} lists indeed around 150 published papers on the topic, and the amount of related literature has been growing even further as a consequence of the recent increased popularity of machine learning in quantitative finance and financial engineering. Similar approaches to the one of \cite{weinar2017} are, for example, \cite{beck2019machine} and \cite{hure2020deep}. Important examples are also \cite{bgtw2019}, who perform hedging by using the expected shortfall as a performance measure in a market with frictions, and \cite{htz2021}, which extends the previous approach to rough volatility models. In the context of equity-linked life-insurance contracts pricing in a model with jumps, we also mention \cite{barigou2022}. We contribute to the literature by showing that deep learning-based methods extend the scope of applicability of quadratic hedging to higher dimensions. This can be interesting from the point of view of institutions that employ financial markets to manage their risks, but also from the financial engineering researchers point of view. Our numerical results show a relative error in pricing the contingent claim roughly below $1\%$ for most of the experiments.

We point out that the specific choice of the deep BSDE solver of \cite{weinar2017} is purely conventional, and we do not exclude that other solvers could be employed in the same context. This choice does however restrict the numerical study of the paper to Markovian models only, while the theoretical treatment of deep quadratic hedging that we provide is more general. Since several deep learning methods have been recently proposed for dealing also with non-Markovian models \textcolor{blue}{and path-dependent payoffs, such as \cite{bfz2024}, \cite{flz2023}, \cite{glp2023} and  \cite{hkt2024}}, we believe that our choice is not a restriction to the potential of the methodology. The solver proposed by \cite{weinar2017} fits well with the multidimensional Heston model considered. Hence, more generally, we can say that the concrete mathematical structure of the model of choice will determine a certain variation of our approach in relation to a certain variation of the solver to be employed. See Section \ref{sec:DeepQuadraticHedging} for a deeper discussion.

The outline of the paper is the following: in Section \ref{setting} we introduce the probabilistic setting and a general market model. To make the discussion self-contained, we devote Section \ref{sec:MeanVarTheory} and \ref{sec:LocalRiskTheory} to presenting respectively mean-variance hedging and local risk minimization, with an emphasis on the link with the BSDEs that we intend to solve numerically. Section \ref{sec:multiHeston} introduces the multi-factor and multi-asset version of the Heston model that we use in the numerical experiments, and contains some technical proofs that are needed in order to ensure the well-posedness of the mean-variance hedging strategy. In Section \ref{sec:DeepQuadraticHedging} we present the methodology: we first provide a self-contained introduction to the deep BSDE solver in Section \ref{sec:solver}, and then we apply it in Sections \ref{sec:deepMV} and \ref{sec:deepLRM} to mean-variance hedging and local risk minimization, respectively.  Finally, Section \ref{sec:numerics} is devoted to the numerical experiments, and Section \ref{sec:concl} concludes the paper with some remarks.

\section{Setting and main assumptions}\label{setting}
Let $(\Omega, \mathcal F,\FF,\p)$ be a complete filtered probability space, with $\FF =(\F_t)_{t\geq 0}$ satisfying the usual hypotheses.  For $m\geq 1$ and $d\geq 0$, we consider an $m$-dimensional and a $d$-dimensional Brownian motion 
$W_t = (W^1_t,\ldots, W^m_t)^\top$ and $B_t = (B^1_t,\ldots, B^d_t)^\top$, where all components are assumed to be mutually independent.
Throughout the paper we will also assume that the filtration $\FF$ is the standard $\p$-augmentation of the filtration generated by both $W$ and $B$. In particular, the components of $W$ drive the diffusion of the tradable assets, while $B$ models the incompleteness of the market. When $d=0$, we use the convention that there is no Brownian motion $B$ and the market is therefore complete.

Let now $q\in \mathbb N\setminus \{0\}$.
We introduce the following notation:
\begin{enumerate}
\item $L^2_{\F_T}(\Omega;\R^q)$ denotes the space of all $\F_T$-measurable $\R^q$-valued random variables $X:\Omega\to\R^q$ such that $$\left\| X\right\|_2:=\left(\E\,\left[\left|X\right|^2\right]\right)^{1/2}<\infty;$$

\item $L^p_{\FF}([0,T];\R^q)$, for $1\leq p<\infty$, denotes the space of all $\FF$-adapted $\R^q$-valued processes $(X_t)_{t\in [0,T]}$ such that $$\left\| X\right\|_p:=\left(\E\,\left[\int^T_0 \left|X_t\right|^p \dd t\right]\right)^{1/p}<\infty;$$

\item ${C}([0,T];\R^q)$ denotes the space of all $\R^q$-valued continuous functions on $[0, T]$;

\item $L^{2}_{\FF}(\Omega;{C}([0,T];\R^q))$ denotes the space of all $\FF$-adapted $\R^q$-valued processes $(X_t)_{t\in [0,T]}$ with $\p$-a.s. continuous sample paths such that
\begin{equation*}
	\E\left[\sup_{t \in [0, T]} \left|X_t\right|^2\right] < \infty;
\end{equation*}

\item $L^{\infty}_{\FF}(\Omega;{C}([0,T];\R^q))$ denotes the space of all $\FF$-adapted $\R^q$-valued essentially bounded processes with $\p$-a.s. continuous sample paths. 
\end{enumerate}


\subsection{The market model}\label{sec:MarketModel}
We consider a financial market with one cash account and $m$ stocks. We denote respectively by $S^0_t$ and $S^i_t$ the price at time $t$ of the cash account and of the $i$-th stock, with $i=1,\ldots, m$. We assume that their dynamics are given by the following stochastic differential equations (SDEs)
 \begin{equation}\label{eq:SDE S}
	\begin{cases}
		 \dd S^0_t = S^0_t r_t  \dd t,  & \qquad S^0_0 = 1,\\
		 \dd S^i_t =S^i_t \left[\mu^i_t \dd t + \sum^m_{j=1} \sigma^{ij}_t \dd W^j_t\right],  & \qquad S^i_0 = s_i, \quad  i=1,\ldots,m,
			\end{cases}
\end{equation}
where the processes $r$, $\mu^i$, $\sigma^{ij}$ ($i,j =1,\ldots,m$) in $\R$ are $\FF$-adapted and are such that existence and uniqueness for solutions to \eqref{eq:SDE S} are guaranteed. We also denote by $\mu:=(\mu^1, \ldots, \mu^m)^\top$ the drift vector  and by  $\sigma:=(\sigma^{ij})_{i,j=1}^m$ the volatility matrix, which is assumed to be invertible $\dd\mathbb{P}\otimes \dd t$-a.s.. We set the market price of risk 
\begin{align}
\label{eq:mktPriceOfRisk}
\phi_t:= \sigma^{-1}_t (\mu_t - r_t \mathbbm 1),
\end{align}
where $\mathbbm 1\in \R^m$ denotes the vector of $1$'s. We assume that the short rate of interest $r$ is bounded from below $\dd \p \otimes \dd t$-a.s..

We remark that the assumption of $\FF$-adapted processes in \eqref{eq:SDE S} covers the case where the coefficients are measurable functions of a set of factors evolving according to additional SDEs: this is the situation which is typically encountered in the context of stochastic volatility models.

Throughout the paper we mainly work with discounted values, and we use the symbol $\tilde{}$ to characterize discounted quantities. In particular, we denote by $\tilde S^i_t:= {S^i_t}/{S^0_t}$  the discounted stock prices. Clearly, one has 
\begin{align}
\label{eq:tildeS}
\quad \dd \tilde S^i_t =\tilde S^i_t \left[(\mu^i_t -r_t) \dd t + \sum^m_{j=1} \sigma^{ij}_t \dd W^j_t\right],  \qquad \tilde S^i_0 = s_i, \quad  i=1,\ldots,m,
\end{align}
and $\tilde S^0_t \equiv 1$. We write $\tilde S_t = (\tilde S^1_t,\ldots, \tilde S^m_t)^{\top}$.

Given the assets on the market, an agent may trade on them by constructing a trading strategy. We introduce the following quantities: 
\begin{enumerate}
\item Let $\xi^i_t\in \mathbb{R}$ be the number of shares of the $i$-th stock  owned by the agent at time $t$, for $i=1,\ldots, m$, and set $\xi_t:=(\xi^1_t,\ldots, \xi^m_t)^\top$.
\item Let $\psi_t\in \R$ denote the units of the cash account in the agent portfolio at time $t$.
\end{enumerate}

\begin{assumption}
\label{ass:strategy}	 
We assume that
\begin{enumerate}
\item $(\xi_t)_{t\geq 0}$ is an $\FF$-predictable process in $\R^{m}$;
\item 
$ \mathbb{E}\left[\int_0^T \left|\xi_s^\top \operatorname{diag}(\tilde{S}_s)\sigma_s\right|^2 \dd s+\left(\int_0^T\left|\xi^\top_s \operatorname{diag}(\tilde{S}_s) (\mu_s - r_s \mathbbm 1)\right|\dd s\right)^2\right]<\infty,$\\
meaning that the stochastic integral $\int_0^T \xi^\top_s \dd \tilde{S}_s$ is well defined in the space of semimartingales. Notice that this comprises the following condition: the process $(\sigma^\top_t \operatorname{diag}(\tilde{S}_t) \xi_t)_{t\geq 0}$ belongs to  $L^2_{\FF}([0,T];\R^m)$;
\item $(\psi_t)_{t\geq 0}$ is an $\FF$-adapted process in $\R$.
\end{enumerate}
\end{assumption}

\begin{defn}
We call any couple of processes $(\xi, \psi)=(\xi_t, \psi_t)_{t\in [0,T]}$  satisfying the above assumptions a  \emph{trading strategy} and we denote by $\mathcal{A}$ the set of all such strategies.
\end{defn}

The (discounted) value process associated to a trading strategy $(\xi, \psi)$ is given by
\begin{equation}\label{eq:defVtilde}
\tilde{V}_t  =  \sum^m_{i=1} \xi^i_t  \tilde{S}^i_t + \psi_t \tilde{S}^0_t=  \sum^m_{i=1} \xi^i_t  \tilde{S}^i_t + \psi_t.
\end{equation}
Notice that, for trading strategies satisfying Assumption \ref{ass:strategy}, the SDE \eqref{eq:defVtilde} admits a unique strong solution which is a square integrable martingale.

\subsection{Self-financing property and financing costs}
A fundamental concept to consider when trading is the self-financing property of a strategy.
\begin{defn}
A trading strategy $(\xi,\psi)\in\mathcal{A}$ is \textit{self-financing} if one has 
\begin{equation*}
	\dd \tilde V_t =  \sum^m_{i=1} \xi^i_t \dd \tilde S^i_t,  \quad \mbox{or} \quad \tilde V_t = y + \int^t_0 \sum^m_{i=1} \xi^i_s \dd \tilde S^i_s,
\end{equation*}
for a given initial wealth $\tilde V_0=y\ge0$.

\end{defn}
It follows that, if the trading strategy is self-financing, then $\tilde V$ solves the following SDE:
\begin{equation}\label{eq:xdyn}
\begin{cases}
\dd  \tilde V_t =  \left[\sum^m_{i=1} \left(\mu^i_t -r_t\right)  \xi^i_t \tilde{S}^i_t \right] \dd t  + \sum^m_{i=1} \xi^i_t \tilde{S}^i_t \sum^m_{j=1} \sigma^{ij}_t  \dd W^j_t,\\
\tilde V_0 = y.
\end{cases}
\end{equation}
In absence of the self-financing property, a strategy may generate inflows or outflows of cash. We introduce the following process to monitor such cashflows over time.
\begin{defn}
The \emph{(discounted) cumulative cost process} $\tilde{C}$ at time $t$ of a strategy $(\xi,\psi)\in \mathcal A$ is defined by
\begin{align}
\label{eq:CostProcess}
\tilde{C}_t := \tilde V_t - \int^t_0 \sum^m_{i=1} \xi^i_s \dd \tilde S^i_s.
\end{align}
We say that the strategy is mean-self-financing if the process $\tilde{C}$ is a square integrable martingale.
\end{defn}
Observe that according with this definition, the cumulative cost for a self-financing strategy coincides with the initial wealth $y$, as it should be.

In the present paper, we are interested in pricing and hedging a (discounted) contingent claim of European type, represented by a positive random variable $\tilde{H}\in L^2_{\mathcal{F}_T}(\Omega;\R)$. The random variable $\tilde H$ is the unknown payoff at time $T$ that is received by the holder of the contract, subject to a certain set of market conditions. The price paid by the buyer of the contract allows the seller to set up a hedging portfolio to possibly cover the contractual liability at time $T$. This rather natural approach is at the basis of the study of dynamic trading strategies of the previously announced form $(\xi,\psi)$.

The ideal situation is reached when the seller/hedger of the contract is able to guarantee the condition $\tilde V_T=\tilde H$ $\p$-a.s. by means of an admissible and self-financing trading strategy. We say in this case that the claim is attainable and, if all contingent claims are attainable, then the market is said to be complete. In our probabilistic setting, anytime that $d>0$ the number of Brownian motions is larger than the number of risky assets that are available for trading. This is a well-known situation where the market is incomplete. Incompleteness means that it will not be possible for some claims to construct an admissible self-financing strategy such that $\tilde V_T=\tilde H$ $\p$-a.s..

Several approaches for pricing and hedging have been proposed for incomplete markets. We can name, e.g., utility indifference pricing \cite{bookIndiffPric} or superhedging \cite{Kramkov1996}. In the present paper, we focus on the class of quadratic hedging approaches. Within this category, we study both mean-variance hedging and local risk minimization. As perfectly illustrated in \cite{Schweizer01}, such approaches are defined by relaxing the structure of the set of strategies over which one optimizes investment decisions. In particular:
\begin{itemize}
\item If we insist on the fact that strategies should be self-financing, while accepting a tracking error at time $T$, then we are employing the mean-variance hedging criterion;
\item If instead we insist on the idea that $\tilde V_T=\tilde H$ $\p$-a.s., while accepting that strategies will fail to be self-financing, then we are considering the local risk minimization approach.
\end{itemize}

We shall provide a brief overview of these two hedging approaches. We refer to \cite{Schweizer01} and references therein for a more formal treatment. In particular we will focus on presenting the link between these hedging approaches and certain BSDEs we would like to solve by means of deep learning methods. Here we consider a single cashflow paid/received at the terminal time $T$. The concepts that we present can however be extended to the case of streams of cashflows over the time interval $[0,T]$, see \cite{Delongbook} and in particular \cite{Schweizer08}.

\section{Mean-variance hedging}\label{sec:MeanVarTheory}

The mean-variance hedging approach corresponds to the following stochastic optimal control problem
\begin{equation}
	\label{hedgingproblemH}
 		 \underset{(\xi,\psi)\in \mathcal{A}_{\text{mv}}}\inf \E\left[\left.\left(\tilde{V}_T-\tilde{H}\right)^2\right| \F_t\right], \ 0\leq t\leq T,\qquad 
 		 \mbox{subject to \eqref{eq:xdyn},}
\end{equation}
where $\mathcal{A}_{\text{mv}}$ denotes the set of admissible trading strategies for such a problem (see Definition \ref{def:Amv} below). 
Following the approach of \cite{lim2004}, the solution to this problem can be linked to the following  system of two BSDEs\\
\begin{align}
	&\label{eq:StochasticRiccati}
	\begin{cases}
		\dd L_t = \left( |\phi_t|^2 L_t + 2\phi^\top_t \Lambda_{1,t} +\frac{\Lambda^\top_{1,t}\Lambda_{1,t}}{L_t}\right)\dd t +\Lambda^\top_{1,t} \dd W_t+\Lambda^\top_{2,t} \dd B_t,\\
		L_T = 1,\\
		L_t >0,
	\end{cases}\\
&\mbox{and}\notag \\
&\label{hedgingproblemX}
	\begin{cases}
		\dd \tilde{X}^{\mathrm{mv}}_t = \frac{1}{S^0_t}\left(\phi^\top_t \eta^{\mathrm{mv}}_{1,t}  -\frac{\Lambda^\top_{2,t}\eta_{2,t}^{\mathrm{mv}}}{L_t}\right)\dd t +\frac{1}{S^0_t}\eta^{\mathrm{mv}, \top}_{1,t} \dd W_t+\frac{1}{S^0_t}\eta^{\mathrm{mv}, \top}_{2,t} \dd B_t,\\
		\tilde{X}^{\mathrm{mv}}_T = \tilde{H},
	\end{cases}
\end{align}
where $\Lambda =(\Lambda_1, \Lambda_2)\in L^2_{\FF}\left([0, T];\R^{m+d}\right)$, $\eta^{\mathrm{mv}} = (\eta_1^{\mathrm{mv}}, \eta_2^{\mathrm{mv}})\in L^2_{\FF}\left([0, T];\R^{m+d}\right)$ and $\phi$ is the market price of risk defined in equation \eqref{eq:mktPriceOfRisk}. 
\\
The first BSDE \eqref{eq:StochasticRiccati} is called \textit{backward stochastic Riccati equation} (BSRE): this is a quadratic BSDE for which proving existence and uniqueness of a solution is not trivial. For example, \cite{lim2004} provides an existence and uniqueness result based on the work of \cite{kob2000} covering, e.g., the Hull-White stochastic volatility case. However, his results do not cover, e.g., the Heston model that we consider for the numerical experiments in Section \ref{sec:numerics}, because of the unboundedness of the coefficients. In Section \ref{sec:existenceUniqueness} we shall discuss existence and uniqueness for the particular BSRE that we consider, while for now we assume existence and uniqueness of a solution to \eqref{eq:StochasticRiccati}. 

The set $\mathcal{A}_{\mathrm{mv}}$ in \eqref{hedgingproblemH} is the set of admissible strategies defined as follows:

\begin{defn}\label{def:Amv} A trading strategy $(\xi,\psi)\in \mathcal{A}$ is admissible for the mean-variance hedging problem \eqref{hedgingproblemH}, if it is self-financing and  if the quantity $L_{\tau_k \wedge T}(\tilde{X}^{\mathrm{mv}}_{\tau_k \wedge T}-\tilde{V}_{\tau_k \wedge T})$ is uniformly integrable for any sequence of $\mathbb{F}$-stopping times $\tau_k \nearrow \infty$ as $k\to\infty$.
\end{defn}

We state now in a more formal way the link between the stochastic control problem \eqref{hedgingproblemH} and the system of BSDEs \eqref{eq:StochasticRiccati}--\eqref{hedgingproblemX}  via the following result, which closely follows \cite[Proposition 3.3]{lim2004}.

\begin{prop}\label{prop:Lim200433}
If the BSDEs \eqref{eq:StochasticRiccati} and \eqref{hedgingproblemX} admit unique solutions $(L,\Lambda)\in L^{\infty}_{\FF}(\Omega;{C}([0,T];\R)) \times L^2_{\mathbb{F}}\left([0, T];\R^{m+d}\right)$ and $(\tilde{X}^{\mathrm{mv}},\eta^{\mathrm{mv}})\in L^{2}_{\FF}(\Omega;{C}([0,T];\R^m)) \times L^2_{\mathbb{F}}\left([0, T];\R^{m+d}\right)$, then 
\begin{equation}\label{eq:ximv}
\xi^{\mathrm{mv}}_t=\operatorname{diag}(\tilde S_t)^{-1}\left(\left(\sigma_t^{-1}\right)^{\top}\left[\phi_t+\frac{\Lambda_{1,t}}{L_t}\right](\tilde X^{\mathrm{mv}}_t- \tilde V^{\mathrm{mv}}_t)+\left(\sigma_t^{-1}\right)^{\top} \eta_{1,t}^{\mathrm{mv}}\right)
\end{equation}
is the unique optimal control for the stochastic control problem \eqref{hedgingproblemH}, where $\tilde V^{\mathrm{mv}}$ is the solution of the SDE \eqref{eq:xdyn} with $\xi = \xi^{\mathrm{mv}}$. 
\end{prop}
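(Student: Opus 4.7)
The plan is a verification argument by completion of squares, classical for linear--quadratic stochastic control and tailored to the BSDE system \eqref{eq:StochasticRiccati}--\eqref{hedgingproblemX}. The key auxiliary process is $Z_t := L_t\bigl(\tilde X^{\mathrm{mv}}_t - \tilde V_t\bigr)^2$, where $\tilde V$ is the wealth generated by an arbitrary admissible $(\xi,\psi)\in\mathcal{A}_{\mathrm{mv}}$. Because $L_T=1$ and $\tilde X^{\mathrm{mv}}_T=\tilde H$, one has $Z_T = (\tilde H-\tilde V_T)^2$, so $\E[Z_T]$ is precisely the cost functional in \eqref{hedgingproblemH} at $s=0$ (the general conditioning case is identical by shifting the initial time).

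First, I would compute the dynamics of the difference $A_t := \tilde X^{\mathrm{mv}}_t - \tilde V_t$ by subtracting \eqref{eq:xdyn} from \eqref{hedgingproblemX}: its $W$-diffusion is $\eta^{\mathrm{mv}}_{1,t}-\sigma^{\top}_t\operatorname{diag}(\tilde S_t)\xi_t$ and its $B$-diffusion is $\eta^{\mathrm{mv}}_{2,t}$. It\^o's formula applied to $Z_t=L_t A_t^2$, using the BSRE \eqref{eq:StochasticRiccati}, then yields a continuous local martingale $M_t$ plus a drift that, after collecting the contributions of $|\phi|^2 L + 2\phi^\top\Lambda_1 + \Lambda_1^\top\Lambda_1/L$ from $dL_t$, the cross-variation $2A_t\,d\langle L,A\rangle_t$, and the quadratic variation $L_t\,d\langle A,A\rangle_t$, can be rearranged into the form
\begin{equation*}
L_t\,\Bigl|\sigma^{\top}_t\operatorname{diag}(\tilde S_t)\xi_t - \eta^{\mathrm{mv}}_{1,t} - \bigl[\phi_t + L_t^{-1}\Lambda_{1,t}\bigr] A_t\Bigr|^2 - R_t,
\end{equation*}
where $R_t$ collects all terms independent of $\xi$. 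The specific driver in \eqref{eq:StochasticRiccati} is engineered so that this completion of squares closes, and the square vanishes precisely at the candidate strategy \eqref{eq:ximv}.

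Next, I would localize $M$ via the stopping times $\tau_k$ of Definition \ref{def:Amv}, integrate $dZ$ on $[0,\tau_k\wedge T]$ and take expectations to kill the local-martingale contribution. Combining the uniform integrability of $L_{\tau_k\wedge T}A_{\tau_k\wedge T}$ (the admissibility condition), essential boundedness and strict positivity of $L$, and the $L^2$-integrability of $\Lambda$, $\eta^{\mathrm{mv}}$ and $\sigma^{\top}\operatorname{diag}(\tilde S)\xi$ furnished by Assumption \ref{ass:strategy}, one passes to the limit $k\to\infty$ and obtains
\begin{equation*}
\E\bigl[(\tilde H-\tilde V_T)^2\bigr] = L_0\bigl(\tilde X^{\mathrm{mv}}_0-y\bigr)^2 + \E\!\int_0^T \! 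L_t\Bigl|\sigma^{\top}_t\operatorname{diag}(\tilde S_t)\xi_t - \eta^{\mathrm{mv}}_{1,t} - \bigl[\phi_t + L_t^{-1}\Lambda_{1,t}\bigr] A_t\Bigr|^2 \dd t - \E\!\int_0^T R_t\,\dd t.
\end{equation*}
Since $R_t$ and $L_0(\tilde X^{\mathrm{mv}}_0-y)^2$ do not depend on $\xi$ and $L_t>0$, the right-hand side is minimized $\dd\p\otimes\dd t$-a.s.\ exactly when the integrand vanishes; invertibility of $\sigma_t$ and $\operatorname{diag}(\tilde S_t)$ then pins down $\xi=\xi^{\mathrm{mv}}$, and strict convexity of the quadratic gives uniqueness.

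The main obstacle I expect is the drift bookkeeping in the It\^o step: verifying that the three Riccati drift terms $|\phi|^2 L$, $2\phi^\top\Lambda_1$, $\Lambda_1^\top\Lambda_1/L$, together with the cross-variation $2A_t\,d\langle L,A\rangle_t$ and the quadratic variation $L_t\,d\langle A,A\rangle_t$, combine cleanly into the stated square plus an $\xi$-free remainder. A secondary technical issue is the martingale passage to the limit, but the admissibility condition in Definition \ref{def:Amv} is precisely tailored for this, so once the algebraic identity is in hand the rest is a standard localization argument.
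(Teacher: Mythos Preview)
Your approach is essentially the paper's own: apply It\^o to $L_t(\tilde X^{\mathrm{mv}}_t-\tilde V_t)^2$, complete the square in $\sigma_t^\top\operatorname{diag}(\tilde S_t)\xi_t$, localize, and read off the minimizer. Two points deserve attention. First, the remainder you call $-R_t$ is in fact the single non-negative term $\frac{L_t}{(S^0_t)^2}\,\eta_{2,t}^{\mathrm{mv},\top}\eta_{2,t}^{\mathrm{mv}}$; its sign is what makes the monotone-convergence passage on the right-hand side work after localization, so it should be identified explicitly rather than left as an unspecified $\xi$-free residual. Second, and more substantively, you stop once the pointwise minimizer $\xi^{\mathrm{mv}}$ has been identified, but you have not shown that $\xi^{\mathrm{mv}}\in\mathcal{A}_{\mathrm{mv}}$. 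The paper closes this loop by substituting $\xi^{\mathrm{mv}}$ back into the stopped identity, which collapses the square and leaves
\[
\E\!\left[L_{t\wedge\tau_k}\bigl(\tilde X^{\mathrm{mv}}_{t\wedge\tau_k}-\tilde V^{\mathrm{mv}}_{t\wedge\tau_k}\bigr)^2\right]
= L_0\bigl(\tilde X^{\mathrm{mv}}_0-\tilde V^{\mathrm{mv}}_0\bigr)^2
+ \E\!\left[\int_0^{t\wedge\tau_k}\frac{L_s}{(S^0_s)^2}\,\eta_{2,s}^{\mathrm{mv},\top}\eta_{2,s}^{\mathrm{mv}}\,\dd s\right],
\]
uniformly bounded thanks to $L\in L^\infty$, $r$ bounded below, and $\eta_2^{\mathrm{mv}}\in L^2_{\FF}$; this yields the required uniform integrability. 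Without this step the argument only shows that $\xi^{\mathrm{mv}}$ would be optimal \emph{if} admissible.
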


\begin{proof}
See Appendix \ref{Proofprop}.
\end{proof}

Notice that the two BSDEs need to be solved sequentially: first, one solves \eqref{eq:StochasticRiccati} which provides the form of the processes $\Lambda_2$ and $L$ appearing in the right-hand side of \eqref{hedgingproblemX}.

 As shown in \cite{lim2004}, the solution to the stochastic Riccati equation characterizes a particular equivalent martingale measure, which is the \textit{variance optimal martingale measure}: this is the pricing measure implicit in the mean-variance hedging approach. More specifically, for $\nu \in L_{\FF}^{2}\left([0, T] ; \mathbb{R}^{d}\right)$, the author first defines the exponential local martingale
\begin{equation}
\label{changeQmv}
M_{\nu,t}=\exp \left\{-\int_{0}^{t} \phi_s^{\top} \dd W_s-\int_{0}^{t} \nu_s^{\top} \dd B_s-\frac{1}{2} \int_{0}^{t}\left(|\phi_s|^{2}+|\nu_s|^{2}\right) \dd s\right\}.
\end{equation}
If $M_{\nu}$ is a true martingale, one can introduce the parametrized family of measures $\dd \mathbb{Q}_{\nu}=M_{\nu,T} \dd \mathbb{P}$ and the variance optimal martingale measure is the measure whose Girsanov kernel for the Brownian motions $B$ solves the following optimization problem
\begin{align}
\label{def:vmm}
\min _{\nu \in L_{\mathbb{F}}^{2}\left([0, T] ; \mathbb{R}^{d}\right)} \E\left[\frac{M_{\nu,T}}{S^0_T}\right]^{2}.
\end{align}
The link between the variance optimal martingale measure and the stochastic Riccati equation is presented in Theorem 4.1 of \cite{lim2004}: let $\nu^{\text{mv}}$ be the solution to the minimization problem \eqref{def:vmm} so that $\mathbb{Q}_{{\text{mv}}}:= \mathbb{Q}_{\nu^{\text{mv}}}$ is the variance optimal martingale measure. If the stochastic Riccati equation \eqref{eq:StochasticRiccati} admits a solution, then 
\begin{align}
\label{eq:GirsanovKernelB}
\nu^{\text{mv}}=-\frac{\Lambda_2}{L}.
\end{align}

The interpretation of the second BSDE is the following: $\tilde X^{\mathrm{mv}}$ represents the dynamics of a portfolio in a fictiously extended financial market. The initial value of $\tilde X^{\mathrm{mv}}$ provides the contingent claim price in the mean-variance hedging approach, see Theorem 4.2 in \cite{lim2004} stating that
\begin{align*}
\tilde X_t^{\mathrm{mv}}=\mathbb{E}^{\mathbb{Q}_{{\text{mv}}}}\left[\left. \tilde H\right| \mathcal{F}_{t}\right].
\end{align*}
This concludes the self-contained introduction for the mean-variance hedging.

\section{Local risk minimisation}\label{sec:LocalRiskTheory}
As mentioned above, self-financing trading strategies have constant cost equal to the initial wealth $\tilde V_0=y$. Hedging a contingent claim with a self-financing strategy means that one can guarantee the payment of the claim at time $T$  simply by investing the initial amount $\tilde V_0$. Using a non-self-financing  strategy carries whereas the risk associated to the cost process $\tilde C$.  We introduce the \textit{risk process} associated to a trading strategy $(\xi,\psi)$ via the following
\begin{align}
\label{eq:riskProcess}
R_t(\xi,\psi) := \E\left[ \left.\left( \tilde{C}_T- \tilde{C}_t\right)^2\right|\F_t\right], \ 0\leq t\leq T.
\end{align}
In  the approach proposed by \cite{Schweizer91}, one aims at minimizing the risk process with respect to small perturbations of the trading strategy $(\xi,\psi)$. The concept of small perturbation requires the introduction of several notations that we skip in the present treatment. We limit ourselves to the following description: the idea is to introduce a measure for the increase of quadratic risk by measuring the relative variation of $R$ over time partitions (see \cite[Equation (1.3)]{Schweizer08}). Such measure of the increase of quadratic risk over perturbations is employed in Definition 1.5 in \cite{Schweizer08} to properly introduce the concept of \textit{local risk minimizing strategy}. For our purposes, we will rely on Theorem 1.6 in \cite{Schweizer08}, suitably reformulated for the case of a single final payoff $\tilde H$. Such result states the equivalence between local risk minimizing strategies as in the above mentioned Definition 1.5 and strategies that reach the final payoff, while being mean-self-financing. Moreover, the associated cost process must be a martingale strongly orthogonal\footnote{Let $M, N$ be two $\mathbb{P}$-local martingales. We say that $M$ and $N$ are strongly orthogonal if  $[M, N]\equiv 0$ up to some evanescent set, implying that $MN$ is a $\mathbb{P}$-local martingale.} to the martingale component of the discounted asset process $\tilde{S}$.

\begin{thrm}[Theorem 1.6 in \cite{Schweizer08}] The followings are equivalent:
\begin{enumerate}
\item $(\xi,\psi)$ is local risk minimizing;
\item $(\xi,\psi)$ is such that $\tilde{V}_T=\tilde{H}$, is mean-self-financing and the cost process \eqref{eq:CostProcess} is strongly orthogonal to the martingale component of $\tilde{S}$.
\end{enumerate}
\end{thrm}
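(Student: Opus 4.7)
The plan is to derive the equivalence by inspecting first-order optimality conditions for the conditional quadratic functional defining local risk minimization, and then to identify these conditions with mean-self-financing plus strong orthogonality via a Kunita--Watanabe type argument. First I would recall that local risk minimization, as introduced in \cite{Schweizer91}, is defined through the limit, along refining partitions $\pi$ of $[0,T]$, of normalized quadratic-cost increments: on each sub-interval $[t_i, t_{i+1}]$ one considers a small perturbation $\Delta = (\delta, \varepsilon)$ of $(\xi,\psi)$ supported on that sub-interval and constrained by $\tilde V_T^{\Delta}=\tilde H$ $\p$-a.s., and the strategy is locally risk minimizing if no admissible $\Delta$ decreases the conditional risk contribution attached to that sub-interval in the refinement limit.

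Next I would carry out the variational computation. Using \eqref{eq:CostProcess}, write $\tilde C_{t_{i+1}}^{\Delta}-\tilde C_{t_i}^{\Delta}$ as the unperturbed increment plus a piece linear in $\delta$ involving $\int_{t_i}^{t_{i+1}}\delta_s^{\top}\dd \tilde S_s$ and a piece controlled by $\varepsilon$, modulo the terminal constraint which determines $\varepsilon$ in terms of $\delta$ at $T$. Expanding $(\tilde C_{t_{i+1}}^{\Delta}-\tilde C_{t_i}^{\Delta})^2$ and conditioning on $\F_{t_i}$ yields a convex quadratic in the perturbation whose vanishing first variation produces two orthogonality relations. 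Varying $\varepsilon$ gives $\E[\tilde C_{t_{i+1}}-\tilde C_{t_i}\mid\F_{t_i}]=0$ for every sub-interval of every partition, which upgrades to the martingale property of $\tilde C$, i.e.\ mean-self-financing. Varying $\delta$ gives $\E[(\tilde C_{t_{i+1}}-\tilde C_{t_i})\cdot\int_{t_i}^{t_{i+1}}\delta_s^{\top}\dd\tilde S_s\mid\F_{t_i}]=0$ for every admissible $\delta$, which in the refinement limit is exactly $\langle \tilde C,\tilde S^i\rangle\equiv 0$ for each $i=1,\dots,m$, i.e.\ strong orthogonality of $\tilde C$ to the martingale component of $\tilde S$. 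The reverse implication is then immediate: because the functional is quadratic and convex in $\Delta$, the vanishing of the first variation is both necessary and sufficient for local optimality, so (2) implies (1).

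The main obstacle will be the technical machinery behind the small-perturbation formalism rather than the variational identity itself. One must specify precisely which $\Delta$ are admissible (terminal constraint, predictability and the integrability conditions in Assumption~\ref{ass:strategy}), justify interchanging the conditional expectations with the partition-refinement limit, and pass from sub-interval orthogonality along arbitrarily fine partitions to the global predictable-quadratic-covariation identity $\langle \tilde C,\tilde S^i\rangle\equiv 0$. This is where the argument in \cite{Schweizer08} does its real work; since the theorem is invoked verbatim from that reference, the cleanest route in our setting is to quote that partition-based approximation lemma and plug the variational identification above on top of it.
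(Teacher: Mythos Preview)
The paper does not prove this theorem. It is quoted verbatim from \cite{Schweizer08}, and immediately afterwards the paper declares that ``from now on, we will use the second item in the result above as definition of local risk minimizing strategy.'' So there is nothing to compare your proposal against: the paper's strategy is precisely to sidestep the partition-based definition of local risk minimization altogether, invoke the equivalence as a black box, and work exclusively with the characterization in item (2).

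Your sketch is broadly in the spirit of Schweizer's original argument, but a few points are imprecise. First, the constraint you impose on small perturbations is not quite right: in \cite{Schweizer91,Schweizer08} a small perturbation $\Delta=(\delta,\varepsilon)$ is required to satisfy $\delta$ bounded and $\delta_T=\varepsilon_T=0$ (so that the perturbed strategy still attains $\tilde H$ at $T$), not a free terminal constraint that ``determines $\varepsilon$ in terms of $\delta$.'' Second, the local risk minimization criterion is not simply the vanishing of a first variation of $R_t$ on each sub-interval; it is formulated through the \emph{risk quotient} $r^\pi(\xi,\psi,\Delta)$ and its $\liminf$ along refining partitions, and the passage from that asymptotic criterion to the two pointwise conditions (martingale property of $\tilde C$ and $\langle \tilde C, M\rangle=0$ for $M$ the martingale part of $\tilde S$) requires the structure condition on $\tilde S$ and a careful approximation argument---exactly the ``real work'' you acknowledge at the end. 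Since you ultimately propose to quote that lemma from \cite{Schweizer08} anyway, your proposal effectively reduces to the same citation the paper makes, just with some heuristic motivation layered on top.
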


From now on, we will use the second item in the result above as definition of local risk minimizing strategies. From \cite[Proposition 5.2]{Schweizer08}, one has that the existence of a locally risk-minimizing strategy $(\xi^{\text{lr}},\psi^{\text{lr}})$ is equivalent to the existence of the so-called F\"ollmer-Schweizer (FS) decomposition of the discounted payoff. To be self-contained, we report the full proof as it can be found in \cite{Schweizer08}.

\begin{thrm}[Proposition 5.2 in \cite{Schweizer08}]
The discounted payoff $\tilde H$ admits a local risk minimizing strategy $(\xi^{\mathrm{lr}},\psi^{\mathrm{lr}})$ if and only if $\tilde H$ admits the representation
\begin{equation}\label{eq:FSdec}
\tilde H = h_0 + \int^T_0 \zeta^\top_t \dd \tilde S_t + \mathcal H_T,
\end{equation}
for some $h_0\in L^2_{\F_0}(\Omega; \R)$, $\zeta$ satisfying Assumption \ref{ass:strategy}~(1--2) and $\mathcal H=(\mathcal H_t)_{t\in [0,T]}$  a right-continuous, square integrable martingale  strongly orthogonal to the martingale component of $\tilde S$, such that  $\mathcal H_0=0$.
\end{thrm}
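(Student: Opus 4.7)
The plan is to exploit the characterization of local risk minimizing strategies from the preceding Theorem 1.6 of \cite{Schweizer08}, and set up an explicit bijection between such strategies and decompositions of the form \eqref{eq:FSdec}.

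For the forward implication, I would start from a local risk minimizing $(\xi^{\text{lr}},\psi^{\text{lr}})$ and recall that, by the cited theorem, it satisfies $\tilde V_T=\tilde H$, it is mean-self-financing (so its cost process $\tilde C$ is a square integrable martingale), and $\tilde C$ is strongly orthogonal to the martingale component of $\tilde S$. From \eqref{eq:CostProcess} evaluated at $t=T$,
\begin{equation*}
\tilde H \;=\; \tilde V_T \;=\; \tilde C_T + \int_0^T (\xi^{\text{lr}}_s)^\top \dd \tilde S_s \;=\; \tilde C_0 + \int_0^T (\xi^{\text{lr}}_s)^\top \dd \tilde S_s + (\tilde C_T - \tilde C_0).
\end{equation*}
Setting $h_0 := \tilde C_0 \in L^2_{\F_0}(\Omega;\R)$, $\zeta := \xi^{\text{lr}}$ and $\mathcal H_t := \tilde C_t - \tilde C_0$ produces the decomposition \eqref{eq:FSdec}. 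One then checks: $\mathcal H_0=0$ by construction; $\mathcal H$ is a right-continuous square integrable martingale since $\tilde C$ is; strong orthogonality with the martingale component of $\tilde S$ is inherited from $\tilde C$; and $\zeta$ satisfies Assumption \ref{ass:strategy}~(1-2) because $\xi^{\text{lr}}$ does.

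For the converse, starting from \eqref{eq:FSdec}, I would define $\xi^{\text{lr}} := \zeta$ and the candidate wealth process
\begin{equation*}
\tilde V_t \;:=\; h_0 + \int_0^t \zeta^\top_s \dd \tilde S_s + \mathcal H_t,
\end{equation*}
together with the cash holdings $\psi^{\text{lr}}_t := \tilde V_t - \sum_{i=1}^m \zeta^i_t \tilde S^i_t$, so that \eqref{eq:defVtilde} holds and $\tilde V_T = \tilde H$ by \eqref{eq:FSdec}. Assumption \ref{ass:strategy}~(1-2) for $\zeta$ ensures the stochastic integral is well defined, while $\psi^{\text{lr}}$ is $\FF$-adapted by construction, so $(\xi^{\text{lr}},\psi^{\text{lr}})\in \mathcal A$. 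Computing the cost process via \eqref{eq:CostProcess} gives $\tilde C_t = h_0 + \mathcal H_t$, which is a square integrable martingale (so $(\xi^{\text{lr}},\psi^{\text{lr}})$ is mean-self-financing) and is strongly orthogonal to the martingale component of $\tilde S$, since $\mathcal H$ is and $h_0$ is deterministic up to $\F_0$-measurability. Applying Theorem 1.6 in the reverse direction then concludes that $(\xi^{\text{lr}},\psi^{\text{lr}})$ is local risk minimizing.

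I do not expect any serious obstacle here: the two directions are bookkeeping exercises matching the two triples $(\tilde V_0,\xi^{\text{lr}},\tilde C - \tilde C_0)$ and $(h_0,\zeta,\mathcal H)$. The only subtlety worth flagging is the verification that the properties transfer cleanly in both directions (square-integrability, strong orthogonality, and Assumption \ref{ass:strategy}~(1-2)), and that the definition of $\psi^{\text{lr}}$ in the converse direction does not require more regularity than is automatically available once $\tilde V$, $\zeta$ and $\tilde S$ are fixed.
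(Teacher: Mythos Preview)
Your proposal is correct and follows essentially the same argument as the paper's proof: both directions amount to matching the triples $(h_0,\zeta,\mathcal H)$ and $(\tilde C_0,\xi^{\text{lr}},\tilde C-\tilde C_0)$ via the cost process identity \eqref{eq:CostProcess}, then invoking the characterization from Theorem~1.6. The only cosmetic difference is that the paper presents the ``decomposition $\Rightarrow$ strategy'' direction first, whereas you start with the converse; your additional remarks on verifying square-integrability, strong orthogonality, and Assumption~\ref{ass:strategy}(1--2) are slightly more explicit than the paper but do not change the substance.
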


\begin{proof}
Given the decomposition \eqref{eq:FSdec} with $\mathcal H$ strongly orthogonal to the martingale component of $\tilde S$, the strategy $(\xi^{\text{lr}},\psi^{\text{lr}})$ defined by 
\begin{align*}
\xi^{\text{lr}}_t &  = \zeta_t,\\
\psi^{\text{lr}}_t & =\tilde  V^{\text{lr}}_t -  \xi^{\text{lr},\top}_t \tilde S_t,
\end{align*}
with $\tilde V^{\text{lr}}_t  =  h_0 + \int^t_0 \zeta^\top_s \dd \tilde S_s + \mathcal H_t$, has the following cost process:
\begin{align*}
\tilde{C}^{\mathrm{lr}}_t=\tilde V^{\text{lr}}_t-\int_0^t \xi^{\text{lr},\top}_s\dd \tilde{S}_s=h_0 + \int^t_0 \zeta^\top_s \dd \tilde S_s + \mathcal H_t-\int_0^t \xi^{\text{lr},\top}_s\dd \tilde{S}_s = h_0+\mathcal H_t.
\end{align*}
Hence the cost process is a square integrable martingale strongly orthogonal to the martingale component of $\tilde{S}$. We conclude that the strategy $(\xi^{\text{lr}}, \psi^{\text{lr}})$ is mean-self-financing and the associated wealth process $\tilde V^{\text{lr}}$ satisfies $\tilde{V}^{\text{lr}}_T=\tilde{H}$. Hence $(\xi^{\text{lr}}, \psi^{\text{lr}})$ is local risk minimizing. Conversely, if the strategy $(\xi^{\text{lr}},\psi^{\text{lr}})$ is local risk minimizing, then we can write $\tilde{H}$ as 
\begin{align*}
\tilde{H}=\tilde{V}^{\text{lr}}_T=\tilde{C}^{\mathrm{lr}}_T+\int_0^T \xi^{\text{lr},\top}_s\dd \tilde{S}_s=\tilde{C}^{\mathrm{lr}}_0+\int_0^T \xi^{\text{lr},\top}_s\dd \tilde{S}_s+\tilde{C}^{\mathrm{lr}}_T-\tilde{C}^{\mathrm{lr}}_0,
\end{align*}
and we obtain \eqref{eq:FSdec} by setting
\begin{align*}
h_0&=\tilde C^{\mathrm{lr}}_0,\\
\zeta&=\xi^{\text{lr}},\\
\mathcal{H}&=\tilde C^{\mathrm{lr}}-\tilde C^{\mathrm{lr}}_0,
\end{align*}
with $\mathcal{H}$ strongly orthogonal to the martingale component of $\tilde{S}$.
\end{proof}

Equipped with the result above, the objective of identifying a local risk minimizing strategy can be fulfilled by finding the F\"ollmer-Schweizer decomposition \eqref{eq:FSdec}. To achieve this, two alternative approaches can be found in the literature. Since we are working in the setting of continuous semimartingales, one first approach consists in computing the F\"ollmer-Schweizer decomposition from the Galtchouk-Kunita-Watanabe decomposition under the \textit{minimal martingale measure}. Here we follow the alternative route which consists in introducing a linear BSDE that we proceed to solve at a later step by means of deep learning methods. The result in full generality is provided by \cite[Proposition 1.1]{ekpq1997}; here we follow the analog arguments as in \cite{ah2012}.

We introduce a BSDE with terminal condition $\tilde{H}$ and with driver a function $f$ that we determine in the sequel:
\begin{equation}
\tilde{X}^{\mathrm{lr}}_t = \tilde H - \int_t^T\frac{1}{S^0_s}\eta^{\mathrm{lr}, \top}_{1,s} \dd W_s-\int_t^T\frac{1}{S^0_s}\eta^{\mathrm{lr}, \top}_{2,s} \dd B_s+\int_t^Tf(s,\tilde{X}^{\mathrm{lr}}_s,\eta_{1,s}^{\mathrm{lr}},\eta_{2,s}^{\mathrm{lr}})  \dd s.
\end{equation}
Let $(\tilde{X}^{\text{lr}}, \eta^{\text{lr}})\in L^{2}_{\FF}(\Omega;{C}([0,T];\R^m)) \times L^2_{\FF}\left([0, T];\R^{m+d}\right)$ be its solution with $\eta^{\text{lr}} := \left(\eta^{\text{lr}}_1, \eta^{\text{lr}}_2\right)$. We obtain the F\"ollmer-Schweizer decomposition of $\tilde H$ if we  assume that the driver $f$ can be chosen such that for some $\xi$ satisfying Assumption \ref{ass:strategy} (1--2) one has
\begin{align*}
\int_0^t\xi^\top_s\dd\tilde{S}_s=\int_0^t\frac{1}{S^0_s}\eta^{\mathrm{lr}, \top}_{1,s} \dd W_s-\int_0^tf(s,\tilde X_s^{\mathrm{lr}},\eta_{1,s}^{\mathrm{lr}},\eta_{2,s}^{\mathrm{lr}})  \dd s.
\end{align*}
However, from the dynamics \eqref{eq:tildeS} of $\tilde{S}$ it is clear that
\begin{align*}
\int_0^t\xi^\top_s\dd\tilde{S}_s=\int_0^t\xi^\top_s\operatorname{diag}(\tilde{S}_s)\sigma_s \dd W_s+\int_0^t\xi^\top_s\operatorname{diag}(\tilde{S}_s)(\mu_s-r_s\mathbbm{1})\dd s.
\end{align*}
Since $\tilde{S}$ is a special semimartingale, the decomposition above is unique, meaning that
\begin{align}
\frac{1}{S^0_t}\eta_{1,t}^{\mathrm{lr}}&=\operatorname{diag}(\tilde{S}_t)\sigma_t^\top\xi_t,\label{eq:xiLR}\\
f(t,\tilde{X}^{\mathrm{lr}}_t,\eta^{\mathrm{lr}}_{1,t},\eta^{\mathrm{lr}}_{2,t})&=-\frac{1}{S^0_t}\eta_{1,t}^{\mathrm{lr}, \top} \sigma_t^{-1}(\mu_t-r_t\mathbbm{1})=-\frac{1}{S^0_t}\eta_{1,t}^{\mathrm{lr}, \top}\phi_t.
\end{align}

To summarize the discussion, we have the following:

\begin{prop}\label{prop:FSBSDE}
The F\"ollmer-Schweizer decomposition of $\tilde{H}$ is given by
\begin{align}
\label{eq:FSdeco}
\tilde{H}=\tilde{X}^{\mathrm{lr}}_0+\int_0^T\frac{1}{S^0_s}\eta^{\mathrm{lr},\top}_{1,s}\left(\operatorname{diag}(\tilde{S}_s)\sigma_s\right)^{-1}\dd \tilde{S}_s+\int_0^T\frac{1}{S^0_s}\eta^{\mathrm{lr},\top}_{2,s}\dd B_s,
\end{align}
where $(\tilde{X}^{\mathrm{lr}}, \eta^{\mathrm{lr}})\in L^{2}_{\FF}(\Omega;{C}([0,T];\R^m)) \times L^2_{\FF}\left([0, T];\R^{m+d}\right)$ is the unique solution to the linear BSDE
\begin{equation}
\label{eq:localriskX}
\tilde{X}^{\mathrm{lr}}_t = \tilde H - \int_t^T\frac{1}{S^0_s}\eta^{\mathrm{lr}, \top}_{1,s} \dd W_s-\int_t^T\frac{1}{S^0_s}\eta^{\mathrm{lr}, \top}_{2,s} \dd B_s-\int_t^T\frac{1}{S^0_s}\eta_{1,s}^{\mathrm{lr}, \top} \phi_s \dd s.
\end{equation}
\end{prop}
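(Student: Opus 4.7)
The plan is to split the argument into two parts: first establish existence and uniqueness for the linear BSDE \eqref{eq:localriskX}, and then verify that its solution reproduces the Föllmer--Schweizer decomposition \eqref{eq:FSdeco}.

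For the BSDE, the driver $f(s, z_1, z_2) = -z_1^\top \phi_s$ is affine in the control variables, and $\tilde H \in L^2_{\F_T}(\Omega; \R)$ by assumption. Under the usual integrability conditions on the market price of risk $\phi$ (typically bounded, or at least BMO), classical Pardoux--Peng / El Karoui--Peng--Quenez theory for linear BSDEs delivers a unique solution in $L^{2}_{\FF}(\Omega;{C}([0,T];\R^m)) \times L^2_{\FF}([0, T];\R^{m+d})$. An equivalent and perhaps cleaner route goes through the minimal martingale measure $\mathbb Q^{\min}$ obtained from \eqref{changeQmv} with $\nu \equiv 0$: under $\mathbb Q^{\min}$ the $\phi\,\dd s$ drift of the BSDE disappears via Girsanov, so the solution collapses to $\tilde X^{\mathrm{lr}}_t = \mathbb{E}^{\mathbb Q^{\min}}[\tilde H \mid \F_t]$, with $(\eta_1^{\mathrm{lr}}, \eta_2^{\mathrm{lr}})$ obtained from the martingale representation theorem and pulled back to $\mathbb{P}$.

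To derive the Föllmer--Schweizer decomposition from the BSDE, I would write \eqref{eq:localriskX} in forward form between $0$ and $T$ and substitute the dynamics \eqref{eq:tildeS} together with $\mu_s - r_s \mathbbm{1} = \sigma_s \phi_s$. The key computation
\begin{align*}
\int_0^T \eta_{1,s}^{\mathrm{lr},\top}\bigl(\operatorname{diag}(\tilde S_s)\sigma_s\bigr)^{-1} \dd \tilde S_s
= \int_0^T \eta_{1,s}^{\mathrm{lr},\top} \dd W_s + \int_0^T \eta_{1,s}^{\mathrm{lr},\top} \phi_s \dd s
\end{align*}
exchanges the $\phi\,\dd s$ drift of the BSDE for a stochastic integral against $\tilde S$, and combined with $\tilde X_T^{\mathrm{lr}} = \tilde H$ this yields \eqref{eq:FSdeco} with $h_0 = \tilde X^{\mathrm{lr}}_0$, $\zeta^\top_s = \eta_{1,s}^{\mathrm{lr},\top}(\operatorname{diag}(\tilde S_s)\sigma_s)^{-1}$ and $\mathcal H_t = \int_0^t \eta_{2,s}^{\mathrm{lr},\top} \dd B_s$. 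Admissibility of $\zeta$ in the sense of Assumption \ref{ass:strategy}(1--2) follows from $\eta_1^{\mathrm{lr}} \in L^2_{\FF}([0,T]; \R^m)$ and the invertibility of $\operatorname{diag}(\tilde S)\sigma$. Strong orthogonality is immediate: $\mathcal H$ is driven purely by $B$ while the martingale part of $\tilde S$ is driven by $W$, and since $W$ and $B$ are independent Brownian motions the quadratic covariation $[\mathcal H, \tilde S]$ vanishes identically.

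The principal obstacle is the existence/uniqueness step when $\phi$ is unbounded, as happens in the Heston-type specifications used later in the paper. There the naive Lipschitz argument fails and one must either work under BMO-type bounds for $\int \phi^\top \dd W$ (to recover the contraction estimate) or carefully justify the minimal martingale measure change, checking that $M_0$ in \eqref{changeQmv} is a true $\mathbb{P}$-martingale and that $\tilde H$ remains square-integrable under $\mathbb Q^{\min}$. Once this well-posedness is secured, the algebraic manipulations leading to \eqref{eq:FSdeco} and the orthogonality verification are routine.
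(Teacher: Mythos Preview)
Your proposal is correct and rests on the same algebraic identity the paper uses in the discussion immediately preceding the proposition: matching the $\dd W$- and $\dd s$-components of the BSDE against those of $\int \xi^\top \dd\tilde S$ via the special semimartingale decomposition of $\tilde S$. The paper runs this in the derivation direction (posit an unknown driver $f$, impose that the $W$-integral and drift of the BSDE reproduce a $\tilde S$-integral, read off $f = -\eta_1^\top\phi$), whereas you run it in the verification direction (start from the given BSDE, substitute $(\operatorname{diag}(\tilde S)\sigma)^{-1}\dd\tilde S = \dd W + \phi\,\dd s$, obtain \eqref{eq:FSdeco}); the computations are identical. Your explicit check of strong orthogonality via $[B,W]=0$ and your discussion of well-posedness when $\phi$ is unbounded are genuine additions---the paper leaves the former implicit and defers the latter to \cite{ekpq1997}, while you correctly flag that the Heston specification in Section~\ref{sec:multiHeston} requires the minimal-martingale-measure or BMO route rather than a plain Lipschitz argument.
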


\begin{oss}
\begin{enumerate}
\item Equations \eqref{eq:FSdeco} and \eqref{eq:localriskX} correspond, respectively,  to equations (1.17) and (1.18) in \cite{ekpq1997};
\item We stress that the decomposition is computed under the physical measure $\mathbb{P}$;
\item Let  $(\tilde{X}^{\mathrm{lr}}, \eta^{\mathrm{lr}})$ be the unique solution to \eqref{eq:localriskX}.
It follows from the arguments above and from the uniqueness of the F\"ollmer-Schweizer decomposition that for a local minimizing strategy $(\xi^{\mathrm{lr}},\psi^{\mathrm{lr}})$ with associated wealth and cost process $\tilde{V}^{\text{lr}}$ and $\tilde{C}^{\text{lr}}$, respectively, one has
$$
\tilde{C}^{\text{lr}}_t - \tilde{C}^{\text{lr}}_0 = \int^t_0 \eta^{\mathrm{lr}, \top}_{2,s} \dd B_s,
$$
and 
\begin{equation}
\label{eq}
\int^T_t \xi^{\text{lr},\top}_s\dd \tilde{S}_s = \int_t^T\eta^{\mathrm{lr}, \top}_{1,s} \dd W_s + \int_t^T\eta_{1,s}^{\mathrm{lr}, \top} \phi_s \dd s.
\end{equation}
Then, since $\tilde{V}^{\text{lr}}_T = \tilde H$, by substituting equation \eqref{eq} in  \eqref{eq:localriskX} and by using \eqref{eq:CostProcess}, one obtains for any $t\in [0,T]$ that
\begin{align}\label{eq:XeqV}
\tilde{X}^{\text{lr}}_t = \tilde{V}^{\text{lr}}_T - \int^T_t \xi^{\text{lr},\top}_s\dd \tilde{S}_s -  \tilde{C}^{\text{lr}}_T + \tilde{C}^{\text{lr}}_t =   \tilde{V}^{\text{lr}}_t,
\end{align}
namely, the risk minimizing strategy is such that the solution process $\tilde{X}^{\text{lr}}_t$ coincides with the value process $\tilde{V}^{\text{lr}}_t$ at any time $t\in [0,T]$.
\end{enumerate}
\end{oss}

We conclude the treatment of local risk minimization with the following corollary.

\begin{prop}
Under the preceeding assumptions, the optimal hedging portfolio process $\tilde X^{\mathrm{lr}}$ admits the following representation
\begin{align}
\tilde{X}^{\mathrm{lr}}_t=\mathbb{E}^{\mathbb{Q}_{\mathrm{lr}}}\left[\left.\tilde{H}\right|\mathcal{F}_t\right],
\end{align}
where the minimal martingale measure $\mathbb{Q}_{\mathrm{lr}}$ is defined by the following Radon-Nikodym derivative:
\begin{equation}
\label{changeQlr}
\left.\frac{\dd \mathbb{Q}_{\mathrm{lr}}}{\dd \mathbb{P}}\right|_{\mathcal{F}_t}:=\exp\left\{-\int_0^t\phi^\top_s \dd W_s-\frac{1}{2}\int_0^t\phi^\top_s\phi_s \dd s\right\}.
\end{equation}
\end{prop}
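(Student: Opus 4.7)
The plan is to obtain the representation by rewriting the linear BSDE \eqref{eq:localriskX} as a stochastic exponential-type change of measure, then identify $\tilde X^{\mathrm{lr}}$ as a martingale under $\mathbb{Q}_{\mathrm{lr}}$ whose terminal value is $\tilde H$.

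First I would write the BSDE \eqref{eq:localriskX} in differential form,
\begin{equation*}
\dd \tilde X^{\mathrm{lr}}_t = \eta^{\mathrm{lr},\top}_{1,t}\phi_t\,\dd t + \eta^{\mathrm{lr},\top}_{1,t}\,\dd W_t + \eta^{\mathrm{lr},\top}_{2,t}\,\dd B_t,
\qquad \tilde X^{\mathrm{lr}}_T=\tilde H,
\end{equation*}
so that the only non-martingale term is the finite-variation drift $\eta^{\mathrm{lr},\top}_{1,t}\phi_t\,\dd t$, which is precisely the drift that Girsanov's theorem will remove under $\mathbb{Q}_{\mathrm{lr}}$. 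Specifically, I would verify that the density process in \eqref{changeQlr} is a true $\mathbb{P}$-martingale (this is the key integrability point, typically justified via a Novikov or Kazamaki condition on $\phi$, which in our diffusive setting will follow from the standing assumptions on the coefficients used throughout Sections \ref{sec:MeanVarTheory}-\ref{sec:LocalRiskTheory}). Then Girsanov's theorem implies that the process $W^{\mathbb{Q}_{\mathrm{lr}}}_t := W_t+\int_0^t\phi_s\,\dd s$ is a $\mathbb{Q}_{\mathrm{lr}}$-Brownian motion, while $B$ remains a $\mathbb{Q}_{\mathrm{lr}}$-Brownian motion, as its Girsanov kernel vanishes.

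Substituting this in the dynamics above yields
\begin{equation*}
\dd \tilde X^{\mathrm{lr}}_t = \eta^{\mathrm{lr},\top}_{1,t}\,\dd W^{\mathbb{Q}_{\mathrm{lr}}}_t + \eta^{\mathrm{lr},\top}_{2,t}\,\dd B_t,
\end{equation*}
which shows $\tilde X^{\mathrm{lr}}$ is a $\mathbb{Q}_{\mathrm{lr}}$-local martingale. Because $(\tilde X^{\mathrm{lr}},\eta^{\mathrm{lr}})\in L^{2}_{\FF}(\Omega;C([0,T];\R^m))\times L^2_{\FF}([0,T];\R^{m+d})$ by Proposition \ref{prop:FSBSDE}, one can upgrade local martingality to true martingality either by a standard localisation argument combined with $L^2$-integrability of $\eta^{\mathrm{lr}}$ under $\mathbb{P}$ (passing to $\mathbb{Q}_{\mathrm{lr}}$ via Cauchy-Schwarz and $L^2(\mathbb{P})$-integrability of the density $M_T=\dd\mathbb{Q}_{\mathrm{lr}}/\dd\mathbb{P}$), or by noting that $\tilde X^{\mathrm{lr}}$ is continuous with $\tilde X^{\mathrm{lr}}_T=\tilde H\in L^2_{\F_T}(\Omega;\R)$ and is uniformly integrable in view of the square-integrability of the solution. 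Conditional expectation of the terminal value then gives $\tilde X^{\mathrm{lr}}_t=\mathbb{E}^{\mathbb{Q}_{\mathrm{lr}}}[\tilde H\mid\F_t]$, as claimed.

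The main obstacle I anticipate is the integrability step: guaranteeing that $M$ from \eqref{changeQlr} is a genuine martingale (so $\mathbb{Q}_{\mathrm{lr}}$ is a well-defined equivalent probability measure) and that $\tilde X^{\mathrm{lr}}$ is a true, not merely local, $\mathbb{Q}_{\mathrm{lr}}$-martingale. In concrete models such as the (multi-dimensional) Heston model of Section \ref{sec:multiHeston}, $\phi$ can be unbounded, so Novikov may require a model-specific verification; I would either invoke the results on exponential moments developed in Section \ref{sec:multiHeston} for the affine volatility process, or appeal to a criterion like Beneš' or a sequence of stopping times together with the $L^2$-bound on $\eta^{\mathrm{lr}}$ to close the argument.
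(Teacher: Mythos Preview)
Your proposal is correct and follows essentially the same route as the paper: the paper's proof is a single sentence invoking ``the representation of the value process of the linear BSDE \eqref{eq:localriskX} as a conditional expectation,'' and what you have written is precisely the standard Girsanov argument that underlies that representation. Your discussion of the integrability issues (true martingality of the density and of $\tilde X^{\mathrm{lr}}$ under $\mathbb{Q}_{\mathrm{lr}}$) goes beyond what the paper makes explicit, but is in the same spirit and is the right place to focus attention.
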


\begin{proof}
The result follows from the representation of the value process of the linear BSDE  \eqref{eq:localriskX} as a conditional expectation.
\end{proof}

\begin{oss}
	\label{oss: mmm}
It is interesting to compare the pricing approaches that are implicit in the two techniques considered:  mean-variance hedging and local risk minimization imply indeed two different choices for the pricing measure. We see that in the mean-variance hedging approach both Brownian motions $W$ and $B$ are transformed by the Girsanov change of measure which depends on the solution of the stochastic Riccati equation because of \eqref{eq:GirsanovKernelB}. In the local risk minimization approach only the Brownian motion $W$ is transformed: the only requirement on the measure is that $\tilde{S}$ is a martingale, hence the name \textit{minimal martingale measure}. 
\end{oss}


\section{The multidimensional Heston model}
\label{sec:multiHeston}
We introduce in this section the model that we shall use for the numerical examples in Section \ref{sec:numerics}. We point out that the validity of the numerical approach presented in this paper is not restricted to this particular model specification, which is for illustrative purposes. The choice of a specific model allows us to be concrete concerning the existence and uniqueness of solutions for the BSRE arising in the mean-variance approach, see Section \ref{sec:MeanVarTheory}. The main motivation for choosing (an extension of) the Heston model is to compare our deep quadratic hedging approach with results from the literature. For $m=d=1$, semi-explicit solutions are indeed available for the Heston model for the mean-variance hedging and for the local risk minimization in \cite{cerny2008} and \cite{heath2001numerical}, respectively. The model we propose is inspired by the multidimensional Heston specification considered by \cite{dcgg2013} in the context of foreign exchange markets. Following the notation of Section \ref{setting}, we set $m=d$, so that $\tilde S, W$ and $B$ are $m$-dimensional stochastic processes. 

Let $A = (A_{ij})_{i, j = 1}^{m}$ be a $m\times m$ matrix of coefficients, and let $\kappa =(\kappa_1, \ldots, \kappa_m)^\top$, $\theta =(\theta_1, \ldots, \theta_m)^\top$, $\sigma =(\sigma_1, \ldots, \sigma_m)^\top$ and $\rho =(\rho_1, \dots, \rho_m)^\top$ be  $m$-dimensional vectors of coefficients with $-1 \le \rho_i\le 1$ for each $i=1, \dots, m$. We set
\begin{align}
\label{eq:driftHeston}
\begin{aligned}
\mu_t&:=A\,\mathrm{diag}(Y^2_t)\bar{\mu}+\bar{r}^\top Y^2_t\mathbbm{1},\\
r_t&:=\bar{r}^\top Y^2_t,
\end{aligned}
\end{align}
for some $\bar{\mu},\bar{r}\in\mathbb{R}^m$ and  consider the following system of SDEs
\begin{equation}
\begin{cases}
\label{dSi} \dd \tilde S_t^i = \tilde S_t^i\left(\sum_{j=1}^m A_{ij}Y^{2,j}_t\bar{\mu}_j\dd t + \sum_{j=1}^mA_{ij}Y^j_t \dd W_t^j \right),\\
\dd Y_t^{2,i} = \kappa_i \left(\theta_i-Y_t^{2,i}\right)\dd t + \sigma_i Y_t^i \left(\rho_i \dd W_t^i + \sqrt{1-\rho_i^2}\dd B_t^i\right),
\end{cases} \quad \mbox{for } i=1, \dots, m,
\end{equation}
where $Y^j = \sqrt{Y^{2,j}}$. This representation shows that $\tilde S = (\tilde S^1, \dots, \tilde S^m)^\top$ is obtained by combining all the components of $Y^2 = (Y^{2, 1}, \dots, Y^{2,m})^\top$ and of $W = (W^1, \dots, W^m)^\top$ by means of the matrix $A$. We can also rewrite equation \eqref{dSi} in matrix form as
\begin{equation}
	\label{d_model}
	\begin{cases}
		\dd \tilde S_t 
		= \mathrm{diag}(\tilde S_t)\left(\left(A\,\mathrm{diag}(Y^2_t)\bar{\mu} \right)\dd t + A\,\mathrm{diag}\left(Y_t\right)\dd W_t \right),\\
		\dd Y^2_t = \mathrm{diag}(\kappa)\left(\theta- Y^2_t\right)\dd t + \mathrm{diag}(\sigma)\mathrm{diag}\left(Y_t\right)\left(\mathrm{diag}(\rho)\dd W_t + \mathrm{diag}(\sqrt{\mathbbm{1}-\rho^2})\dd B_t\right),
	\end{cases}
\end{equation}
where $\mathrm{diag}(\sqrt{\mathbbm{1}-\rho^2})$ denotes the diagonal matrix with elements $\sqrt{1-\rho_i^2}$, for $i=1, \dots, m$. 

\begin{oss}
\label{oss:H}
\begin{enumerate}
	\item The drift of the $\tilde S$-dynamics in equation \eqref{d_model} is a generalization of the Heston model for the specifications in \eqref{eq:driftHeston}, where $A\,\mathrm{diag}(Y^2_t)\bar{\mu} =\mu_t - r_t \mathbbm{1}$, for $\mu$ and $r$ as in equation \eqref{eq:driftHeston}. For $m=1$ we indeed retrieve a one-dimensional Heston model
	\begin{equation}
		\label{HestonModel}
		\begin{cases}
			\dd \tilde S_t = \tilde S_t\left(\mu Y^2_t\dd t +Y_t\dd W_t\right),\\
			\dd Y^2_t = \kappa \left(\theta -Y^2_t \right)\dd t + \sigma Y_t\left(\rho \dd W_t +\sqrt{1-\rho^2}\dd B_t \right),
		\end{cases}
	\end{equation}
	with  $\bar{\mu}=\bar{\mu}_1=\mu$ and $A=A_{11}=1$, where $\mu, \kappa, \theta, \sigma >0$ are real constants with $-1 \le \rho \le 1$ and $2\kappa \theta \ge \sigma^2$. In particular, the model \eqref{HestonModel} was proposed by \cite{cerny2008} as a modification of the classical Heston model \cite{heston1993}: since they work under a zero interest rate assumption, their market price of risk is then proportional to $Y$. 
	\item We notice that the vector process $Y^2$ in \eqref{d_model} is a superposition of Heston-type variances, namely of CIR processes. We then require the Feller condition to hold component-wise, namely we assume that
	\begin{equation*}
		2\kappa_i \theta_i \ge \sigma_i^2, \qquad \mbox{for } i = 1,\dots, m,
	\end{equation*}
	in order to guarantee strict positivity for all the components of $Y^2$. 
	\item We also notice that, if $A$ is a diagonal matrix, then $\tilde S$ becomes a superposition of mutually independent Heston models of the form of \eqref{HestonModel}.
\end{enumerate}
\end{oss}

\subsection{Existence and uniqueness results for the BSRE}\label{sec:existenceUniqueness}
The aim of the present section is to prove existence and uniqueness for the stochastic Riccati equation \eqref{eq:StochasticRiccati} in the case of the multidimensional Heston model \eqref{d_model}. \textcolor{black}{Proving that the stochastic Riccati equation admits a unique solution is in general challenging. For example, \cite{kt2002} derive an existence and uniqueness result for the BSRE under the assumption that all coefficients are bounded, and present an application to the problem of mean-variance hedging. However, as we proceed to show, the market price of risk that appears in the driver of the BSRE is unbounded in our setting, so that the results of \cite{kt2002} can not be applied. Despite the results of \cite{kt2002} have been extended under different sets of assumptions in recent years, most contributions allow for random but bounded coefficients, see e.g. \cite{yxz2023}. A notable exception to this is given by the recent Ph.D. thesis \cite{algoulity2021}. For these reasons and for the sake of concreteness, we will limit ourselves to prove that the specific BSRE that we consider in our numerical experiments is well posed.} In particular, we will adapt the approach of \cite{sz2015}  to our setting. 

Let us first notice that, from the definition of the drift and diffusion coefficients in \eqref{eq:driftHeston}, straightforward computations show that the market price of risk takes the form
\begin{align*}
\phi_t = \sigma^{-1}_t (\mu_t - r_t \mathbbm 1)=\mathrm{diag}\left(Y_t\right)\bar{\mu}.
\end{align*}
We further notice that $\phi_t^\top\phi_t = \bar{\mu}^\top \mathrm{diag}\left(Y^2_t\right)\bar{\mu}=\bar{\mu}^{2,\top}Y^2_t$, where $\bar{\mu}^2$ denotes the vector with components $\bar{\mu}_j^2$, $j=1,\ldots, m$.
As a first step, we obtain a closed-form solution for  the BSRE \eqref{eq:StochasticRiccati}, generalizing the results in \cite{sz2015} to the multidimensional case.

\begin{lem}
For the multidimensional Heston model \eqref{dSi}, a solution to the stochastic Riccati equation \eqref{eq:StochasticRiccati} is given by
\begin{align}
\label{eq:BSREsolution}
L_t=\exp\left\{\varphi(t,T)+\psi(t,T)^\top Y^2_t\right\},
\end{align} 
where $\varphi(\cdot,T):[0,T]\to\R$ and $\psi(\cdot, T):[0,T]\to\R^m$ satisfy the following system of Riccati ordinary differential equations (ODEs):
\begin{align}
\label{eq:RiccatiOde}
\begin{aligned}
&\frac{\partial \varphi}{\partial t}+\psi(t,T)^\top\, \mathrm{diag}\left(\kappa\right)\theta = 0,\quad \varphi(T,T)=0,\\
&\frac{\partial \psi}{\partial t}^\top-\psi(t,T)^\top\, \mathrm{diag}\left(\kappa\right)+\frac{1}{2}\psi(t,T)^\top\, \mathrm{diag}\left(\sigma^2\right)\, \mathrm{diag}\left(\psi(t,T)\right)-\bar{\mu}^{2,\top}\\
&\qquad-2\psi(t,T)^\top \, \mathrm{diag}\left(\sigma\right)\, \mathrm{diag}\left(\rho\right)\, \mathrm{diag}\left(\bar{\mu}\right)\\
&\qquad-\psi(t,T)^\top \, \mathrm{diag}\left(\sigma^2\right) \, \mathrm{diag}\left(\rho^2\right) \, \mathrm{diag}\left(\psi(t,T)\right)=0,\quad \psi(T,T)=0.
\end{aligned}
\end{align}
\end{lem}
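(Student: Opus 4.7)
The strategy is straightforward verification via an Ansatz. I assume $L_t$ has the exponential-affine form \eqref{eq:BSREsolution}, apply Itô's formula, extract the diffusion coefficients $\Lambda_1, \Lambda_2$ from the resulting martingale part, and then match the drift against the BSRE driver; separating coefficients by powers of the factor $Y^2_t$ will produce exactly the two Riccati ODEs in \eqref{eq:RiccatiOde}.

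More concretely, setting $\alpha_t := \varphi(t,T) + \psi(t,T)^\top Y^2_t$ and applying Itô to $L_t = e^{\alpha_t}$ using the dynamics of $Y^2$ from \eqref{d_model}, the stochastic part of $\dd L_t$ becomes
\begin{equation*}
L_t \sum_{i=1}^m \psi_i(t,T)\sigma_i Y^i_t\bigl(\rho_i\,\dd W^i_t + \sqrt{1-\rho_i^2}\,\dd B^i_t\bigr),
\end{equation*}
so I read off $\Lambda_{1,i,t} = L_t\,\psi_i(t,T)\sigma_i\rho_i Y^i_t$ and $\Lambda_{2,i,t} = L_t\,\psi_i(t,T)\sigma_i\sqrt{1-\rho_i^2}\,Y^i_t$. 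Using the identity $\phi_t = \operatorname{diag}(Y_t)\bar\mu$, I then compute the three pieces of the BSRE driver: $|\phi_t|^2 L_t = L_t\,\bar\mu^{2,\top}Y^2_t$, $2\phi_t^\top\Lambda_{1,t} = 2L_t\sum_i\psi_i\sigma_i\rho_i\bar\mu_i Y^{2,i}_t$, and $\Lambda_{1,t}^\top\Lambda_{1,t}/L_t = L_t\sum_i\psi_i^2\sigma_i^2\rho_i^2 Y^{2,i}_t$. Meanwhile, Itô's formula also gives the drift of $L_t$ as $L_t$ times $\tfrac{\partial\varphi}{\partial t} + \tfrac{\partial\psi}{\partial t}^\top Y^2_t + \psi^\top\operatorname{diag}(\kappa)(\theta - Y^2_t) + \tfrac12\sum_i\psi_i^2\sigma_i^2 Y^{2,i}_t$, where the last term uses that the cross terms $\rho_i^2 + (1-\rho_i^2) = 1$ collapse in the quadratic variation.

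Equating these two expressions for the drift and using the fact that the identity must hold $\dd\p\otimes\dd t$-a.s.\ for all realisations of $Y^2_t$ (which, thanks to the Feller condition, lives in the positive orthant, hence spans enough directions), I split the matching into the $Y^2$-free part, giving $\tfrac{\partial\varphi}{\partial t} + \psi(t,T)^\top\operatorname{diag}(\kappa)\theta = 0$, and the coefficient of $Y^{2,i}_t$ for each $i$, giving exactly the $i$-th component of the vector Riccati equation stated in \eqref{eq:RiccatiOde} after collecting the $\psi_i^2\sigma_i^2(\tfrac12-\rho_i^2)$ contribution. Rewriting the component-wise identities in diagonal-matrix form reproduces \eqref{eq:RiccatiOde} verbatim.

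Finally, the terminal conditions $\varphi(T,T)=0$ and $\psi(T,T)=0$ ensure $L_T = e^0 = 1$, and the strict positivity $L_t>0$ is automatic from the exponential Ansatz. I expect no real obstacle here beyond bookkeeping: the only mildly subtle point is justifying that matching the constant-in-$Y^2$ and linear-in-$Y^2$ parts is legitimate, which I handle by noting that the $m$ components of $Y^2_t$ stay strictly positive and are linearly independent as processes, so any a.s.\ polynomial identity in them forces the coefficients to vanish.
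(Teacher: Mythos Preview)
Your proposal is correct and follows essentially the same verification strategy as the paper: apply It\^o's formula to the exponential-affine Ansatz, read off $\Lambda_1,\Lambda_2$ from the martingale part, and match the drift against the BSRE driver to recover the Riccati ODEs. The paper presents the drift matching by adding and subtracting the driver terms and invoking \eqref{eq:RiccatiOde} to force cancellation, while you equate the two drift expressions directly and separate by affine coefficients in $Y^2_t$; this is a cosmetic difference only.
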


\begin{proof}
We apply the It\^o formula to \eqref{eq:BSREsolution} and write
\begin{align*}
\dd L_t&=\frac{\partial L_t}{\partial t}+\frac{\partial L_t}{\partial Y_t^2}\dd Y^2_t+\frac{1}{2}\frac{\partial^2 L_t}{\partial (Y_t^2)^2}\dd \left\langle Y^2, Y^2 \right\rangle_t\\
&=\left(\frac{\partial \varphi}{\partial t}+\frac{\partial \psi^\top}{\partial t}Y^2_t\right)L_t\dd t+L_t\psi(t,T)^\top\left(\mathrm{diag}(\kappa)\left(\theta- Y^2_t\right)\dd t\right.\\
&\left.\quad + \mathrm{diag}(\sigma)\mathrm{diag}\left(Y_t\right)\left(\mathrm{diag}(\rho)\dd W_t + \mathrm{diag}(\sqrt{\mathbbm{1}-\rho^2})\dd B_t\right)\right)\\
&\quad +\frac{1}{2}L_t\psi(t,T)^\top \mathrm{diag}\left(\sigma^2\right)\mathrm{diag}\left(Y^2_t\right)\psi(t,T)\dd t\\
&=L_t\overbrace{\left(\frac{\partial \varphi}{\partial t}+\psi(t,T)^\top\mathrm{diag}(\kappa)\theta \right)}^{=0}\dd t+L_t\left(\frac{\partial \psi^\top}{\partial t}-\psi(t,T)^\top\mathrm{diag}(\kappa)\right.\\
&\quad \left.+\frac{1}{2}\psi(t,T)^\top \mathrm{diag}\left(\sigma^2\right)\mathrm{diag}\left(\psi(t,T)\right)\right)Y^2_t\dd t+\Lambda_{1,t}^\top\dd W_t+\Lambda_{2,t}^\top\dd B_t,
\end{align*}
where we defined
\begin{align}
\label{eq:solutionControls}
\begin{aligned}
\Lambda_{1,t}^\top&:=\psi(t,T)^\top \mathrm{diag}\left(\sigma\right) \mathrm{diag}\left(Y_t\right) \mathrm{diag}\left(\rho\right)L_t,\\
\Lambda_{2,t}^\top&:=\psi(t,T)^\top \mathrm{diag}\left(\sigma\right) \mathrm{diag}\left(Y_t\right) \mathrm{diag}\left(\sqrt{\mathbbm{1}-\rho^2}\right)L_t.
\end{aligned}
\end{align}
By focusing on the drift term, we observe that
\begin{align*}
& L_t\left(\frac{\partial \psi^\top}{\partial t}-\psi(t,T)^\top\mathrm{diag}(\kappa)+\frac{1}{2}\psi(t,T)^\top \mathrm{diag}\left(\sigma^2\right)\mathrm{diag}\left(\psi(t,T)\right)\right)Y^2_t\\
&= L_t\left(\frac{\partial \psi^\top}{\partial t}-\psi(t,T)^\top\mathrm{diag}(\kappa) +\frac{1}{2}\psi(t,T)^\top \mathrm{diag}\left(\sigma^2\right)\mathrm{diag}\left(\psi(t,T)\right)\right)Y^2_t \\
&\quad - \phi_t^\top\phi_t L_t - 2\phi^\top_t\Lambda_{1,t}- \frac{\Lambda_{1,t}^\top\Lambda_{1,t}}{L_t} +\phi_t^\top\phi_t L_t+ 2\phi^\top_t\Lambda_{1,t}+\frac{\Lambda_{1,t}^\top\Lambda_{1,t}}{L_t}\\
&=\phi_t^\top\phi_t L_t+ 2\phi^\top_t\Lambda_{1,t}+\frac{\Lambda_{1,t}^\top\Lambda_{1,t}}{L_t},
\end{align*}
due to \eqref{eq:RiccatiOde} and to the fact that $\phi_t^\top\phi_t=\bar{\mu}^{2,\top}Y^2_t$. Hence the proof is complete.
\end{proof}

\begin{oss}
It is immediate to observe that we can write the vector Riccati ODE in \eqref{eq:RiccatiOde} as a vector of scalar Riccati ODEs that can be solved independently of each other, namely
\begin{align}
	\label{eq:RiccatiOdej}
\frac{\partial \psi_j}{\partial t} - \psi_j(t,T)\kappa_j+\frac{1}{2}\psi_j^2(t,T)\sigma^2_j-\bar{\mu}^2_j-2\psi_j(t,T)\sigma_j\rho_j\bar{\mu}_j-\psi^2_j(t,T)\sigma^2_j\rho^2_j=0,
\end{align}
for $j=1,\ldots,m$, meaning that one can resort to a standard existence and uniqueness result for the scalar case.
\end{oss}

In the next step, we state a standard result on existence and uniqueness for the Riccati system \eqref{eq:RiccatiOdej}.

\begin{lem}
Let us assume that $\rho^2_j<\frac{1}{2}$ 
for all $j=1,\ldots,m$. Then, for every $j=1,\ldots,m$, there exists a unique solution $\psi_j(\cdot,T)$ to the Riccati ODE \eqref{eq:RiccatiOdej} with $\sup_{0\leq t\leq T}|\psi_j(t,T)|<\infty$. Moreover $\psi_j(t,T)\leq 0$ for all $t\in[0,T]$, $j=1,\ldots,m$.
\end{lem}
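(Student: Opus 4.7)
The plan is to reduce \eqref{eq:RiccatiOdej} to a classical autonomous scalar Riccati initial value problem by time reversal, and then combine the standard Cauchy-Lipschitz theorem with a barrier/comparison argument. First I would regroup the coefficients by setting
\begin{equation*}
\alpha_j := \kappa_j + 2\sigma_j\rho_j\bar\mu_j, \qquad \beta_j := \sigma_j^2\!\left(\tfrac{1}{2}-\rho_j^2\right), \qquad \gamma_j := \bar\mu_j^2,
\end{equation*}
so that \eqref{eq:RiccatiOdej} reads $\dot\psi_j(t) = \alpha_j\psi_j(t) - \beta_j\psi_j(t)^2 + \gamma_j$ with $\psi_j(T,T)=0$. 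The assumption $\rho_j^2<1/2$ is precisely what guarantees $\beta_j>0$, while $\gamma_j\ge 0$ trivially. Setting $\tau := T-t$ and $\tilde\psi_j(\tau):=\psi_j(T-\tau,T)$ turns this into the forward IVP
\begin{equation*}
\tilde\psi_j'(\tau) = \beta_j\tilde\psi_j(\tau)^2 - \alpha_j\tilde\psi_j(\tau) - \gamma_j, \qquad \tilde\psi_j(0) = 0,
\end{equation*}
whose right-hand side is a polynomial in $\tilde\psi_j$ and therefore locally Lipschitz.

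Local existence and uniqueness of a $C^1$ solution on a maximal interval $[0,\tau^*)$ then follow immediately from Cauchy-Lipschitz. The next step is to show that $\tilde\psi_j\le 0$ throughout $[0,\tau^*)$. The observation to exploit is that at any zero of $\tilde\psi_j$ the right-hand side equals $-\gamma_j\le 0$; thus if $\gamma_j=0$ uniqueness forces $\tilde\psi_j\equiv 0$, and if $\gamma_j>0$ the solution immediately leaves $0$ downwards and cannot re-cross it, since a hypothetical first return $\tau^{**}>0$ would force $\tilde\psi_j'(\tau^{**})\ge 0$, contradicting $\tilde\psi_j'(\tau^{**})=-\gamma_j<0$. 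This gives $\psi_j(t,T)\le 0$ as required.

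Finally, to upgrade to global existence on $[0,T]$ and a uniform bound, I would switch to $u(\tau):=-\tilde\psi_j(\tau)\ge 0$, which satisfies
\begin{equation*}
u'(\tau) = -\beta_j u(\tau)^2 - \alpha_j u(\tau) + \gamma_j, \qquad u(0)=0.
\end{equation*}
Since $\beta_j>0$, the quadratic $Q(u)=-\beta_j u^2-\alpha_j u+\gamma_j$ opens downwards and has a nonnegative largest root
\begin{equation*}
u^+ = \frac{-\alpha_j+\sqrt{\alpha_j^2+4\beta_j\gamma_j}}{2\beta_j}\ge 0,
\end{equation*}
which is an equilibrium of the reduced ODE. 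Since $u(0)=0\le u^+$ and $Q>0$ strictly on $[0,u^+)$, the constant function $u^+$ is an upper barrier, so $u(\tau)\le u^+$ for all $\tau$ in the maximal interval by standard comparison; this rules out blow-up and extends the solution to all of $[0,T]$ with the bound $\sup_{t\in[0,T]}|\psi_j(t,T)|\le u^+<\infty$. The main obstacle is precisely securing $\beta_j>0$: without the hypothesis $\rho_j^2<1/2$ the quadratic term could vanish or flip sign, the upper barrier disappears, and nothing prevents finite-time explosion of the Riccati ODE.
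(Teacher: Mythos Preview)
Your proof is correct. The regrouping $\dot\psi_j=\alpha_j\psi_j-\beta_j\psi_j^2+\gamma_j$ is accurate, the time reversal is handled properly, the nonpositivity argument via the sign of the vector field at zero is clean, and the barrier argument using the equilibrium $u^+$ correctly exploits $\beta_j>0$ to prevent blow-up. The identification of $\rho_j^2<1/2\Leftrightarrow\beta_j>0$ as the crucial hypothesis is exactly right.

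The paper, however, does not argue from first principles: it simply invokes Lemma~10.12 in Filipovi\'c's textbook after matching coefficients (in that notation $A=\sigma_j^2(1/2-\rho_j^2)$, $B=-(\kappa_j+2\bar\mu_j\sigma_j\rho_j)$, $C=\bar\mu_j^2$) and applying the same change of variable $t\mapsto T-t$. The conditions to check there reduce to $A>0$ and $B^2+4AC\in\mathbb{C}\setminus\mathbb{R}_-$, both of which are immediate under $\rho_j^2<1/2$. So the paper's proof is a two-line citation, whereas yours is a fully self-contained Cauchy--Lipschitz plus comparison argument. Your route has the advantage of being transparent and not requiring the reader to chase an external reference; the paper's route is shorter and situates the result within the standard affine-process machinery, which is natural given the model class. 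Substantively the two arguments coincide: Filipovi\'c's lemma is proved by essentially the same barrier reasoning you carry out explicitly.
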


\begin{proof}The result follows from e.g. Lemma 10.12 in \cite{Filipovic2009} by setting, in his notation, $A=\sigma^2_j\left(\frac{1}{2}-\rho^2_j\right)$, $B=-(\kappa_j+2\bar{\mu}_j\sigma_j\rho_j)$, $C= \bar{\mu}^2_j$, and by applying the change of variable $t\mapsto T-t$. The condition $A>0$ holds true if and only if $\rho^2_j<\frac{1}{2}$. We also need to check $C^2>0$, but this is trivially satisfied. Finally, we need to check that $B^2+4AC\in\mathbb{C}\setminus\mathbb{R}_{-}$, which is also trivially satisfied.
\end{proof}

We can now prove uniqueness by adapting the approach of \cite{sz2015} to our setting.

\begin{prop}
The triple $(L,\Lambda_1,\Lambda_2)$ as provided in \eqref{eq:BSREsolution}, \eqref{eq:RiccatiOde} and \eqref{eq:solutionControls} is the unique solution to the BSRE \eqref{eq:StochasticRiccati}.
\end{prop}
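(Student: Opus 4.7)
The plan is to establish uniqueness by comparing an arbitrary admissible solution of the BSRE against the explicit triple just constructed, using a logarithmic change of variable that is legitimate since both candidates are strictly positive. I would first let $(\hat L, \hat{\Lambda}_1, \hat{\Lambda}_2)$ be a second solution lying in $L^\infty_{\FF}(\Omega;C([0,T];\R))\times L^2_{\FF}([0,T];\R^{m+d})$ with $\hat L>0$, and set $\hat\ell:= \ln \hat L$, $\ell := \ln L$. Applying It\^o's formula to the BSRE and computing the quadratic variation, one obtains
\begin{equation*}
d\ell_t = \bigl(|\phi_t|^2 + 2\phi_t^\top Z_t^1 + \tfrac12|Z_t^1|^2 - \tfrac12|Z_t^2|^2\bigr)\,dt + (Z_t^1)^\top dW_t + (Z_t^2)^\top dB_t,\quad \ell_T=0,
\end{equation*}
where $Z^i_t := \Lambda_{i,t}/L_t$, and an analogous BSDE for $\hat\ell$ with $\hat Z^i := \hat\Lambda_{i}/\hat L$.

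Next, I would form the difference $N := \hat\ell - \ell$ and linearize the driver via the identity $|a|^2 - |b|^2 = (a-b)^\top(a+b)$, which yields a linear BSDE
\begin{equation*}
dN_t = \bigl[(\hat Z_t^1 - Z_t^1)^\top \beta_t^1 - (\hat Z_t^2 - Z_t^2)^\top \beta_t^2 \bigr]\,dt + (\hat Z_t^1 - Z_t^1)^\top dW_t + (\hat Z_t^2 - Z_t^2)^\top dB_t,\quad N_T=0,
\end{equation*}
with $\beta_t^1 := 2\phi_t + \tfrac12(\hat Z_t^1 + Z_t^1)$ and $\beta_t^2 := \tfrac12(\hat Z_t^2 + Z_t^2)$. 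A Girsanov change of measure driven by $(\beta^1,-\beta^2)$ would then turn $N$ into a local martingale under an equivalent measure $\mathbb{Q}$; combined with $N_T=0$ and enough integrability, this forces $N\equiv 0$, hence $\hat L=L$ up to indistinguishability. The uniqueness of the martingale representation would then yield $\hat\Lambda_i=\Lambda_i$ for $i=1,2$, completing the uniqueness claim.

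The hard part will be justifying the Girsanov step, because the Heston market price of risk $\phi_t = \operatorname{diag}(Y_t)\bar\mu$ and the explicit controls in \eqref{eq:solutionControls} are unbounded in the variance factor, so one cannot invoke a blanket Novikov argument. I would instead aim to verify the BMO property of both $\beta^1$ and $\beta^2$: for the explicit solution this relies on the deterministic boundedness of $\psi(\cdot,T)$ established in the preceding lemma together with the Feller condition $2\kappa_i\theta_i>\sigma_i^2$, which yields uniform exponential moment bounds on the CIR factors $Y^{2,i}$ on $[0,T]$; for the generic solution the a priori bound $\hat L\in L^\infty_{\FF}$ combined with the $L^2$-integrability of $\hat\Lambda_i$ controls $\hat Z^i$ in the BMO norm. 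Once BMO is in hand, Kazamaki's criterion upgrades the stochastic exponential to a true martingale, the change of measure becomes legitimate, and $N$ becomes a true $\mathbb{Q}$-martingale, closing the argument.
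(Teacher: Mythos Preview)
Your overall strategy---pass to $\ell=\log L$, form the difference of the two log-BSDEs, and remove the driver by a Girsanov transformation---is the same as the paper's, but the execution differs in a way that matters. The paper performs \emph{two} successive measure changes whose Girsanov kernels involve only the model datum $\phi$ and the $Z$ coming from the \emph{explicit} solution \eqref{eq:BSREsolution}--\eqref{eq:solutionControls}; under the resulting measure $\check{\mathbb{P}}$ the difference $\Delta\mathscr{L}$ satisfies the purely quadratic BSDE
\[
\dd\Delta\mathscr{L}_t=\bigl(-\tfrac12|\Delta Z_{1,t}|^2+\tfrac12|\Delta Z_{2,t}|^2\bigr)\dd t+\Delta Z_{1,t}^\top\dd\check W_t+\Delta Z_{2,t}^\top\dd\check B_t,\qquad \Delta\mathscr{L}_T=0,
\]
and uniqueness is read off from Kobylanski's theory. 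The point of this construction is that \emph{no regularity whatsoever} is needed on the generic solution's controls $\hat Z$ to justify the measure changes: the first density involves only $\phi_t=\operatorname{diag}(Y_t)\bar\mu$ and is handled by a CIR-specific criterion (Mijatovi\'c--Urusov), and the second involves only the explicit $Z$, for which the deterministic boundedness of $\psi(\cdot,T)$ together with a dedicated affine result (Corollary~A1 in \cite{sz2015}) suffices.

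Your route, by contrast, puts $\hat Z^i$ from the \emph{arbitrary} solution into the Girsanov kernel $\beta^i$, and then tries to verify the martingale property via BMO and Kazamaki. This is where the argument breaks. First, the implication ``$\hat L\in L^\infty_{\FF}$ and $\hat\Lambda_i\in L^2$ $\Rightarrow$ $\hat Z^i=\hat\Lambda_i/\hat L$ is BMO'' is false: even granting a uniform lower bound on $\hat L$ (which the solution class does not provide---the BSRE only demands $L_t>0$), you obtain at best $\hat Z^i\in L^2$, and $L^2$ does not imply BMO. Second, and more fundamentally, in the Heston model the BMO approach fails already for the \emph{explicit} $Z$: from \eqref{eq:solutionControls} one has $|Z_{i,t}|^2$ proportional to a linear combination of the CIR factors $Y_t^{2,j}$, and since $\sup_\tau\E[\int_\tau^T Y_s^{2,j}\,\dd s\,|\,\F_\tau]$ is controlled by $Y_\tau^{2,j}$ (which is unbounded), the stochastic integral $\int Z_i\,\dd W$ is \emph{not} a BMO martingale. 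Exponential moments of CIR and the Feller condition do not salvage this. Hence neither half of your kernel is BMO, Kazamaki is unavailable, and the Girsanov step is unjustified. The paper's two-step design is precisely what allows one to replace BMO/Novikov by model-specific martingale criteria while keeping the generic solution's controls out of the density.
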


\begin{proof}
Define $\mathscr{L}:=\log L$ and $Z_{i}:=\frac{\Lambda_{i}}{L}$, $i\in\{1,2\}$. An application of the It\^o formula shows that
\begin{align*}
\dd \mathscr{L}_t=\left(|\phi_t|^2+2\phi_t^\top Z_{1,t}+\frac{1}{2}|Z_{1,t}|^2-\frac{1}{2}|Z_{2,t}|^2\right) \dd t+ Z_{1,t}^\top\dd W_t+ Z_{2,t}^\top\dd B_t.
\end{align*}
Let us introduce the measure $\tilde{\mathbb{P}}$ defined via
\begin{align*}
\left.\frac{\dd \tilde{\mathbb{P}}}{\dd \mathbb{P}}\right|_{\mathcal{F}_t}:=\exp\left\{-2\int_0^t|\phi_s|^2\dd s-2\int_0^t\phi_s^\top\dd W_s\right\},
\end{align*}
which is a true martingale thanks to a component-wise application of Theorem 2.1 in \cite{Mijatovi2010}. Under $\tilde{\mathbb{P}}$, $\mathscr{L}$ has dynamics
\begin{align*}
\dd \mathscr{L}_t=\left(|\phi_t|^2+\frac{1}{2}|Z_{1,t}|^2-\frac{1}{2}|Z_{2,t}|^2\right) \dd t+ Z_{1,t}^\top\dd \tilde{W}_t+ Z_{2,t}^\top\dd \tilde{B}_t.
\end{align*}
We introduce then a second measure change defined via
\begin{align*}
\left.\frac{\dd \check{\mathbb{P}}}{\dd \tilde{\mathbb{P}}}\right|_{\mathcal{F}_t}:=\exp\left\{-\frac{1}{2}\int_0^t|Z_s|^2\dd s-\int_0^t Z_{1,s}^\top\dd \tilde{W}_s+\int_0^t Z_{2,s}^\top\dd \tilde{B}_s\right\},
\end{align*}
which is a true martingale due to the boundedness of $\psi_j$ $j=1,\ldots, m$, allowing us to apply Corollary A1 in \cite{sz2015}. The dynamics of $\mathscr{L}$ under $\check{\mathbb{P}}$ are
\begin{align*}
\dd \mathscr{L}_t=\left(|\phi_t|^2-\frac{1}{2}|Z_{1,t}|^2+\frac{1}{2}|Z_{2,t}|^2\right) \dd t+ Z_{1,t}^\top\dd \check{W}_t+ Z_{2,t}^\top\dd \check{B}_t.
\end{align*}
We now proceed by contradiction. Assume that there are two distinct solutions $(\mathscr{L},Z)$ and $(\mathscr{L}^\prime,Z^\prime)$, and define $(\Delta \mathscr{L}, \Delta Z)=(\mathscr{L}-\mathscr{L}^\prime,Z-Z^\prime)$. Under the measure $\check{\mathbb{P}}$, we have that
\begin{align*}
\dd \Delta \mathscr{L}_t=\left(-\frac{1}{2}|\Delta Z_{1,t}|^2+\frac{1}{2}|\Delta Z_{2,t}|^2\right)\dd t + \Delta Z_{1,t}^\top \dd \check{W}_t+\Delta Z_{2,t}^\top \dd \check{B}_t,
\end{align*}
which is a quadratic BSDE for which there exists a unique solution thanks to the results of \cite{kob2000}. Such a solution is given by $(0,\mathbf{0})$ with $\mathbf{0}\in\mathbb{R}^{d+m}$ being a vector of $0$'s. This implies $\mathscr{L}=\mathscr{L}^\prime$, $Z_{1}=Z_{1}^\prime$ and $Z_{2}=Z_{2}^\prime$, which is a contradiction. Hence the solution to the BSRE is unique.
\end{proof}

\section{{Our methodology}}
\label{sec:DeepQuadraticHedging}
We have seen in Section \ref{sec:MeanVarTheory} and Section \ref{sec:LocalRiskTheory} that both mean-variance hedging and local risk minimization can be treated from the point of view of BSDEs. 
Our next step is to solve these BSDEs by deep learning methods. In particular, we shall focus on the deep BSDE solver proposed by \cite{weinar2017}. Notice that in Section \ref{sec:MarketModel} we assumed that the coefficients in \eqref{eq:SDE S} are $\FF$-adapted. The algorithm of  \cite{weinar2017} that we employ introduces the further assumption that the market model is Markovian. However, due to our modelling choice, it is natural for us to apply a Markovian solver. Markovian diffusive models constitute a vast class that encompasses those which are most popular both from the point of view of the academic literature and industrial applications, such as diffusive stochastic volatility models.

Recently non-Markovian models such as the rough Heston model in \cite{eer2019} and the rough-Bergomi model of \cite{bfg2016} have been the subject of several contributions in the literature. In the context of such non-Markovian models, valution equations take the form of backward stochastic partial differential equations (BSPDEs) which can be numerically solved by suitable extensions of the original solver of \cite{weinar2017} as proposed in \cite{bqy2022}. Alternatively, \cite{jacquier2023deep} propose an approach based on curve-dependent PDEs, which arise from a Feynman-Kac theorem for rough volatility models proved in \cite{viens2019martingale}. More generally, we can say that the concrete mathematical structure of the model of choice will determine a certain variation of the reasoning we propose in the present paper.

 We also point out that other deep learning-based solvers for BSDEs (or associated PDEs) in the Markovian setting can be found in the literature, see for instance the dynamic programming based algorithms proposed in  \cite{hpw2021} and the deep splitting method in \cite{bbcjn2021}. We refer to \cite{bbcjn2021} for an extensive literature review. The solver proposed by E, Han and Jentzen in \cite{weinar2017} provides us all the tools necessary for developing our methodology, but we do not exclude that other solvers could be also used in the same context.

{\color{black}
\begin{oss}
The reason why we choose to work with the deep BSDE solver by E-Han-Jentzen is mainly due to the fact that it can naturally be applied to solve two BSDEs recursively, as we need to do in the framework of mean-variance hedging.  Moreover,  local deep algorithms, as the one presented in \cite{hpw2021}, usually require to solve as many optimization problems as the number of time steps. This translates into a high computational cost when the  discretization in time becomes more accurate. Since our approach requires a sequential application of the solver, we believe that a solver of this type is not the most suitable for our context. 
On the other hand, it is well known that  global algorithms, such as the one in \cite{weinar2017} that we use, can easily loose accuracy and get stuck in poor local minima. However, as we shall see in Section \ref{sec:numerics}, this does not prevent us from obtaining satisfactory results in all the performed tests under suitable choice of the learning parameters.
\end{oss}
}

We shall now provide a self-contained presentation of the deep BSDE solver proposed by \cite{weinar2017} as it is relevant to our setting. After that, we show how to apply the solver in the context of  mean-variance hedging and local risk minimization. Notice that the treatment of the solver is general and not restrictive to the Heston model considered in the numerical experiments.

\subsection{The deep BSDE solver by E, Han and Jentzen}\label{sec:solver}
We start from a general Forward-Backward stochastic differential equation (FBSDE). Let $\left(\Omega,\F,\p\right)$ be a probability space rich enough to support an $\R^{q}$-valued Brownian motion $\mathcal{W}=(\mathcal{W}_t)_{t\in[0,T]}$. Let $\mathbb{F}=(\F_t)_{t\in[0,T]}$ be the standard $\mathbb{P}$-augmentation of the filtration generated by $\mathcal{W}$.
Let us consider an FBSDE in the following general form:
\begin{align}
\mathcal{X} _ { t } &  = x + \int _ { 0 } ^ { t } b \left( s, \mathcal{X} _ { s } \right) \mathrm d s + \int _ { 0 } ^ { t } a \left( s, \mathcal{X} _ { s } \right) \mathrm d \mathcal{W} _ { s } , \quad x \in \mathbb { R } ^ {q}, \label{eq:forward}\\
\mathcal{Y}_{t}     & = \vartheta (\mathcal{X}_T) +\int_{t}^{T} h \left(s, \mathcal{X}_{s}, \mathcal{Y}_{s}, \mathcal{Z}_{s}\right) \mathrm d s- \int_{t}^{T} \mathcal{Z}_{s}^\top \mathrm d \mathcal{W}_{s}, \quad t \in[0, T], \label{eq:backward}
\end{align}
where the vector fields $b:[0,T]\times \mathbb{R}^{q} \to \mathbb{R}^{q}$, $a:[0,T]\times \mathbb{R}^{q}\to \mathbb{R}^{{q}\times {q}}$, $h:[0,T]\times \R^{q}\times \R \times \R^{q} \to \R$ and $\vartheta:\R^{q}\to \R$ satisfy
suitable assumptions ensuring existence and uniqueness results.

The FBSDE above is intimately linked with the following stochastic control problem:
\begin{align}\label{eq:min}
& \underset{y,\; \mathcal{Z}=(\mathcal{Z}_t)_{t\in [0,T]}}{\text{minimise}} \; \E\left[ \left| \vartheta(\mathcal{X}_T) - \mathcal{Y}_T\right|^2\right],\\
& \text{subject to } \begin{cases}
\mathcal{X} _ { t }  &\hspace{-0.2cm} =   x + \int _ { 0 } ^ { t } b \left( s, \mathcal{X} _ { s } \right) \mathrm d s + \int _ { 0 } ^ { t } a \left( s, \mathcal{X} _ { s } \right) \mathrm d \mathcal{W} _ { s } ,\\
\mathcal{Y}_{t}  
  &   \hspace{-0.2cm} = y  -\int_{0}^{t} h \big(s, \mathcal{X}_{s}, \mathcal{Y}_{s}, \mathcal{Z}_{s}\big) \mathrm d s +\int_{0}^{t} \mathcal{Z}_{s}^\top \, \mathrm d \mathcal{W}_{s}, \quad t \in[0, T].
\end{cases}\label{eq:dinXY}
\end{align}
More precisely, a solution $(\mathcal{Y},\mathcal{Z})$ to \eqref{eq:backward} is a  minimiser of the problem \eqref{eq:min}. 

The idea of the deep BSDE solver is to numerically solve a discretized version of the  optimal control problem \eqref{eq:min}--\eqref{eq:dinXY}.
For $N\in \mathbb N$, we introduce a time discretization  $0=t_0< t_1<\ldots <t_N=T$. Without loss of generality, we consider a uniform mesh with step $\Delta t$ such that $t_n = n\Delta t$, $n=0,\ldots,N$, and define $\Delta \mathcal{W}_{n} := \mathcal{W}_{t_{n+1}} - \mathcal{W}_{t_n}$. We then consider an Euler-Maruyama discretization of the system \eqref{eq:min}--\eqref{eq:dinXY},  i.e.
\begin{align}
\overline{\mathcal{X}}_{{n+1}}    & =  \overline{\mathcal{X}}_{n}  + b(t_n,\overline{\mathcal{X}}_{n}) \Delta t + a(t_n,\overline{\mathcal{X}}_{n}) \Delta \mathcal{W}_{n}, && \overline{\mathcal{X}}_{0} = x, \label{eq:Eforward}\\
\overline{\mathcal{Y}}_{{n+1}} &=  \overline{\mathcal{Y}}_{n} - h(t_n, \overline{\mathcal{X}}_{n}, \overline{\mathcal{Y}}_{n}, \overline{\mathcal{Z}}_{n}) \Delta t + \overline{\mathcal{Z}}_{n}^\top\Delta \mathcal{W}_{n}, && \overline{\mathcal{Y}}_0 = y,\label{eq:Ebackward}
\end{align}
for $n=0,\ldots, N-1$. The deep BSDE solver consists in approximating, at each time step $n$,  the control process $\overline{\mathcal{Z}}_n$ in \eqref{eq:Ebackward} by means of an artificial neural network (ANN), namely by a function $\cN^{\mathcal Z}_n:\R^q\to \R^q$ of the form $ \cN^{\mathcal Z}_n(x) = \mathcal{L}_{\ell}^n \circ \varrho \circ \mathcal{L}^n_{\ell-1} \circ \ldots \circ \varrho \circ \mathcal{L}^n_{1}(x)$,
where all $\mathcal{L}^n_j$, for $j=1,\ldots,\ell$ and $n=0, \ldots, N-1$, are affine transformations and  $\varrho$, called \textit{activation function}, is a univariate function that is applied component-wise to vectors. Equation \eqref{eq:Ebackward} then becomes
\begin{equation}
	\widehat{\mathcal{Y}}_{{n+1}} =  \widehat{\mathcal{Y}}_{n} - h(t_n, \overline{\mathcal{X}}_{n}, \widehat{\mathcal{Y}}_{n}, \cN^{\mathcal{Z}}_n(\overline{\mathcal{X}}_{n})) \Delta t + \cN^{\mathcal{Z}, \top}_n(\overline{\mathcal{X}}_{n})\Delta \mathcal{W}_{n}, \qquad \qquad\quad\widehat{\mathcal{Y}}_0 = y.\label{eq:EbackwardHat}
\end{equation}
Since we are approximating  the control process $\overline{\mathcal{Z}}_n$ with a different ANN at each time step $n$, we need in practice a family $\left(\cN^{\mathcal{Z}}_n\right)_{n=0}^{N-1}$ of ANNs.

By denoting with $\mathcal{P}((\mathcal{N}^{{\mathcal{Z}}}_n)_{n=0}^{N-1})$ the set of all the parameters, i.e. weights and biases, of the ANNs family, the stochastic control problem \eqref{eq:min} reduces to 
\begin{align}\label{eq:ocpLMV}
	\underset{y,\mathcal{P}((\mathcal{N}^{{\mathcal{Z}}}_n)_{n=0}^{N-1})}{\text{minimise}} \; \E\left[ \left| \vartheta(\overline{\mathcal{X}}_N) - \widehat{\mathcal{Y}}_{{N}}\right|^2\right].
\end{align}
By stochastic gradient descent, the deep BSDE solver trains the ANNs and finds the optimal initial value $\widehat{y}$ and the set of parameters ${\mathcal{P}}((\widehat{\mathcal{N}}^{{\mathcal{Z}}}_n)_{n=0}^{N-1})$ characterizing the optimal family of networks $(\widehat{\cN}^{\mathcal{Z}}_n)_{n=0}^{N-1}$.

\subsection{Details on the parametrization of the solver}\label{sec:paramSolver}
In this section we link the setup of the deep BSDE solver with the market model from Section \ref{sec:MarketModel}. With respect to the general presentation of the solver, we set $q = m+d$ and $\mathcal{W}=(W,B)^\top \in\mathbb{R}^{q}$. Concerning the forward SDE \eqref{eq:forward}, the coefficients appearing in the dynamics of the risky assets are $\mathbb{F}$-measurable, meaning that they will depend in general on both the Brownian motions $W$ and $B$. To capture this, we introduce an additional $\mathbb{R}^d$-dimensional risk factor process $Y=(Y_t)_{t\in [0,T]}$ following the dynamics
\begin{equation}
	\label{eq:dynY}
	\dd Y_t  = \gamma(t, Y_t)\dd t + \Gamma(t, Y_t)\dd\mathcal{W}_t, \qquad \qquad \quad Y_0 = y_0,
\end{equation}
with $\gamma:[0, T]\times \R^d \to \R^d$ and $\Gamma:[0, T]\times\R^{d}\to \R^{d\times q}$ of the form $\Gamma = (\Gamma^W, \Gamma^B)$ for $\Gamma^W:[0,T]\times \R^d \to \R^{d\times m}$ and $\Gamma^B:[0,T]\times \R^d \to \R^{d\times d}$. The coefficients $r,\mu^i$ and $\sigma^{i,j}$ in \eqref{eq:tildeS} are assumed to be sufficiently regular functions of $Y$, for $i,j=1,\ldots,m$. We omit such dependence to ease the notation.

The forward process $\mathcal{X}$ in \eqref{eq:forward} collects then both the $m$ risky assets $\tilde S^1,\ldots,\tilde S^m$ in its first $m$ components, and, additionally, the factor process $Y$, so that $\mathcal{X}=(\tilde S^1,\ldots,\tilde S^m,Y^1,\ldots,Y^d)^\top$. From equations \eqref{eq:tildeS} and \eqref{eq:dynY}, the precise form of the vector fields in \eqref{eq:forward} is then the following: for the diffusion matrix we set
\begin{align}\label{adef}
	a(t,\mathcal{X}_t)= 
	\begin{pmatrix}
		\operatorname{diag}(\tilde S_t) \sigma_t & \mathbf 0 \\
		\Gamma^W(t, Y_t)  & \Gamma^B(t, Y_t)
	\end{pmatrix},
\end{align} 
where $\mathbf 0 \in\R^{m\times d}$ is a matrix of $0$'s, and, similarly, for the drift term we set
\begin{align}\label{bdef}
	b(t,\mathcal{X}_t)=
	\begin{pmatrix}
		\operatorname{diag}(\tilde S_t)\left(\mu_t-r_t\mathbbm{1}\right)\\
		\gamma(t, Y_t)
	\end{pmatrix}.
\end{align}
We assume the  coefficients $\gamma$ and $\Gamma$ to be sufficiently regular so that the SDE  \eqref{eq:forward} with $a$ and $b$ given by \eqref{adef} and \eqref{bdef} admits a unique strong solution.

The backward process $\mathcal{Y}$ in \eqref{eq:backward} is given, case by case, by the BSDE that the quadratic hedging approach selected requires to solve. We shall be more precise about this in the next two paragraphs.

\subsubsection{Deep mean-variance hedging}\label{sec:deepMV}
As illustrated in Section \ref{sec:MeanVarTheory}, there are two BSDEs that we need to solve in the mean-variance hedging case. 
In a first step we discretize \eqref{eq:StochasticRiccati} forward in time,  which is the stochastic Riccati equation parametrizing the variance optimal martingale measure. We get
\begin{equation*}
\begin{cases}
\overline{{L}}_{{n+1}} =  \overline{{L}}_{n} +\left(|\phi_n|^2 \overline{L}_n + 2\phi^\top_n \overline{\Lambda}_{1,n} +\frac{\overline{\Lambda}^\top_{1,n}\overline{\Lambda}_{1,n}}{\overline{L}_n}\right)\Delta t + \overline{\Lambda}_{n}^\top\Delta \mathcal{W}_{n},\qquad \text{for $n=0,\ldots, N-1$},\\
 \overline{L}_0 = y_L,
\end{cases}
\end{equation*}
where  $\phi_n := \phi_{t_n}$, $\overline\Lambda^\top_n = (\overline\Lambda^\top_{1,n},\overline\Lambda^\top_{2,n})$ and $y_L\ge0$ is the initial value to be determined.
Next, in order to apply the solver, we assume that $\overline{\Lambda}_n^\top = \left(\overline{\Lambda}_{1,n}^\top,
\overline{\Lambda}_{2,n}^\top\right)$ is parametrized at each time step $n$ by an ANN $\mathcal{N}^{{\Lambda}}_n$ of the form
\begin{align*}
\mathcal{N}^{{\Lambda}}_n=\left(\begin{array}{c}
\mathcal{N}^{{\Lambda}}_{1,n} \\ 
\mathcal{N}^{{\Lambda}}_{2,n}
\end{array} \right),
\end{align*}
with $\mathcal{N}^{{\Lambda}}_{1,n}:\mathbb{R}^{q}\to \mathbb{R}^m$ and $\mathcal{N}^{{\Lambda}}_{2,n}:\mathbb{R}^{q}\to \mathbb{R}^d$.  The discretized BSDE takes then the form
\begin{equation}\label{eq:deepLMV}
\begin{cases}
\widehat L_{{n+1}} =  \widehat L_{n} +\left( |\phi_n|^2 \widehat L_n + 2\phi^\top_n \mathcal{N}^{{\Lambda}}_{1,n}{(\overline{\mathcal X}_n)} +\frac{\left(\mathcal{N}^{{\Lambda}}_{1,n}{(\overline{\mathcal X}_n)}\right)^\top\mathcal{N}^{{\Lambda}}_{1,n}{(\overline{\mathcal X}_n)}}{\widehat L_n}\right)\Delta t + \left(\mathcal{N}^{{\Lambda}}_{n}{(\overline{\mathcal X}_n)}\right)^\top\Delta \mathcal{W}_{n},\\
\hfill{\text{for $n=0,\ldots, N-1$}},\\
\widehat L_0 = y_L,
\end{cases}
\end{equation}
where $(\overline{\mathcal{X}}_n)^N_{n=0}$ is the discretized forward process \eqref{eq:Eforward}. 
As a consequence of the previous approximations, taking into account the terminal condition of the stochastic Riccati BSDE, the stochastic control problem \eqref{eq:min} involves a minimization over the initial value $y_L$ and the set of parameters of the ANNs that we denote by $\mathcal{P}((\mathcal{N}^{{\Lambda}}_n)_{n=0}^{N-1})$. We then need to solve
\begin{align}\label{eq:ocpLMV2}
\underset{y_L,\mathcal{P}((\mathcal{N}^{{\Lambda}}_n)_{n=0}^{N-1})}{\text{minimise}} \; \E\left[ \left| 1 - \widehat{{L}}_{{N}}\right|^2\right].
\end{align}
Let $\widehat y_L$ and $(\widehat{\mathcal{N}}^{{\Lambda}}_n)_{n=1}^{N-1}$ be respectively the initial condition and the ANNs obtained from  the above minimization problem via the deep solver. 
With an abuse of notation, we denote with $(\widehat{L}_n)^N_{n=0}$ the approximated solution to \eqref{eq:StochasticRiccati} which is obtained by \eqref{eq:deepLMV} with $y_L=\widehat y_L$ and ${\mathcal{N}}^{\Lambda}=\widehat{\mathcal{N}}^{{\Lambda}}$.

Equipped with such numerical solution, we focus now on the portfolio dynamics in the fictiously extended financial market. Once again, we perform an Euler-Maruyama discretization, meaning that we consider
\begin{equation*}
\begin{cases}
\overline{{X}}_{{n+1}}^{\mathrm{mv}}=  \overline{{X}}_{n}^{\mathrm{mv}}  + \frac{1}{S^0_n}\left(\phi^{\top}_n\overline{\eta}^{\mathrm{mv}}_{1,n}-\frac{\overline{\Lambda}^\top_{2,n}\overline{\eta}_{2,n}^{\mathrm{mv}}}{\overline{L}_n}\right) \Delta t+\frac{1}{S^0_n}\overline{\eta}^{\mathrm{mv}, \top}_{n}\Delta \mathcal W_n,\qquad \text{for $n=0,\ldots, N-1$},\\
\overline{{X}}^{\mathrm{mv}}_{{0}}    = y_X^{\mathrm{mv}},
\end{cases}
\end{equation*}
where $\overline{\eta}_{n}^{\mathrm{mv}, \top} = \left(\overline{\eta}_{1,n}^{\mathrm{mv}, \top}, \overline{\eta}_{2,n}^{\mathrm{mv}, \top}\right)$ and $y_X^{\mathrm{mv}}\ge0$ is the initial value to be determined.
Next, we introduce a second family of ANNs $\mathcal{N}^{{\eta^{\mathrm{mv}}}}_n:\mathbb{R}^{q}\to\mathbb{R}^{q}$ of the form
\begin{align*}
\mathcal{N}^{{\eta^{\mathrm{mv}}}}_n=\left(\begin{array}{c}
\mathcal{N}^{{\eta^{\mathrm{mv}}}}_{1,n} \\ 
\mathcal{N}^{{\eta^{\mathrm{mv}}}}_{2,n}
\end{array} \right),
\end{align*}
with $\mathcal{N}^{{\eta^{\mathrm{mv}}}}_{1,n}:\mathbb{R}^{q}\to \mathbb{R}^m$ and $\mathcal{N}^{{\eta^{\mathrm{mv}}}}_{2,n}:\mathbb{R}^{q}\to \mathbb{R}^d$, depending on the set of parameters denoted by  ${\mathcal{P}((\mathcal{N}^{{\eta^{\mathrm{mv}}}}_n)_{n=1}^{N-1})}$. Such networks allow us to introduce the following approximation:
\begin{equation}\label{eq:deepXMV}
\begin{cases}
\widehat X^{\mathrm{mv}}_{{n+1}} =  \widehat X^{\mathrm{mv}}_{n} +\frac{1}{S^0_n}\left(\phi^{\top}_n\mathcal{N}^{{\eta^{\mathrm{mv}}}}_{1,n}(\overline{\mathcal X}_n)-\frac{\left(\widehat{\mathcal{N}}^{{\Lambda}}_{2,n}(\overline{\mathcal X}_n)\right)^\top \mathcal{N}^{{\eta^{\mathrm{mv}}}}_{2,n}(\overline{\mathcal X}_n)}{\widehat L_n}\right) \Delta t+\frac{1}{S^0_n}\left(\mathcal{N}^{{\eta^{\mathrm{mv}}}}_{n}(\overline{\mathcal X}_n)\right)^\top\Delta \mathcal{W}_{n},\\
\hfill{\text{for $n=0,\ldots, N-1$}},\\
\widehat X^{\mathrm{mv}}_0=y_X^{\mathrm{mv}},
\end{cases}
\end{equation}
where $\widehat{\mathcal{N}}^{{\Lambda}}_{2,n}$ and ${\widehat L_n}$ are results of the first optimization problem \eqref{eq:ocpLMV2}. 
The associated stochastic control problem \eqref{eq:min} involves a minimization over the initial value $y_X^{\mathrm{mv}}$ and the networks' parameters ${\mathcal{P}((\mathcal{N}^{{\eta^{\mathrm{mv}}}}_n)_{n=0}^{N-1})}$:
\begin{align}\label{eq:ocpXMV}
\underset{y_X^{\mathrm{mv}},\mathcal{P}((\mathcal{N}^{{\eta^{\mathrm{mv}}}}_n)_{n=0}^{N-1})}{\text{minimise}} \; \E\left[ \left| \tilde{H} - \widehat{X}^{\mathrm{mv}}_{N}\right|^2\right].
\end{align}
We denote by $\widehat{y}_X^{\mathrm{mv}}$ and $(\widehat{\mathcal{N}}^{{\eta^{\mathrm{mv}}}}_n)_{n=0}^{N-1}$ the corresponding optimal solutions  computed with the deep solver. With abuse of notation, from now on  $(\widehat{X}^{\mathrm{mv}}_n)^N_{n=0}$ will denote the discrete time process which is obtained from \eqref{eq:deepXMV} by substituting $y_X^{\mathrm{mv}} = \widehat y_X^{\mathrm{mv}}$ and ${\mathcal{N}}^{{\eta^{\mathrm{mv}}}} = \widehat{\mathcal{N}}^{{\eta^{\mathrm{mv}}}}$, and that we take as the approximated solution to \eqref{hedgingproblemX}.


The approximated optimal trading strategy and the corresponding value process can then be recursively computed starting from equation \eqref{eq:xdyn} and \eqref{eq:ximv} as follows:
\begin{equation*}
	\begin{cases}
		\widehat V^{\mathrm{mv}}_0 = \widehat{y}_X^{\mathrm{mv}},\\
		\widehat\xi^{\mathrm{mv}}_n=\text{diag}(\overline{S}_n)^{-1}\left(\left(\sigma_n^{-1}\right)^{\top}\left[\phi_n+\frac{\widehat{\mathcal{N}}^{\Lambda}_{1,n}(\overline{\mathcal X}_n)}{\widehat{L}_n}\right](\widehat X^{\mathrm{mv}}_n- \widehat V^{\mathrm{mv}}_n)+\left(\sigma_n^{-1}\right)^{\top} \widehat{\mathcal{N}}^{{\eta^{\mathrm{mv}}}}_{1,n}(\overline{\mathcal X}_n)\right),\\
		\widehat{\psi}^{\mathrm{mv}}_n = \widehat V^{\mathrm{mv}}_n -\text{diag}(\overline{S}_n) \widehat\xi^{\mathrm{mv}}_n,\\
		{\widehat V}^{\mathrm{mv}}_{n+1} = {\widehat V}^{\mathrm{mv}}_{n} + \big\{\sum^m_{i=1} \left[\mu^i_n -r_n\right]  \widehat\xi^{\mathrm{mv},i}_n\overline{S}^i_n \big\} \Delta t  + \sum^m_{i=1} \widehat\xi^{\mathrm{mv},i}_n \overline{S}^i_n \sum^m_{j=1} \sigma^{ij}_n  \Delta W^j_n\vspace{0.2cm},\\
		\hfill{\text{for $n=0,\ldots, N-1$}}.
	\end{cases}
\end{equation*}


\subsubsection{Deep local risk minimization}\label{sec:deepLRM}
\label{sec:deepLR}
The approximation of local risk minimizing strategies by means of the deep solver poses less challenges with respect to the mean-variance hedging approach. In fact, thanks to Proposition \ref{prop:FSBSDE}, we know that to find the F\"ollmer-Schweizer decomposition we need to solve the linear BSDE \eqref{eq:localriskX}. By discretizing the backward dynamics forward in time by the Euler-Maruyama scheme, we get 
\begin{align}
\begin{cases}
\overline{{X}}^{\mathrm{lr}}_{{n+1}} =  \overline{{X}}^{\mathrm{lr}}_{n} +\frac{1}{S^0_n}\overline{\eta}^{\mathrm{lr}, \top}_{1,n}\phi_n\Delta t + \frac{1}{S^0_n}\overline{\eta}_{n}^{\mathrm{lr}, \top}\Delta \mathcal{W}_{n},\qquad \text{for $n=0,\ldots, N-1$},\\
 \overline{X}^{\mathrm{lr}}_0 = y^{\mathrm{lr}}_X,
\end{cases}
\end{align}
where $\overline{\eta}^{\mathrm{lr}, \top}_{n} = \left(\overline{\eta}^{\mathrm{lr}, \top}_{1,n}, \overline{\eta}^{\mathrm{lr}, \top}_{2,n}\right)$ and $y_X^{\mathrm{lr}}\ge0$ is the initial value to be determined. We introduce a family of ANNs $\mathcal{N}^{\eta^{\mathrm{lr}}}_n:\mathbb{R}^{q}\to\mathbb{R}^{q}$, $n=0,\ldots,N-1$, of the form
\begin{align*}
\mathcal{N}^{\eta^{\mathrm{lr}}}_n=\left(\begin{array}{c}
\mathcal{N}^{\eta^{\mathrm{lr}}}_{1,n} \\ 
\mathcal{N}^{\eta^{\mathrm{lr}}}_{2,n}
\end{array} \right)
\end{align*}
with $\mathcal{N}^{\eta^{\mathrm{lr}}}_{1,n}:\mathbb{R}^{q}\to \mathbb{R}^m$ and $\mathcal{N}^{\eta^{\mathrm{lr}}}_{2,n}:\mathbb{R}^{q}\to \mathbb{R}^d$, and consequently the approximation
\begin{align}
	\label{eq:localriskXdisc}
\begin{cases}
\widehat{{X}}_{{n+1}}^{\mathrm{lr}} =  \widehat{{X}}_{n}^{\mathrm{lr}} +\frac{1}{S^0_n}\left(\mathcal{N}^{\eta^{\mathrm{lr}}}_{1,n}(\overline{\mathcal X}_n)\right)^\top\phi_n\Delta t + \frac{1}{S^0_n}\left(\mathcal{N}^{\eta^{\mathrm{lr}}}_n(\overline{\mathcal X}_n)\right)^\top\Delta \mathcal{W}_{n},\qquad \text{for $n=0,\ldots, N-1$},\\
\widehat{X}^{\mathrm{lr}}_0 = y_X^{\mathrm{lr}}.
\end{cases}
\end{align}
The associated stochastic control problem \eqref{eq:min} then becomes
\begin{align}\label{eq:ocpXLR}
\underset{y_X^{\mathrm{lr}},\mathcal{P}\left((\mathcal{N}^{\eta^{\mathrm{lr}}}_n)_{n=0}^{N-1}\right)}{\text{minimise}} \; \E\left[ \left| \tilde{H} - \widehat{X}^{\mathrm{lr}}_{N}\right|^2\right],
\end{align}
where $\mathcal{P}((\mathcal{N}^{\eta^{\mathrm{lr}}}_n)_{n=0}^{N-1})$ is the set of parameters of the ANNs. We denote by $\widehat{y}_X^{\mathrm{lr}}$ and $(\widehat{\mathcal{N}}^{{\eta^{\mathrm{lr}}}}_n)_{n=0}^{N-1}$ the optimal solution to the problem above, and with $(\widehat{X}^{\mathrm{lr}}_n)^N_{n=0}$ the discrete time process obtained from \eqref{eq:localriskXdisc} by substituting $y_X^{\mathrm{lr}} = \widehat y_X^{\mathrm{lr}}$ and ${\mathcal{N}}^{{\eta^{\mathrm{lr}}}} = \widehat{\mathcal{N}}^{{\eta^{\mathrm{lr}}}}$, and which approximates the solution to \eqref{eq:localriskX}.

The approximated optimal trading strategy can then be  computed from \eqref{eq:xiLR}:
\begin{equation}
	\begin{cases}
		\widehat \xi^{\text{lr}}_n = \frac{1}{S^0_n}\left(\operatorname{diag}(\overline{S}_n)\sigma_n\right)^{-1} \widehat{\mathcal{N}}^{\eta^{\mathrm{lr}}}_{1,n}\left(\overline{\mathcal X}_n\right),\\
		\widehat \psi^{\text{lr}}_n = \widehat{X}^{\mathrm{lr}}_n - \operatorname{diag}(\overline{S}_n) \widehat \xi^{\text{lr}}_n ,\\
		\hfill{\text{for $n=0,\ldots, N-1$}}.
	\end{cases}
\end{equation}
This concludes the treatment of the deep BSDE solver approach for quadratic hedging. 

\subsubsection{Comparison with the Deep Hedging approach of \cite{bgtw2019}}
We conclude this section with some remarks allowing us to put the above proposed methodology in context. The approach which is the closest to our procedure is the so-called \textit{deep hedging} developed in \cite{bgtw2019}. Here the authors consider the profit and loss (PL) of a trader subject to transaction costs. In our notation, this would be given by 
\begin{align*}
    \mathrm{PL}_T\left(\tilde{H}, y, \xi \right):=-\tilde{H}+y+(\xi \cdot \tilde S)_T -\mathcal{C}_T(\xi),
\end{align*}
where $(\xi \cdot \tilde S)_T$ denotes the discrete-time stochastic integral representing the gains and losses from trading according to the strategy $\xi$, and $\mathcal{C}_T(\xi)$ represents, e.g., transaction costs. As their market model is incomplete, they resort to indifference pricing in order to determine the initial price $y$. Given a convex risk measure $\varrho$, they consider the following minimization problem with respect to a random variable $X$, representing the financial payoff:
\begin{align*}
    \pi(X):=\inf _{ \xi\in \mathcal{A}^{\text{ip}}} \varrho\left(X+(\xi \cdot \tilde S)_T-\mathcal{C}_T(\xi)\right),
\end{align*}
where $\mathcal{A}^{\mathrm{ip}}$ is the set of admissible strategies for their indifference pricing setting. The price $y$ is defined as the amount of cash that the trader needs to charge in order to be indifferent between $-\tilde H$ and $0$, i.e.  $y=\pi(-\tilde H)-\pi(0)$. The strategies are directly parametrized by means of artificial neural networks. Choosing the expectation of the square of $\mathrm{PL}_T(\tilde{H}, y, \xi)$ as the convex risk measure $\varrho$  brings the methodology close to ours.  There is however a significant difference: the utility indifference price is available only at time zero as a real number. Hence it is not possible to obtain the dynamic properties of the price process or simulate future scenarios of the price which are needed for the computing, e.g., CVA. Our approach to local risk minimization and mean-variance hedging allows to obtain the BSDEs that are satisfied by the price process. We consider this as a useful feature for risk management applications.





{\color{black}
\section{Convergence analysis}
We discuss in this section the convergence of our method. To do this, we start by writing the linear BSDEs \eqref{hedgingproblemX} and  \eqref{eq:localriskX} in the general form 
\begin{align}\label{eq:linearBSDE}
	\begin{cases}
		\dd \tilde{X}_t = \Theta^\top_t \tilde \eta_{t} \dd t +\tilde\eta^{\top}_{t} \dd {\mathcal W}_t,\\
		\tilde{X}_T = \tilde{H},
	\end{cases}
\end{align}
where $\mathcal W = (W, B)^\top\in \R^q$, $\tilde \eta := \eta/S^0 \in \R^q$, and $\Theta \in \R^q$ is defined as $\Theta := (\phi, -\Lambda_2/L)^\top$ in the case of \eqref{hedgingproblemX} and $\Theta := (\phi, \mathbf{0})^\top$ in the case of \eqref{eq:localriskX}, with $\mathbf{0}\in\R^d$ a vector of $0$'s. Notice that equation \eqref{eq:linearBSDE} is a linear BSDE with unbounded coefficients due to the definition of $\Theta$. Moreover, both the process $\Theta$ and the terminal condition $\tilde H$ depend on the risk factors (i.e. on the forward process $\mathcal X =(\tilde S^1, \ldots, \tilde S^m, Y^1,\ldots, Y^d)$) and, in the mean-variance case, $\Theta$ also depends on the solution of the BSRE \eqref{eq:StochasticRiccati}. To ease the notation, we remove for this section  the $\,\tilde{\ }\,$'s appearing in \eqref{eq:linearBSDE}, and directly write the solution and the terminal condition as $(X,\eta)$ and $H$, respectively.

We notice that, when the deep solver is applied to \eqref{eq:linearBSDE}, the input process $\Theta$ and the terminal condition $H$ are replaced by the piecewise-constant approximation $\bar \Theta$ and $\bar H$, respectively. In particular, $\bar \Theta$ is defined starting from the Euler approximation of $\mathcal X$ and, in the mean-variance case, also from the solution $(\bar L, \bar \Lambda_2)$  provided by the solver applied to equation \eqref{eq:StochasticRiccati}. The terminal condition $\bar H$ is defined starting from the Euler approximation of $\mathcal X$. 

The first result that we present estimates then the error introduced on the solution to \eqref{eq:linearBSDE} when one approximates  $(\Theta, H)$ by means of $(\bar \Theta, \bar H)$. In what follows, we consider the following Dol\'eans-Dade exponentials:
\begin{equation}
    \label{eq:Dexp}
    \mathcal E_t := \exp\left(-\int^t_0  \frac{| \Theta_s|^2}{2}\dd s + \Theta_s^ \top \dd \mathcal W_s\right),\quad \text{and} \quad \bar{\mathcal E}_t := \exp\left(-\int^t_0 \frac{|\bar \Theta_s|^2}{2}\dd s + \bar \Theta_s^ \top \dd \mathcal W_s\right).
\end{equation}

We  work under the following assumption.

\begin{assumption}\label{ass:convNew}
{$H, \bar H\in L^2_{\F_T}(\Omega;\R)$} and $\Theta, \bar \Theta$ are two progressively-measurable square-integrable processes such that $\int^t_0 \Theta^ \top_s \dd\mathcal W_s$ and $\int^t_0 \bar \Theta^ \top_s \dd\mathcal W_s $ are bounded mean oscillation (BMO) martingales, i.e.  it holds almost surely that
$$
\sup_{\tau\in \mathcal T_{[0,T]}}\E\left[\int^T_\tau |\Theta_s|^ 2 \dd s \Big| \mathcal F_\tau \right] <\infty,\quad \text{and} \quad \sup_{\tau\in \mathcal T_{[0,T]}}\E\left[\int^T_\tau |\bar\Theta_s|^ 2 \dd s \Big| \mathcal F_\tau \right] <\infty, 
$$
{where $\mathcal T_{[0,T]}$ denotes the family of $\mathbb F$-stopping times valued in $[0,T]$.}
\end{assumption}

\begin{prop}[Corollary 2.1.3 in \cite{DosReisThesis}]\label{prop:estTheta}
Let Assumptions \ref{ass:convNew} be satisfied, and let {$\mathcal E\in L^{\epsilon}_{\FF}([0,T];\R)$} and {$\bar{\mathcal E}\in  L^{\bar\epsilon}_{\FF}([0,T];\R)$}, for some $\epsilon,\bar \epsilon>1$. Then, for a given $p>1$ such that $\frac{1}{\max(\epsilon,\bar \epsilon)} + \frac1p =1$, there exists a constant $C\geq 0$ such that 
\begin{align*}
&\E\left[\sup_{t\in [0,T]}  |X_t - \bar X_t|^2 \right] 
\leq C \left\{ \E\left[ | H - \bar H|^{2p^ 2} \right]^ {1/{p^ 2}}+ \E\left[ \left(\int^ T_0 |\Theta_s - \bar \Theta_s|^ 2 \dd s\right)^ {2  p^ 2} \right]^ {1/{2p^ 2}}\right\}. 
\end{align*}
\end{prop}

\begin{oss}
    When dealing with local risk minimization, since $\Theta = (\phi, \mathbf{0})^\top$, the  process  $|\Theta - \bar \Theta|$ appearing on the right-hand side of the estimate above only depends on the difference between the market price of risk $\phi$ and its approximation obtained after the Euler discretization of the market dynamics. In the mean-variance framework, an additional contribution to the error comes from  the difference between $-\frac{\Lambda_2}{L}$ and $-\frac{\bar \Lambda_2}{\bar L}$, where $(\bar L, \bar \Lambda_2)$ comes from the deep solver solution to the BSRE \eqref{eq:StochasticRiccati}. This motivates the fact that in the numerical experiments in Section \ref{sec:numerics}, the relative errors in Table \ref{table:resultsMV} for the mean-variance hedging are overall higher than those in Table 
    \ref{table:resultsLR} for the local risk minimization.
\end{oss}

The second step is to discuss the solver approximation error. Error bounds for the deep BSDE solver have been established in \cite{HanLon18} under the assumption of Lipschitz continuity of the driver of the BSDE. Here, we will make use of such results applied to a suitable transformation of the  BSDE. Considering then equation \eqref{eq:linearBSDE} with $(\Theta, \tilde H)$ replaced by $(\bar \Theta, \bar H)$, we aim at providing estimates for the difference between $\bar X$, the exact solution, and $\widehat X$, the solution obtained with the solver.
In particular, one can show that $\bar X$ can be written as $\bar X_t = \bar{\mathcal E}^{-1}_t Z_t$, with $\bar{\mathcal E}$ defined in equation \eqref{eq:Dexp} and $Z$ (together with the control process $\zeta$) solution to the linear BSDE with no drift
\begin{align}\label{eq:linearBSDE2}
	\begin{cases}
		\dd Z_t = \zeta^{\top}_{t} \dd {\mathcal W}_t,\\
		Z_T = \bar{\mathcal E}_T\bar {H}.
	\end{cases}
\end{align} 

We denote by  $(\widehat Z, \widehat \zeta)$ the solution to  \eqref{eq:linearBSDE2} provided by the deep BSDE solver and extended by piecewise constant interpolation to the whole time interval $[0, T]$. 
Then, one clearly has that
\begin{align}
\E\left[|\bar X_t  - \widehat X_t| \right]& \leq   \E\left[\bar{\mathcal E}^{-1}_t |Z_t - \widehat Z_t|\right] + \E\left[\bar {\mathcal E}^{-1}_t |\widehat Z_t -  \bar{ \mathcal E}_t\widehat X_t| \right]\nonumber\\
& \leq \E\left[\bar{\mathcal E}_t^{-2}\right]^{1/2}\E\left[|Z_t - \widehat Z_t|^2\right]^{1/2} + \E\left[\bar{\mathcal E}^{-2}_t\right]^{1/2} \E\left[|\widehat Z_t -  \bar{\mathcal E}_t\widehat X_t|^2 \right]^{1/2}.\label{eq:XtoZ}
\end{align}

The quantity $\E[\bar{\mathcal E}^{-2}]$ in \eqref{eq:XtoZ} can be shown to be bounded under suitable Kazamaki's type conditions.

\begin{assumption}\label{ass:Expbound}
There exists $\beta> 2+\sqrt{2}$ such that 
$$
\sup_{t\in [0,T]}\E\left[\exp\left(\frac{\beta}{2} \int^t_0 \bar \Theta_s^ \top \dd \mathcal W_s\right)\right]<\infty,\quad \text{and} \quad \sup_{t\in [0,T]}\E\left[\exp\left(\frac{2\beta}{\beta^2 - 4 \beta+2} \int^t_0 |\bar \Theta_s|^2 \dd s\right)\right]<\infty.
$$
\end{assumption}

\begin{lem}\label{lemexp}
   Let  Assumption \ref{ass:Expbound}  be satisfied. Then we have that
   $$
   \E\left[\sup_{t\in [0,T]}\bar{\mathcal E}^{-2}_t\right]<\infty.
   $$
\end{lem}
\begin{proof}
    The result follows by \cite{GashiLi2} by observing that, under Assumption \ref{ass:Expbound},  Theorem 2.6  in \cite{GashiLi2} ensures the boundedness of the second moment of $\bar{\mathcal E}^{-1}$.
\end{proof}
Moreover, from  \cite[Theorem 1]{HanLon18} there exists a constant $C\geq 0$, such that
\begin{equation}\label{hanlong}
\sup_{t\in [0,T]} \E\left[|Z_t - \widehat Z_t|^2\right] \leq C \left(\Delta t + \E\left[|\bar  H -\widehat Z_N|^2\right]\right),
\end{equation}
with $\Delta t$ the time step for the Euler discretization, and $\widehat Z_N$ the value of the approximated solution at the terminal time. 
By putting together Lemma \ref{lemexp} with \eqref{eq:XtoZ} and \eqref{hanlong}, we obtain the following result providing an a posteriori error estimate.

\begin{prop}\label{prop:covSolver}
   Let  Assumption \ref{ass:Expbound}  be satisfied. Then,  there exists a constant $C\geq 0$ such that 
   $$
   \sup_{t\in [0,T]}\E\left[|\bar X_t  - \widehat X_t| \right] \leq C \left(\Delta t^{1/2} + \E\left[|\bar  H -\widehat Z_N|^2\right]^{1/2} + \sup_{t\in [0,T]}\E\left[|\widehat Z_t -  \bar{\mathcal E}_t\widehat X_t|^2 \right]^{1/2}\right).
   $$
\end{prop}

From Proposition \ref{prop:estTheta} and Proposition \ref{prop:covSolver}, we finally derive the theorem below.

\begin{thrm}\label{teo:conv}
Let Assumptions \ref{ass:convNew} and \ref{ass:Expbound} be satisfied, and let {$\mathcal E\in L^{\epsilon}_{\FF}([0,T];\R)$} and {$\bar{\mathcal E}\in L^{\bar\epsilon}_{\FF}([0,T];\R)$}, for some $\epsilon,\bar \epsilon>1$. Then, for a given $p>1$ such that $\frac{1}{\max(\epsilon,\bar \epsilon)} + \frac1p =1$, there exists a constant $C\geq 0$ such that 
\begin{align*}
   \sup_{t\in [0, T]} \E\left[| X_t  - \widehat X_t| \right] \leq & \,C \Bigg(\Delta t^{1/2} + \E\left[|\bar  H -\widehat Z_N|^2\right]^{1/2} + \sup_{t\in [0, T]}\E\left[|\widehat Z_t -  \bar{\mathcal E}_t\widehat X_t|^2 \right]^{1/2}\\
   & + \E\left[ | H - \bar H|^{2p^ 2} \right]^ {1/{2 p^ 2}}+ \E\left[ \left(\int^ T_0 |\Theta_s - \bar \Theta_s|^ 2 \dd s\right)^ {2  p^ 2} \right]^ {1/{4p^ 2}}\Bigg).
\end{align*}
\end{thrm}

\begin{oss}
The performed convergence analysis shows a bound on the error that depends on the time discretization step, on the quantities estimating the goodness of the optimization routine and on the neural network explanatory power. As a consequence, a higher number of time steps does not automatically lead to a smaller error as the leading contribution to the error might also come from the second term. Notice also that having a neural network for each time step means that, for higher numbers of time steps, the optimization problem to solve is more complex. Looking at the numerical results in Section \ref{sec:numerics}, we then observe that for the local risk minimization (Table \ref{table:resultsLR}) the error decreases with the number of time steps, while in the mean-variance case (Table \ref{table:resultsMV}) for higher numbers of time steps the error may also be higher. This is due to the fact that the solver is applied twice in the latter case, hence the accumulation of the optimization error has a higher impact.
\end{oss}

}

\section{Numerical experiments}
\label{sec:numerics}
We present here the numerical results for mean-variance hedging and local risk minimization under the multidimensional Heston model introduced in Section \ref{sec:multiHeston}, by means of the deep BSDE solver presented in Section \ref{sec:DeepQuadraticHedging}. The code for the experiments is available at \url{https://github.com/silvialava/Deep_quadratic_hedging.git}.

Observe that for the multidimensional Heston model \eqref{d_model}, according with the notation in Section \ref{sec:paramSolver}, the forward process $\mathcal{X}$ is of the form $\mathcal{X} = (\tilde S, Y^2)$ with coefficients
\begin{align*}
	a(t,(\tilde S_t, Y_t^2))= 
	\begin{pmatrix}
		\mathrm{diag}(\tilde S_t)A\,\mathrm{diag}\left(Y_t\right)& \mathbf{0} \\
		\mathrm{diag}(\sigma)\mathrm{diag}\left(Y_t\right)\mathrm{diag}(\rho)  & \mathrm{diag}(\sigma)\mathrm{diag}\left(Y_t\right)\mathrm{diag}(\sqrt{\mathbbm{1}-\rho^2})
	\end{pmatrix}
\end{align*} 
where $\mathbf{0} \in\R^{m\times m}$ is a matrix of $0$'s, and
\begin{align*}
	b(t,(\tilde S_t, Y_t^2))=
	\begin{pmatrix}
		\mathrm{diag}(\tilde S_t)A \,\mathrm{diag}(Y^2_t)\bar{\mu}\\
		\mathrm{diag}(\kappa)\left(\theta- Y^2_t\right)
	\end{pmatrix}.
\end{align*}

Given a portfolio of $m\ge 1$ risky assets, for a strike price $K$ and a terminal time $T$, we aim at hedging and pricing a European-type call option, whose payoff function $g: \R_+^m \to \R_+$ is of the form
\begin{equation*}
	g(x) := \max \left(\sum_{i=1}^m x_i - mK, 0\right), \quad \mbox{for } x\in\R_+^m \mbox{ with } x = (x_1, \dots, x_m)^\top,
\end{equation*}
so that the discounted contingent claim $\tilde H$ becomes
\begin{equation}
	\label{contclaim}
	\tilde H = e^{-\int_0^T r_s\dd s}\,\max \left(\sum_{i=1}^m  S_T^i - mK, 0\right).
\end{equation}

In the one-dimensional case \eqref{HestonModel}, semi-explicit solutions can be computed by following \cite{cerny2008} for the mean-variance hedging and \cite{heath2001numerical} for the local risk minimization. We briefly present the two approaches in Appendix \ref{sec:benckmarks} as we shall use them as benchmarks for the deep BSDE solver. Both \cite{cerny2008} and \cite{heath2001numerical} rely on a two-dimensional partial differential equation (PDE) which we solve by adapting \cite{in2010adi}, see Appendix \ref{sec:PDEsolver} for details. These allow us to compare the entire contingent claim price path, as well as the trading strategies paths in $[0, T]$.

For higher values of $m$ (indicatively $m\geq 2$), however, solving these PDEs is not numerically feasible due to the curse of dimensionality. To test the accuracy of the deep BSDE solver one can alternatively perform a change of measure as illustrated, respectively, in \eqref{changeQmv} and \eqref{changeQlr}, and estimate the contingent claim price via Monte Carlo simulations under the variance optimal martingale measure and under the minimal martingale measure, respectively.

However, from equations  \eqref{changeQmv} and \eqref{eq:GirsanovKernelB}, we observe that the Radon-Nikodym derivative for the change of measure from $\p$ to the optimal martingale measure $\Q_{{\text{mv}}}$ depends on the solution of the stochastic Riccati equation \eqref{eq:StochasticRiccati}, hence it depends on the optimal solution that is found with the deep solver, see Section \ref{sec:deepMV}. Using this approximated density for Monte Carlo simulations under $\Q_{{\text{mv}}}$ would then lead to biased results.

To overcome this issue, we shall consider for all the experiments a model where the matrix $A$ is diagonal. As pointed out in Remark \ref{oss:H}, this leads to an $\tilde S$ being the superposition of mutually independent Heston models. We then apply component-wise to the vector processes $\tilde S$ and $Y^2$ the change of measure proposed by \cite[Section 4]{cerny2008} for the one-dimensional Heston model, which is presented in equation \eqref{HestonQ}, and we run the Monte Carlo routine by simulating $\tilde S$ and $Y^2$ directly under the variance optimal martingale measure $\Q_{{\text{mv}}}$.

On the other hand, for the minimal martingale measure $\Q_{\text{lr}}$, the Radon-Nikodym derivative \eqref{changeQlr} only depends on the market price of risk $\phi$. Then either we simulate under $\p$ and multiply by the density process \eqref{changeQlr}, or we simulate directly under $\Q_{\text{lr}}$ by performing the corresponding change of measure, namely
\begin{equation}
	\begin{aligned}
		&\dd W_t^{\text{lr}} = \dd W_t + \phi_{t}\dd t,\\
		&\dd B_t^{\text{lr}} = \dd B_t,
	\end{aligned}
\end{equation}
which, in the multidimensional Heston model \eqref{d_model}, leads to
\begin{equation}
	\label{HestonQLR}
	\begin{cases}
	\dd \tilde S_t 
	= \mathrm{diag}(\tilde S_t) A\,\mathrm{diag}\left(Y_t\right)\dd W_t^{\text{lr}},\\
	\dd Y^2_t = \mathrm{diag}(\kappa)\left(\theta- Y^2_t\right)\dd t - \mathrm{diag}(\sigma)\mathrm{diag}\left(Y_t\right)\mathrm{diag}(\rho)\phi_t \dd t
	\\ \qquad\qquad \qquad +\mathrm{diag}(\sigma)\mathrm{diag}\left(Y_t\right)\left(\mathrm{diag}(\rho)\dd W_t^{\text{lr}}  + \mathrm{diag}(\sqrt{\mathbbm{1}-\rho^2})\dd B_t^{\text{lr}} \right).
\end{cases}
\end{equation}

Without loss of generality, in equation \eqref{eq:driftHeston} we set $\bar{r}\equiv 0$ for all the experiments, so that the risk-free interest rate is $r_t = 0$ for each $t\ge 0$.

\subsection{Setup}
We conduct several numerical experiments on the multivariate Heston model to test the performance of the deep BSDE solver for quadratic hedging: both model and solver configuration details are in Table \ref{conf}, where $\mathbbm{1}\in \R^m$ denotes a vector of $1$'s. 

In particular, we consider four different portfolio dimensions, respectively $m=1, 5, 20, 100$. As we already mentioned above, the matrix $A$ is taken to be diagonal: this allows us to simulate under the variance optimal martingale measure by performing for each risky asset a one-dimensional change of measure as in \eqref{HestonQ}. Moreover, we consider all the model parameters to be constant vectors, so that also the one-dimensional case fits smoothly into the setting. We fix the strike price to $K=100.0$ and the terminal time is $T=1.0$.

We consider three different discretization grids, respectively $N=10$, $N=50$ and $N=100$, which means we have $N=10, 50, 100$ neural networks that must be trained. Each of them is built with $4$ inner layers, where each layer has a number of nodes depending on the portfolio dimension, namely $2m+20$, and with the rectified linear unit (ReLU) as activation function. For the training of the neural networks, we set to $8000$ the total number of stochastic gradient descent iterations with initial learning rate equal to $5\cdot 10^{-2}$. After $4000$ iterations (Partial iterations in the table), the learning rate is reduced to $5\cdot 10^{-3}$ to improve the convergence of the method.

Finally, we need to define an initial guess for $y_L$ and $y_X^{\text{mv}}$ for the deep mean-variance hedging, see Section \ref{sec:deepMV}, and an initial guess for $y_X^{\text{lr}}$ for the deep local risk minimization, see Section \ref{sec:deepLR}. Since we know that $0< L\le 1$, we set the range for $y_L$ to $[0.5, 2.0]$, while for $y_X$ we consider, respectively, the $95\%$ and the $105\%$ of the corresponding Monte Carlo price simulation (MC in the table).

We stress the fact that both model and solver configurations are kept constant in all the experiments, unless explicitly stated. This allows to compare the performance of the BSDE solver consistently. However, we point out that one can aim at improving even further our numerical results by tuning the hyper-parameters of the deep solver in a tailor-made manner for each single experiment.

\begin{table}[tp]
	\centering
	\begin{tabular}{lrrlr}
		\toprule
		\multicolumn{2}{l}{\large \textbf{Model configuration}} &&\multicolumn{2}{l}{\large \textbf{Deep solver configuration}}\\
		\midrule
		$m$ & $1, 5, 20, 100$& &$N$ & $10, 50, 100$\\
		$A$&$\mathrm{diag}(\mathbbm{1})$&&Number of layers& $4$\\
		$\bar{\mu}$&$0.1\cdot \mathbbm{1}$&&Number of nodes& $2m+20$\\
		$\kappa$&$0.5 \cdot \mathbbm{1}$&&Activation function&ReLU\\
		$\theta$&$0.05\cdot \mathbbm{1}$&&Total iterations&$8000$\\
		$\sigma$&$0.1\cdot \mathbbm{1}$&&Partial iterations&$4000$\\
		$\rho$&$-0.45\cdot \mathbbm{1}$&&Initial learning rate&$5\cdot 10^{-2}$\\
		$S_0$&$100.0\cdot \mathbbm{1}$&&Second learning rate& $5\cdot 10^{-3}$\\
		$Y^2_0$&$0.025\cdot \mathbbm{1}$&&$y_L$ initial range& $[0.5, 2.0]$\\
		$K$&$100.0$&&$y_X$ initial range&MC $\cdot [0.95, 1.05]$\\
		$T$&$1.0$&&Batch size& $128$\\
		\bottomrule
	\end{tabular}
\caption{Model configuration and deep solver configuration for the numerical experiments. 
	\label{conf}}
\end{table}
	
\subsection{Deep mean-variance hedging results}
The numerical results for the deep mean-variance hedging approach are presented in Table \ref{table:resultsMV}, and in
 Figure \ref{MV_logloss} we show the evolution of the logarithmic loss as a function of the number of deep solver iterations. Here the black curves represent the log-loss for the first BSDE and follow the black grid on the left-hand side. The red curves represent the log-loss for the second BSDE and follow the red grid on the right-hand side. 

For each portfolio dimension considered, we compute the Monte Carlo (MC) price by simulating $10^5$ samples with $100$ points of time discretization under the variance optimal martingale measure as presented in equation \eqref{HestonQ}. Moreover, we compute the initial value of the opportunity process $L$ ($L$ value in the table) by means of formula \eqref{definitionLCK} together with equations \eqref{chi0} and \eqref{chi1}. For $t=0$, $L_0$ is indeed purely deterministic and, in the one-dimensional case, the value given by \eqref{definitionLCK} can be considered the exact value for the opportunity process. In the multidimensional case, since the vector $Y^2$ is a superposition of mutually independent processes, we can compute the value of $L$ as the products of the $m$ values that one would get if computing $L$ for each one-dimensional component. In other words, we consider
\begin{equation*}
	L_0 =\prod_{i=1}^{m}\exp\left(\chi_0^i(0)+\chi_1^i(0)Y_0^{2,i}\right),
\end{equation*}
where each $\chi_0^i$ and $\chi_1^i$ is computed as in equations \eqref{chi0} and \eqref{chi1}, for $i=1, \dots, m$.

We then train and run the BSDE solver for solving recursively the two BSDEs as in Section \ref{sec:deepMV}, obtaining, respectively, an estimate of the initial value of the opportunity process (BSDE solver $L$ value) and an estimate of the call option price (BSDE solver price). For both of them, we report the training time and the relative error, which is computed in the standard way starting from the previously simulated $L$ value and MC price. For the portfolio in dimension $1$, we also compute the option price with the {\v{C}}erný  and Kallsen approach in \cite{cerny2008} (see Section \ref{meanvariance1dim}) and the corresponding relative error.
	
From Table \ref{table:resultsMV} we observe that in all the experiments (except for the two corresponding to $m=100$ with $N=50$ and $N=100$ which we shall comment later) the accuracy for the process $L$ is in the third or forth decimal, and the error for the option price is always below $1\%$. These results are confirmed by the evolution of the logarithmic loss in Figure \ref{MV_logloss}. We observe indeed that the BSDE solver loss for the first equation (black curves) is in average of the order of $10^{-20}$. Different picture appears for the BSDE solver loss for the second equation (red curves) where we observe values of positive order. However, one needs to take into account that the loss is computed in the form of absolute value, and not as a relative value. For the first BSDE, the value of $L$ to be found is between $0$ and $1$, but for the second BSDE the value of $X$ to be found is of the order of $10^2$ (since we take $S_0=K=100.0$) and is expected to grow with the dimension of the problem (because of the contingent claim definition \eqref{contclaim}). For these reasons, we observe the log-loss increasing with $m$ and, most importantly, not converging to $0$ as one may expect.

Exception is made for the two experiments with portfolio dimension $m=100$ and time grid $N=50$ and $N=100$. Here the relative error for the option price is between $1\%$ and $2\%$. Remember that in the mean-variance hedging the second BSDE depends on the solution of the first BSDE. In these two cases, the solver is failing to solve the first BDSE as we can notice from the values obtained for $L$ (BSDE solver $L$ value) which exhibit a significant discrepancy with respect to the true solution. Quite surprisingly, the error for the second BSDE is still relatively low (below $2\%$). 

By observing the log-loss behavior in these two cases, we see that the solver does not converge for the first BSDE. We then rerun the experiments by decreasing the initial learning rate to $1\cdot 10^{-3}$ and the second learning rate to $5\cdot 10^{-4}$. The corresponding results are in Figure \ref{MV_extrafigure}. We see that the new results are in line with all the rest, and the log-loss is now converging as expected.
	
For the portfolio of dimension $1$, we also compute the call option price and hedging strategies evolution, namely units of cash account and shares of risky asset, both with the BSDE approach and with the approach in \cite{cerny2008}. We report them in Figure \ref{MV10}, \ref{MV50} and \ref{MV100}, respectively for $N=10$, $N=50$ and $N=100$. We observe that not only the solver is capturing the option price path, but also provides good approximated hedging strategies. 

{\color{black} Despite it is not feasible to obtain a benchmark for the option price evolution for higher portfolio dimensions, we can however compute the evolution of the process $L$, which we can use as benchmark for the solver solution to the BSRE \eqref{eq:StochasticRiccati}. Indeed, exploiting the particular structure of the Heston model that we chose for the experiments, similarly as for $L_0$ above, we obtain the evolution of $L$ by the product
\begin{equation*}
	L_t =\prod_{i=1}^{m}\exp\left(\chi_0^i(t)+\chi_1^i(t)Y_t^{2,i}\right),
\end{equation*}
where each $\chi_0^i$ and $\chi_1^i$ is computed as in equations \eqref{chi0} and \eqref{chi1}, for $i=1, \dots, m$.
Hence, for all the portfolio dimensions considered, we report in Figure \ref{plot:L} the approximated solution to the BSRE \eqref{eq:StochasticRiccati} obtained with the deep solver against the exact solution estimated with the formula above. Here we notice that the solver solution well captures the behaviour of the BSRE dynamics for the given paths.}
	
To have a clearer picture of the solver performance when the number of time steps is increased, we report in Figure \ref{MSE} the Mean Squared Error (MSE) as a function of time for the option price, for the units of cash account and for the shares of risky asset in dimension one. For the option price, we observe that the MSE is increasing with time. However, we also see a clear improvement when increasing the number of time steps, $N$. This means that, despite the error is accumulating over time, one can still control it by decreasing the mesh size in the time discretization.

We do not observe the same clear behavior for the units of cash account and the shares of risky asset. However, we point out that, while the optimization is done to approximate the BSDE, hence to approximate the option price process, the strategies are in some sense a by-product of such optimization. So the fact that the MSE in these two cases is not monotone is to be expected. For these reasons, we also report as a dashed line the mean over time of the MSE. Here we observe that the mean of MSE decreases from $N=10$ to $N=50$, while the line corresponding to $N=100$ is not visible because overlapping with the $N=50$ one. For the sake of completeness, we report here their values: for the units of cash accounts, the mean MSE for $N=50$ is $5.33\cdot 10^{-4}$ and the mean MSE for $N=100$ is $6.73\cdot 10^{-4}$;  for the  shares of risky asset, the mean MSE for $N=50$ is $4.92\cdot 10^{-4}$ and the mean MSE for $N=100$ is $6.44\cdot 10^{-4}$. Hence it gets slightly worse for $N=100$. We point out that when $N$ increases one has a higher number of ANNs to train, so that  the same number of iterations used for lower values of $N$  may be not sufficient to guarantee an improvement of the results.

\begin{table}[tp]
	\centering
	\resizebox{0.65\textwidth}{!}{\begin{tabular}{rrrr}
		\multicolumn{4}{c}{\huge \textbf{Mean-variance hedging}}\\
		&&& \\
		\toprule
		\multicolumn{1}{r}{\large \textbf{Portfolio dimension:} $\boldsymbol{1}$}&\multicolumn{2}{r}{\large \textbf{MC price:}}& $\boldsymbol{6.837}$\\
		&\multicolumn{2}{r}{\large \textbf{$\boldsymbol{L}$ value:}}& $\boldsymbol{0.99984}$\\
		\midrule
		\textbf{Time steps}&$\boldsymbol{10}$ & $\boldsymbol{50}$&$\boldsymbol{100}$ \\
		\midrule
		\textbf{BSDE solver $\boldsymbol{L}$ value}& $0.99969$& $0.99970$& $0.99969$ \\
		\textbf{Relative error (\%)}& $0.01476$&  $0.01434$&$0.01493$ \\
		\textbf{1st training time (s)}&  $82$& $576$& $1048$\\
		\hdashline
		\textbf{BSDE solver price}& $\boldsymbol{6.830}$& $\boldsymbol{6.854}$& $\boldsymbol{6.838}$ \\
		\textbf{Relative error (\%)}& $\boldsymbol{0.105}$&  $\boldsymbol{0.246}$&$\boldsymbol{0.0250}$ \\
		\textbf{2nd training time (s)}&  $1015$& $3270$& $5785$\\
		\hdashline
		\textbf{PDE price}&  $6.853$& $6.853$& $6.853$ \\
		\textbf{Relative error (\%)}& $0.245$&  $0.233$&$0.232$ \\
		\midrule
		\multicolumn{1}{r}{\large \textbf{Portfolio dimension:} $\boldsymbol{5}$}&\multicolumn{2}{r}{\large \textbf{MC price:}}& $\boldsymbol{15.298}$\\
		&\multicolumn{2}{r}{\large \textbf{$\boldsymbol{L}$ value:}}& $\boldsymbol{0.99848}$\\
		\midrule
		\textbf{Time steps}&$\boldsymbol{10}$ & $\boldsymbol{50}$&$\boldsymbol{100}$ \\
		\midrule
		\textbf{BSDE solver $\boldsymbol{L}$ value}& $0.99848$& $0.99848$& $0.99848$ \\
		\textbf{Relative error (\%)}& $0.00028$&  $0.00040$&$0.00019$ \\
		\textbf{1st training time (s)}&  $384$& $2360$& $3830$\\
		\hdashline
		\textbf{BSDE solver price}& $\boldsymbol{15.329}$& $\boldsymbol{15.363}$& $\boldsymbol{15.383}$ \\
		\textbf{Relative error (\%)}& $\boldsymbol{0.201}$&  $\boldsymbol{0.422}$&$\boldsymbol{0.554}$ \\
		\textbf{2nd training time (s)}&  $1702$& $6065$& $10345$\\
		\midrule
		\multicolumn{1}{r}{\large \textbf{Portfolio dimension:} $\boldsymbol{20}$}&\multicolumn{2}{r}{\large \textbf{MC price:}}& $\boldsymbol{30.560}$\\
		&\multicolumn{2}{r}{\large \textbf{$\boldsymbol{L}$ value:}}& $\boldsymbol{0.99393}$\\
		\midrule
		\textbf{Time steps}&$\boldsymbol{10}$ & $\boldsymbol{50}$&$\boldsymbol{100}$ \\
		\midrule
		\textbf{BSDE solver $\boldsymbol{L}$ value}& $0.99401$& $0.99396$& $0.99394$ \\
		\textbf{Relative error (\%)}& $0.00797$&  $0.00335$&$0.00140$ \\
		\textbf{1st training time (s)}&  $1396$& $\textcolor{black}{2494}$& $\textcolor{black}{5106}$\\
		\hdashline
		\textbf{BSDE solver price}& $\boldsymbol{30.612}$& $\boldsymbol{30.788}$& $\boldsymbol{30.800}$ \\
		\textbf{Relative error (\%)}& $\boldsymbol{0.171}$&  $\boldsymbol{0.748}$&$\boldsymbol{0.786}$ \\
\textbf{2nd training time (s)}&  $7704$& $13843$& $28510$\\
		\midrule
		\multicolumn{1}{r}{\large \textbf{Portfolio dimension:} $\boldsymbol{100}$}&\multicolumn{2}{r}{\large \textbf{MC price:}}& $\boldsymbol{68.831}$\\
		&\multicolumn{2}{r}{\large \textbf{$\boldsymbol{L}$ value:}}& $\boldsymbol{0.97002}$\\
		\midrule
		\textbf{Time steps}&$\boldsymbol{10}$ & $\boldsymbol{50}$&$\boldsymbol{100}$ \\
		\midrule
		\textbf{BSDE solver $\boldsymbol{L}$ value}& $0.97044$& $0.12426$& $0.27013$ \\
		\textbf{Relative error (\%)}& $0.02936$&  $87.19$&$72.15$ \\
		\textbf{1st training time (s)}&  $1757$& $9860$& $20917$\\
		\hdashline
		\textbf{BSDE solver price}& $\boldsymbol{68.168}$& $\boldsymbol{69.720}$& $\boldsymbol{67.602}$ \\
		\textbf{Relative error (\%)}& $\boldsymbol{0.964}$&  $\boldsymbol{1.291}$&$\boldsymbol{1.706}$ \\
\textbf{2nd training time (s)}&  $4516$& $21843$& $40253$\\		
		\bottomrule
	\end{tabular}}
	\caption{Mean-variance hedging results for different portfolio dimensions and different number of total time steps in the discretization grid. For each configuration, we compute the Monte Carlo (MC) price by simulating $10^5$ samples under the variance optimal martingale measure, and we use it to compute the relative error in the classical way. For the portfolio with one risky asset, we report the price obtained with the benchmark approach via PDE presented in Appendix \ref{meanvariance1dim}. \label{table:resultsMV}}
\end{table}

\begin{figure}[tp]
	\resizebox{1\textwidth}{!}{
		\begin{tabular}{@{}>{\centering\arraybackslash}m{0.04\textwidth}@{}>{\centering\arraybackslash}m{0.32\textwidth}@{}>{\centering\arraybackslash}m{0.32\textwidth}@{}>{\centering\arraybackslash}m{0.32\textwidth}@{}}
			\multicolumn{4}{c}{\huge \textbf{Mean-variance hedging}}\\
			&&& \\
			&\textbf{\large $\boldsymbol{N = 10}$} & \textbf{\large $\boldsymbol{N = 50}$} & \textbf{\large  $ \boldsymbol{N =100}$}\\
			&&& \\
			\begin{turn}{90}\textbf{\large $\boldsymbol{m = 1}$}\end{turn}&\includegraphics[width=0.31\textwidth, height =0.25\textwidth]{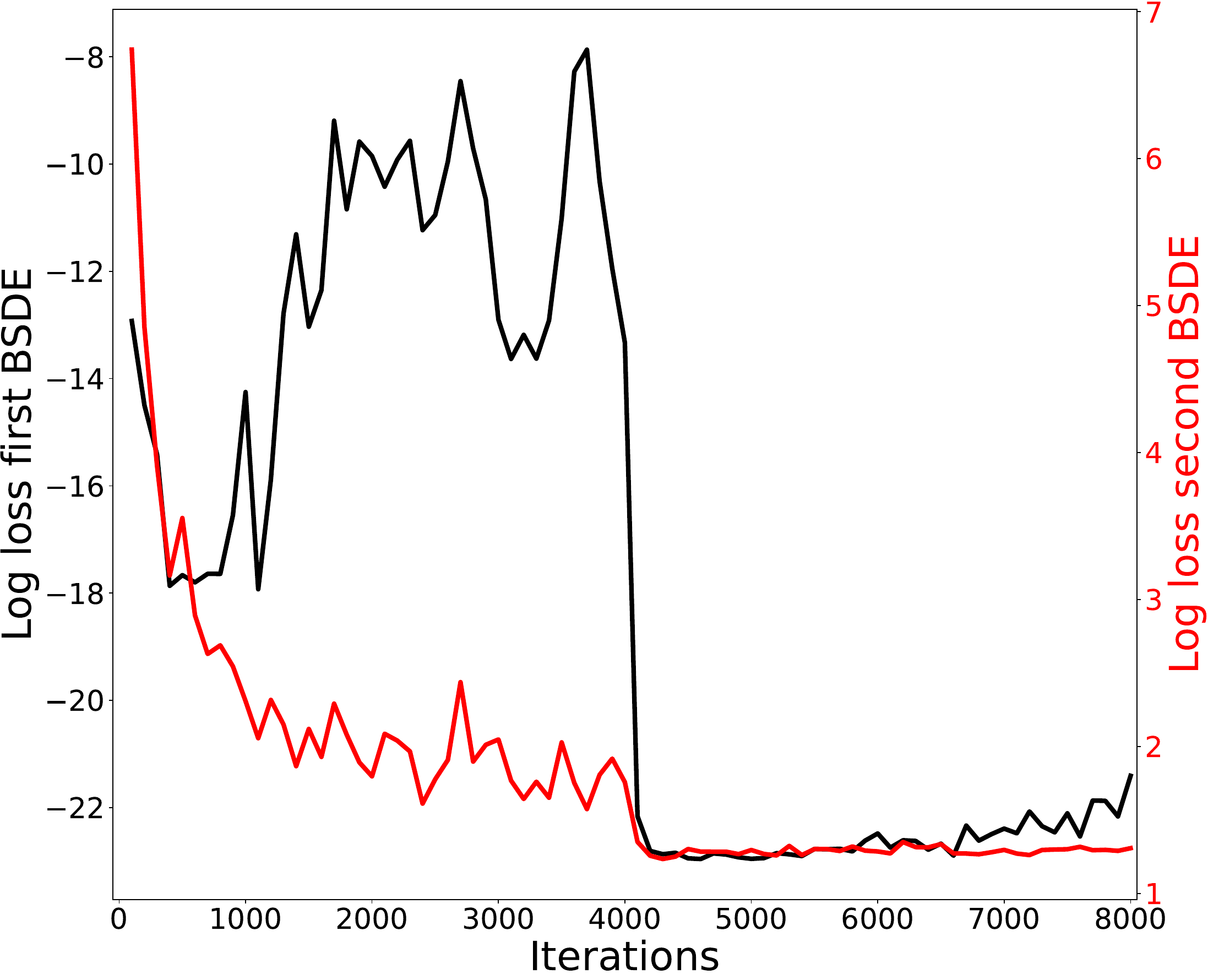} &
			\includegraphics[width=0.31\textwidth, height =0.25\textwidth]{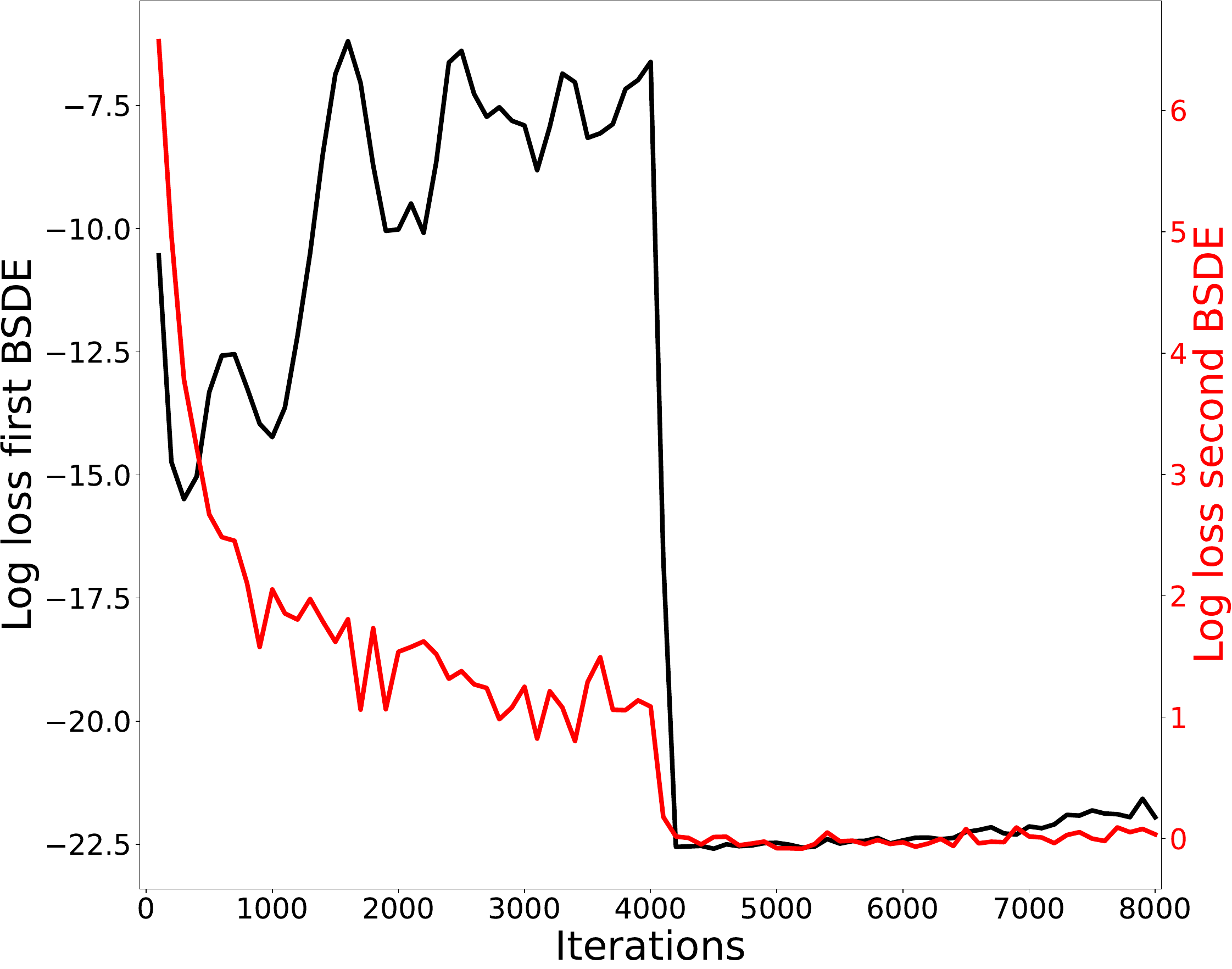}&
			\includegraphics[width=0.31\textwidth, height =0.25\textwidth]{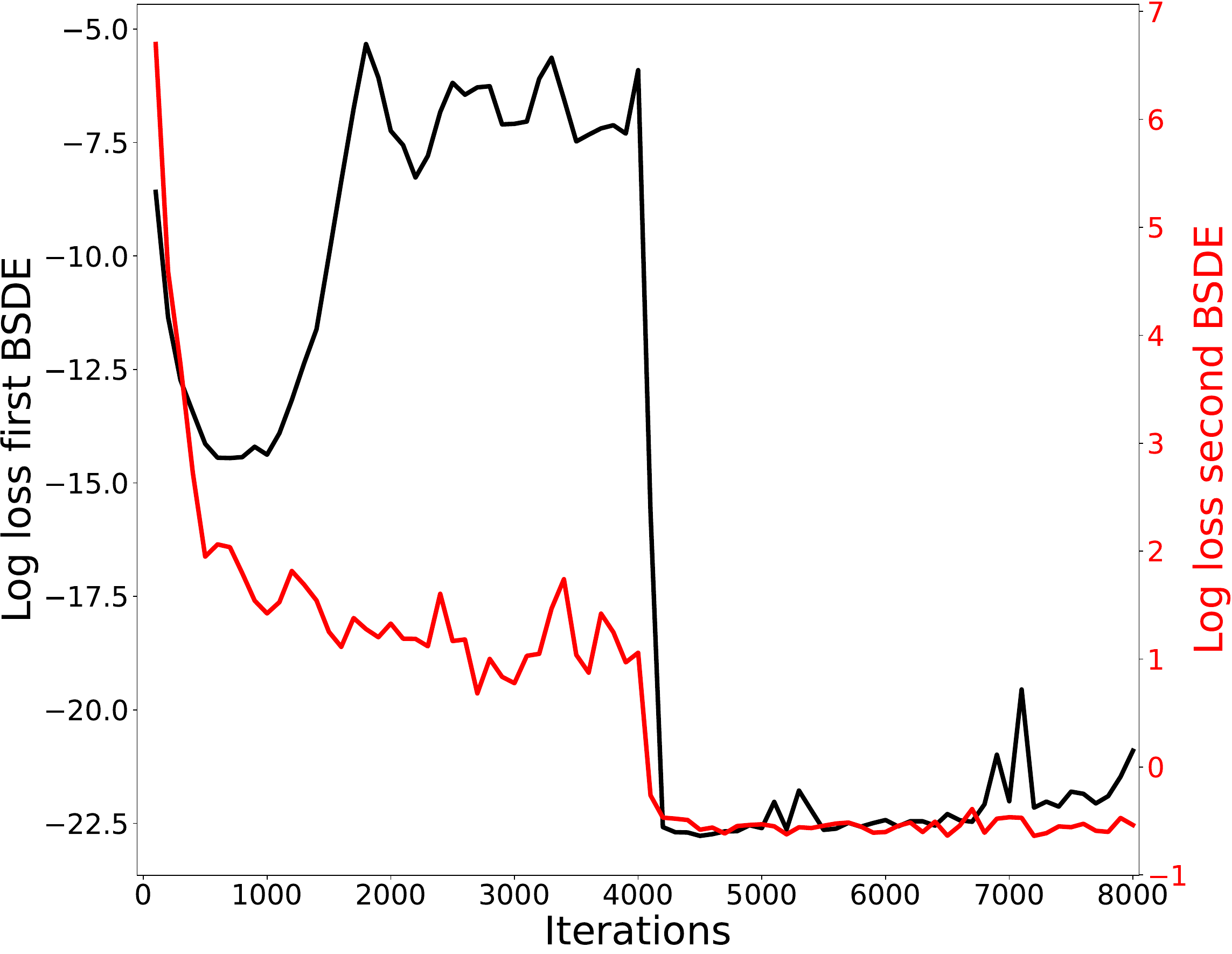} \\	
			\begin{turn}{90}\textbf{\large $\boldsymbol{m = 5}$}\end{turn}&\includegraphics[width=0.31\textwidth, height =0.25\textwidth]{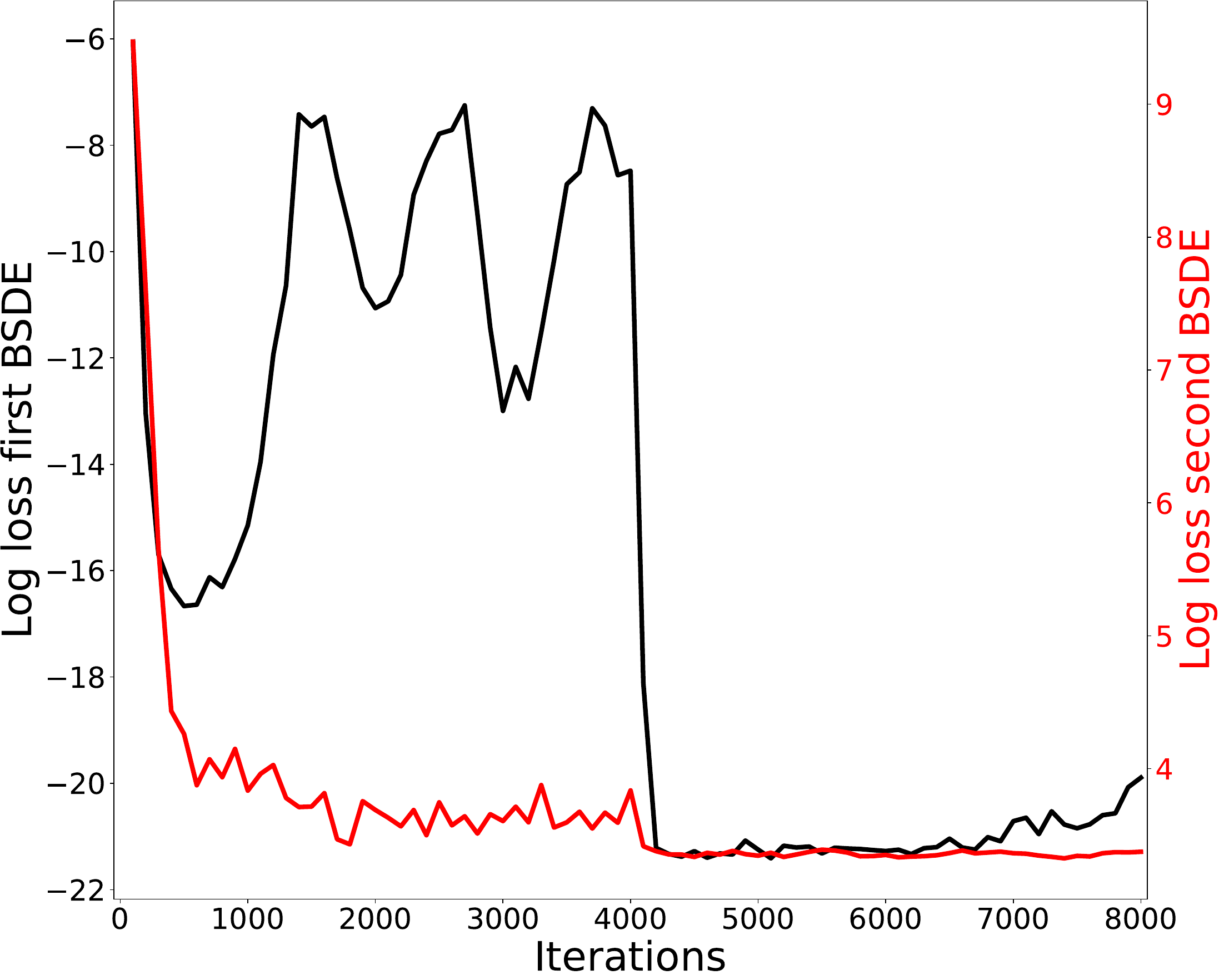} &
			\includegraphics[width=0.31\textwidth, height =0.25\textwidth]{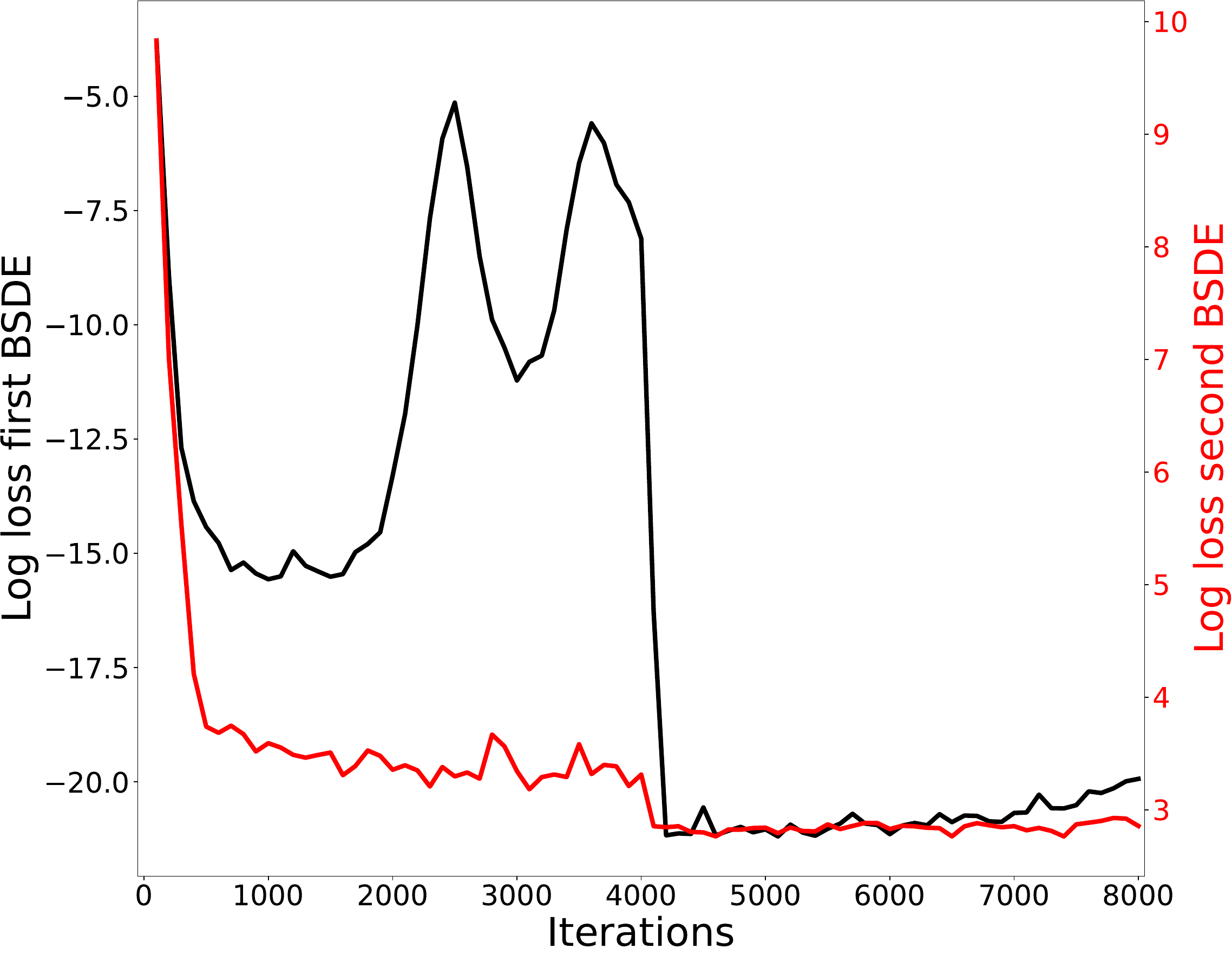}&
			\includegraphics[width=0.31\textwidth, height =0.25\textwidth]{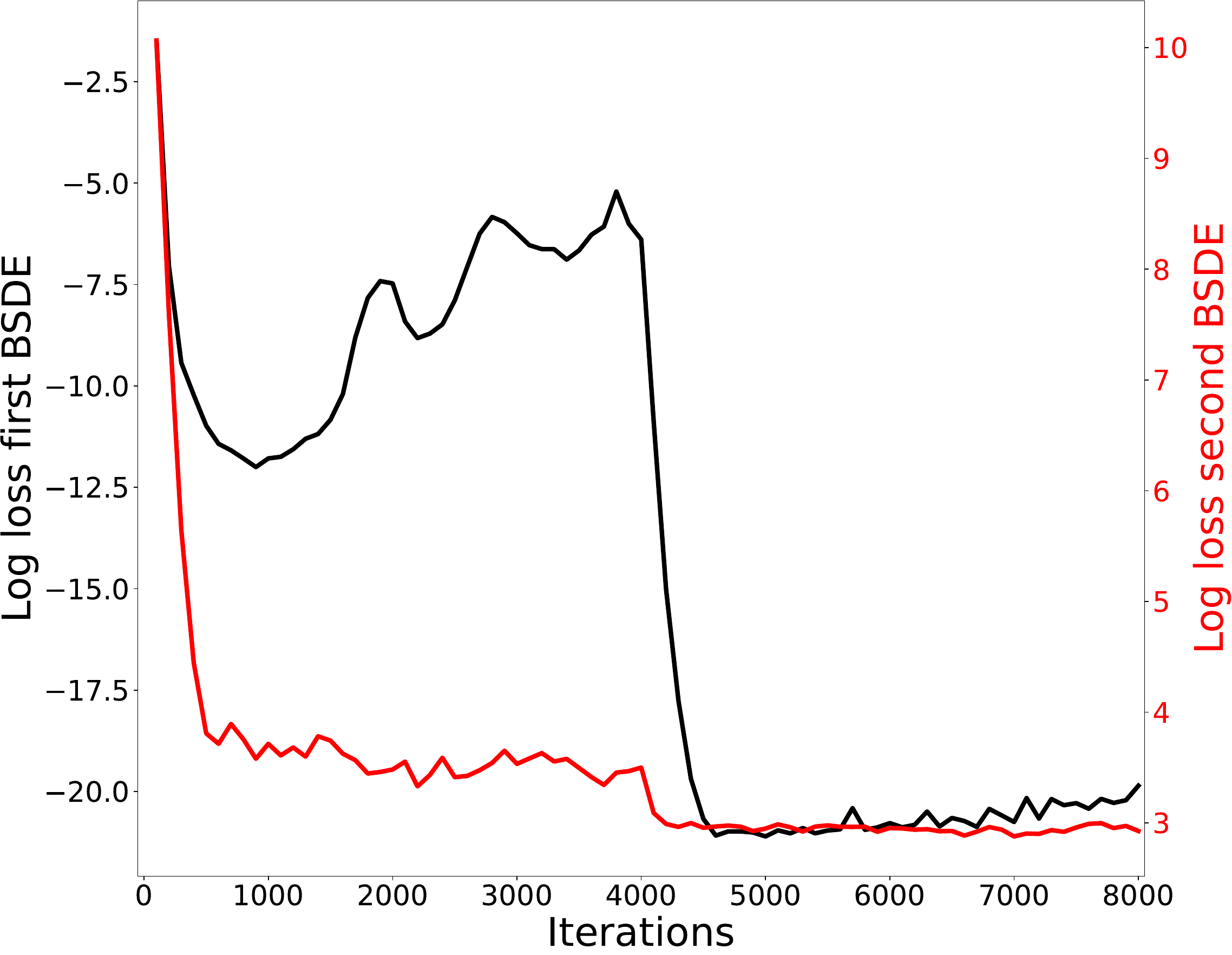} \\	
			\begin{turn}{90}\textbf{\large $\boldsymbol{m = 20}$}\end{turn}&\includegraphics[width=0.31\textwidth, height =0.25\textwidth]{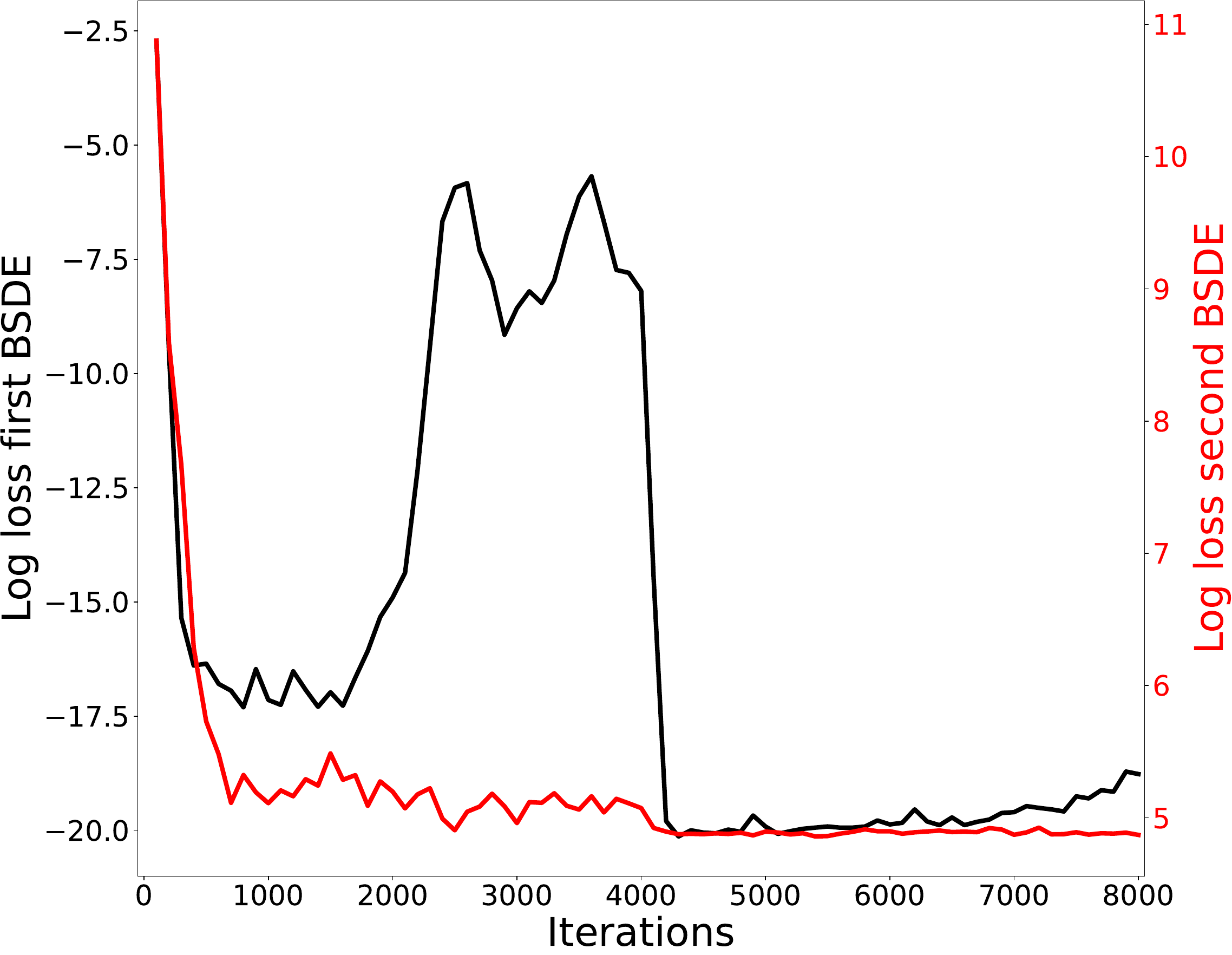}&
			\includegraphics[width=0.31\textwidth, height =0.25\textwidth]{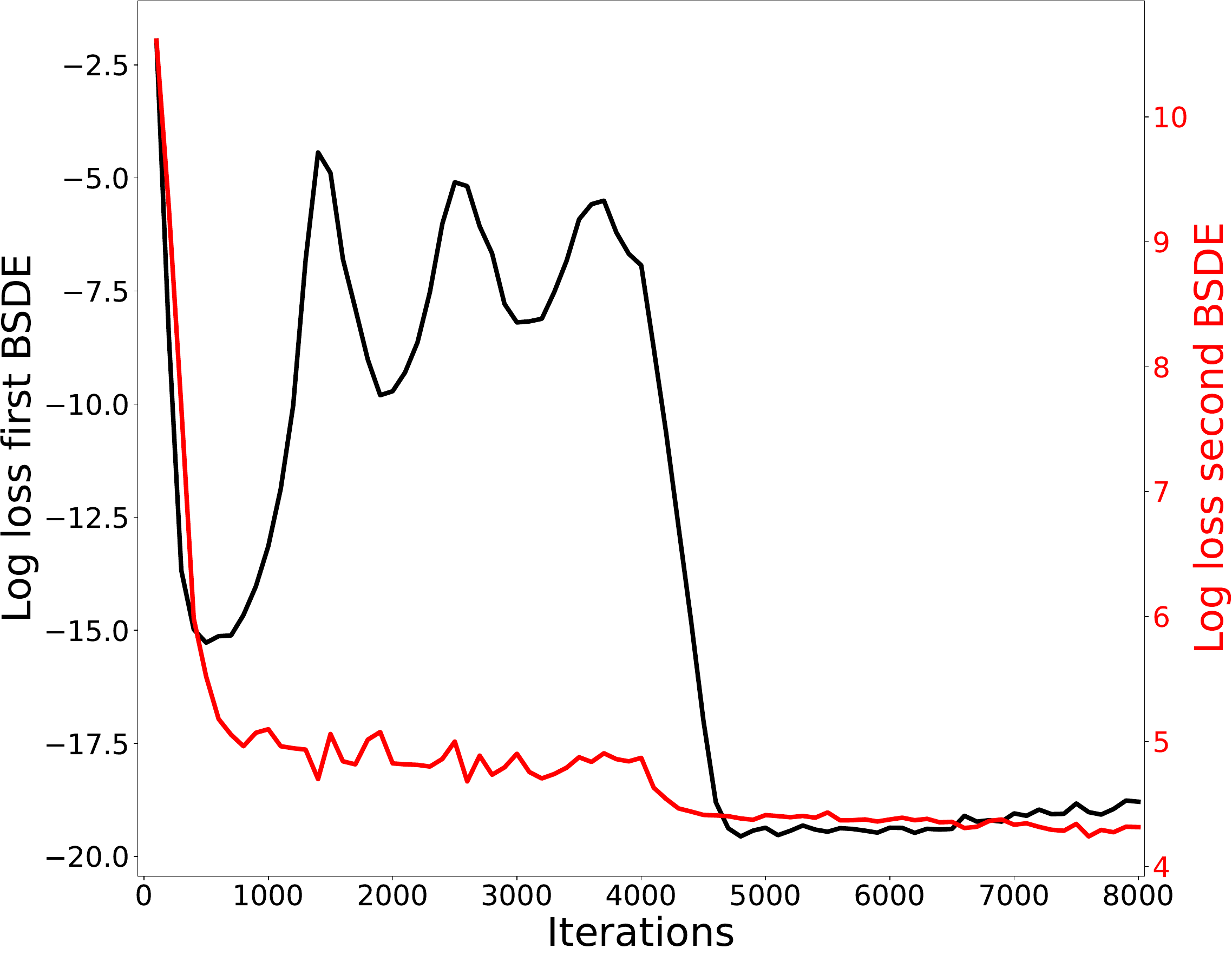} &	
			\includegraphics[width=0.31\textwidth, height =0.25\textwidth]{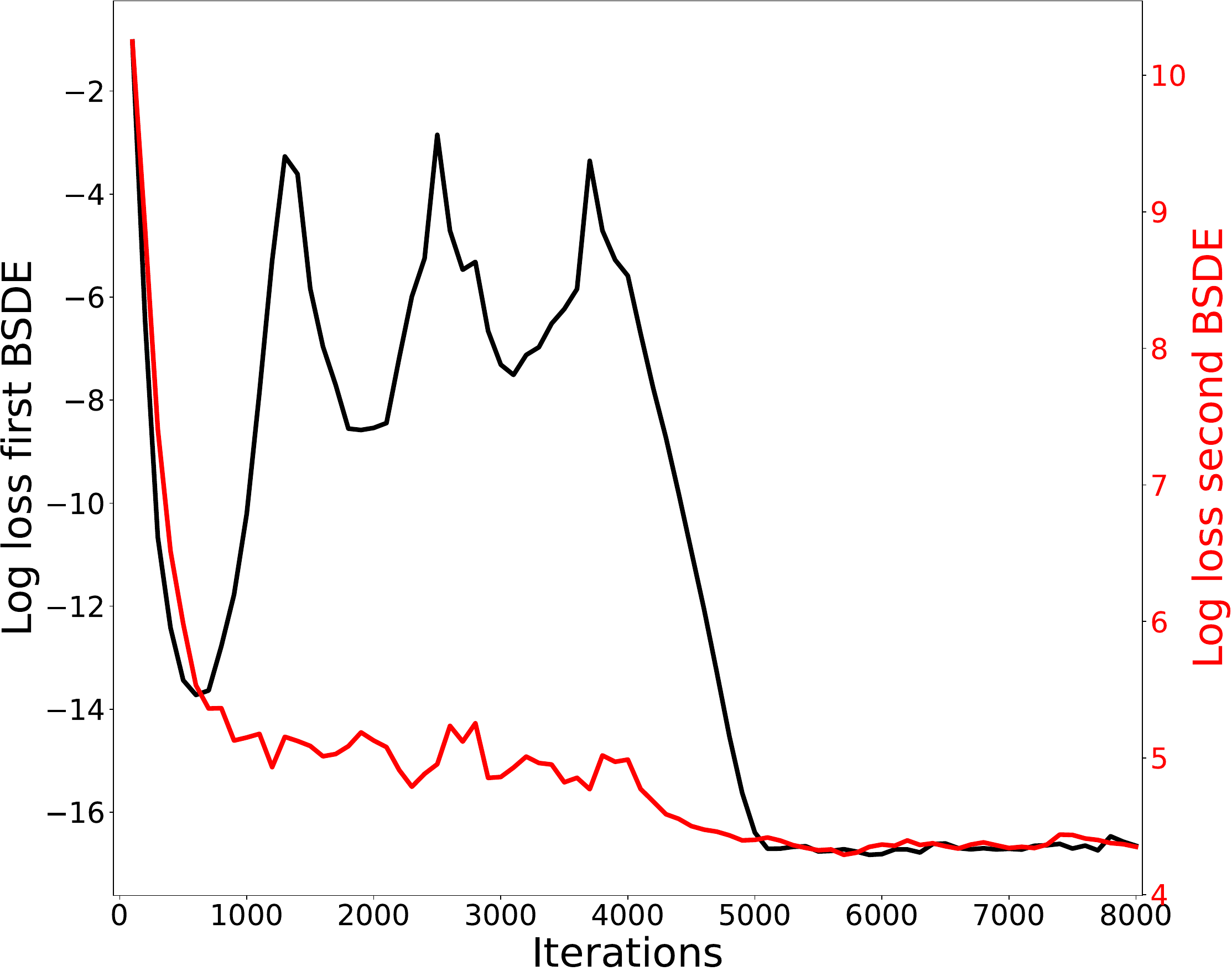}\\
			\begin{turn}{90}\textbf{\large $\boldsymbol{m = 100}$}\end{turn}&\includegraphics[width=0.31\textwidth, height =0.25\textwidth]{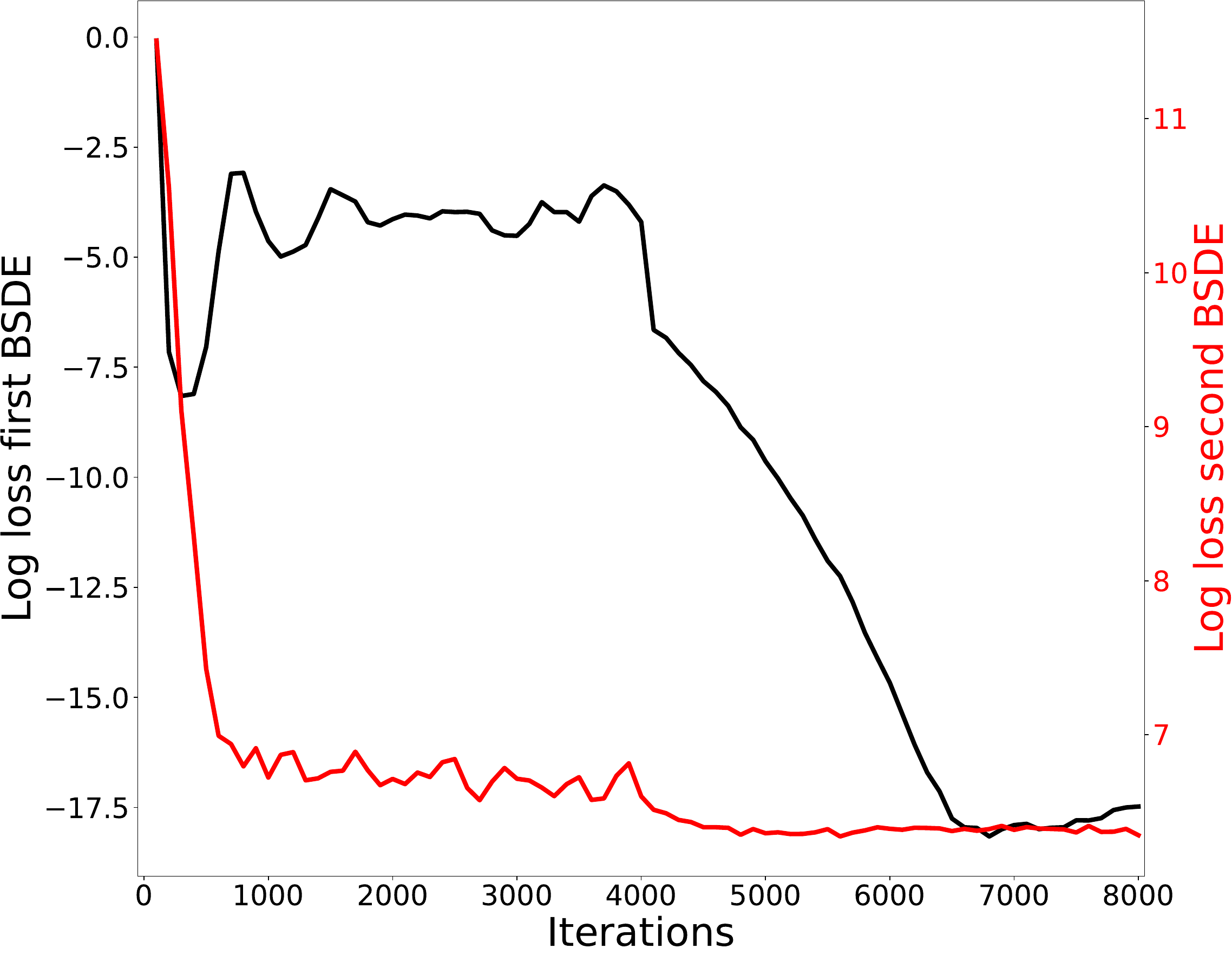}&
			\includegraphics[width=0.31\textwidth, height =0.25\textwidth]{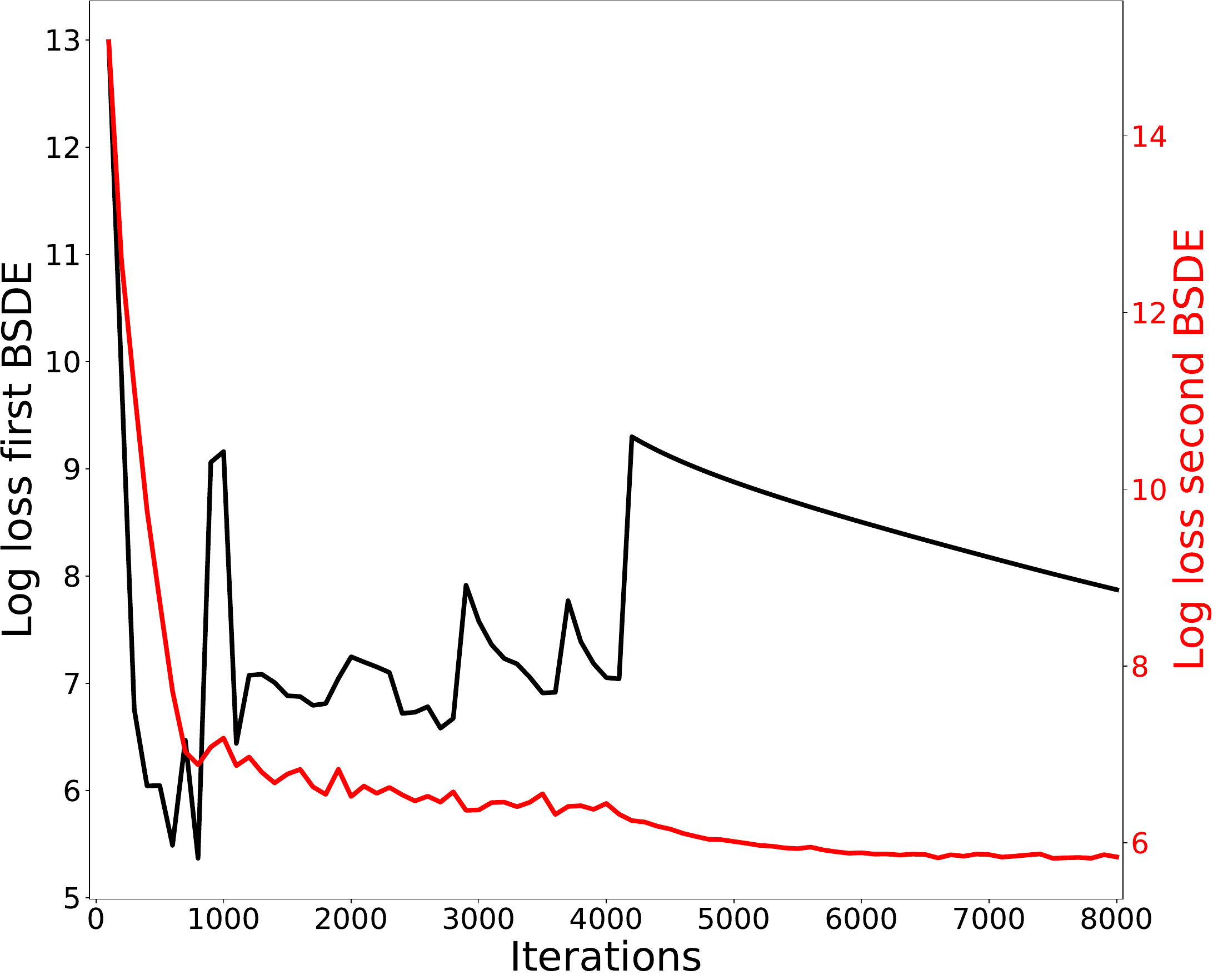} &	
			\includegraphics[width=0.31\textwidth, height =0.25\textwidth]{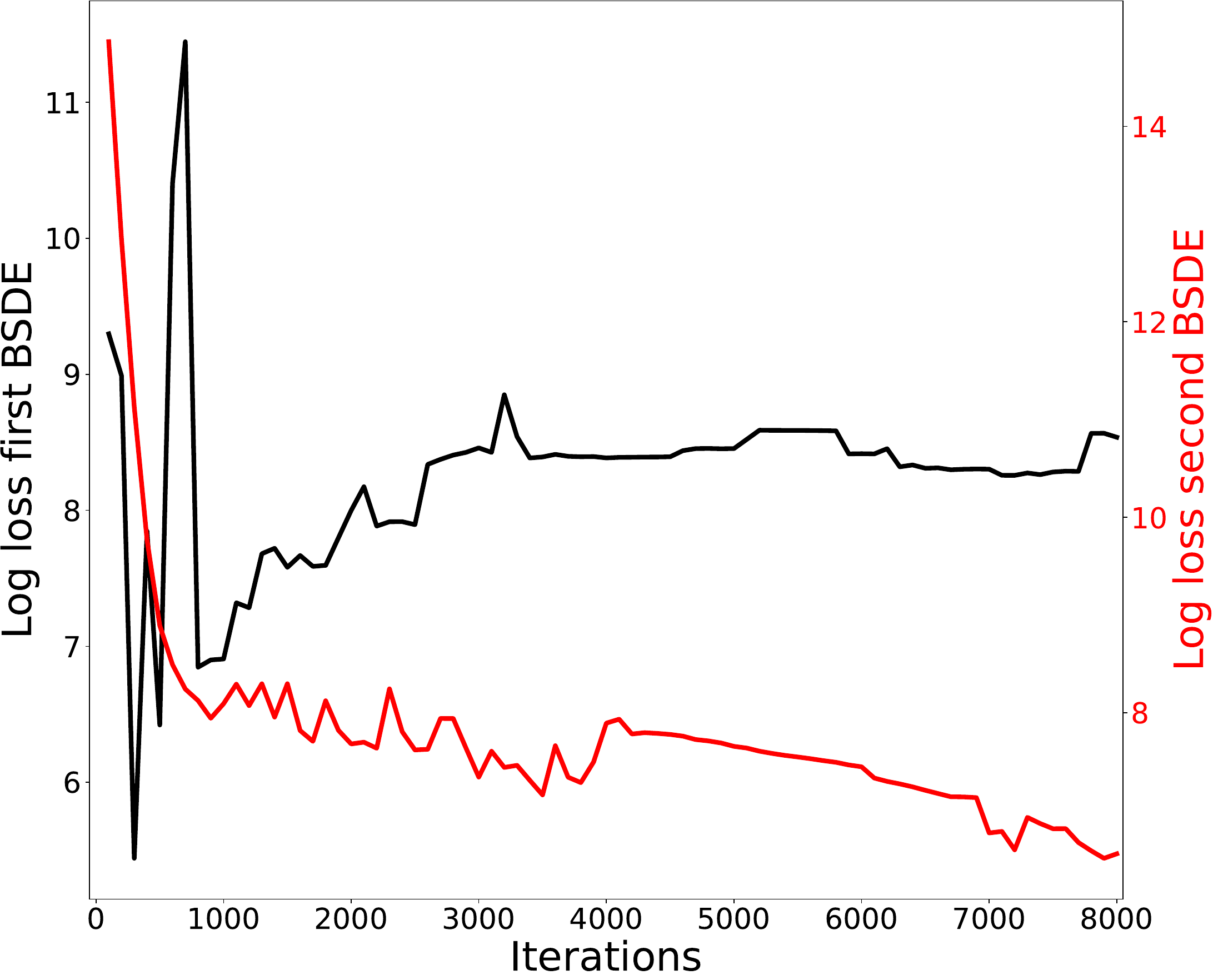}\\
			
		\end{tabular}
	}
	\caption{Logarithm of the loss functional as a function of the iteration number for the different experiment configurations presented in Table \ref{table:resultsMV}. The black curves represent the log-loss for the first BSDE and follow the black grid on the left-hand side. The red curves represent the log-loss for the second BSDE and follow the red grid on the right-hand side. \label{MV_logloss}}
\end{figure}

\begin{figure}[tp]
	\resizebox{1\textwidth}{!}{
		\begin{tabular}{@{}>{\centering\arraybackslash}m{0.5\textwidth}@{}>{\centering\arraybackslash}m{0.5\textwidth}@{}}
			 \textbf{\large $\boldsymbol{N = 50}$} & \textbf{\large $\boldsymbol{N = 100}$}\\
			\includegraphics[width=0.47\textwidth, height =0.42\textwidth]{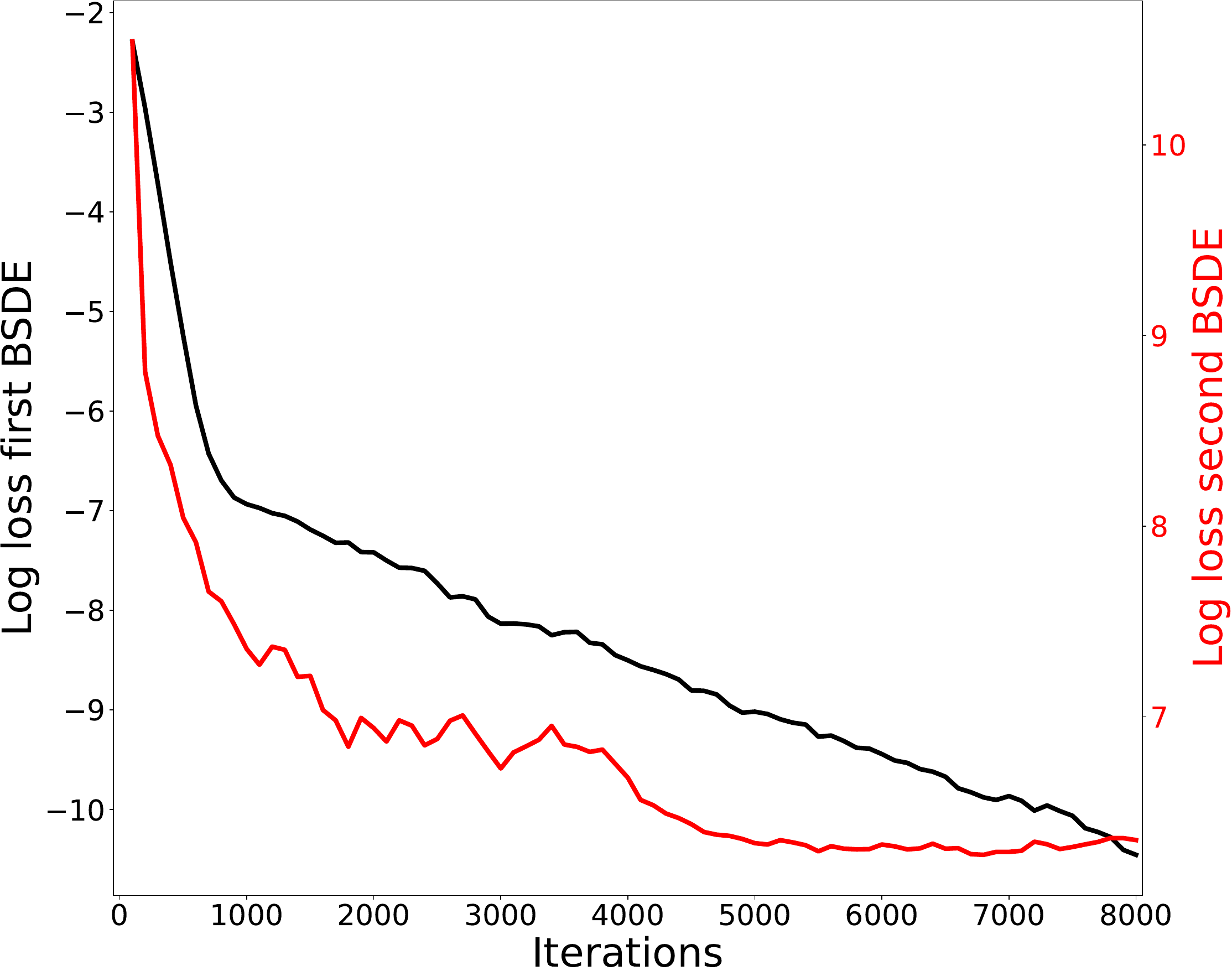} &	
			\includegraphics[width=0.47\textwidth, height =0.42\textwidth]{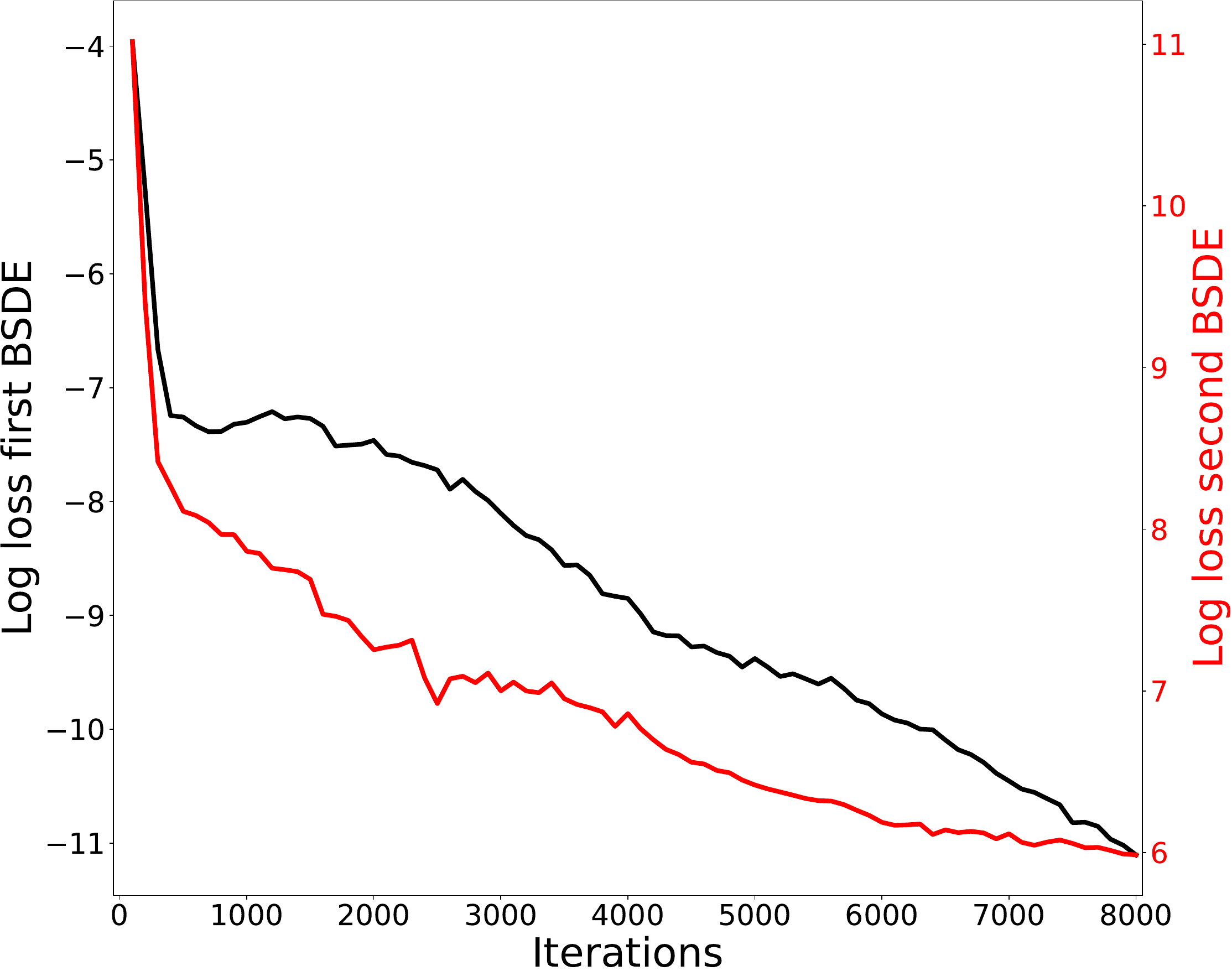}\\
			& \\
			\multicolumn{1}{l}{\textbf{BSDE solver L value:} $0.97007 \,(0.00489\, \%)$} & \multicolumn{1}{l}{\textbf{BSDE solver L value:} $0.97024\,(0.0230 \,\%)$}\\
			\multicolumn{1}{l}{\textbf{BSDE solver price: }$68.892 \,(0.0878\, \%)$} & \multicolumn{1}{l}{\textbf{BSDE solver price:} $68.910 \,(0.114 \,\%)$}\\
		\end{tabular}
	}
	\caption{Mean-variance hedging results with portfolio dimension $m = 100$, when we reduce the learning rates values and improve the experiments in Table \ref{table:resultsMV} and Figure \ref{MV_logloss}. \label{MV_extrafigure}}
\end{figure}

\begin{figure}[tp]
	\resizebox{1\textwidth}{!}{
		\begin{tabular}{@{}>{\centering\arraybackslash}m{0.5\textwidth}@{}>{\centering\arraybackslash}m{0.5\textwidth}@{}}
			\multicolumn{2}{c}{\huge \textbf{Mean-variance hedging: $\boldsymbol{N = 10}$}}\\
			& \\
			\includegraphics[width=0.48\textwidth, height =0.35\textwidth]{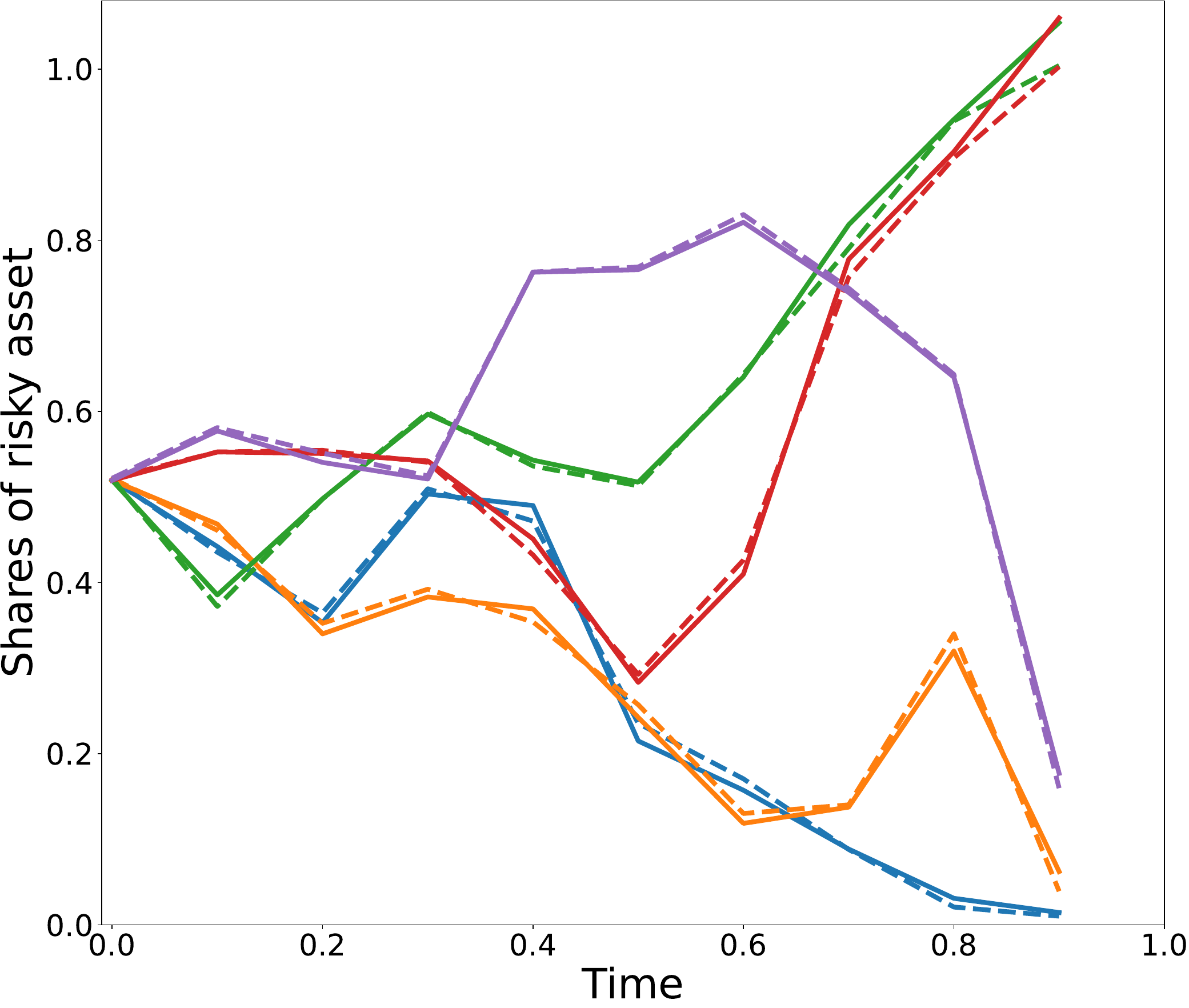} &
			\includegraphics[width=0.48\textwidth, height =0.35\textwidth]{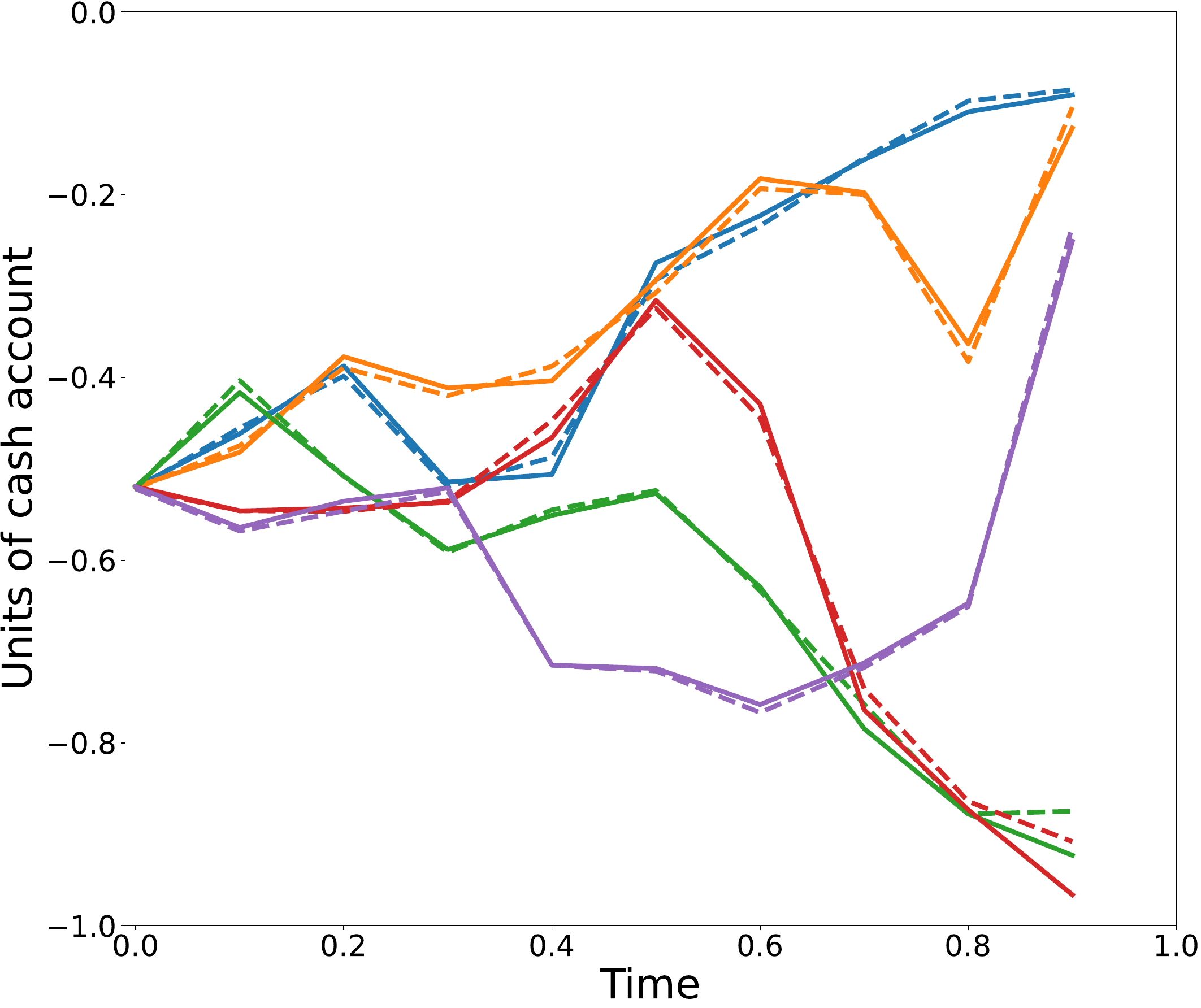}\\
			\multicolumn{2}{c}{\includegraphics[width=0.48\textwidth, height =0.35\textwidth]{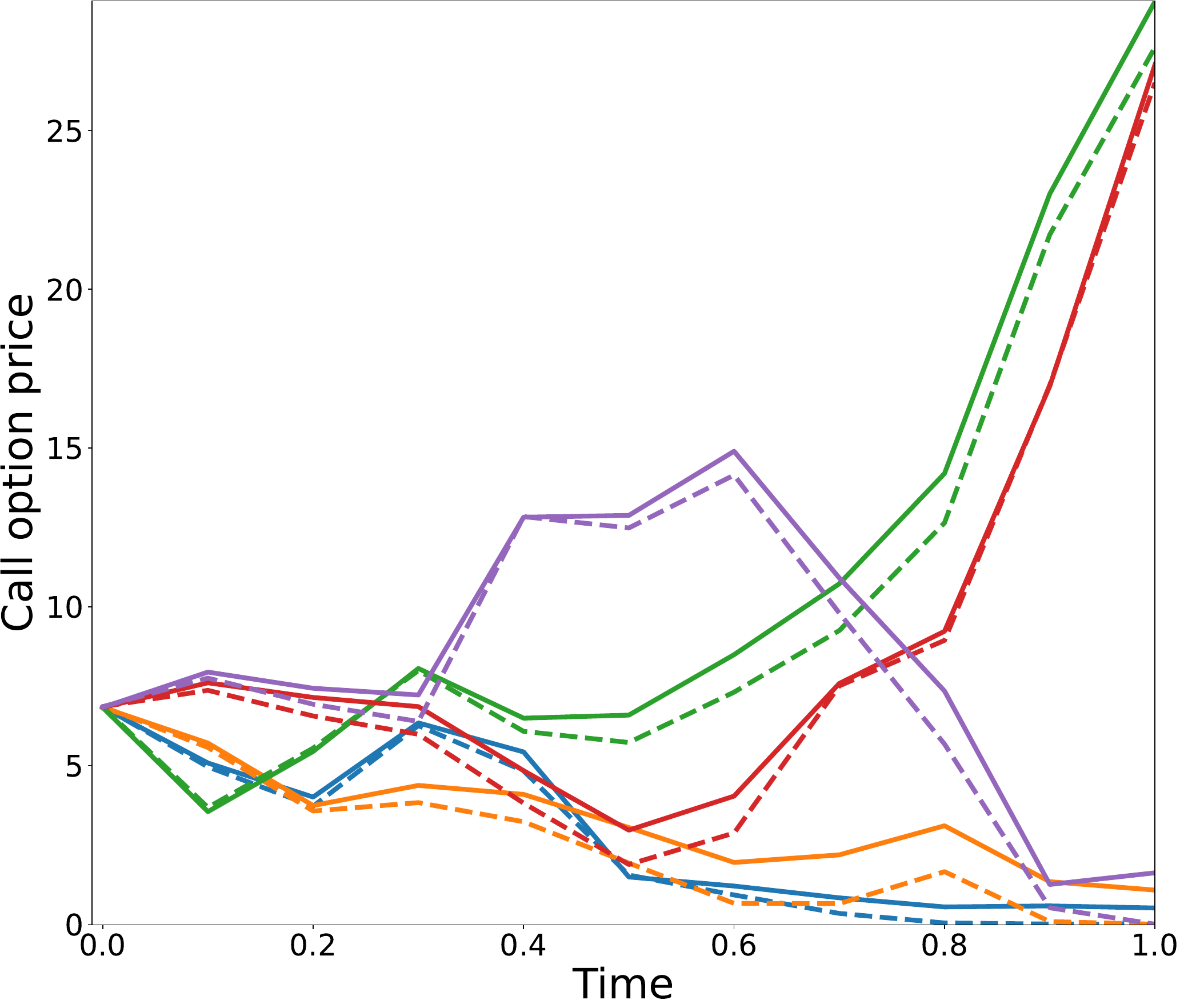}} \\
		\end{tabular}
	}
	\caption{Deep solver solution (solid line) and benchmark solution (dashed line) for the mean-variance hedging in a $10$ points discretization grid for $5$ random samples  in the interval $[0, 1]$. Upper panel: the shares of risky asset (left) and the units of cash account (right); lower panel: the call option price. \label{MV10}}
\end{figure}

\begin{figure}[tp]
	\resizebox{1\textwidth}{!}{
		\begin{tabular}{@{}>{\centering\arraybackslash}m{0.5\textwidth}@{}>{\centering\arraybackslash}m{0.5\textwidth}@{}}
			\multicolumn{2}{c}{\huge \textbf{Mean-variance hedging: $\boldsymbol{N = 50}$}}\\
			& \\
			\includegraphics[width=0.48\textwidth, height =0.35\textwidth]{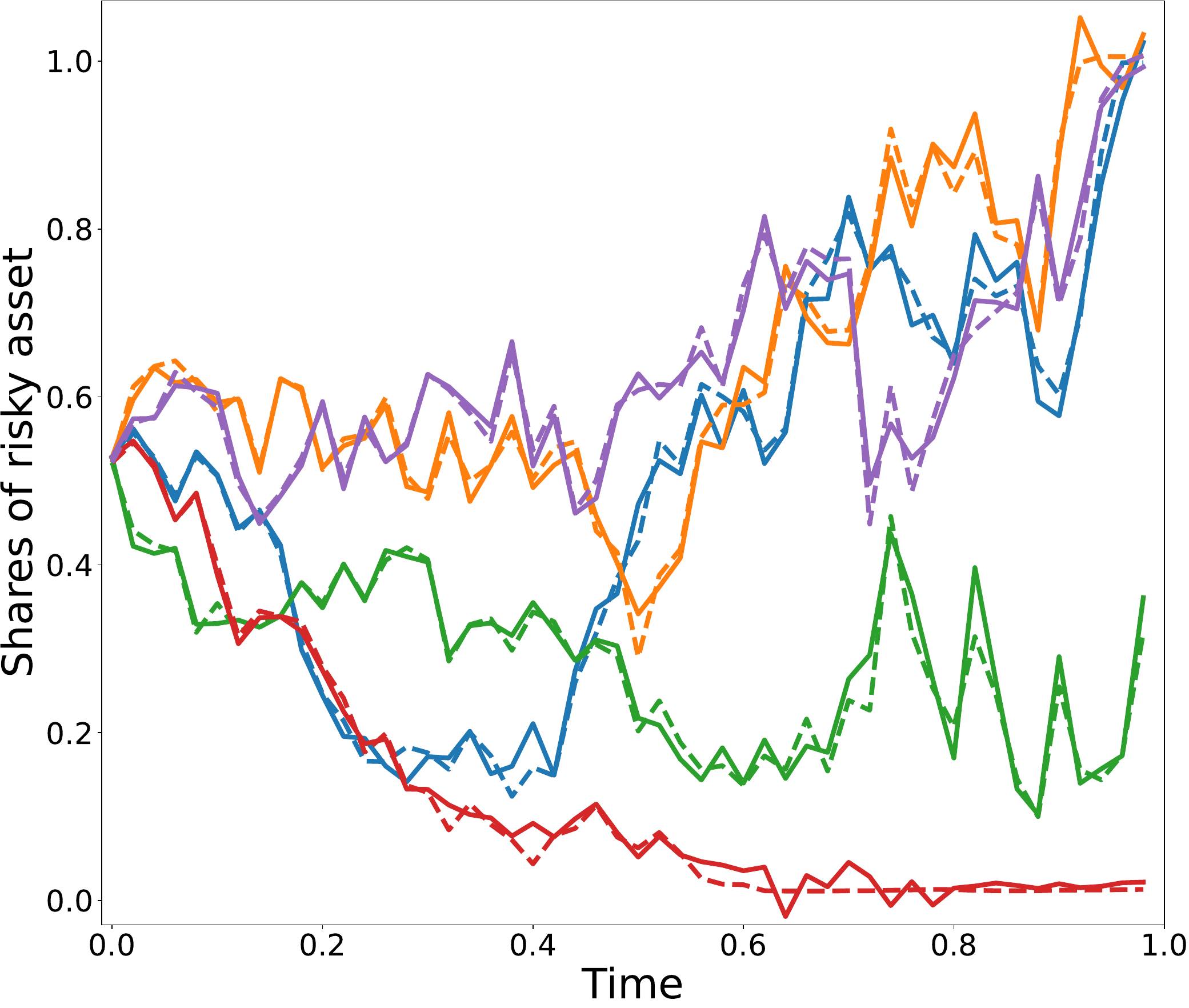} &
			\includegraphics[width=0.48\textwidth, height =0.35\textwidth]{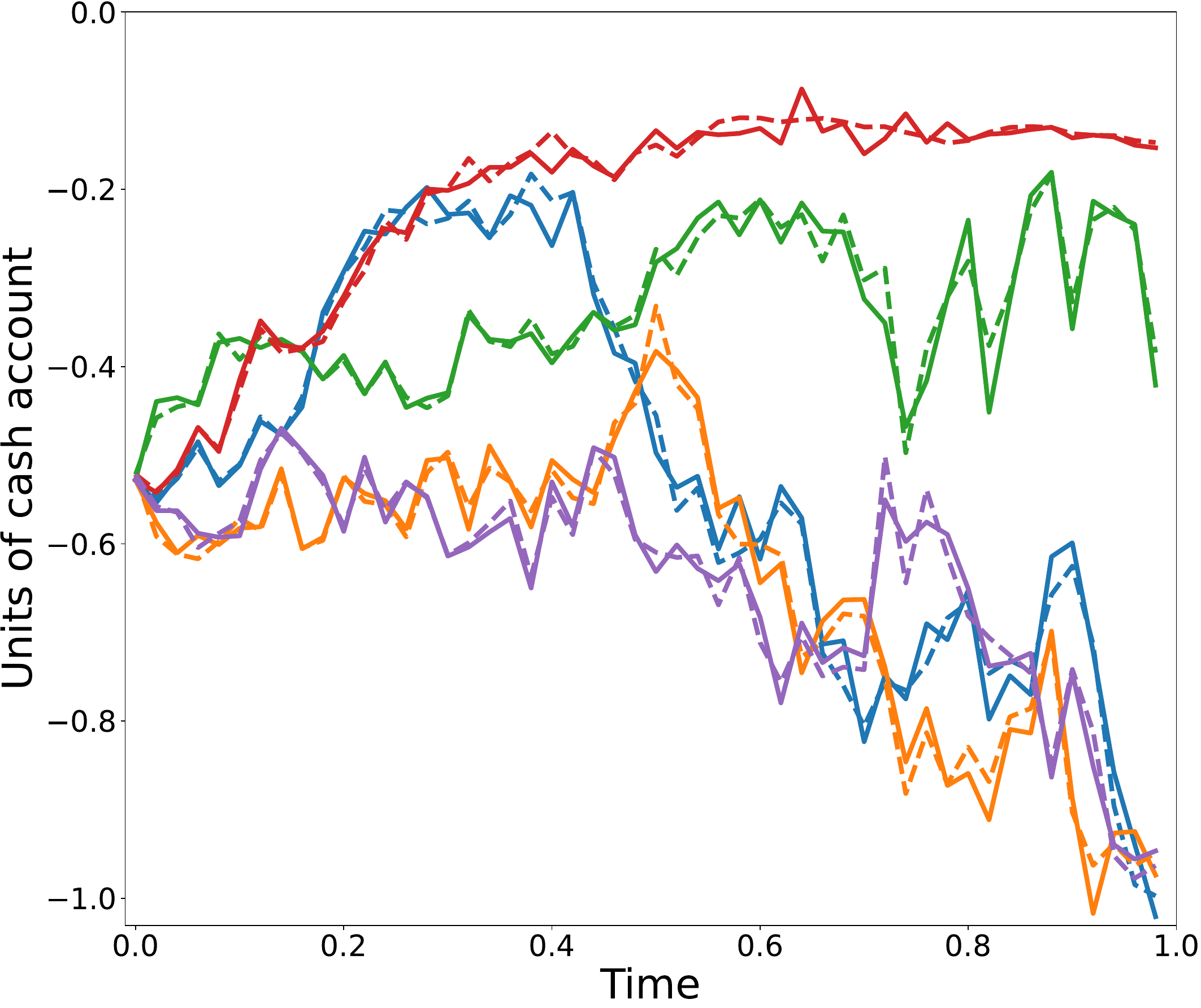}\\
			\multicolumn{2}{c}{\includegraphics[width=0.48\textwidth, height =0.35\textwidth]{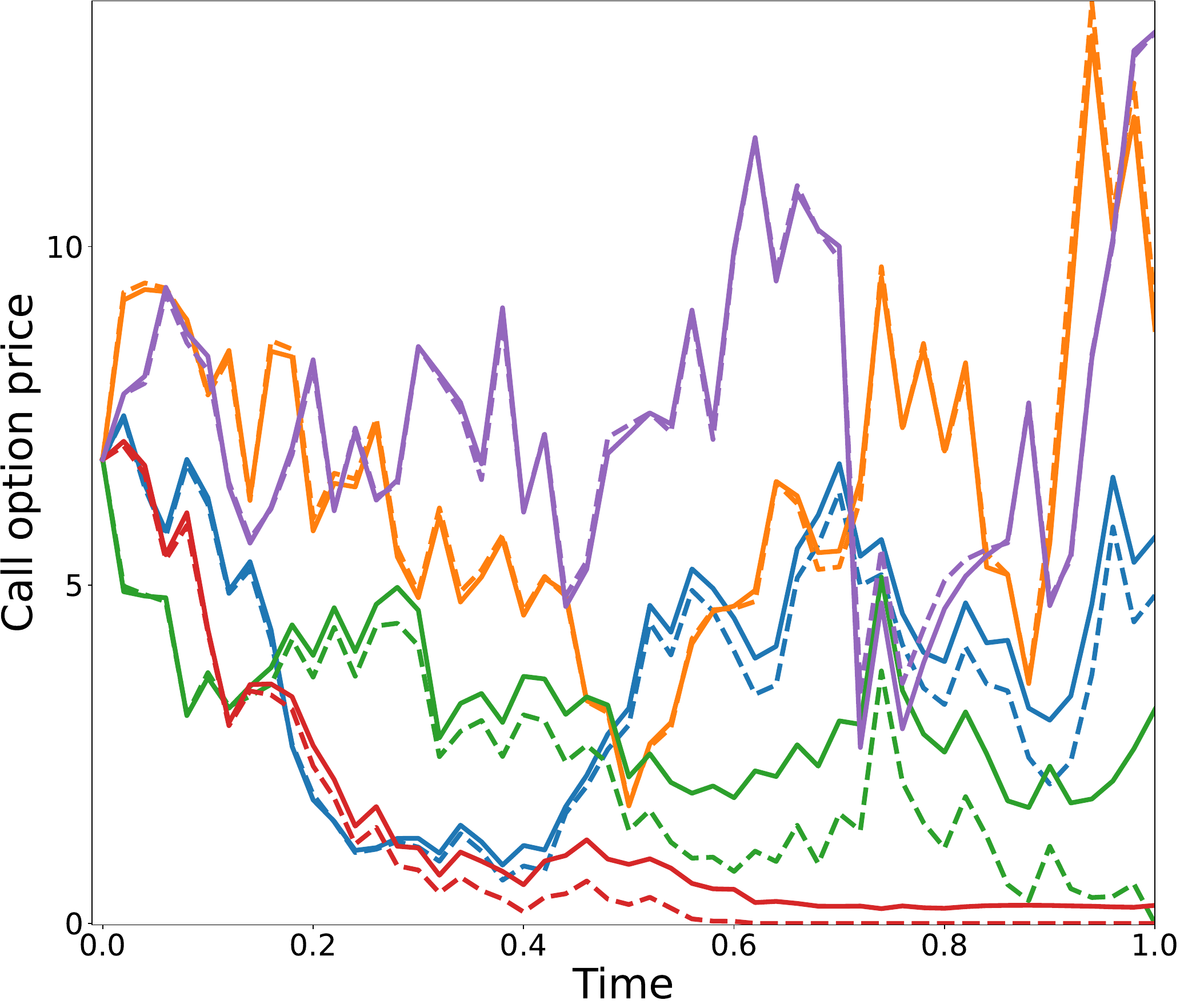}} \\
		\end{tabular}
	}
	\caption{Deep solver solution (solid line) and benchmark solution (dashed line) for the mean-variance hedging in a $50$ points discretization grid for $5$ random samples  in the interval $[0, 1]$. Upper panel: the shares of risky asset (left) and the units of cash account (right); lower panel: the call option price. \label{MV50}}
\end{figure}

\begin{figure}[tp]
	\resizebox{1\textwidth}{!}{
		\begin{tabular}{@{}>{\centering\arraybackslash}m{0.5\textwidth}@{}>{\centering\arraybackslash}m{0.5\textwidth}@{}}
			\multicolumn{2}{c}{\huge \textbf{Mean-variance hedging: $\boldsymbol{N = 100}$}}\\
			& \\
			\includegraphics[width=0.48\textwidth, height =0.35\textwidth]{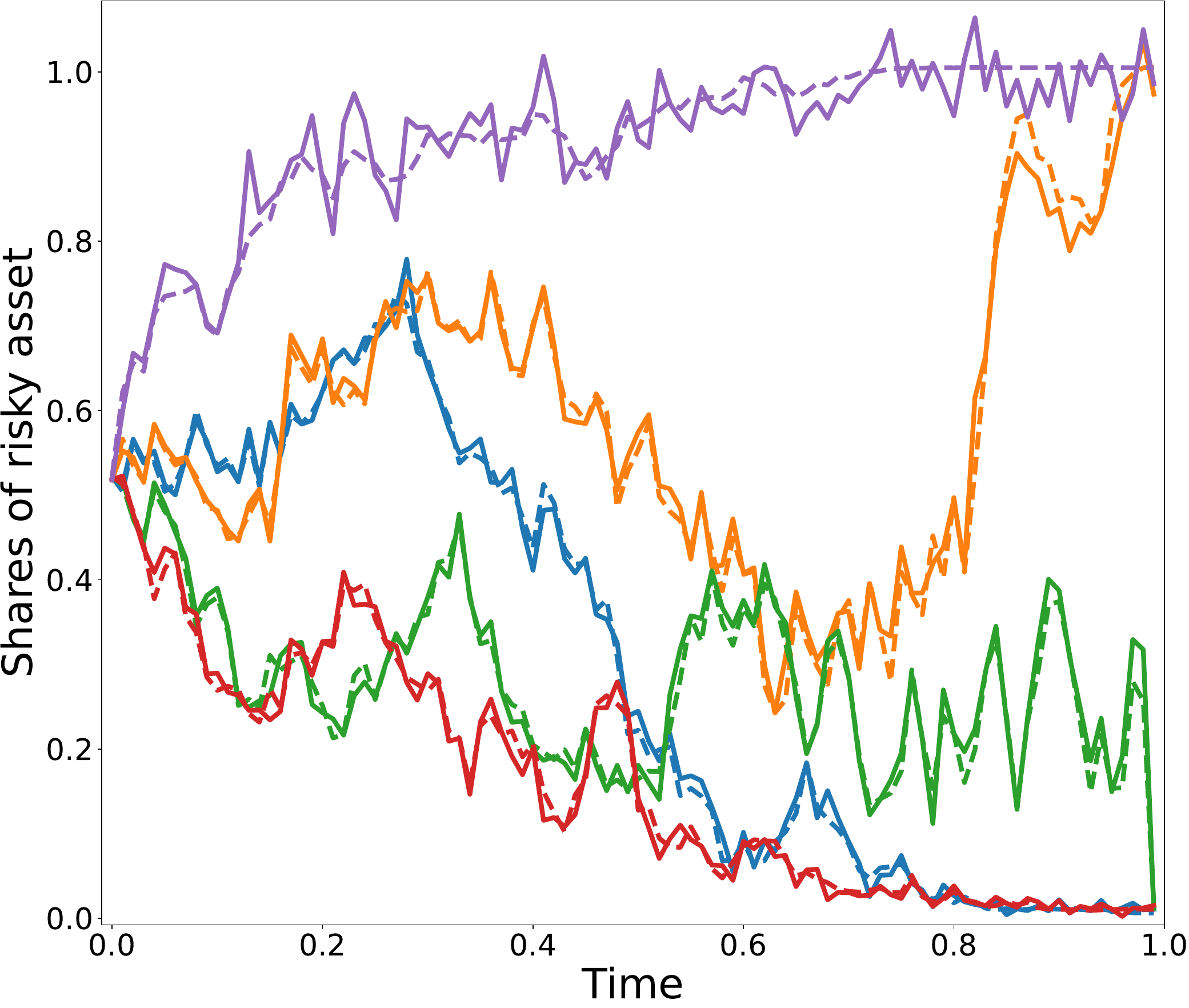} &
			\includegraphics[width=0.48\textwidth, height =0.35\textwidth]{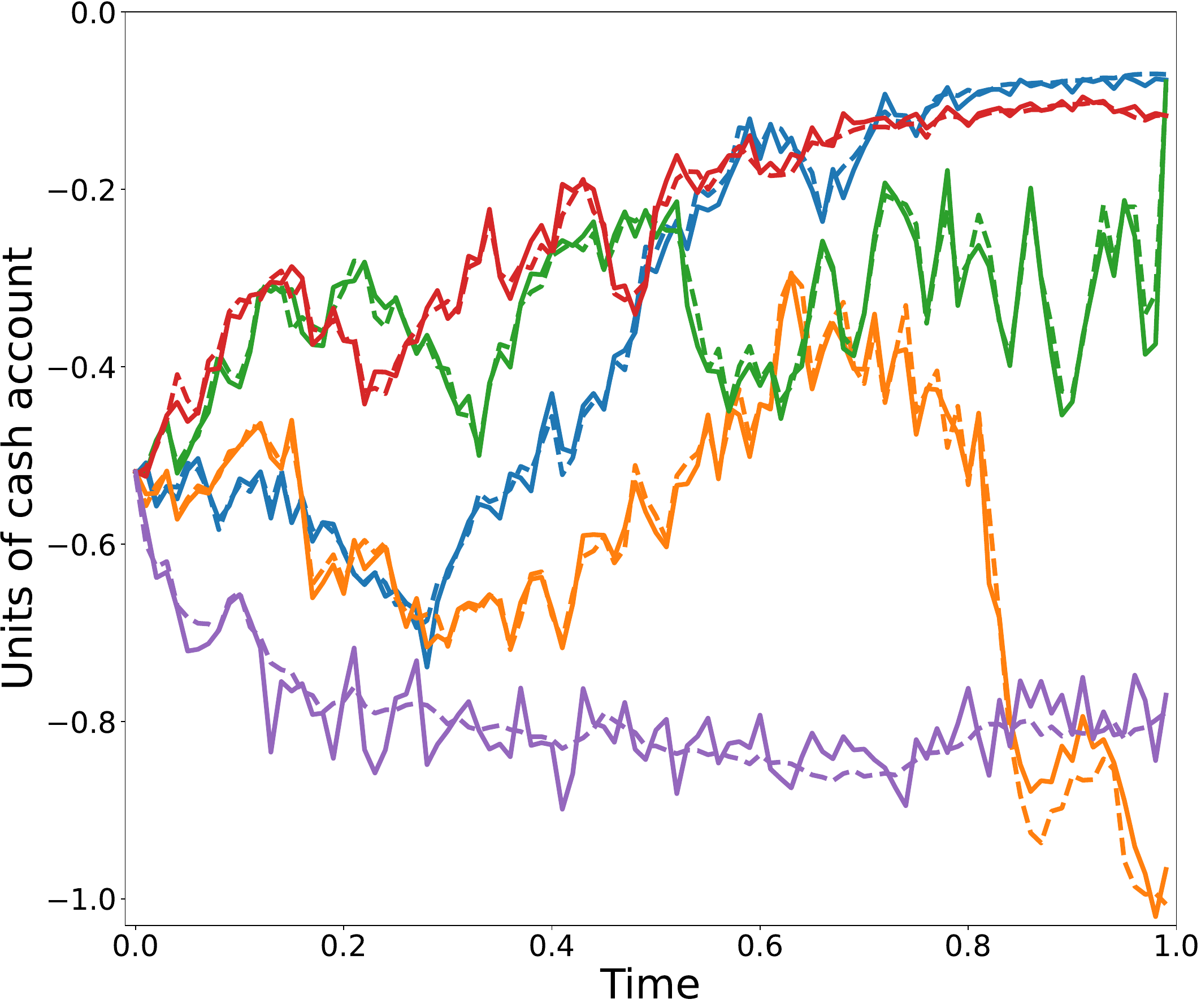}\\
			\multicolumn{2}{c}{\includegraphics[width=0.48\textwidth, height =0.35\textwidth]{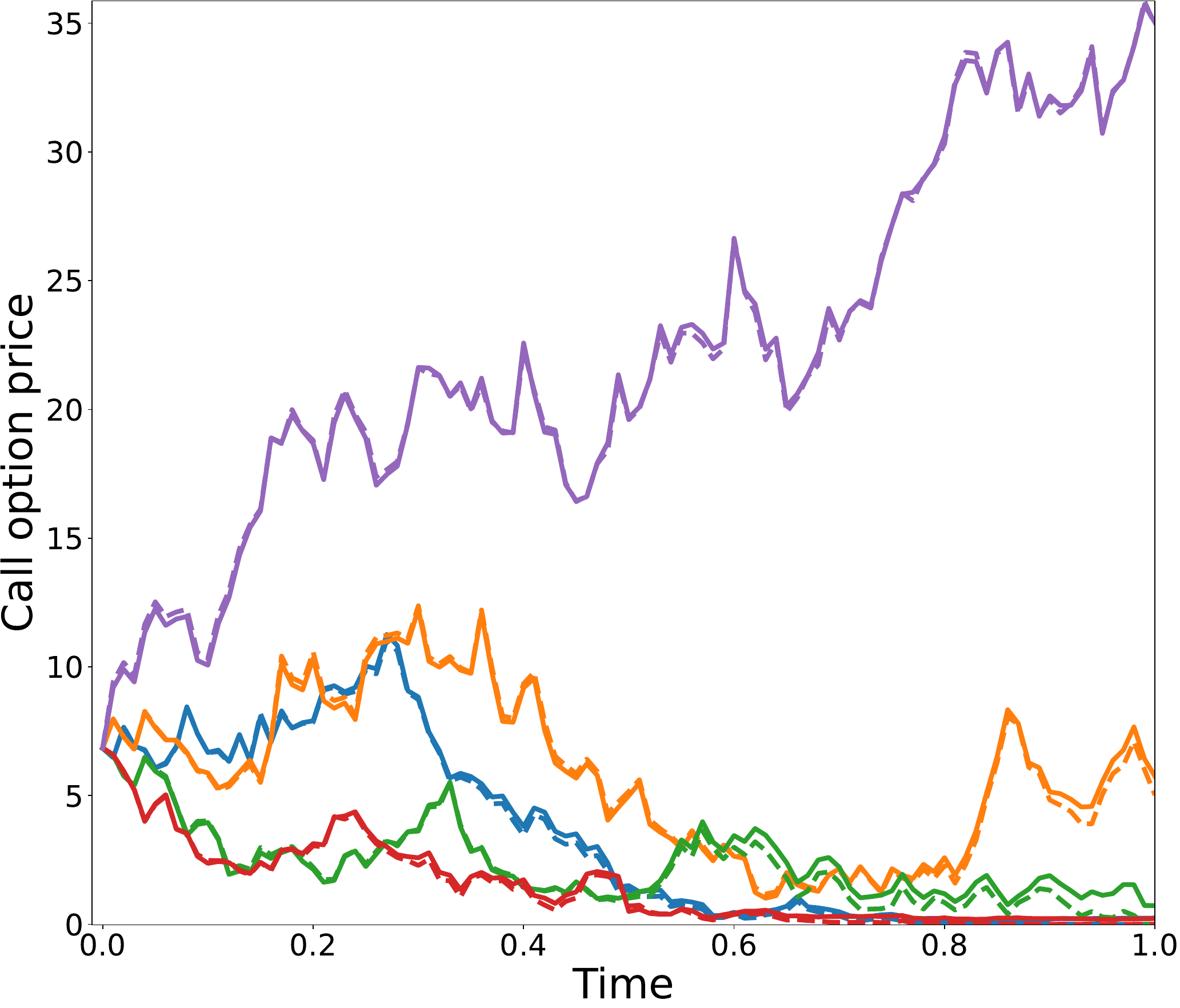}} \\
		\end{tabular}
	}
	\caption{Deep solver solution (solid line) and benchmark solution (dashed line) for the mean-variance hedging in a $100$ points discretization grid for $5$ random samples  in the interval $[0, 1]$. Upper panel: the shares of risky asset (left) and the units of cash account (right); lower panel: the call option price. \label{MV100}}
\end{figure}

\begin{figure}[tp]
	\resizebox{1\textwidth}{!}{
		\begin{tabular}{@{}>{\centering\arraybackslash}m{0.04\textwidth}@{}>{\centering\arraybackslash}m{0.32\textwidth}@{}>{\centering\arraybackslash}m{0.32\textwidth}@{}>{\centering\arraybackslash}m{0.32\textwidth}@{}}
			\multicolumn{4}{c}{\huge \textbf{Deep BSRE solution}}\\
			&&& \\
			&\textbf{\large $\boldsymbol{N = 10}$} & \textbf{\large $\boldsymbol{N = 50}$} & \textbf{\large  $ \boldsymbol{N =100}$}\\
			&&& \\
			\begin{turn}{90}\textbf{\large $\boldsymbol{m = 1}$}\end{turn}&\includegraphics[width=0.31\textwidth, height =0.25\textwidth]{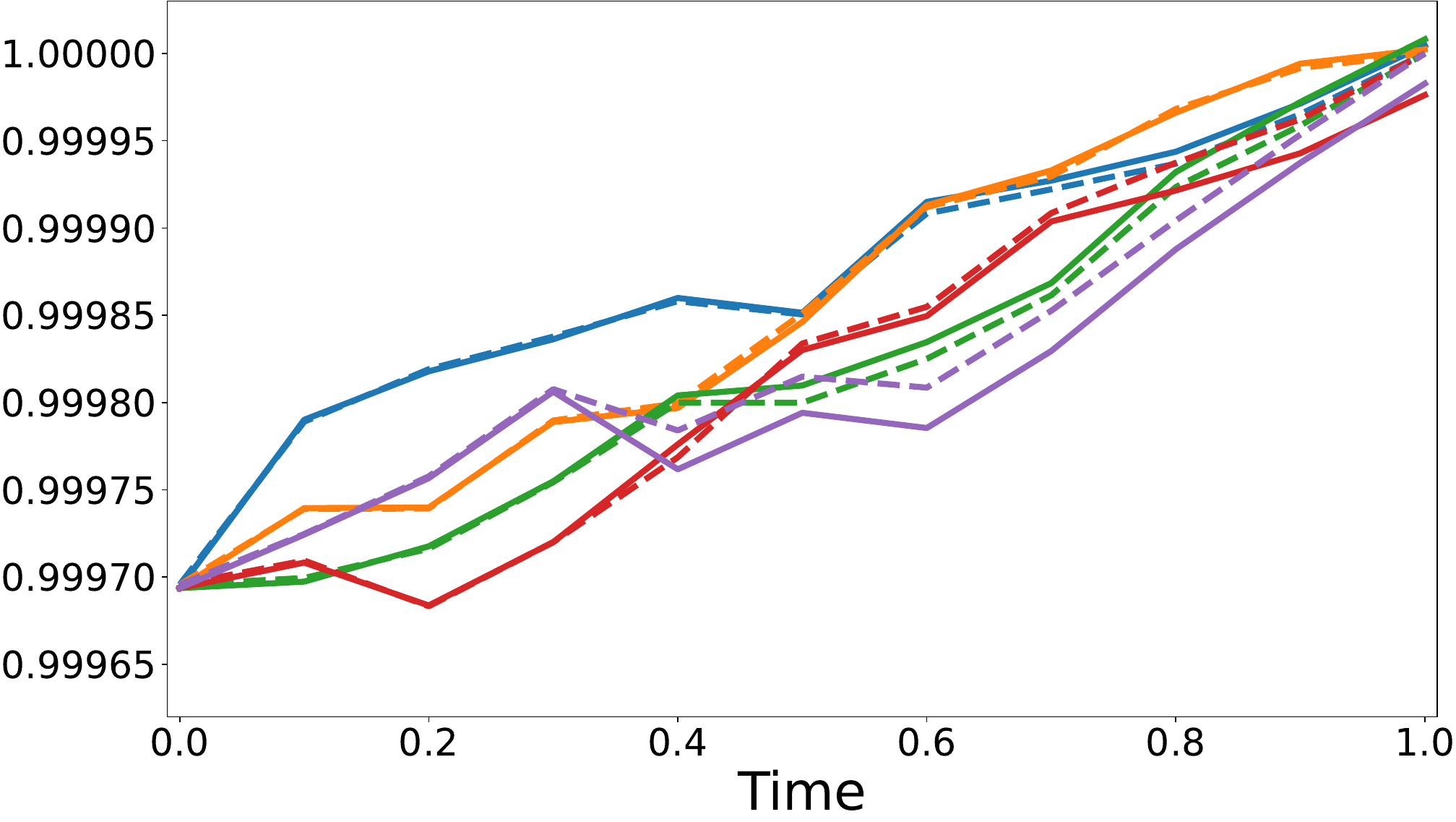} &
			\includegraphics[width=0.31\textwidth, height =0.25\textwidth]{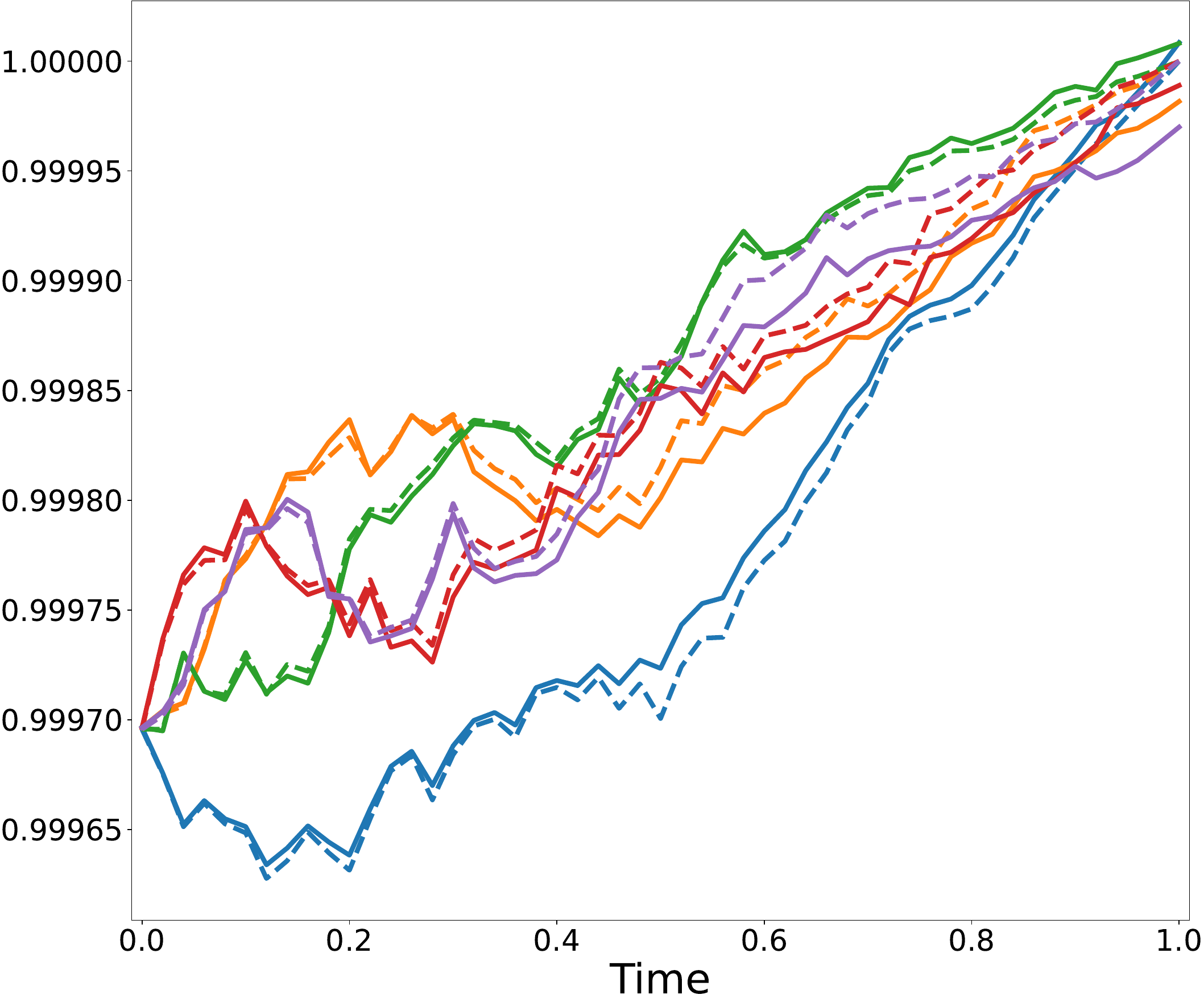}&
			\includegraphics[width=0.31\textwidth, height =0.25\textwidth]{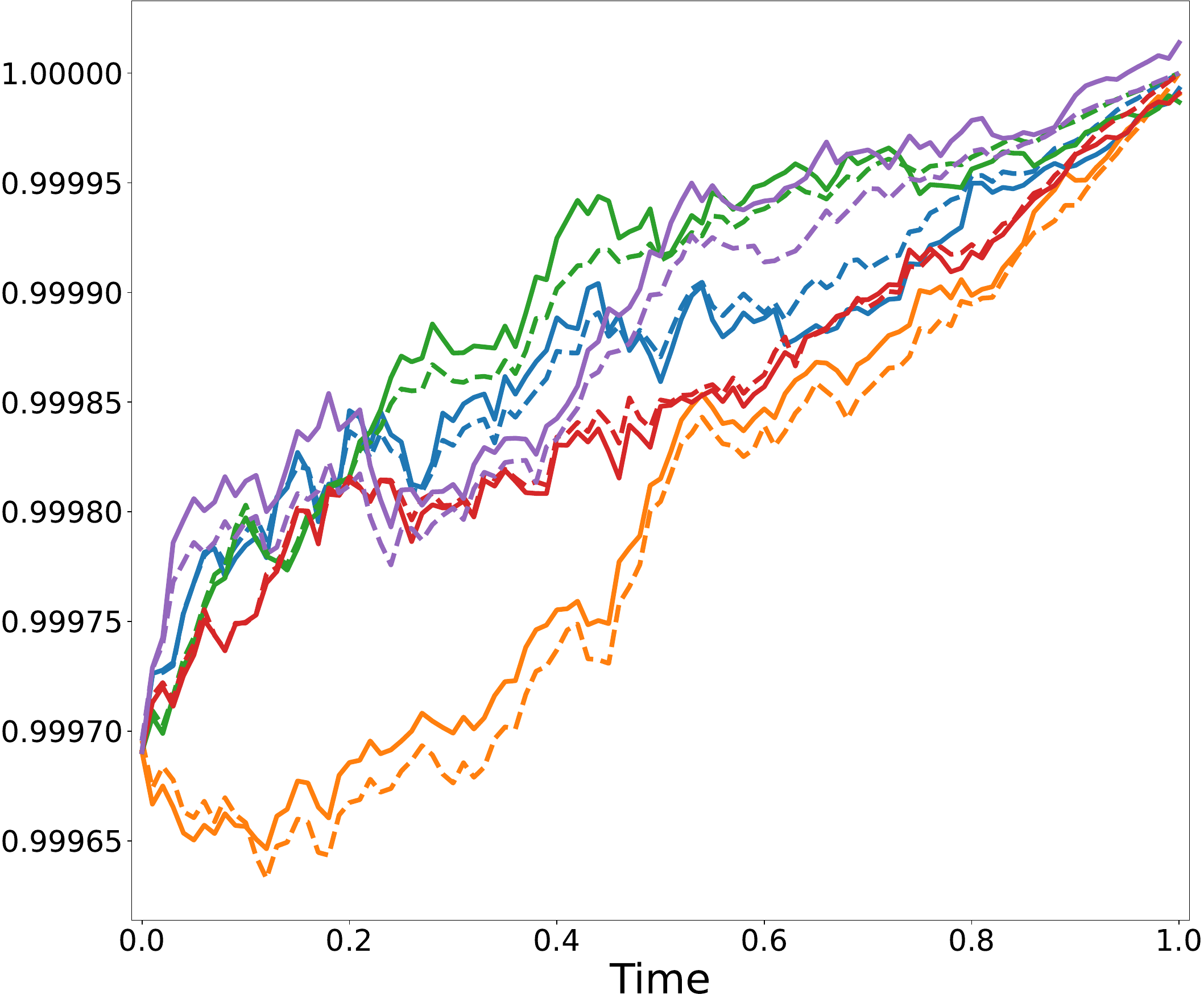} \\	
			\begin{turn}{90}\textbf{\large $\boldsymbol{m = 5}$}\end{turn}&\includegraphics[width=0.31\textwidth, height =0.25\textwidth]{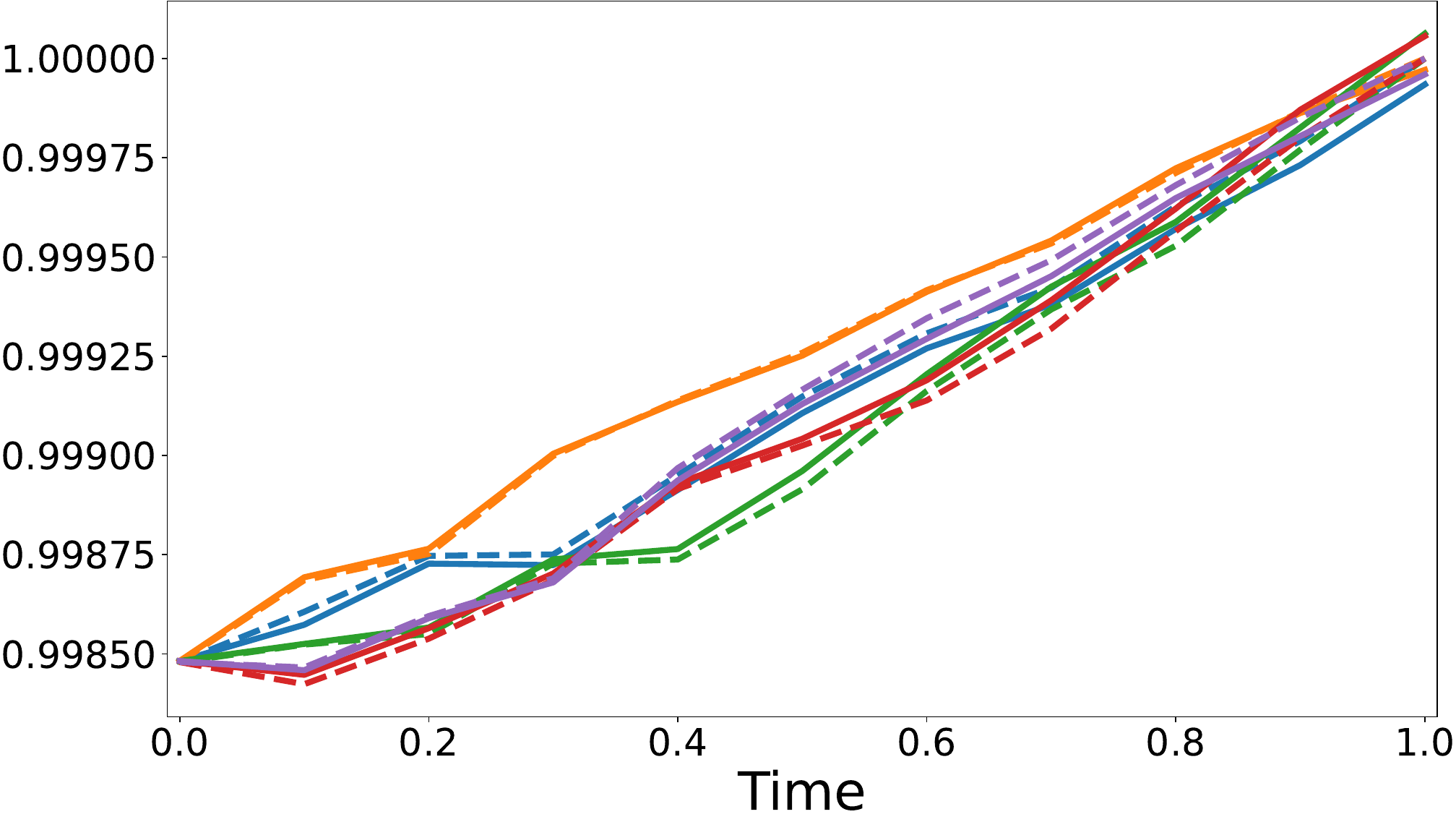} &
			\includegraphics[width=0.31\textwidth, height =0.25\textwidth]{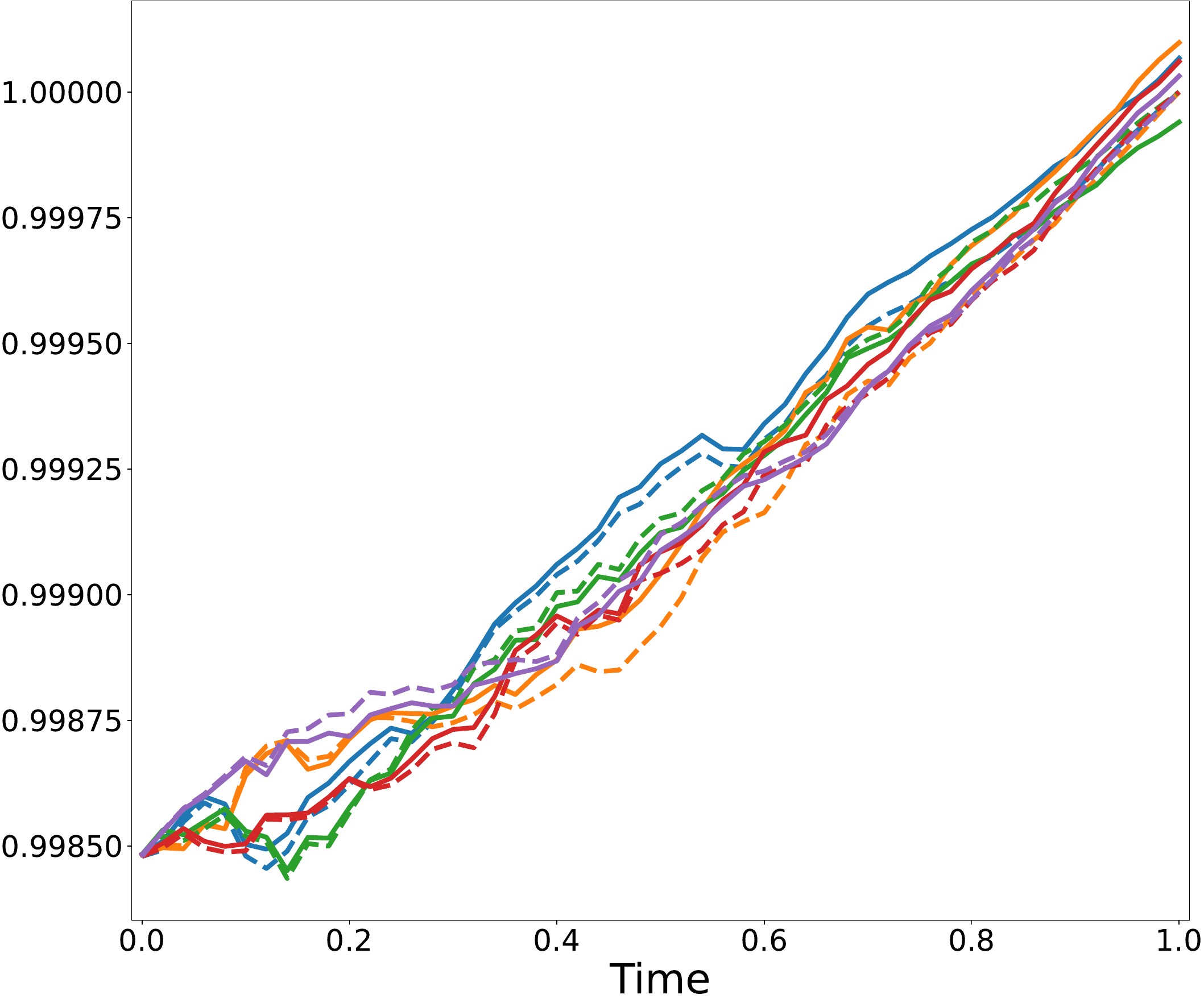}&
			\includegraphics[width=0.31\textwidth, height =0.25\textwidth]{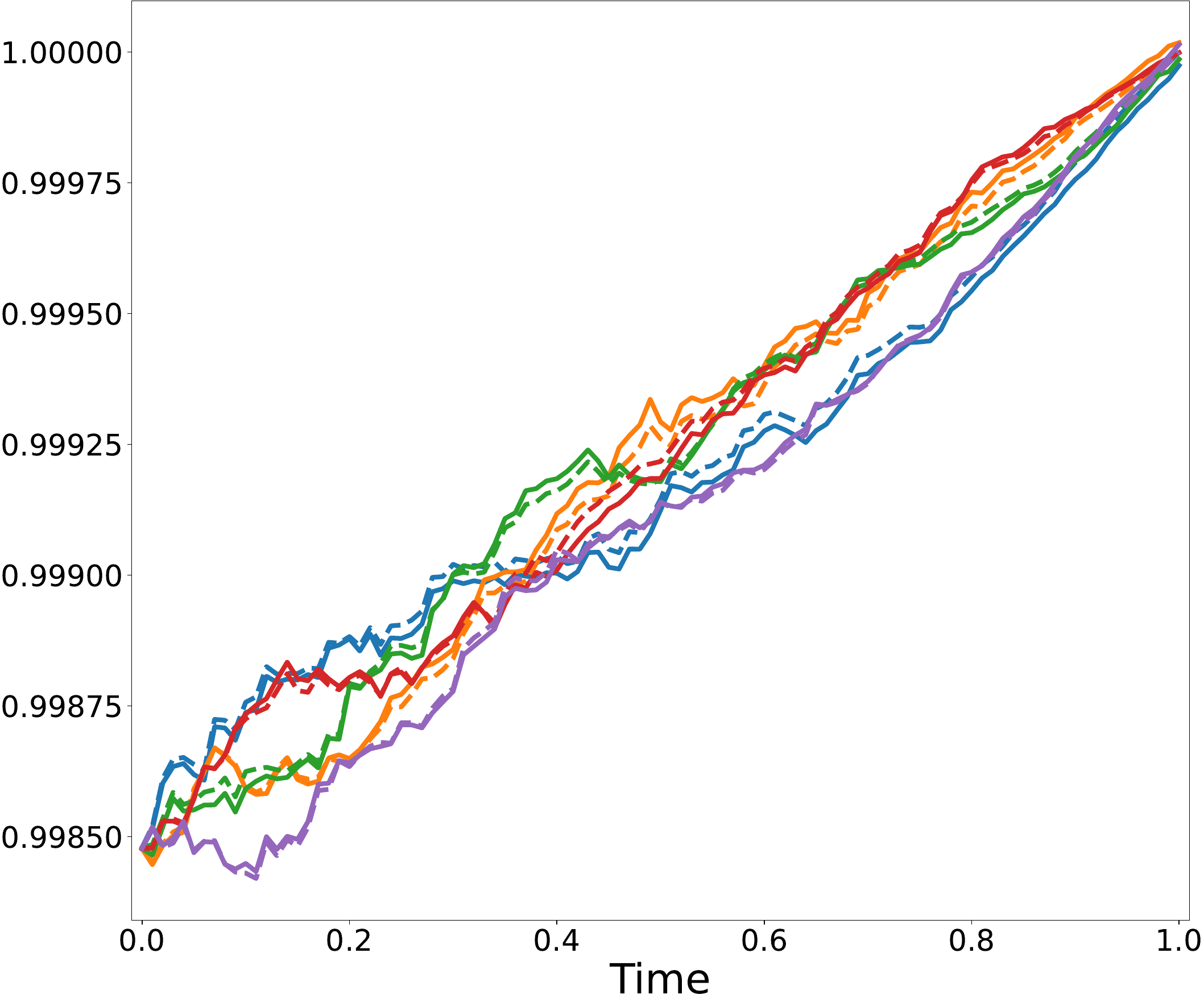} \\	
			\begin{turn}{90}\textbf{\large $\boldsymbol{m = 20}$}\end{turn}&\includegraphics[width=0.31\textwidth, height =0.25\textwidth]{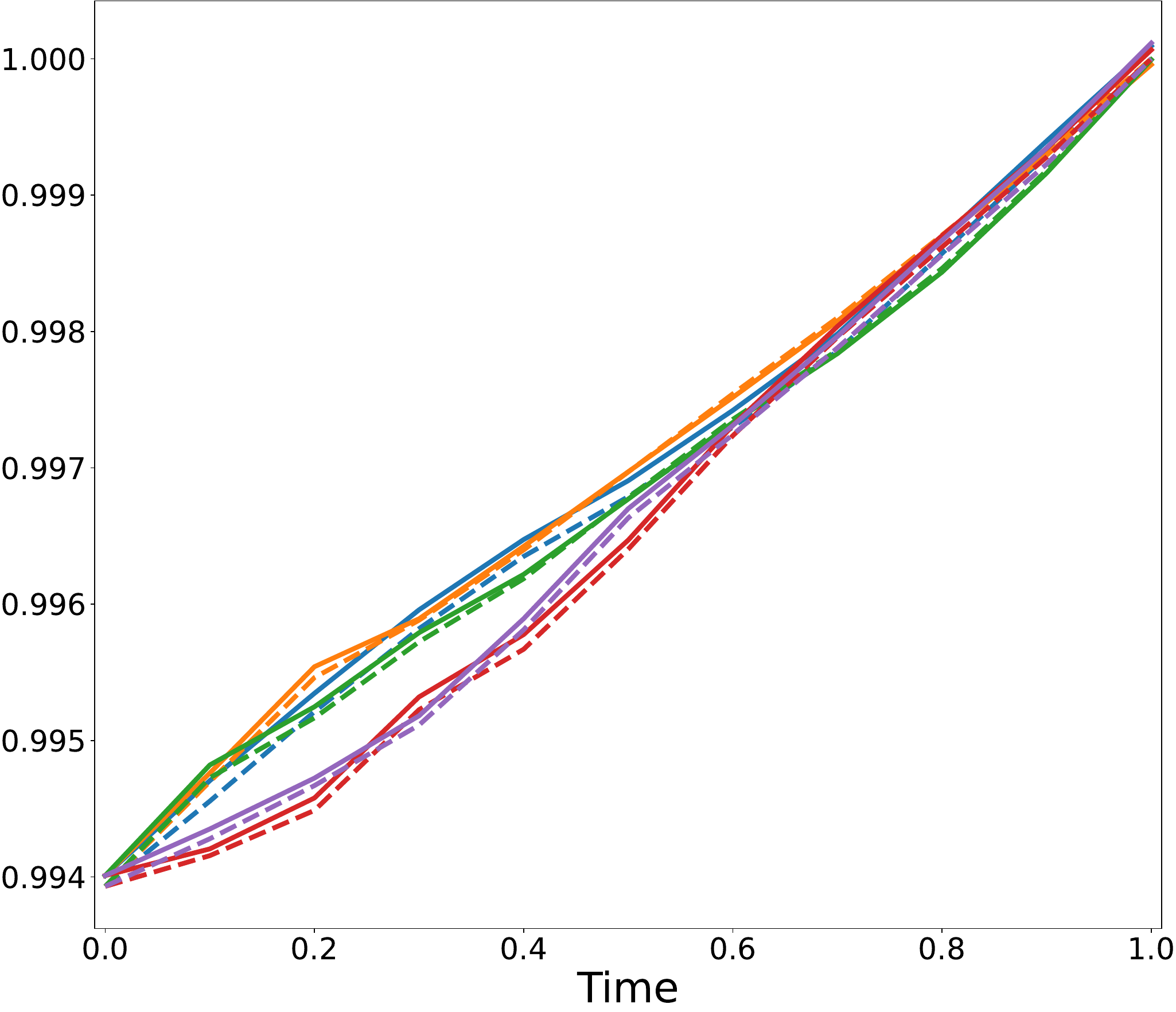}&
			\includegraphics[width=0.31\textwidth, height =0.25\textwidth]{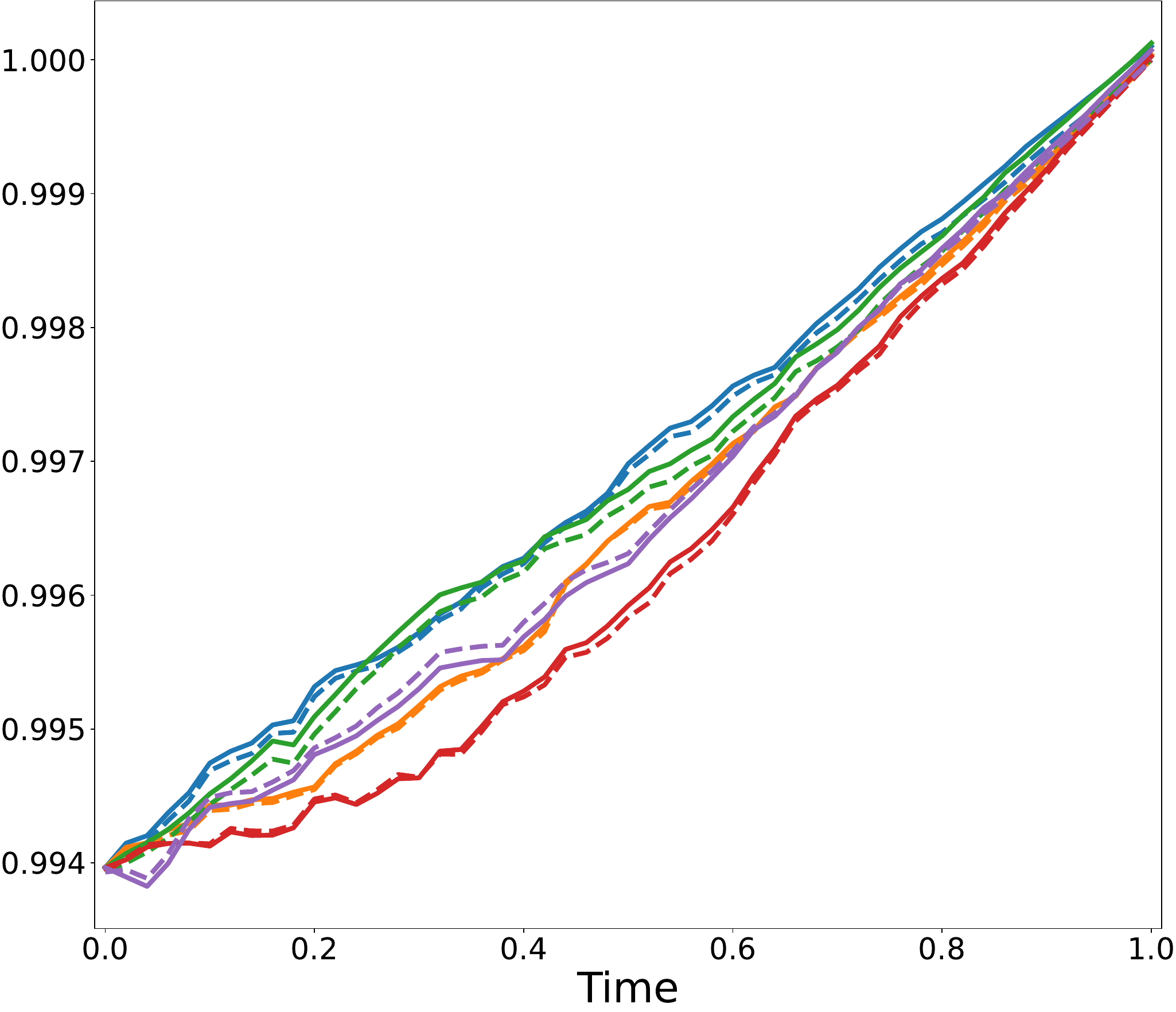} &	
			\includegraphics[width=0.31\textwidth, height =0.25\textwidth]{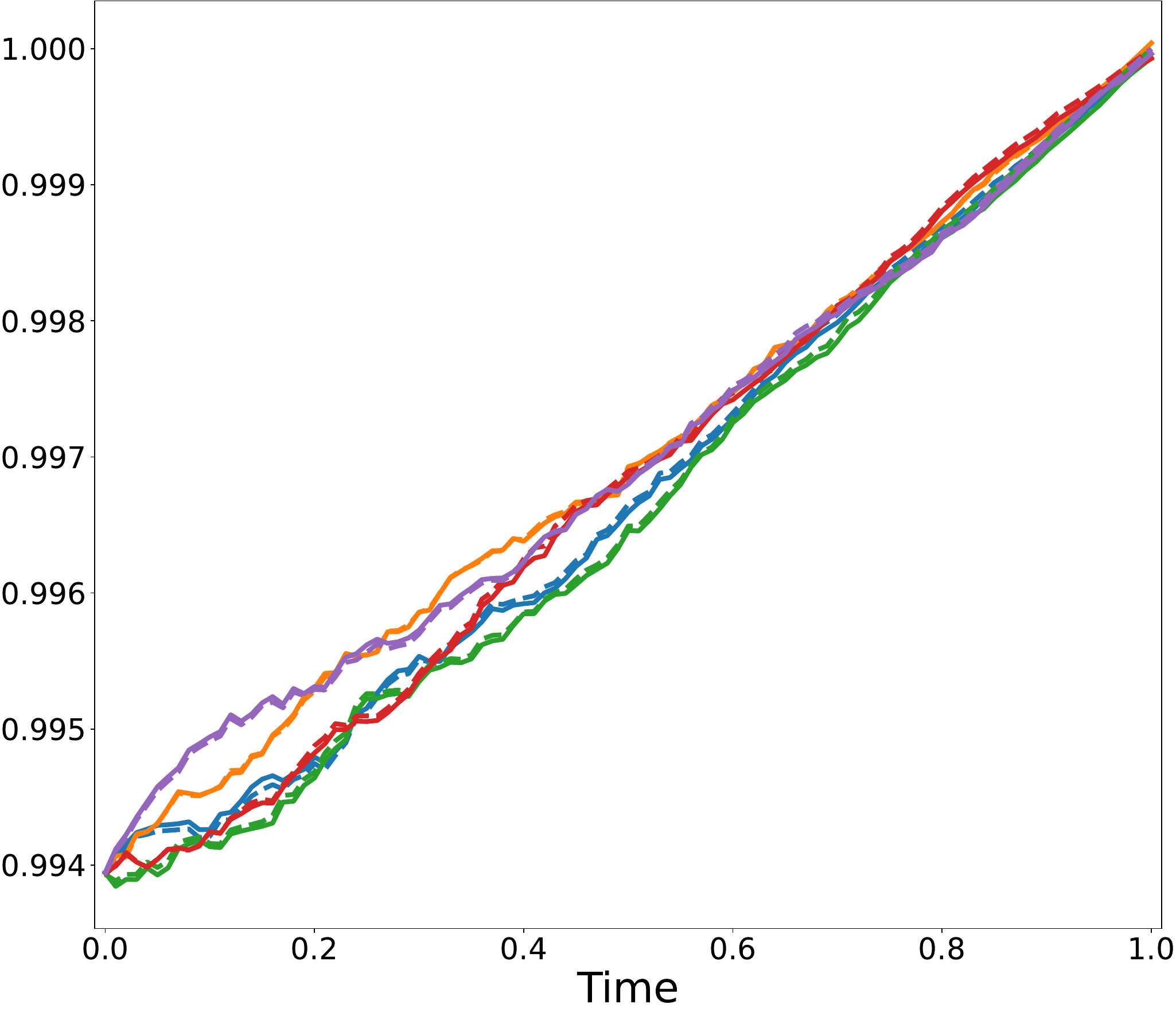}\\
			\begin{turn}{90}\textbf{\large $\boldsymbol{m = 100}$}\end{turn}&\includegraphics[width=0.31\textwidth, height =0.25\textwidth]{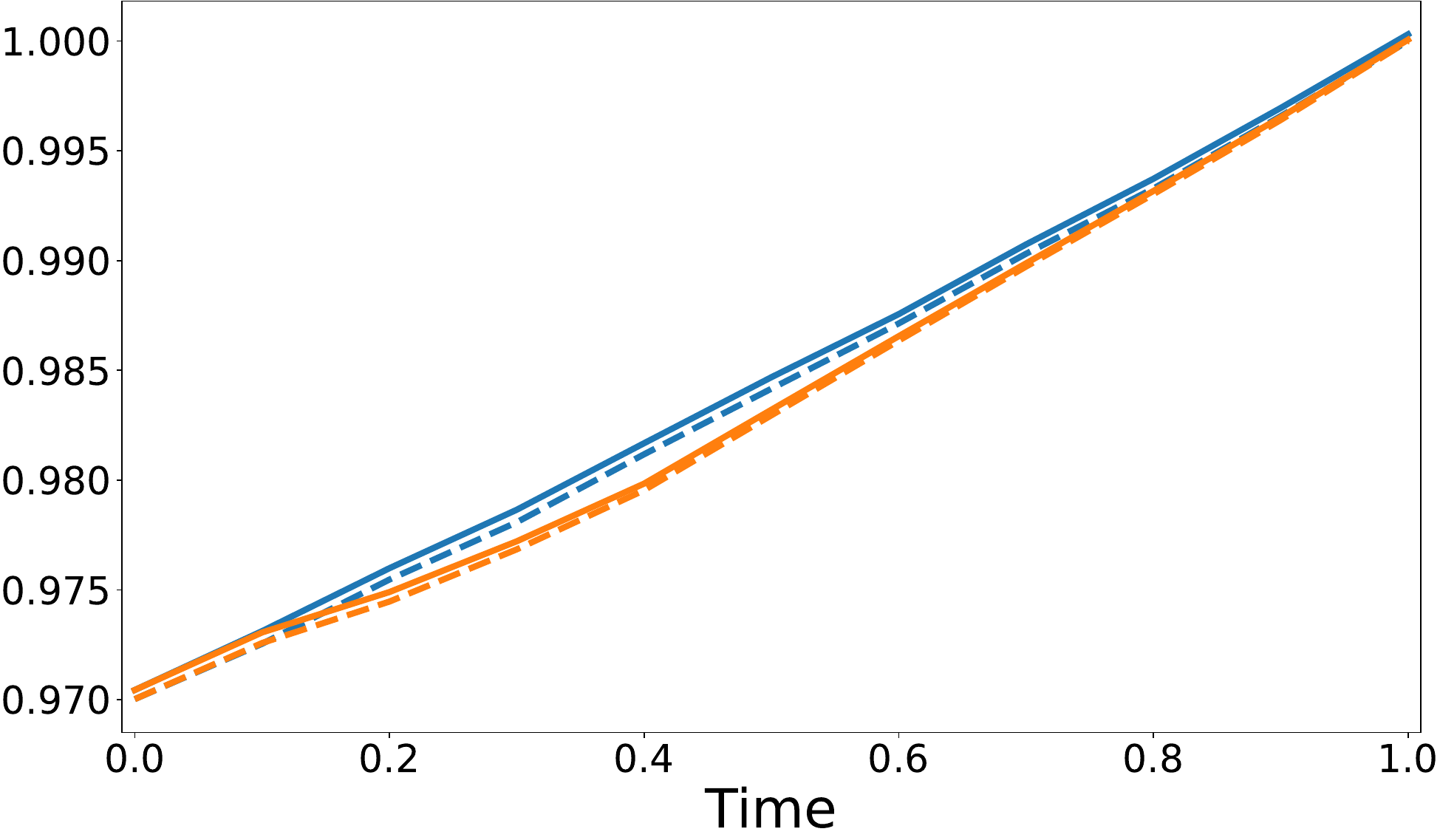}&
			\includegraphics[width=0.31\textwidth, height =0.25\textwidth]{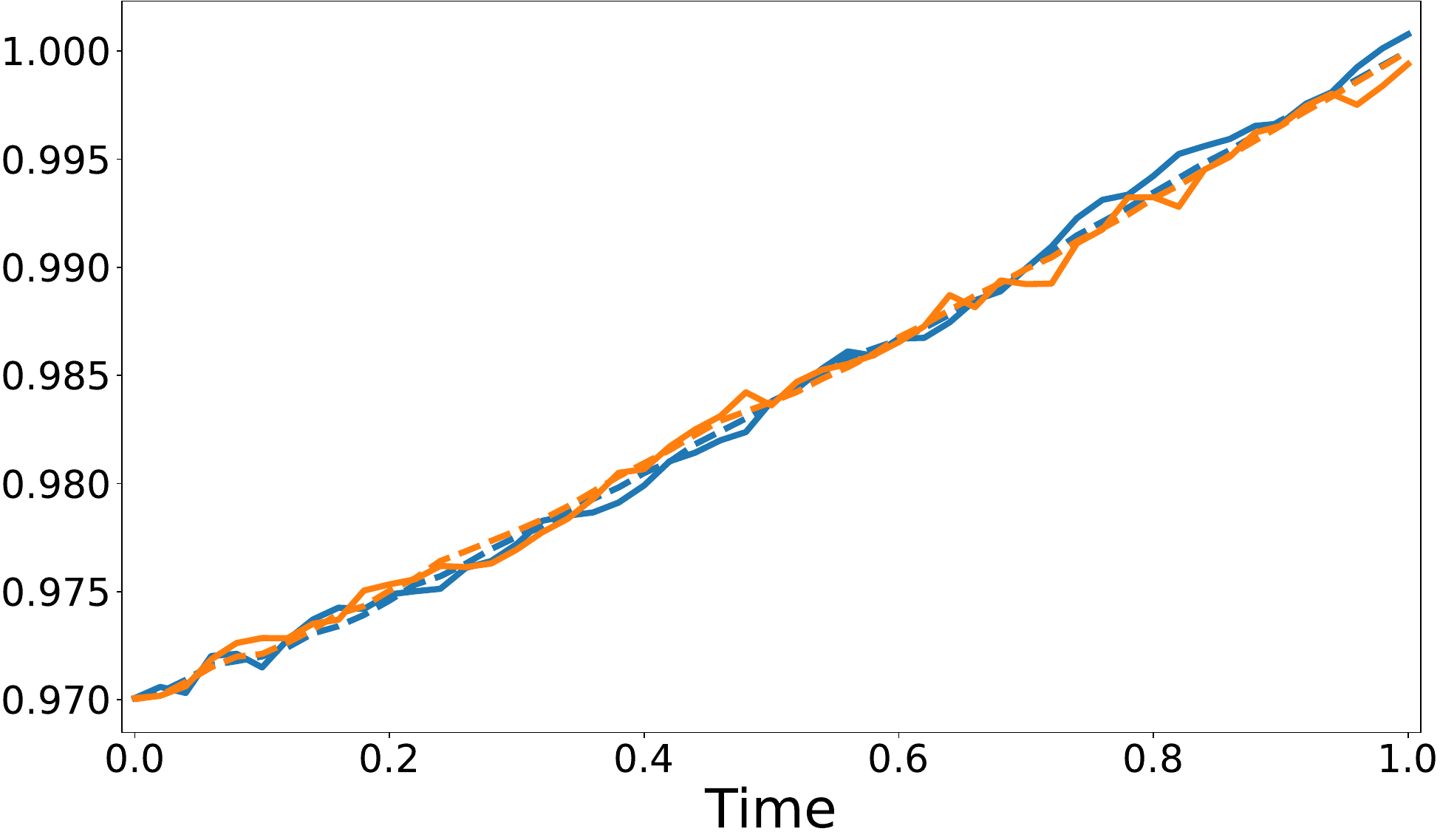} &	
			\includegraphics[width=0.31\textwidth, height =0.25\textwidth]{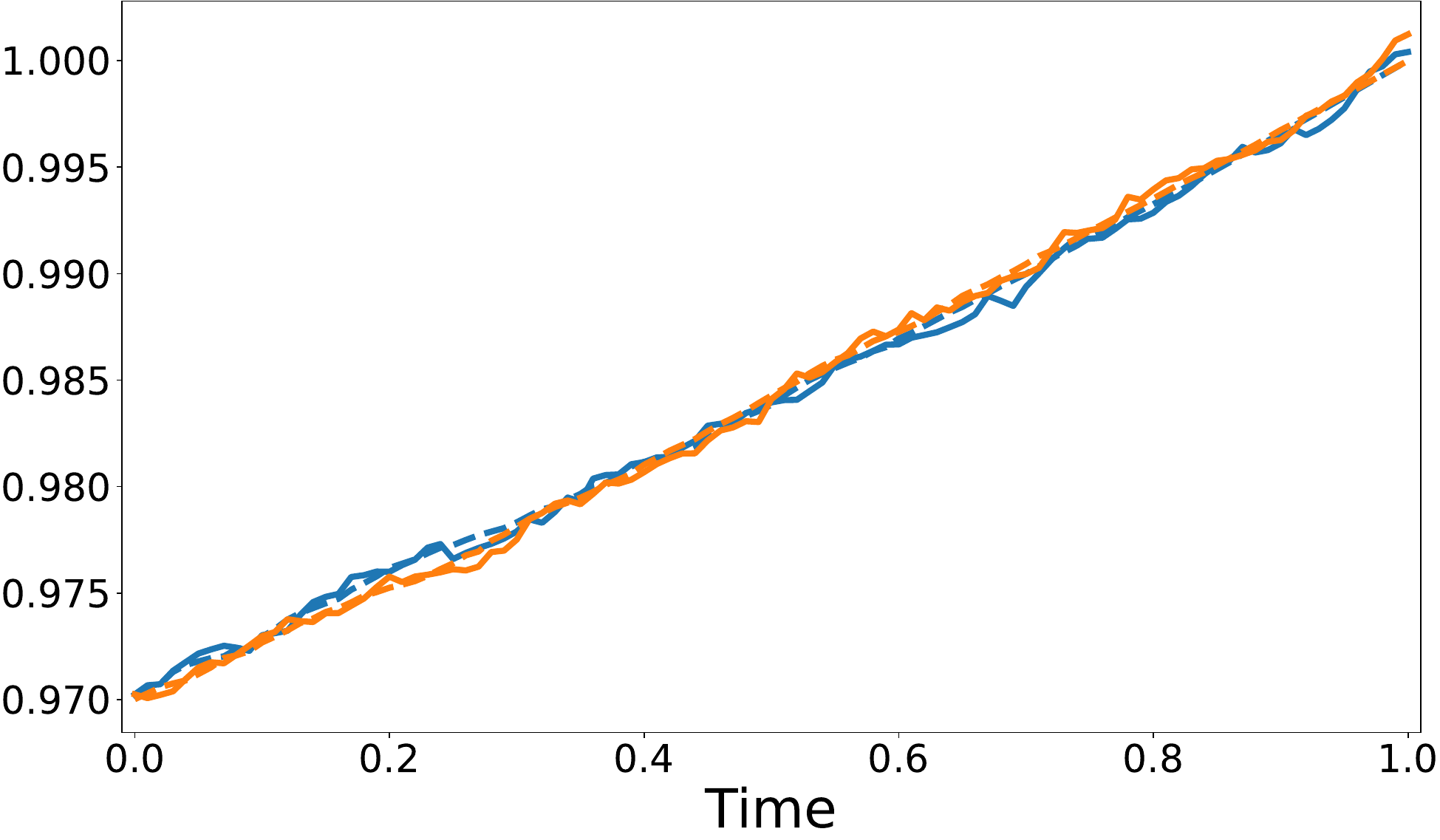}\\
			
		\end{tabular}
	}
	\caption{{\color{black}Deep solver solution (solid line) and benchmark solution (dashed line) of the stochastic Riccati equation \eqref{eq:StochasticRiccati} in the interval $[0, 1]$ for some random samples, and for the different experiment configurations presented in Table \ref{table:resultsMV}. For $m=1, 5, 20$ we consider $5$ trajectories, while for $m=100$ only $2$ samples are plotted to improve readability.} \label{plot:L}}
\end{figure}

\subsection{Deep local risk minimization results}
The numerical results for the deep local risk minimization are presented in Table \ref{table:resultsLR}, and  in Figure \ref{LR_logloss} we show the evolution of the logarithmic loss as a function of the deep solver iterations.

As before, for each portfolio dimension we compute the Monte Carlo (MC) price by simulating $10^5$ samples with $100$ points of time discretization under the minimal martingale measure as presented in \eqref{HestonQLR}. We then train and run the solver for solving the BSDE as in Section \ref{sec:deepLR}, obtaining an estimate of the call option price (BSDE solver price). We report the training time and the relative error, which is computed in the standard way starting from the previously simulated MC price. For the portfolio in dimension $1$, we also compute the option price with the Heath, Platen and Schweizer approach presented in Section \ref{localrisk1dim} and the corresponding relative error.

From Table \ref{table:resultsLR} we observe that in all the experiments the error for the option price is below $1.5\%$, and in particular below $1\%$ for $N=50$ and $N=100$. Clear convergence is also confirmed by the evolution of the logarithmic loss in Figure \ref{LR_logloss}.

For the portfolio of dimension $1$, we compute the call option price and hedging strategies evolution, namely units of cash account and shares of risky asset, both with the BSDE approach and with the Heath, Platen and Schweizer approach. We report them in Figure \ref{LR10}, \ref{LR50} and \ref{LR100}, respectively for $N=10$, $N=50$ and $N=100$. We observe that the solver is capturing well the paths of the option price, of the units and of the shares. 

A more complete picture is given by the MSE in Figure \ref{MSE}. Here we indeed observe that for the option price the MSE accumulates over time, but is clearly decreasing with the number of time steps, $N$, hence the error can be controlled by decreasing the mesh size in the time discretization. 
Less clear is the situation for the units of cash account and the shares of risky asset. However, the dashed lines representing the mean over time of the MSE show that also in these two cases the MSE improves from $N=10$ to $N=50$, but gets slightly worse for $N=100$. Again, this may be due to an insufficient training of the ANNs for high values of $N$.

\section{Conclusions}\label{sec:concl}
We implement a new methodology for quadratic hedging in high-dimensional settings. Our strategy involves {\color{black}formulating} local risk minimization and mean-variance hedging by means, respectively,  of BSDEs and systems of BSDEs in high dimension. We solve these (systems of) BSDEs by adapting the deep BSDE solver of \cite{weinar2017}, and we test the proposed methodology on Heston's  stochastic volatility model. 
For the one-dimensional setting, we validate our algorithms against known analytical approaches, showing high levels of precision in option pricing. We also obtain fairly accurate simulated paths for both the option price process and the optimal portfolio hedging strategy process. For high-dimensional settings, we benchmark the approach against Monte Carlo simulations, showing that high levels of precision are achieved also in these cases.
Our approach allows {\color{black} us} to hedge contingent claims in the presence of a high number of risk factors, a setting where traditional techniques are not feasible.

There are different directions for future work. On the one hand, it could be interesting to consider counterparty credit risk problems along the lines of \cite{gpr2022}, who consider only a complete market setting. Recently, the deep solver of \cite{weinar2017} has been extended to the case of BSDEs with jumps in \cite{gpp2022}, hence a natural idea would be to investigate whether the technique we propose could be also employed in a jump-diffusion setting. Since in \cite{lim2005} the stochastic control approach to mean-variance hedging that we are using is extended to include jumps, we conjecture that such an extension is possible. Finally, a further possibility would be to study the current approach in relation to the multivariate stochastic volatility model of \cite{dfgt2007} driven by Wishart processes.

\section*{Aknowledgements}
The authors are grateful to Martino Grasselli for useful remarks, \textcolor{black}{and to two anonymous referees for valuable suggestions}.

\begin{table}[tp]
	\centering
	\begin{tabular}{rrrr}
		\multicolumn{4}{c}{\huge \textbf{Local risk minimization}}\\
		&&& \\
		\toprule
		\multicolumn{1}{r}{\large \textbf{Portfolio dimension:} $\boldsymbol{1}$}&\multicolumn{3}{r}{\large \textbf{MC price:} $\boldsymbol{6.854}$}\\
		\midrule
		\textbf{Time steps}&$\boldsymbol{10}$ & $\boldsymbol{50}$&$\boldsymbol{100}$ \\
		\midrule
		\textbf{BSDE solver price}& $6.829$& $6.846$& $6.855$ \\
		\textbf{Relative error (\%)}& $\boldsymbol{0.360}$&  $\boldsymbol{0.120}$&$\boldsymbol{0.0162}$ \\
		\textbf{Training time (s)}&  $128$& $735$& $1546$\\
		\hdashline
		\textbf{PDE price}&  $6.850$& $6.850$& $6.850$ \\
		\textbf{Relative error (\%)}& $\boldsymbol{0.0488}$&  $\boldsymbol{0.0613}$&$\boldsymbol{0.0618}$ \\
		\midrule
		\multicolumn{1}{r}{\large \textbf{Portfolio dimension:} $\boldsymbol{5}$}&\multicolumn{3}{r}{\large \textbf{MC price:} $\boldsymbol{15.412}$}\\
		\midrule
		\textbf{Time steps}&$\boldsymbol{10}$ & $\boldsymbol{50}$&$\boldsymbol{100}$ \\
		\midrule
		\textbf{BSDE solver price}& $15.197$& $15.366$& $15.365$ \\
		\textbf{Relative error (\%)}& $\boldsymbol{1.395}$&  $\boldsymbol{0.299}$&$\boldsymbol{0.309}$ \\
		\textbf{Training time (s)}&  $255$& $1250$& $2866$\\
		\midrule
		\multicolumn{1}{r}{\large \textbf{Portfolio dimension:} $\boldsymbol{20}$}&\multicolumn{3}{r}{\large \textbf{MC price:} $\boldsymbol{30.761}$}\\
		\midrule
		\textbf{Time steps}&$\boldsymbol{10}$ & $\boldsymbol{50}$&$\boldsymbol{100}$ \\
		\midrule
		\textbf{BSDE solver price}& $30.704$& $30.783$& $30.828$ \\
		\textbf{Relative error (\%)}& $\boldsymbol{1.322}$&  $\boldsymbol{0.568}$&$\boldsymbol{0.218}$ \\
		\textbf{Training time (s)}&  $418$& $1993$& $3660$\\
		\midrule
		\multicolumn{1}{r}{\large \textbf{Portfolio dimension:} $\boldsymbol{100}$}&\multicolumn{3}{r}{\large \textbf{MC price:} $\boldsymbol{68.950}$}\\
		\midrule
		\textbf{Time steps}&$\boldsymbol{10}$ & $\boldsymbol{50}$&$\boldsymbol{100}$ \\
		\midrule
		\textbf{BSDE solver price}& $68.269$& $68.427$& $69.020$ \\
		\textbf{Relative error (\%)}& $\boldsymbol{0.988}$&  $\boldsymbol{0.758}$&$\boldsymbol{0.101}$ \\
		\textbf{Training time (s)}&  $1772$& $9096$& $16527$\\
		
		\bottomrule
	\end{tabular}
	\caption{Local risk minimization results for different portfolio dimensions and different number of total time steps in the discretization grid. For each configuration, we compute the Monte Carlo (MC) price by simulating $10^5$ samples under the minimal martingale measure, and we use it to compute the relative error in the classical way. For the portfolio with one risky asset, we report the price obtained with the benchmark approach via PDE presented in Appendix \ref{localrisk1dim}. \label{table:resultsLR}}
\end{table}

\begin{figure}[tp]
	\resizebox{1\textwidth}{!}{
		\begin{tabular}{@{}>{\centering\arraybackslash}m{0.04\textwidth}@{}>{\centering\arraybackslash}m{0.32\textwidth}@{}>{\centering\arraybackslash}m{0.32\textwidth}@{}>{\centering\arraybackslash}m{0.32\textwidth}@{}}
			\multicolumn{4}{c}{\huge \textbf{Local risk minimization}}\\
			&&& \\
			&\textbf{\large $\boldsymbol{N = 10}$} & \textbf{\large $\boldsymbol{N = 50}$} & \textbf{\large  $ \boldsymbol{N =100}$}\\
			&&& \\
			\begin{turn}{90}\textbf{\large $\boldsymbol{m = 1}$}\end{turn}&\includegraphics[width=0.32\textwidth, height =0.25\textwidth]{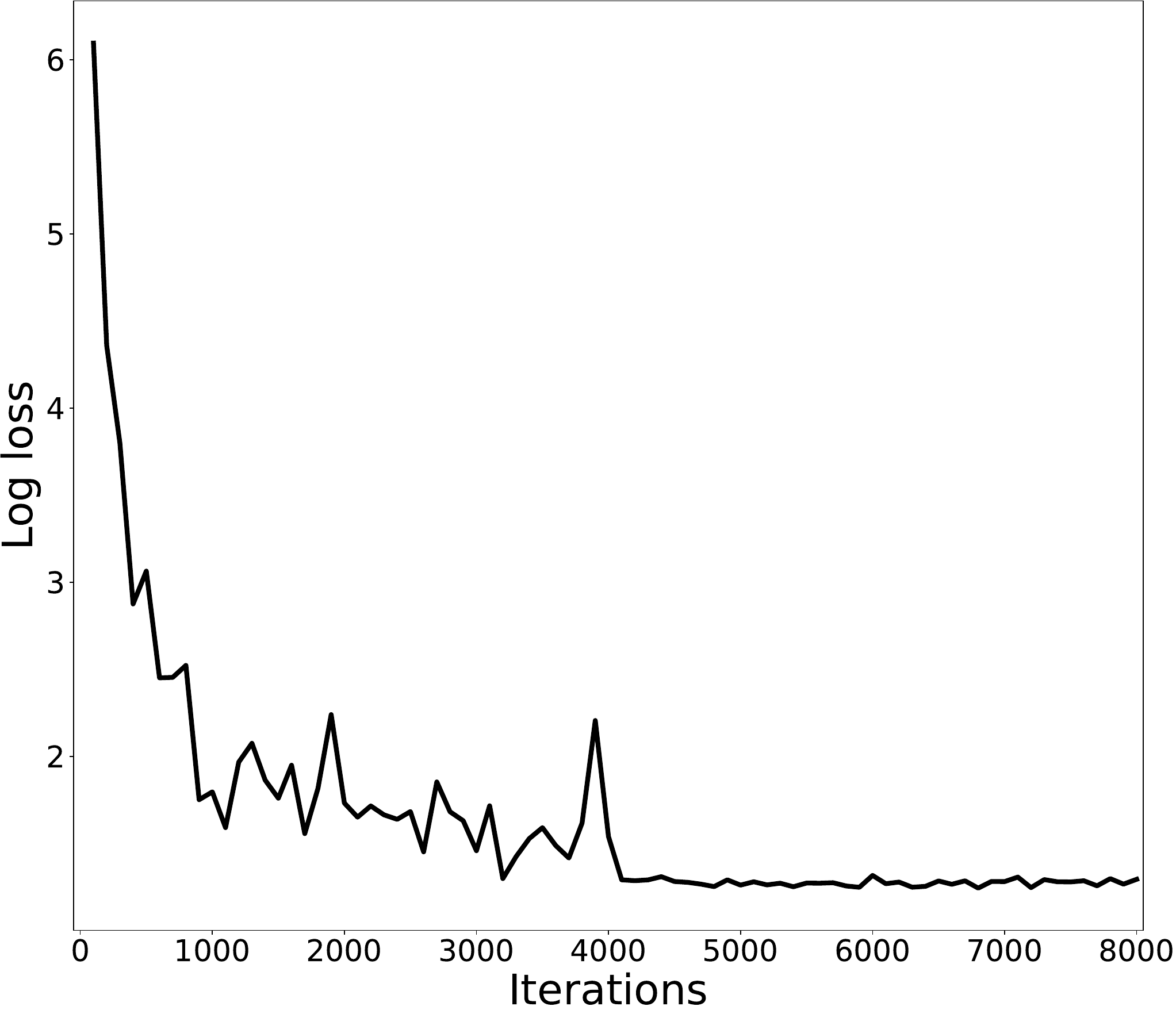} &
			\includegraphics[width=0.32\textwidth, height =0.25\textwidth]{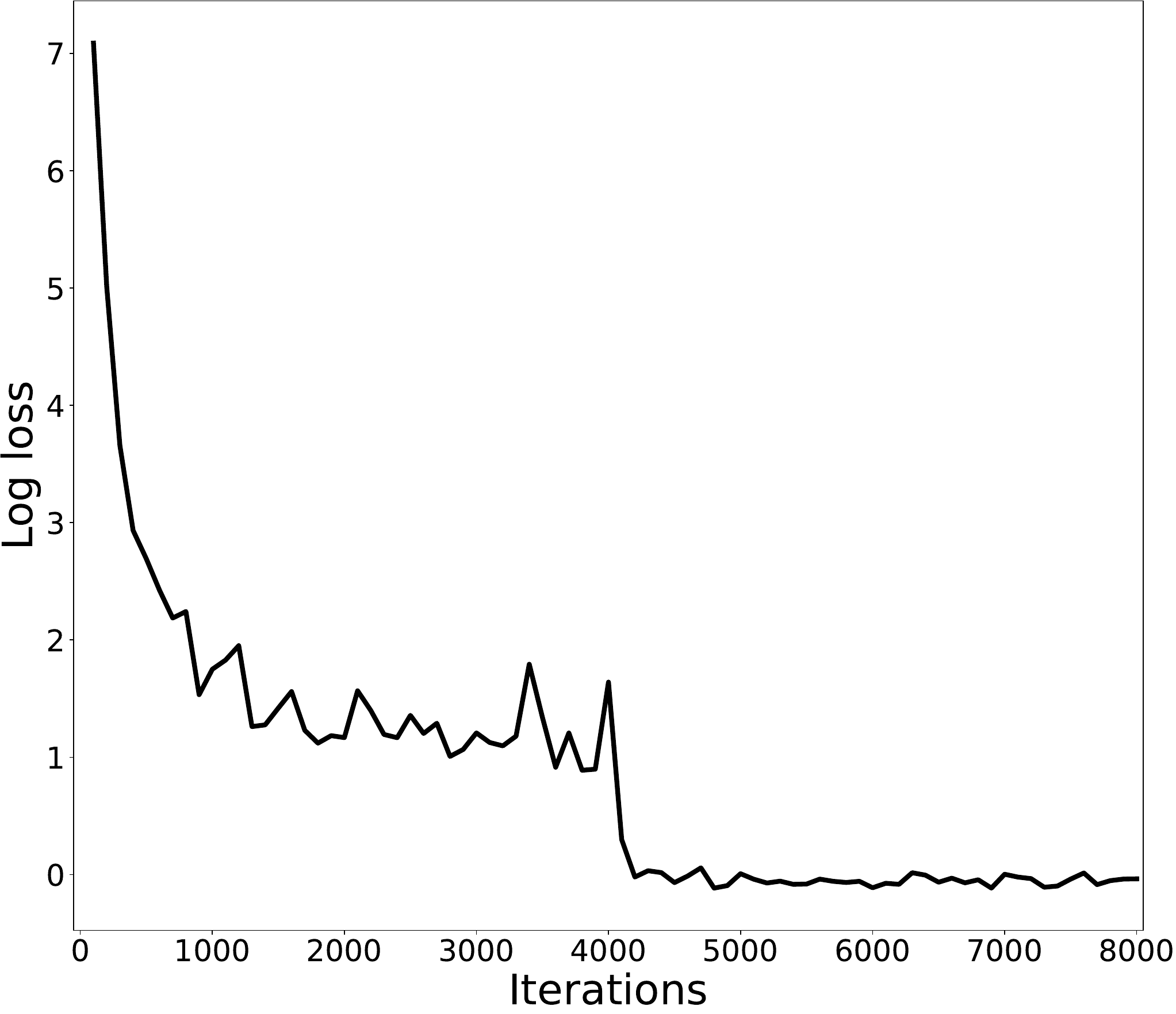}&
			\includegraphics[width=0.32\textwidth, height =0.25\textwidth]{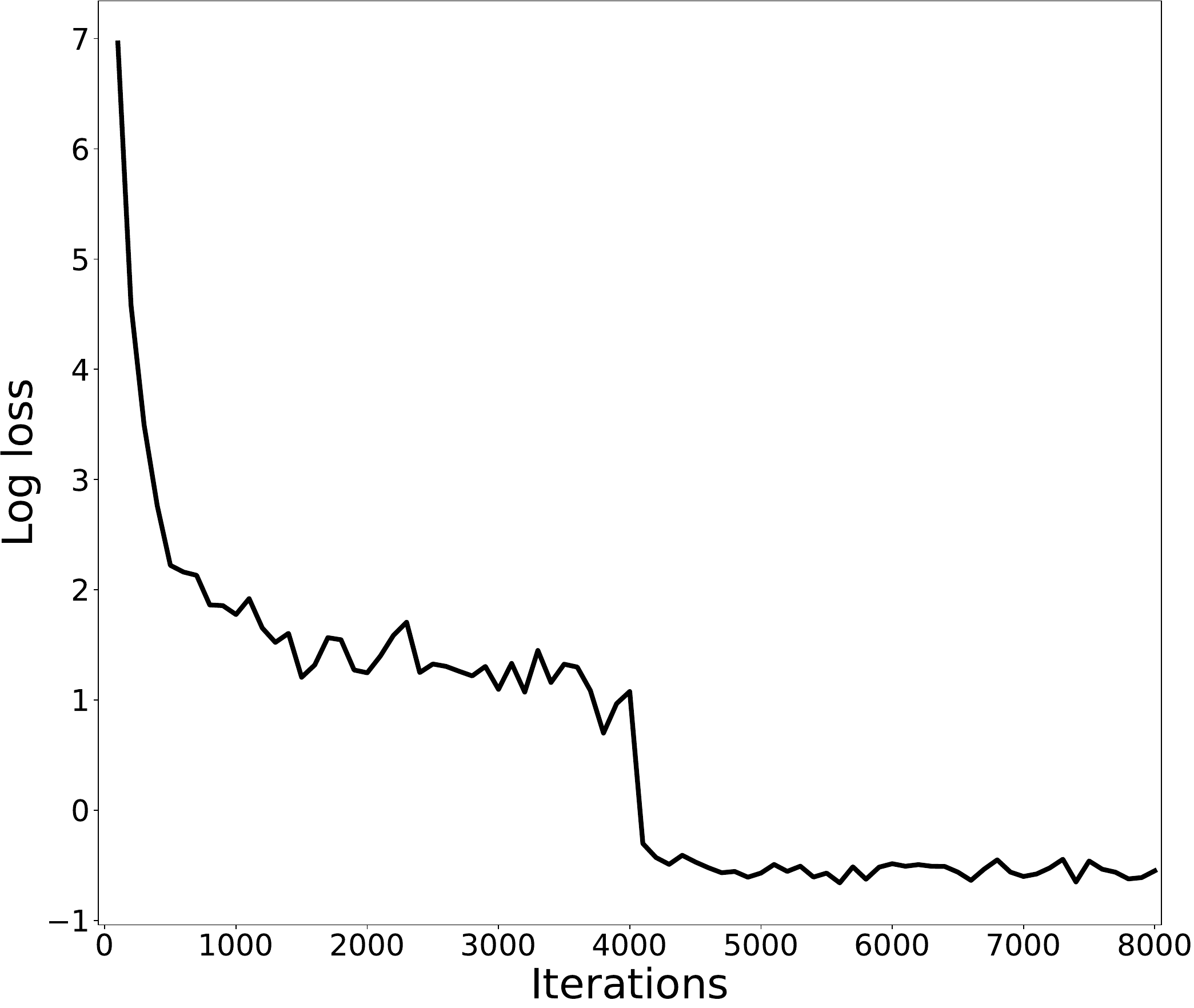} \\	
			\begin{turn}{90}\textbf{\large $\boldsymbol{m = 5}$}\end{turn}&\includegraphics[width=0.32\textwidth, height =0.25\textwidth]{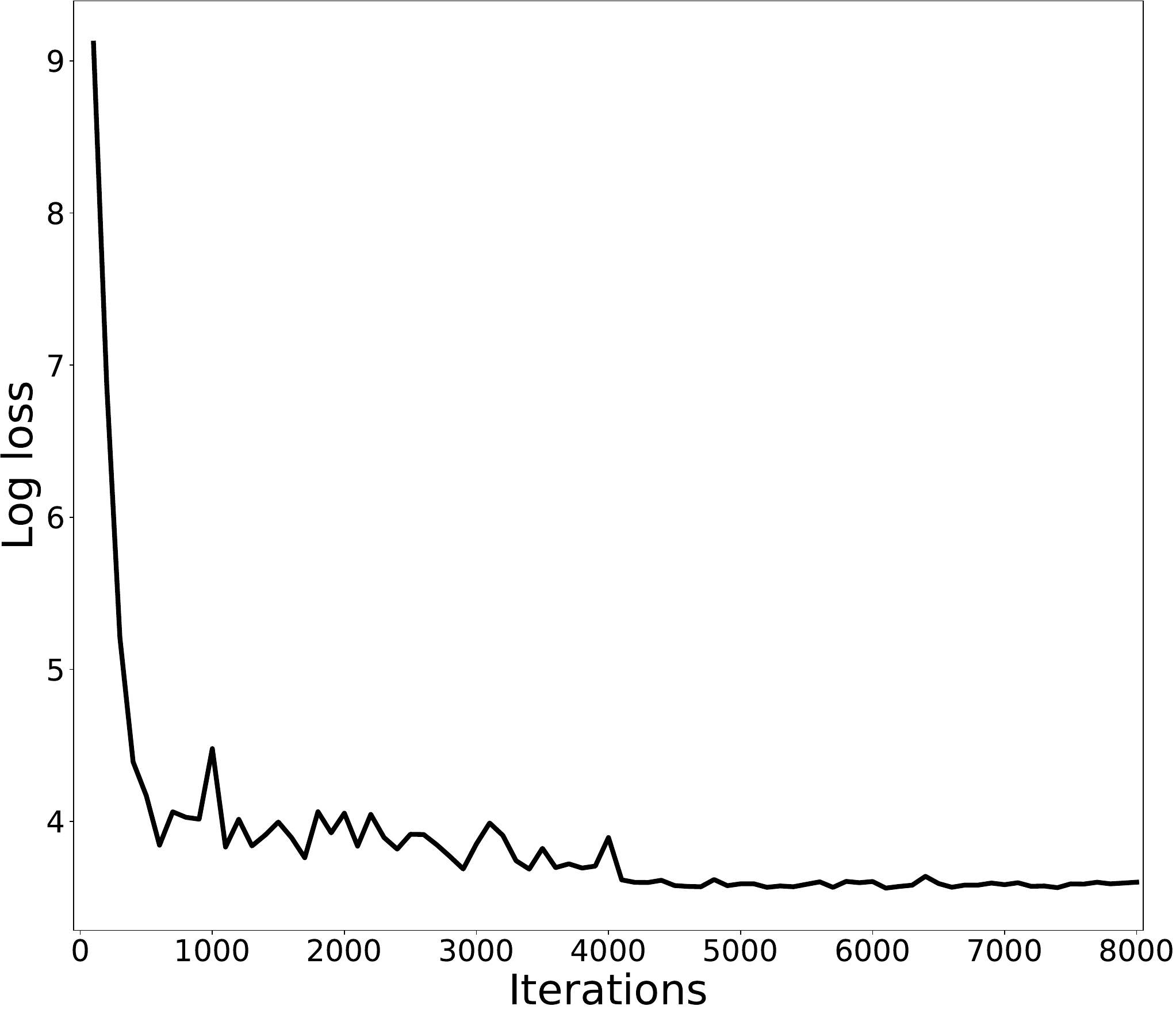} &
			\includegraphics[width=0.32\textwidth, height =0.25\textwidth]{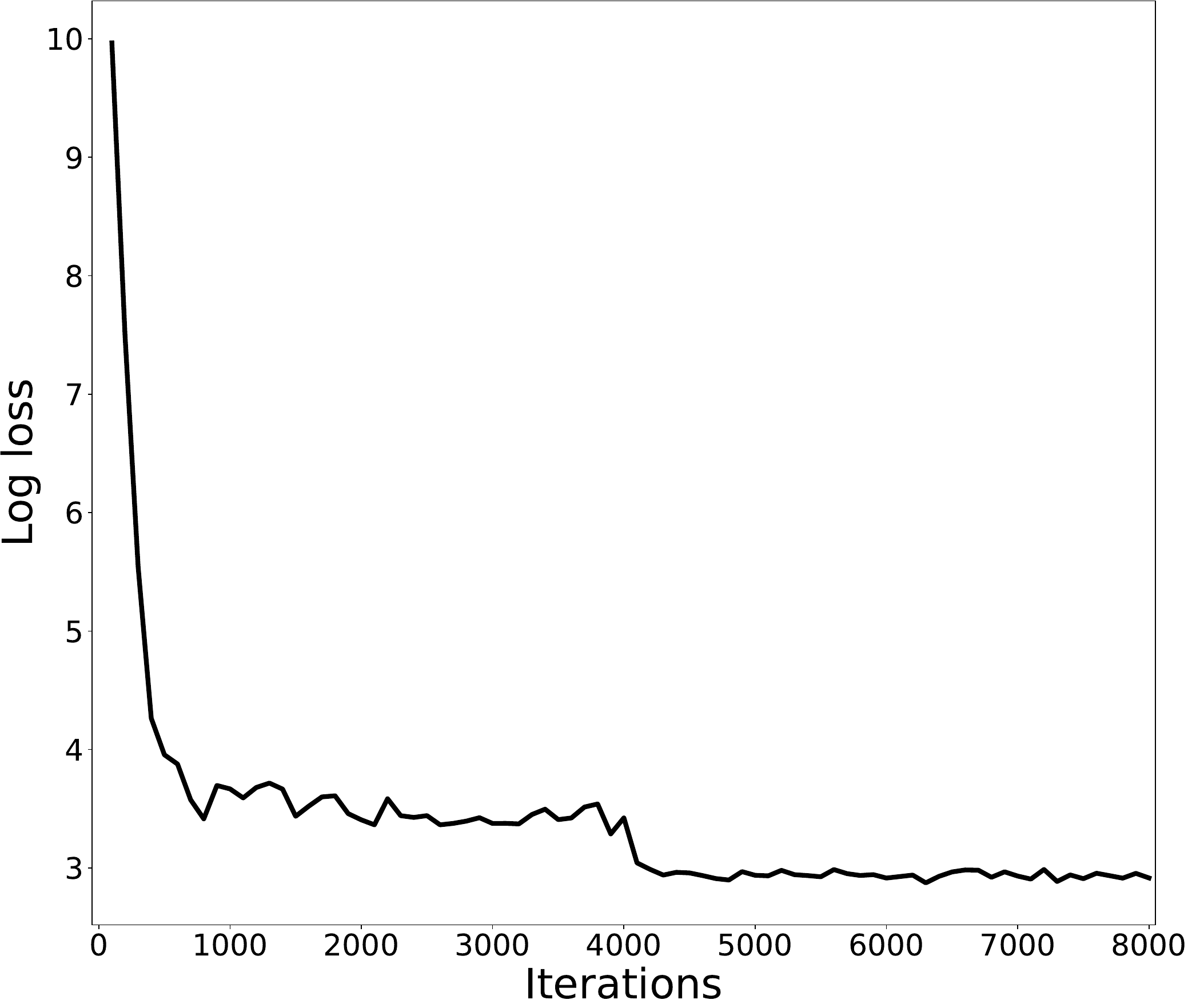}&
			\includegraphics[width=0.32\textwidth, height =0.25\textwidth]{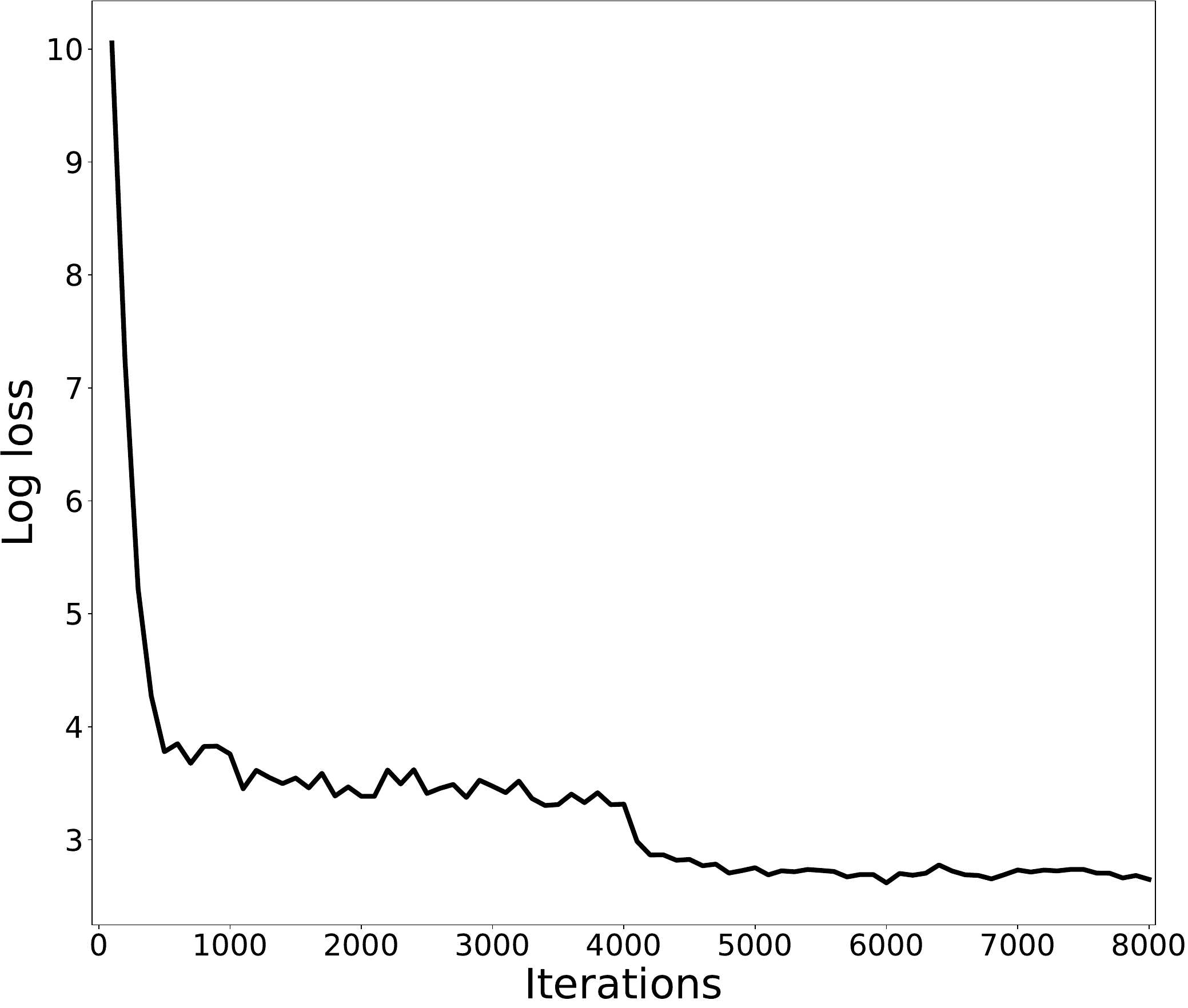} \\	
			\begin{turn}{90}\textbf{\large $\boldsymbol{m = 20}$}\end{turn}&\includegraphics[width=0.32\textwidth, height =0.25\textwidth]{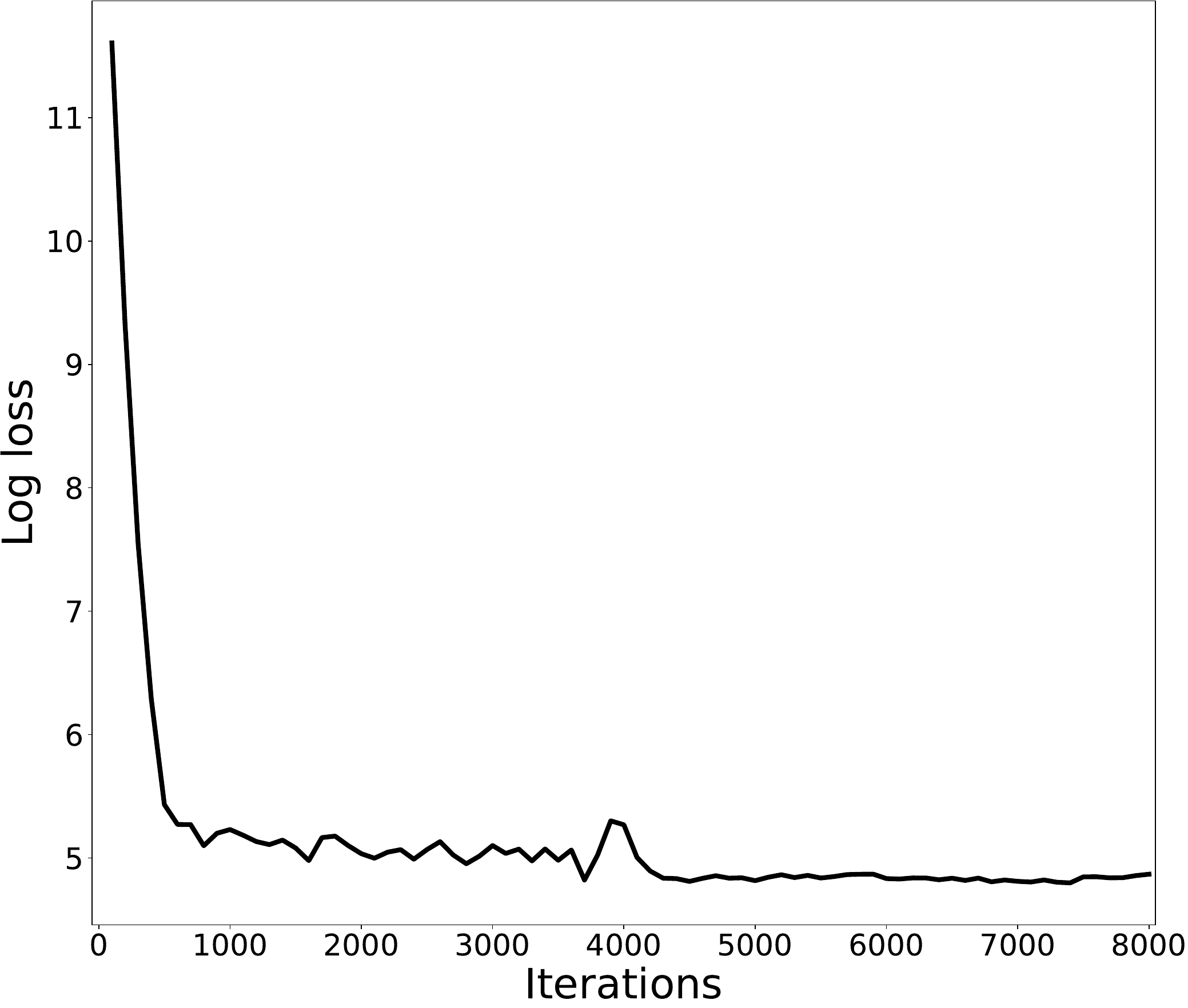}&
			\includegraphics[width=0.32\textwidth, height =0.25\textwidth]{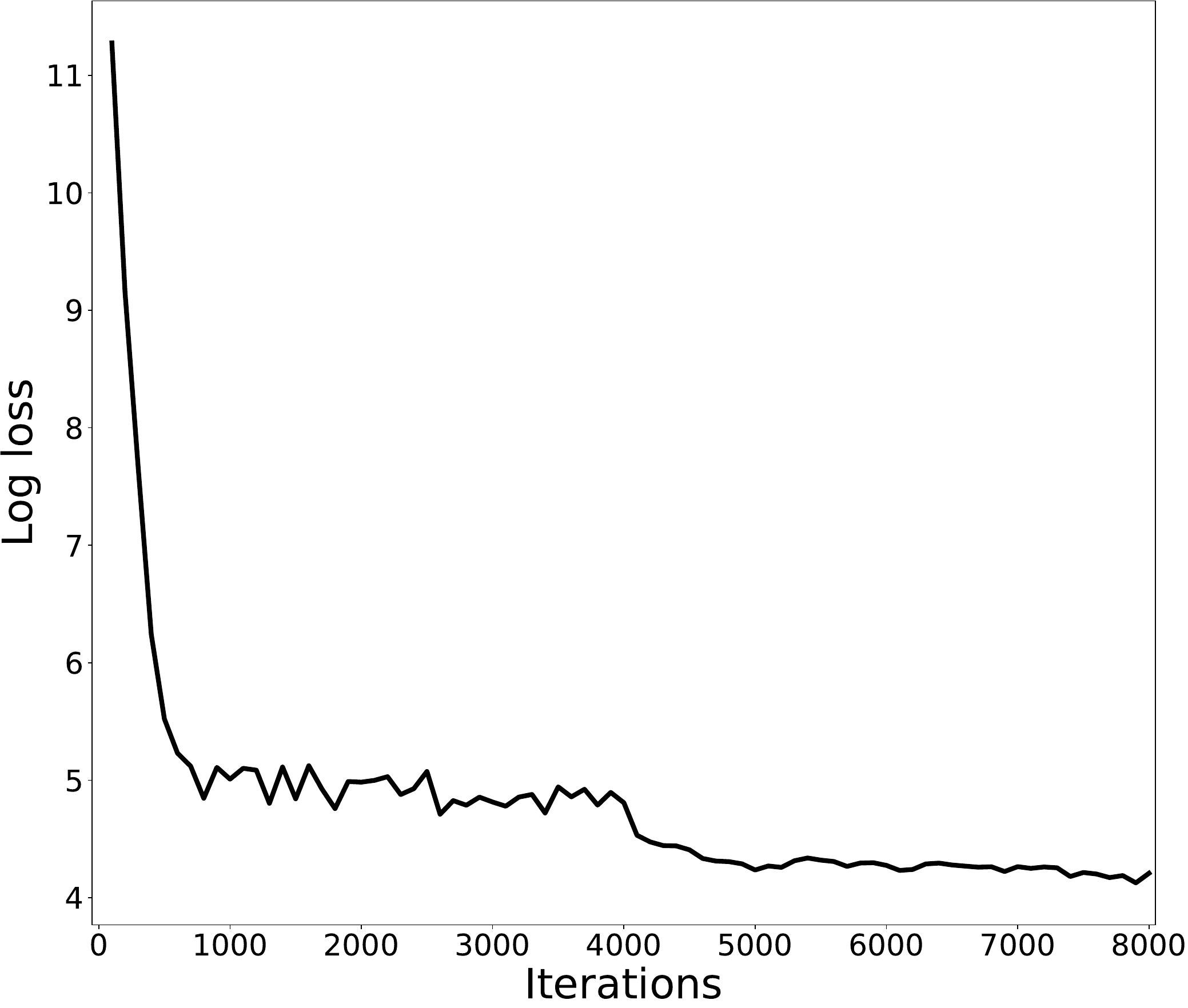} &	
			\includegraphics[width=0.32\textwidth, height =0.25\textwidth]{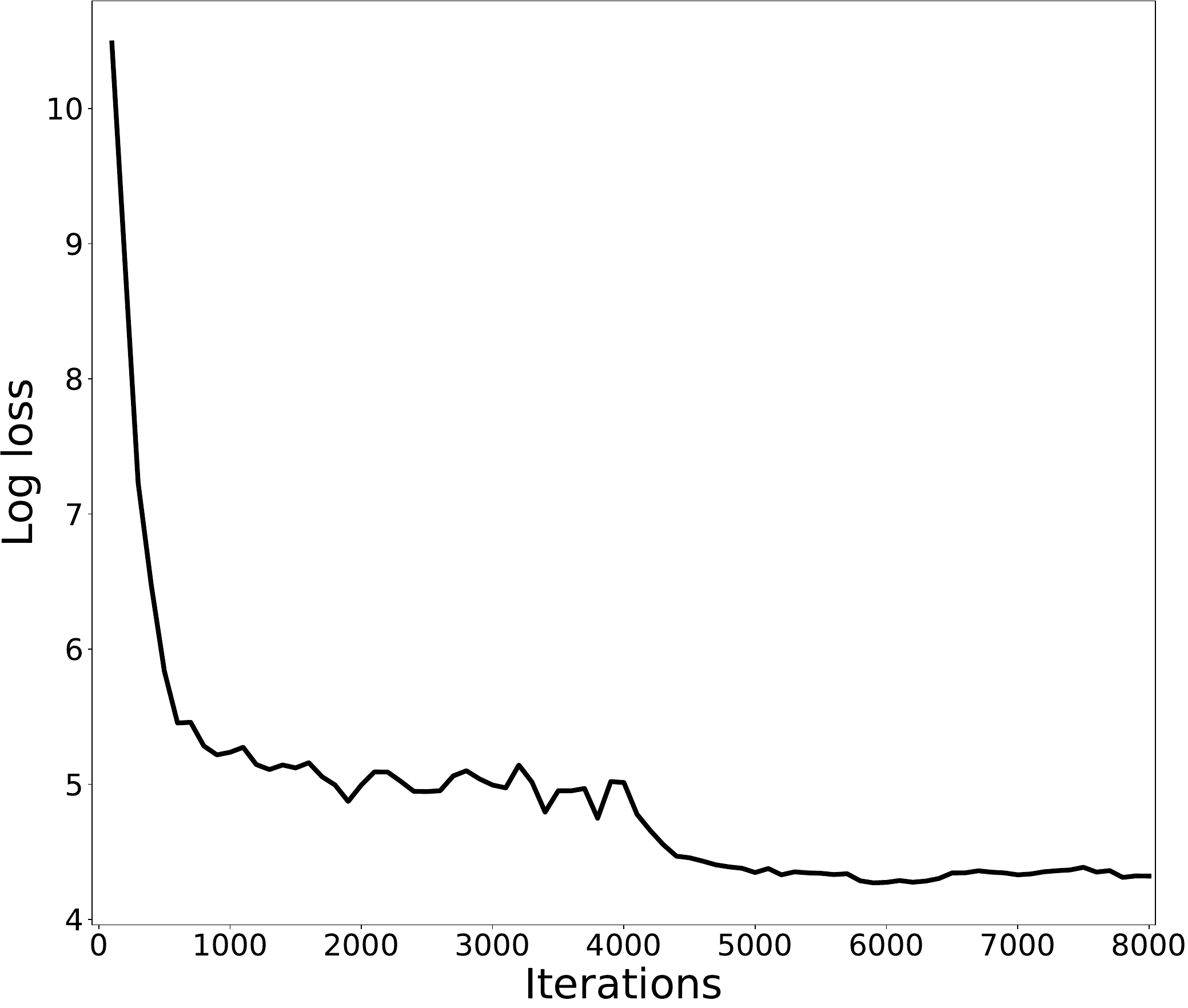}\\
			\begin{turn}{90}\textbf{\large $\boldsymbol{m = 100}$}\end{turn}&\includegraphics[width=0.32\textwidth, height =0.25\textwidth]{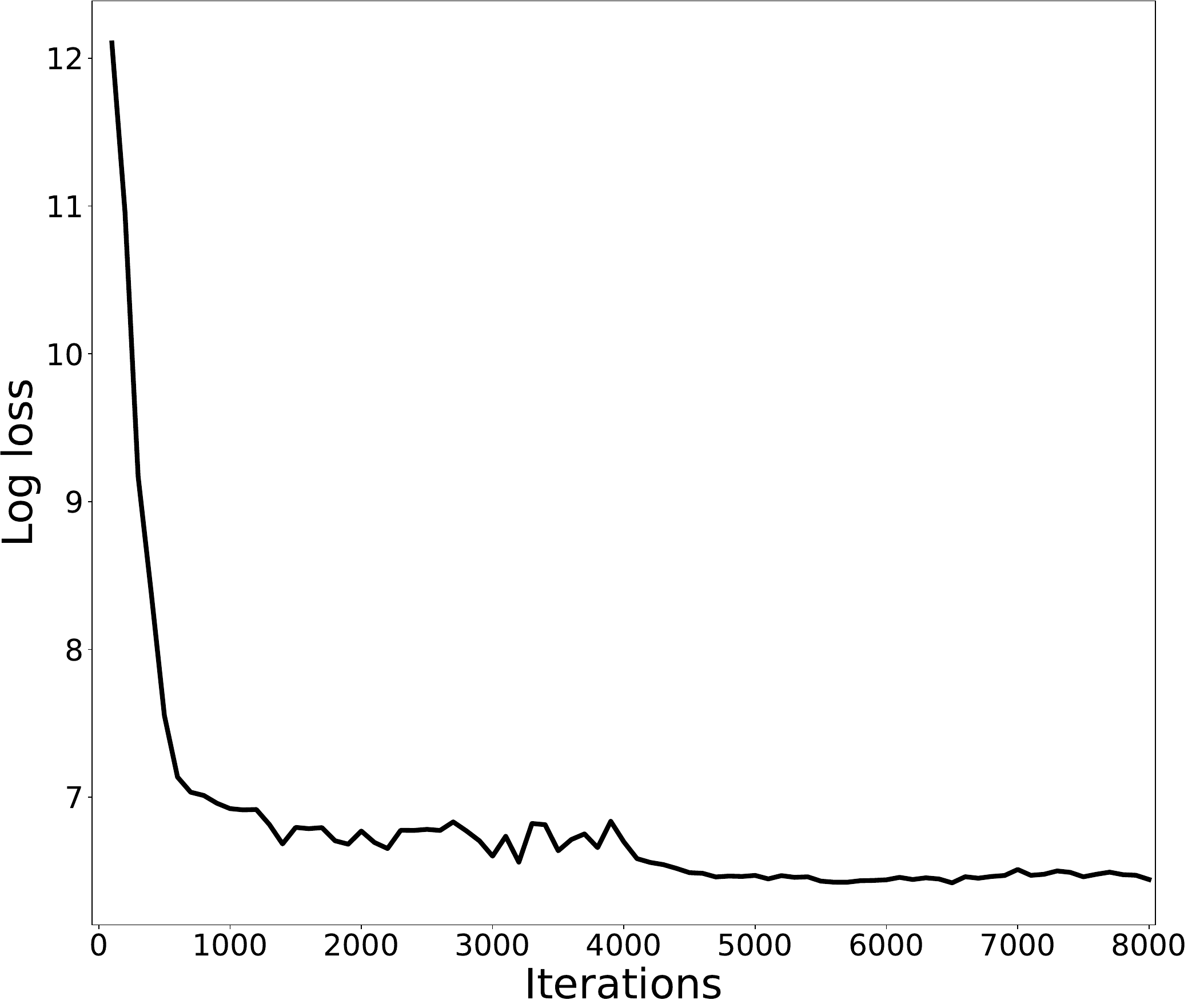}&
			\includegraphics[width=0.32\textwidth, height =0.25\textwidth]{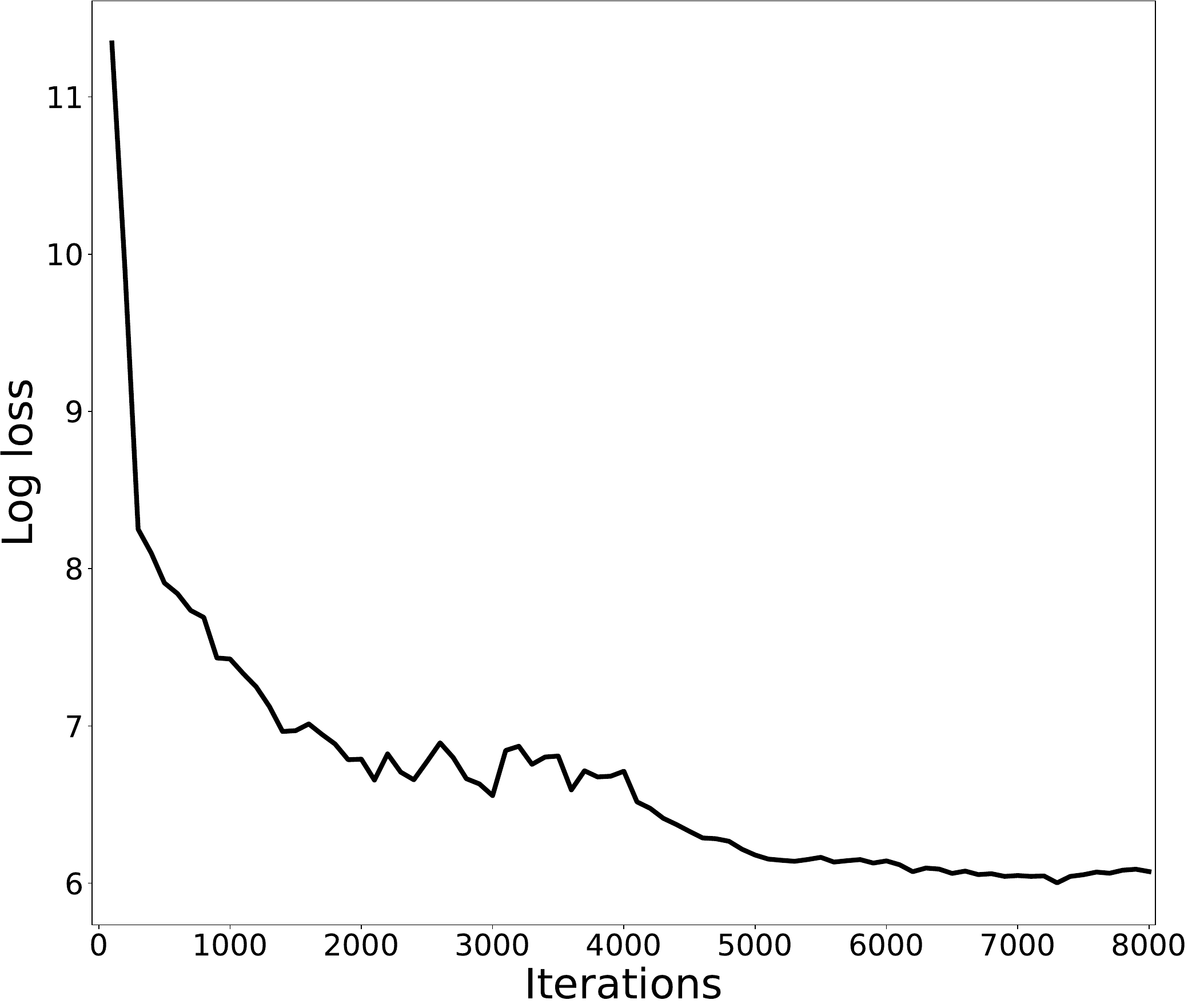} &	
			\includegraphics[width=0.32\textwidth, height =0.25\textwidth]{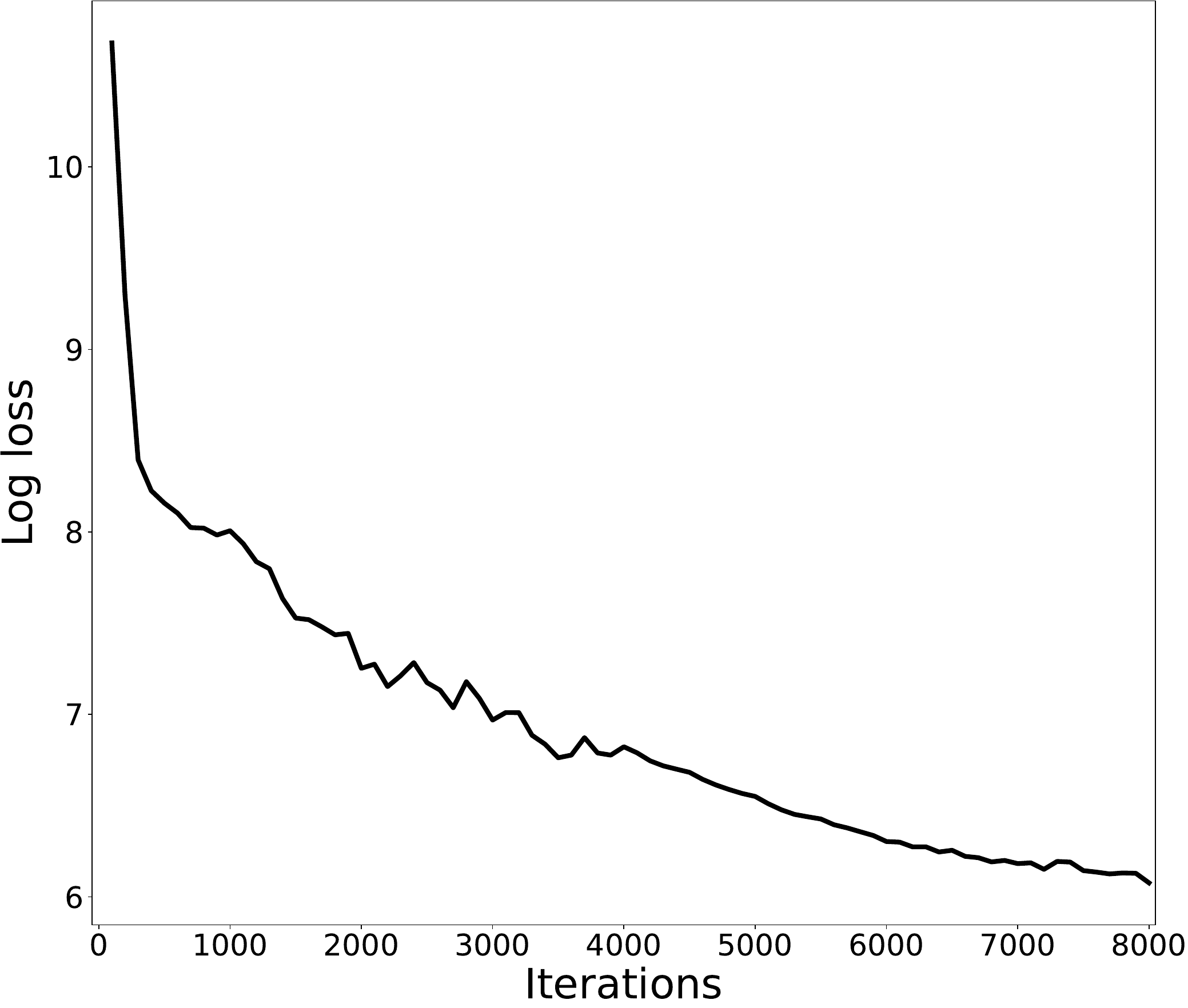}\\
			
		\end{tabular}
	}
	\caption{Logarithm of the loss functional as a function of the iteration number for the different experiment configurations presented in Table \ref{table:resultsLR}.\label{LR_logloss}}
\end{figure}

\begin{figure}[tp]
	\resizebox{1\textwidth}{!}{
		\begin{tabular}{@{}>{\centering\arraybackslash}m{0.5\textwidth}@{}>{\centering\arraybackslash}m{0.5\textwidth}@{}}
			\multicolumn{2}{c}{\huge \textbf{Local risk minimization: $\boldsymbol{N = 10}$}}\\
			& \\
			\includegraphics[width=0.48\textwidth, height =0.35\textwidth]{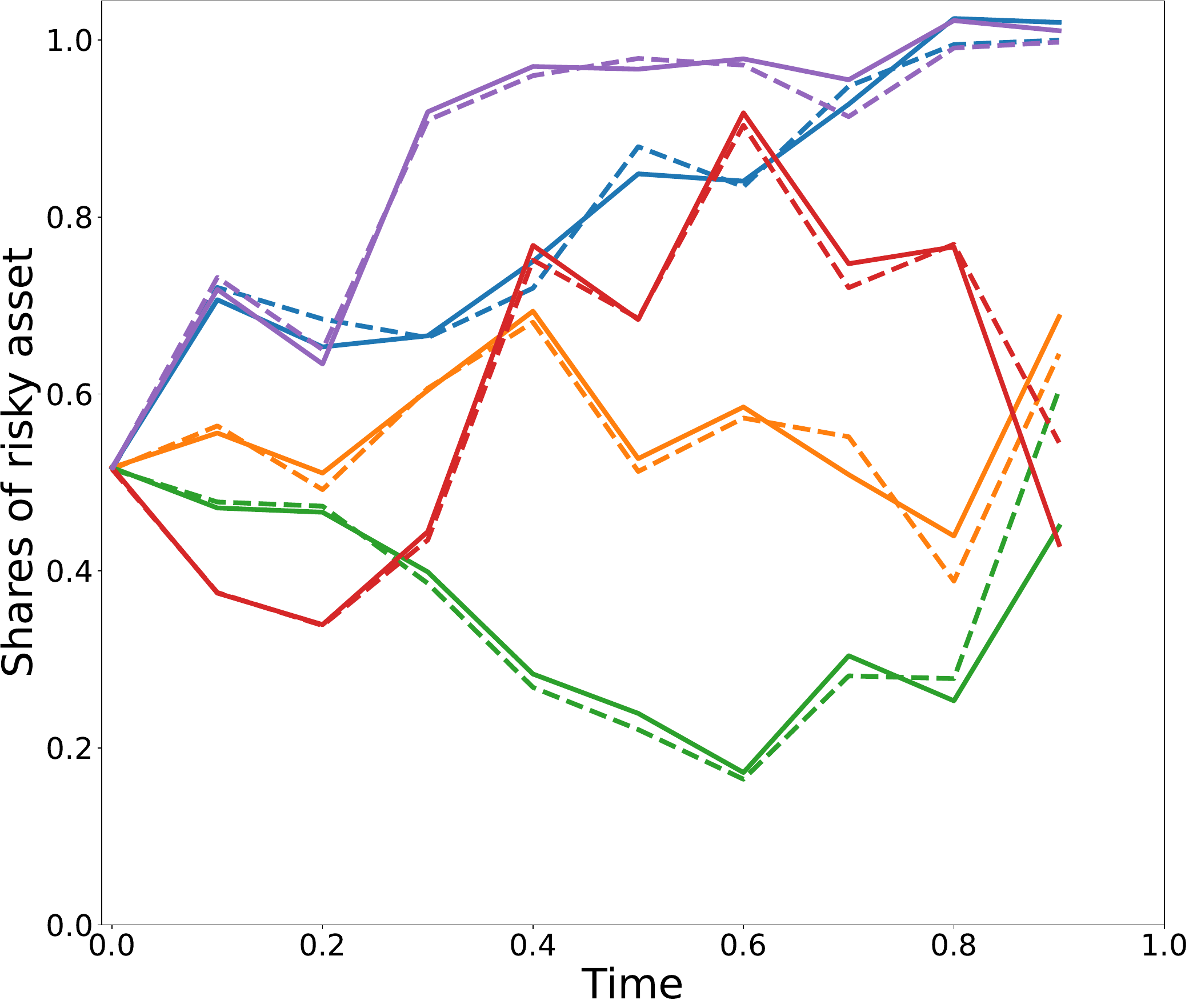} &
			\includegraphics[width=0.48\textwidth, height =0.35\textwidth]{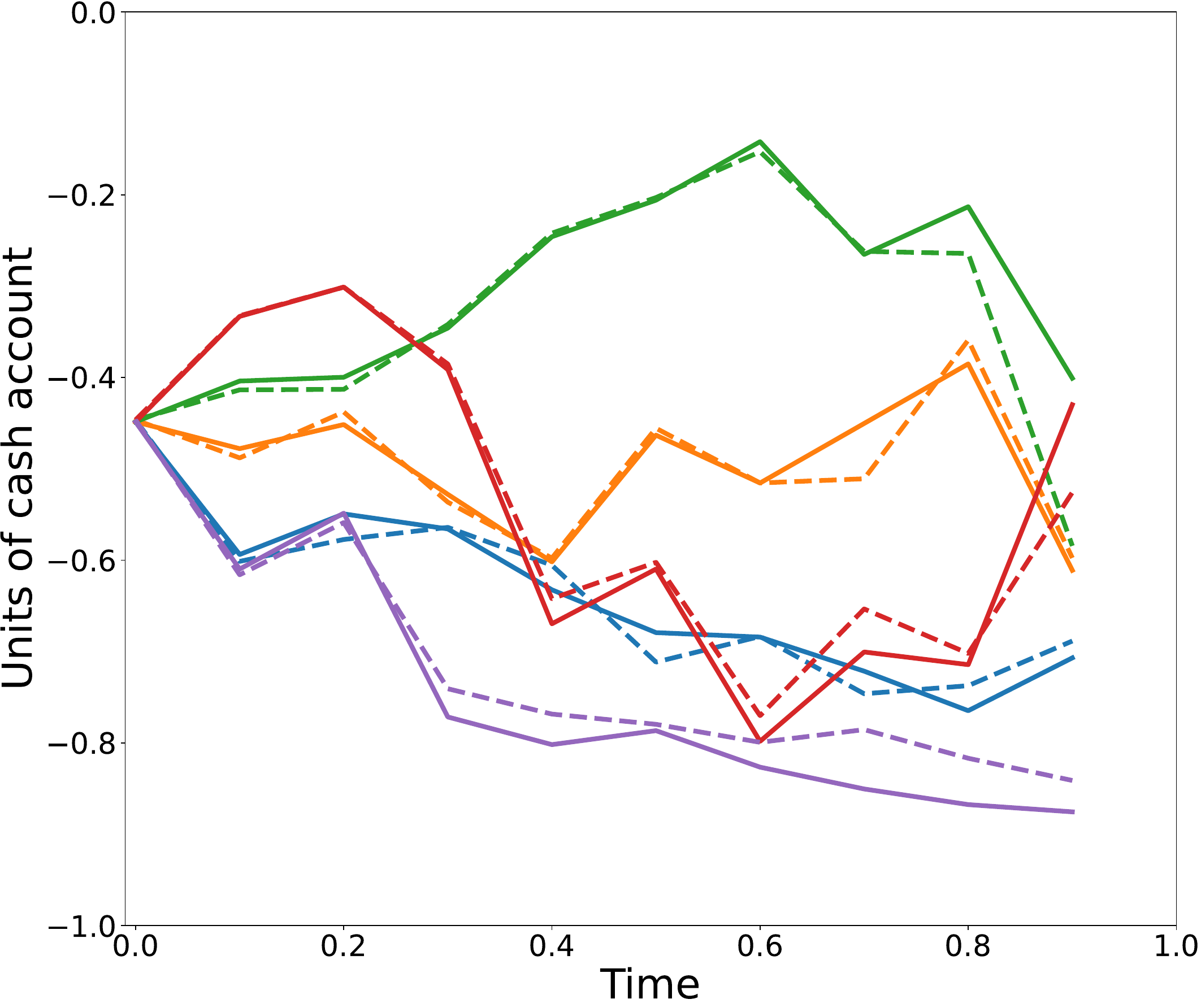}\\
			\multicolumn{2}{c}{\includegraphics[width=0.48\textwidth, height =0.35\textwidth]{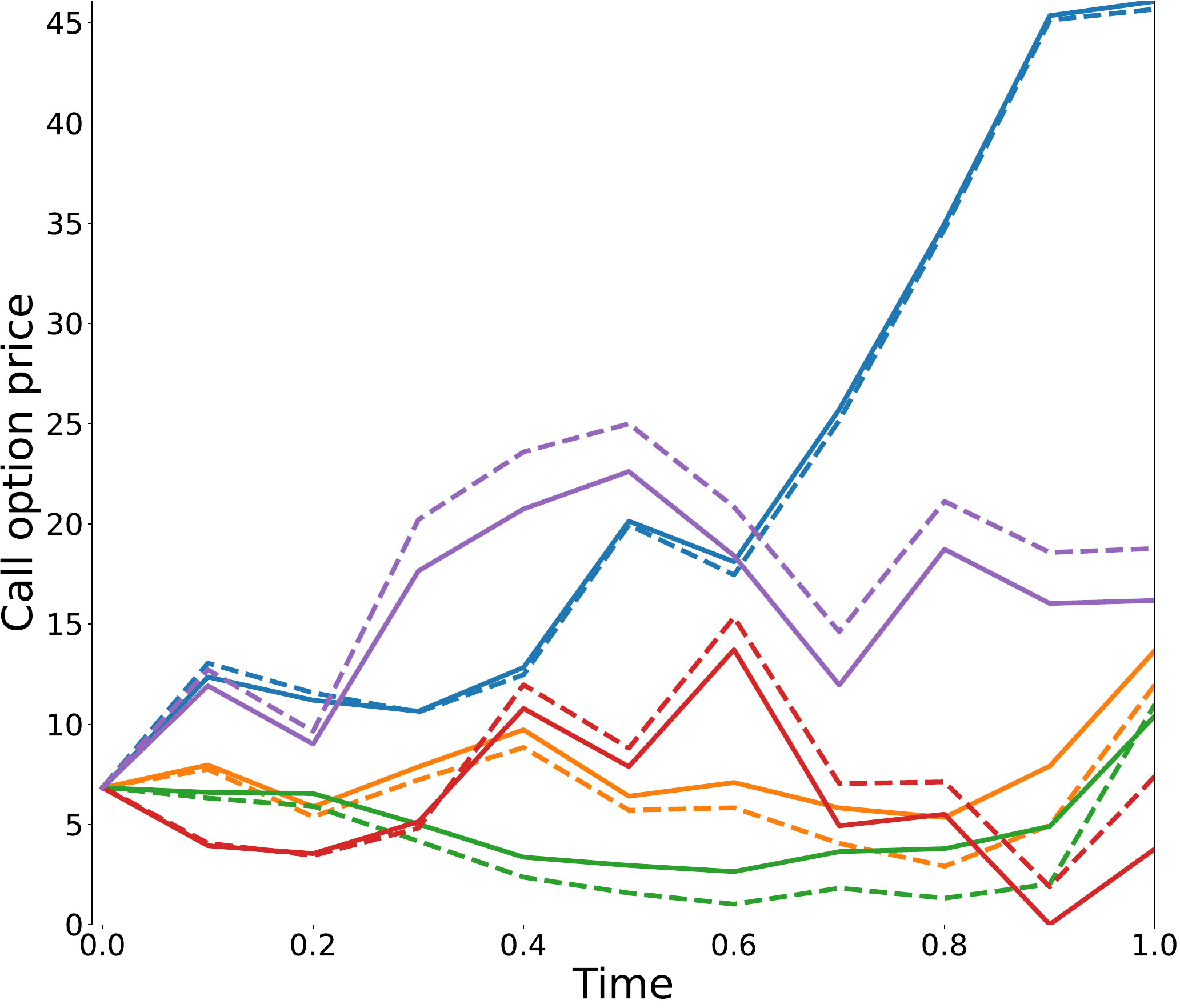}} \\
		\end{tabular}
	}
	\caption{Deep solver solution (solid line) and benchmark solution (dashed line) for the local risk minimization in a $10$ points discretization grid for $5$ random samples  in the interval $[0, 1]$. Upper panel: the shares of risky asset (left) and the units of cash account (right); lower panel: the call option price. \label{LR10}}
\end{figure}

\begin{figure}[tp]
	\resizebox{1\textwidth}{!}{
		\begin{tabular}{@{}>{\centering\arraybackslash}m{0.5\textwidth}@{}>{\centering\arraybackslash}m{0.5\textwidth}@{}}
			\multicolumn{2}{c}{\huge \textbf{Local risk minimization: $\boldsymbol{N = 50}$}}\\
			& \\
			\includegraphics[width=0.48\textwidth, height =0.35\textwidth]{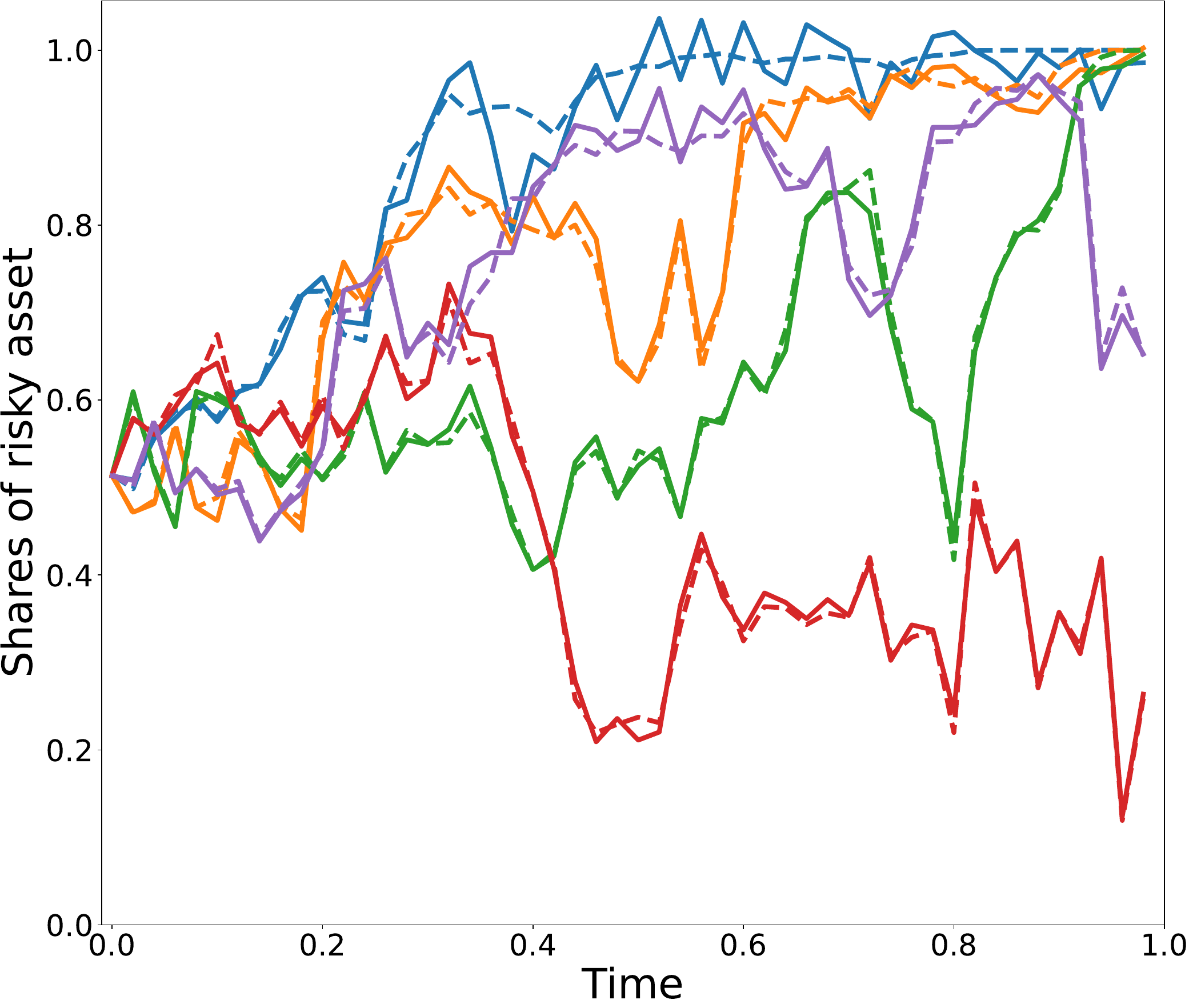} &
			\includegraphics[width=0.48\textwidth, height =0.35\textwidth]{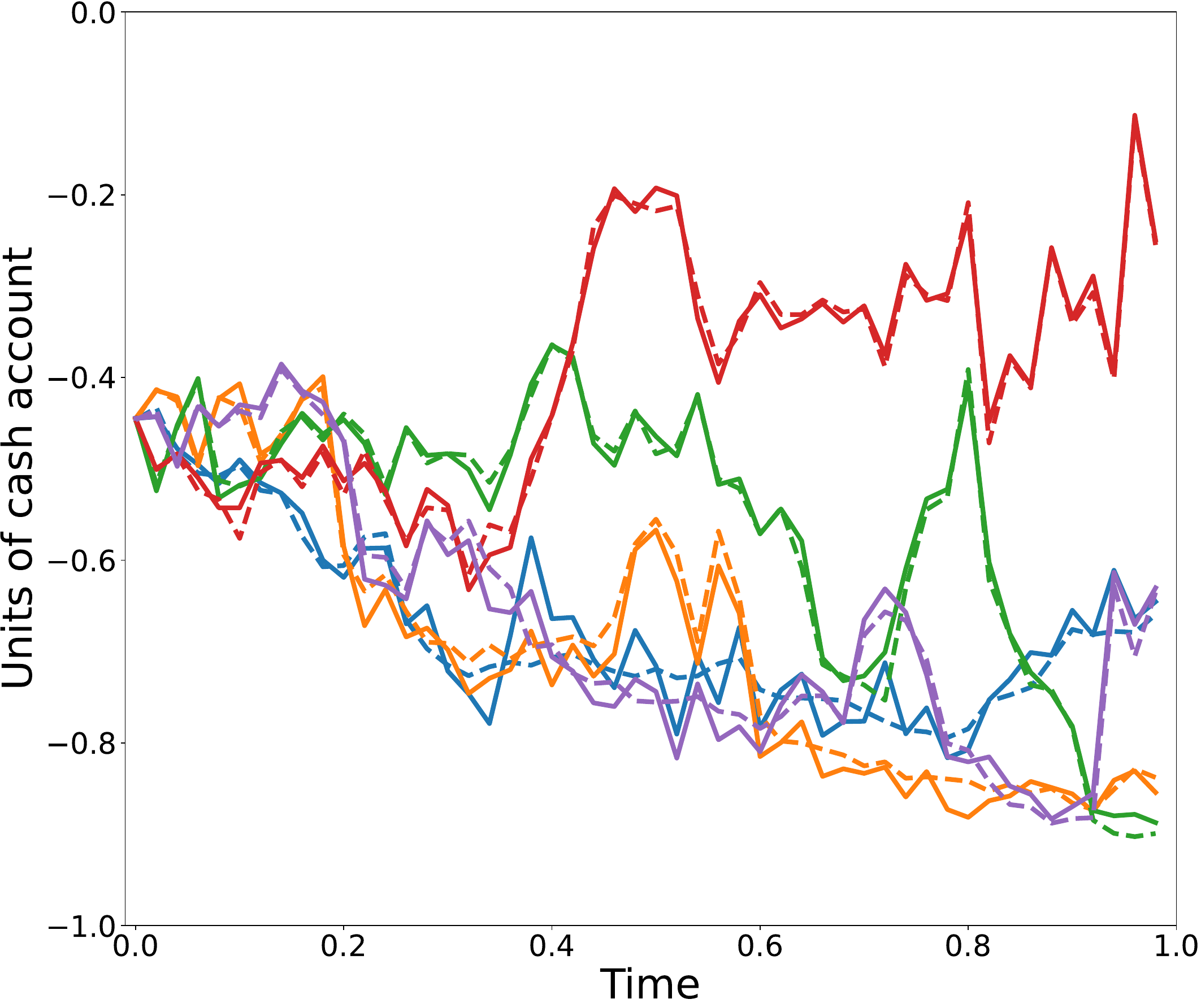}\\
			\multicolumn{2}{c}{\includegraphics[width=0.48\textwidth, height =0.35\textwidth]{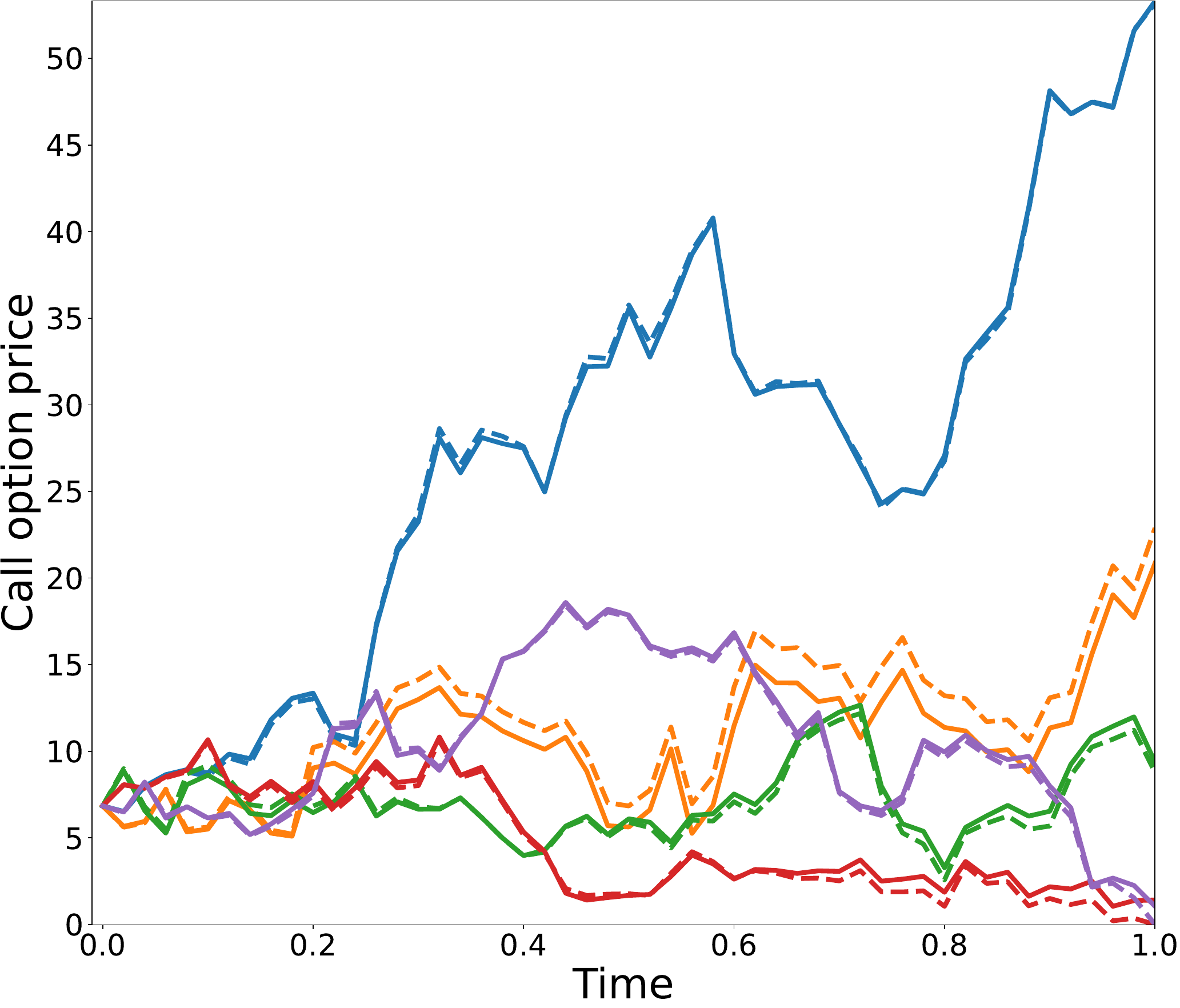}} \\
		\end{tabular}
	}
	\caption{Deep solver solution (solid line) and benchmark solution (dashed line) for the local risk minimization in a $50$ points discretization grid for $5$ random samples  in the interval $[0, 1]$. Upper panel: the shares of risky asset (left) and the units of cash account (right); lower panel: the call option price. \label{LR50}}
\end{figure}

\begin{figure}[tp]
	\resizebox{1\textwidth}{!}{
		\begin{tabular}{@{}>{\centering\arraybackslash}m{0.5\textwidth}@{}>{\centering\arraybackslash}m{0.5\textwidth}@{}}
			\multicolumn{2}{c}{\huge \textbf{Local risk minimization: $\boldsymbol{N = 100}$}}\\
			& \\
			\includegraphics[width=0.48\textwidth, height =0.35\textwidth]{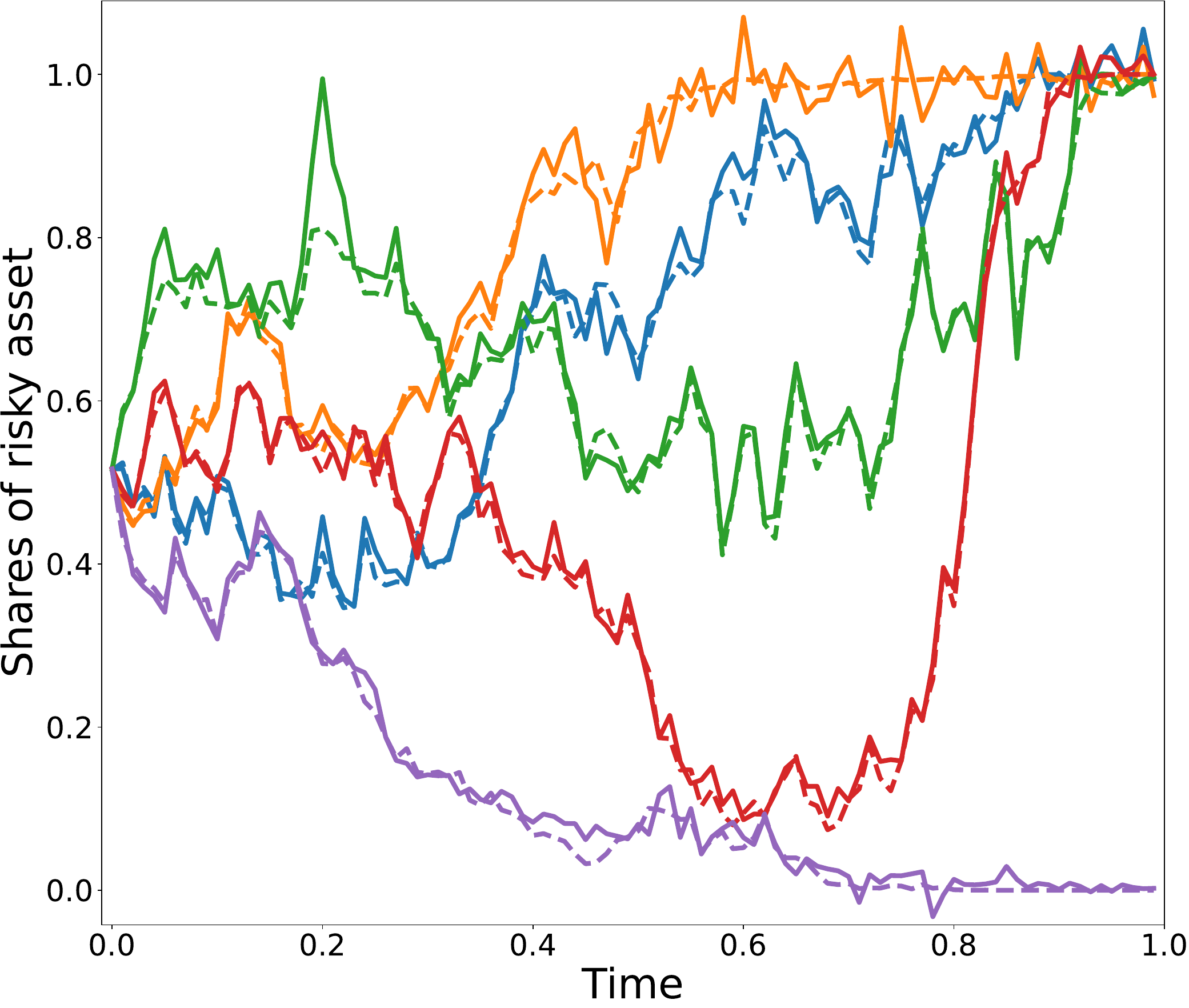} &
			\includegraphics[width=0.48\textwidth, height =0.35\textwidth]{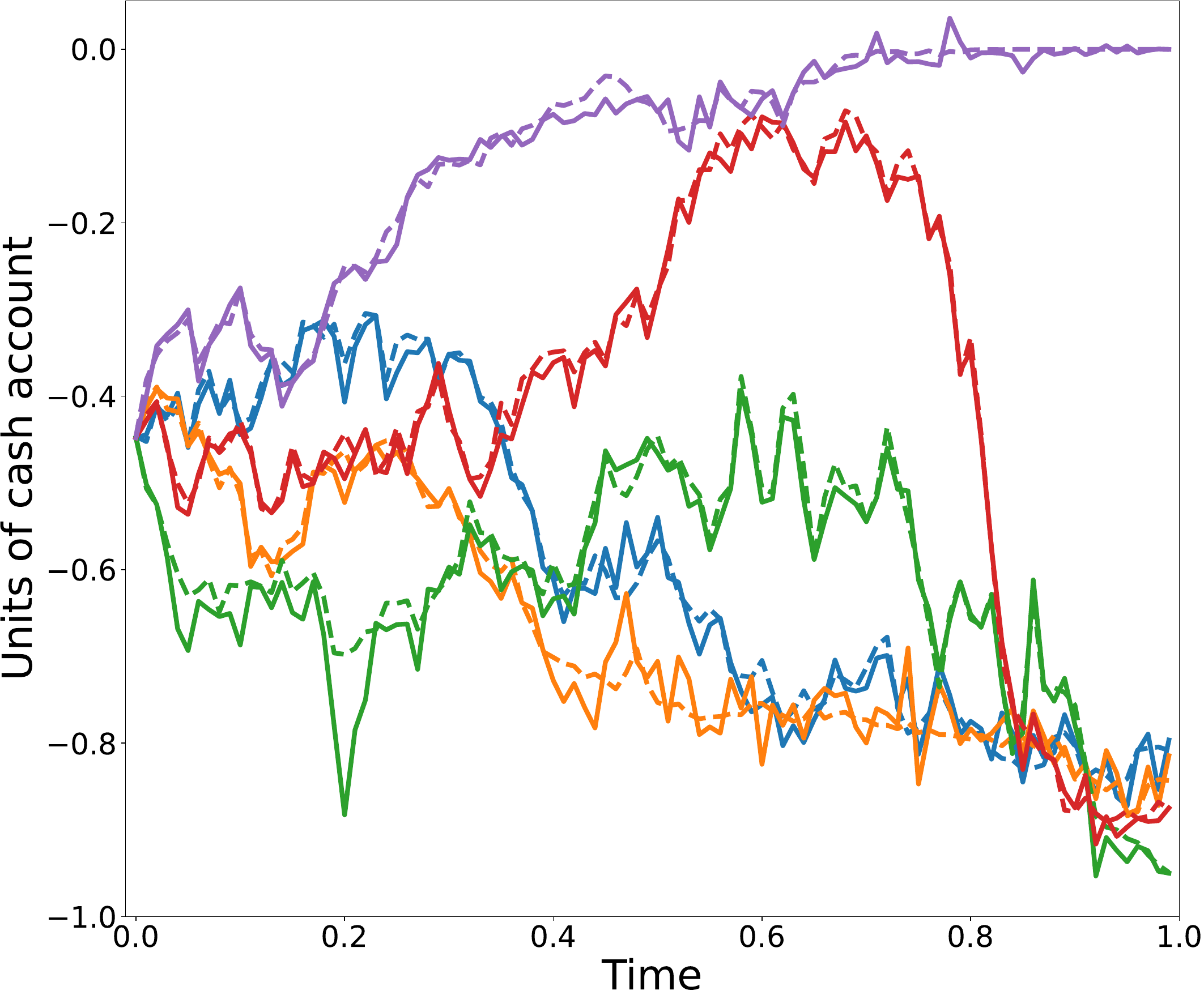}\\
			\multicolumn{2}{c}{\includegraphics[width=0.48\textwidth, height =0.35\textwidth]{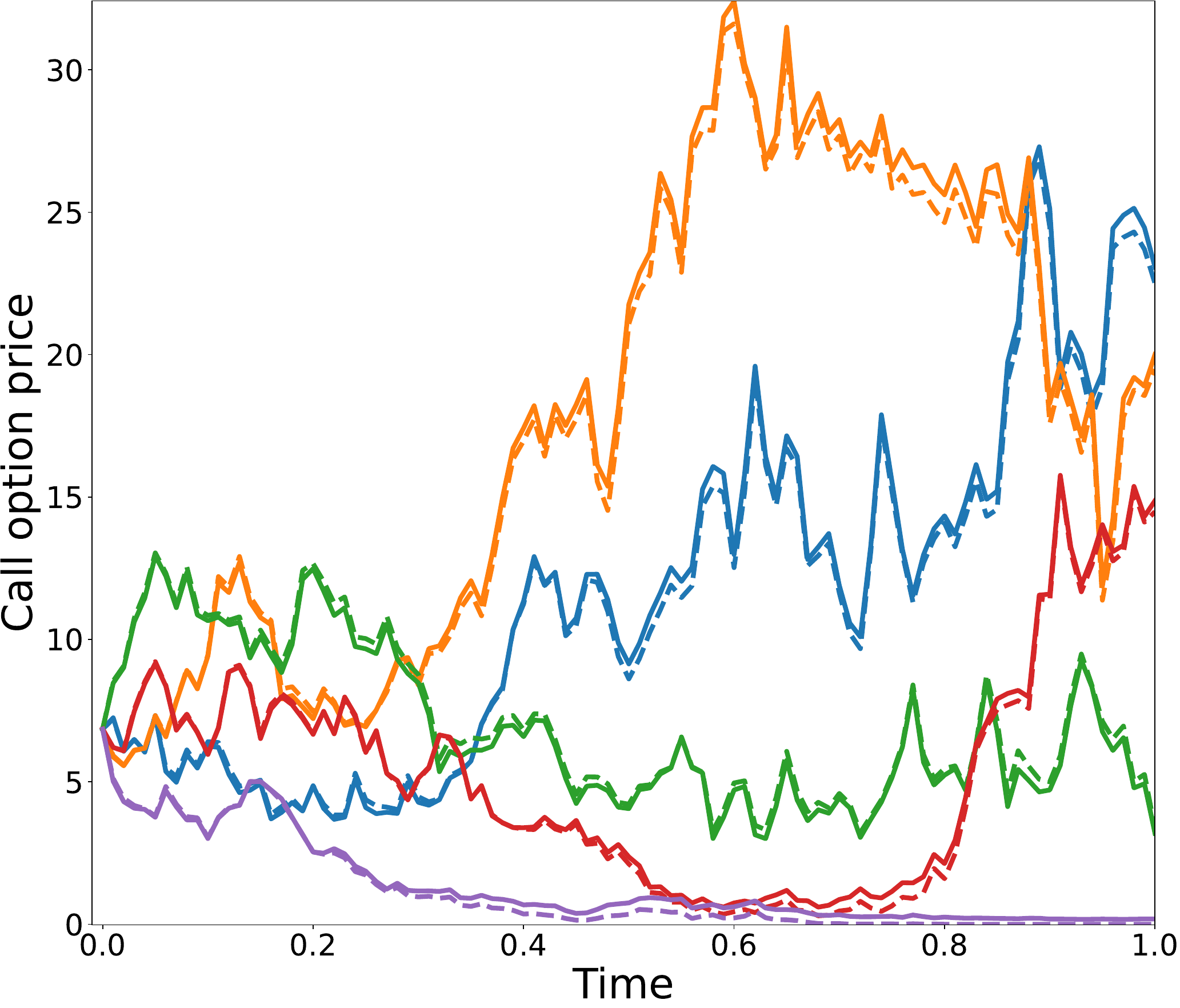}} \\
		\end{tabular}
	}
	\caption{Deep solver solution (solid line) and benchmark solution (dashed line) for the local risk minimization in a $100$ points discretization grid for $5$ random samples  in the interval $[0, 1]$. Upper panel: the shares of risky asset (left) and the units of cash account (right); lower panel: the call option price. \label{LR100}}
\end{figure}

\begin{figure}[tp]
	\resizebox{1\textwidth}{!}{
		\begin{tabular}{@{}>{\centering\arraybackslash}m{0.5\textwidth}@{}>{\centering\arraybackslash}m{0.5\textwidth}@{}}
			\multicolumn{2}{c}{\huge \textbf{Mean Squared Error}}\\
			& \\
			\textbf{\Large Mean-variance hedging} & \textbf{\Large Local risk minimization} \\
			& \\
			\includegraphics[width=0.48\textwidth, height =0.4\textwidth]{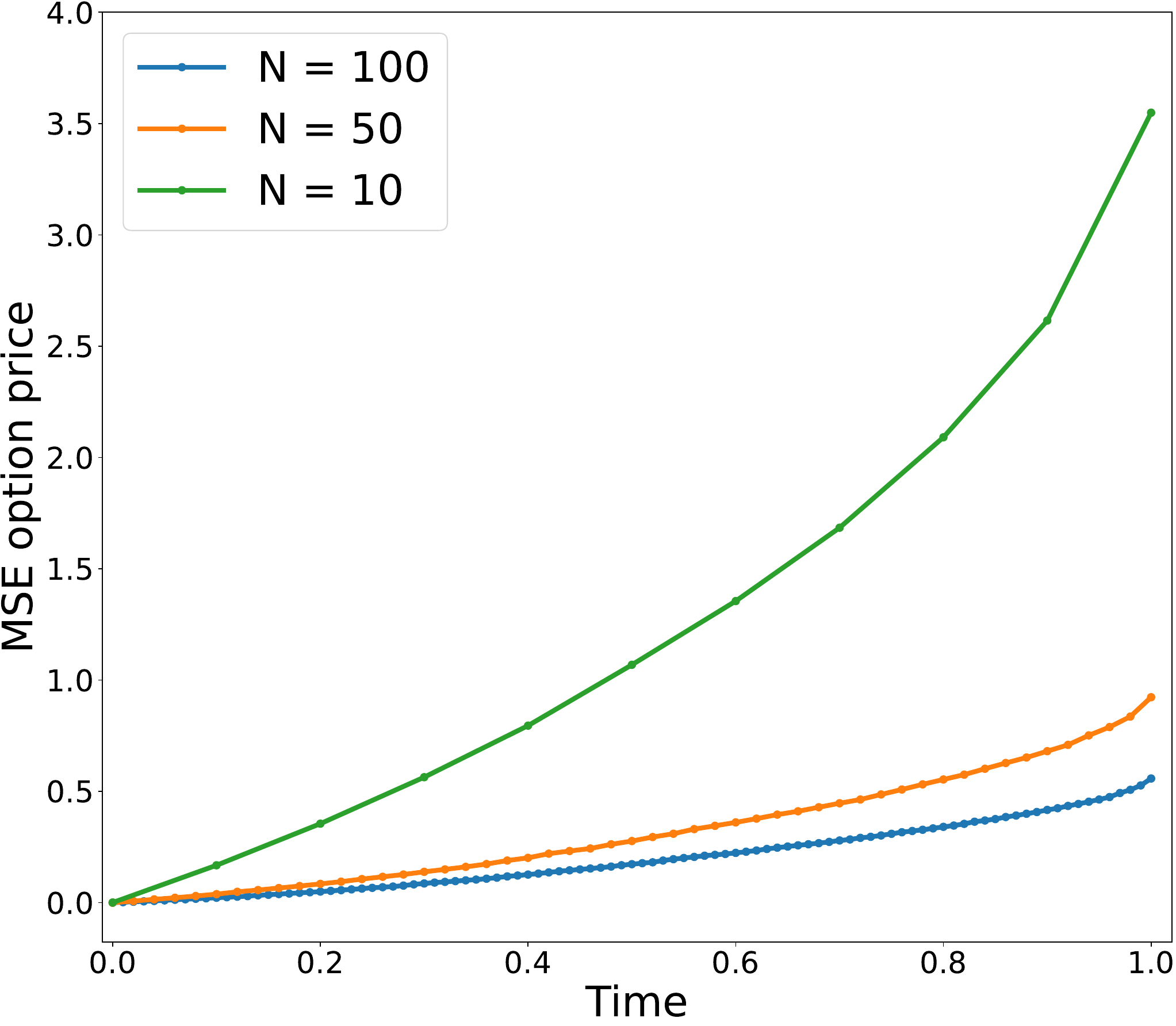}&\includegraphics[width=0.48\textwidth, height =0.4\textwidth]{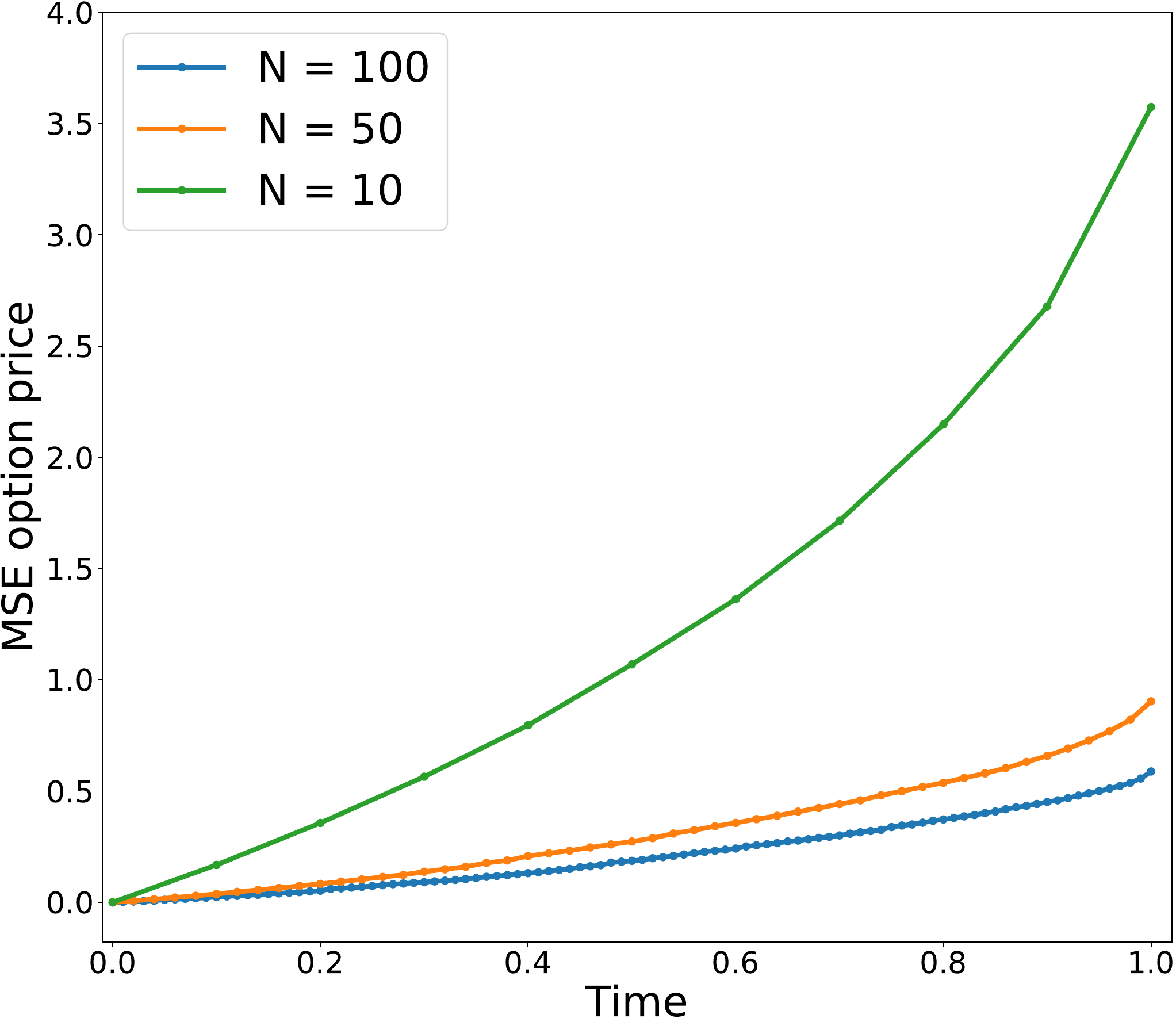}\\
			\includegraphics[width=0.48\textwidth, height =0.4\textwidth]{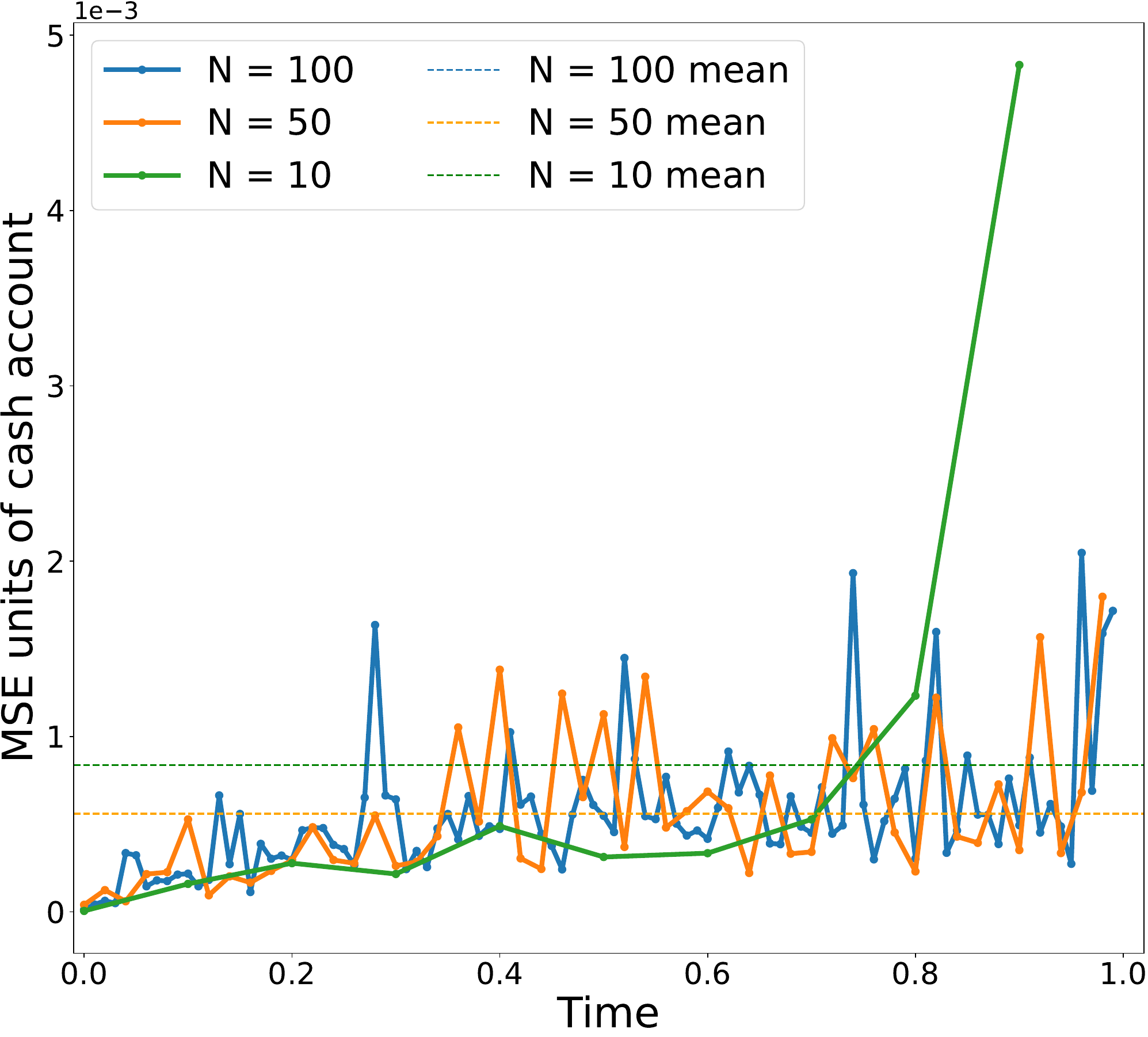}&\includegraphics[width=0.48\textwidth, height =0.4\textwidth]{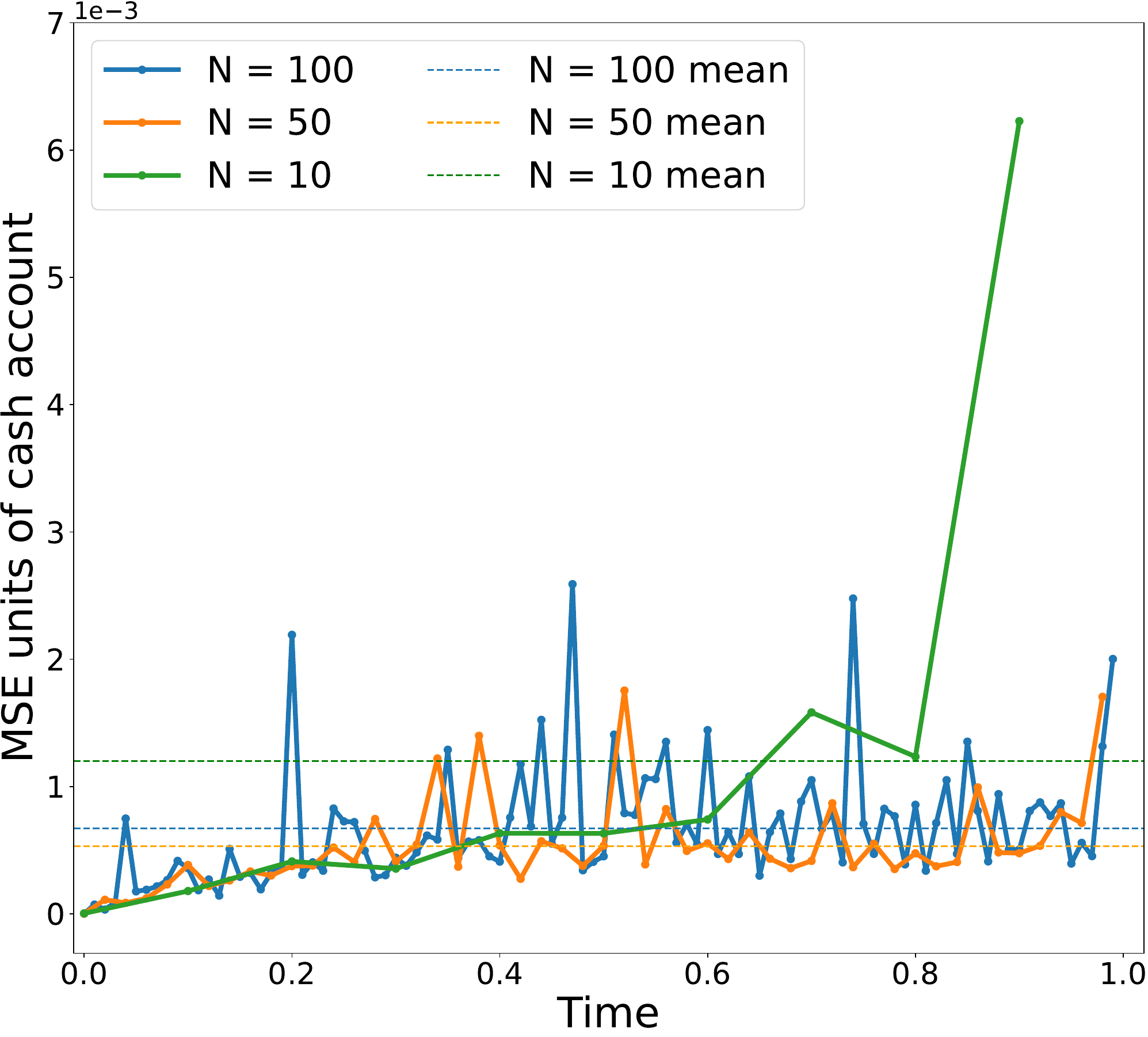}\\
			\includegraphics[width=0.48\textwidth, height =0.4\textwidth]{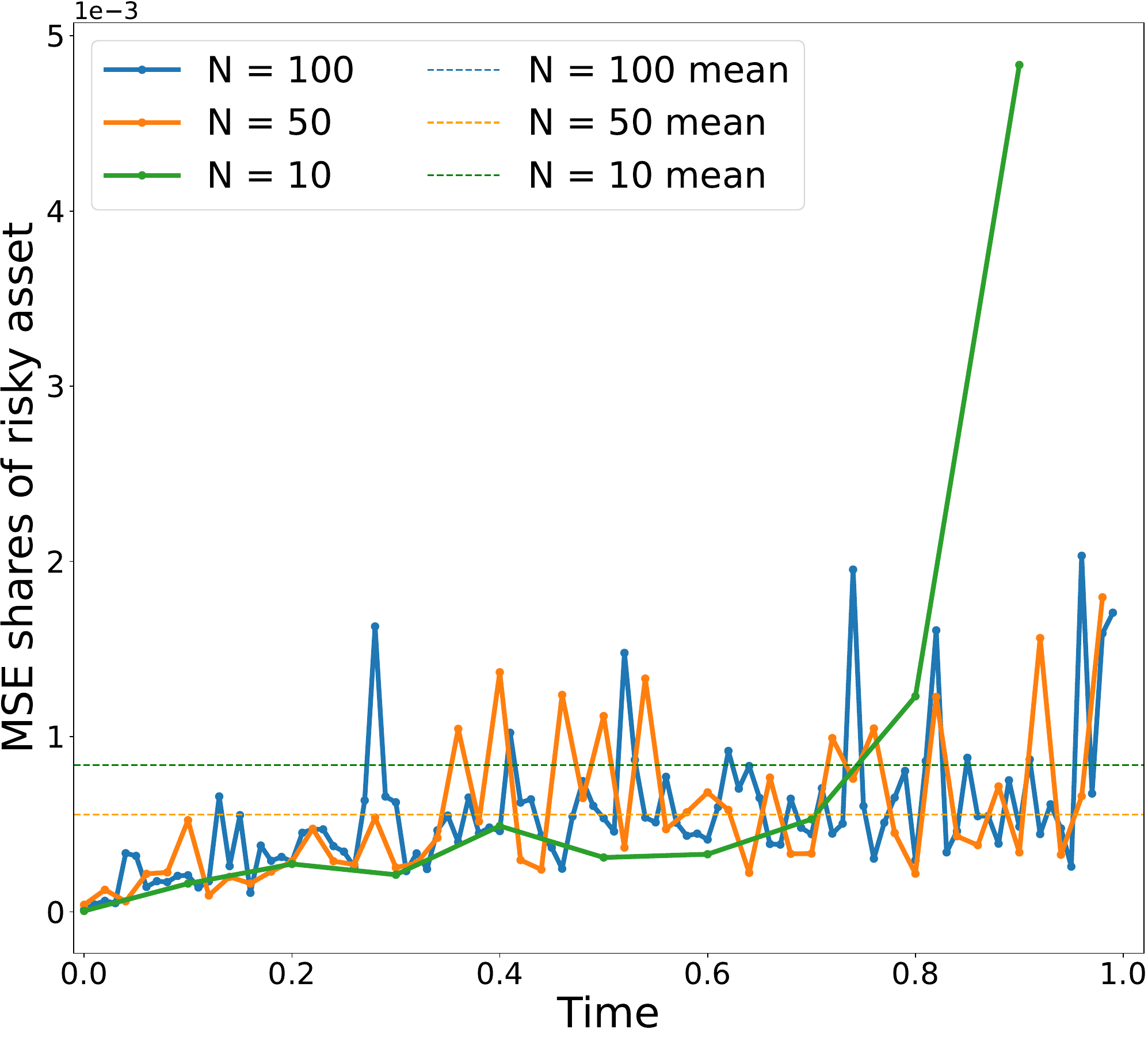}&\includegraphics[width=0.48\textwidth, height =0.4\textwidth]{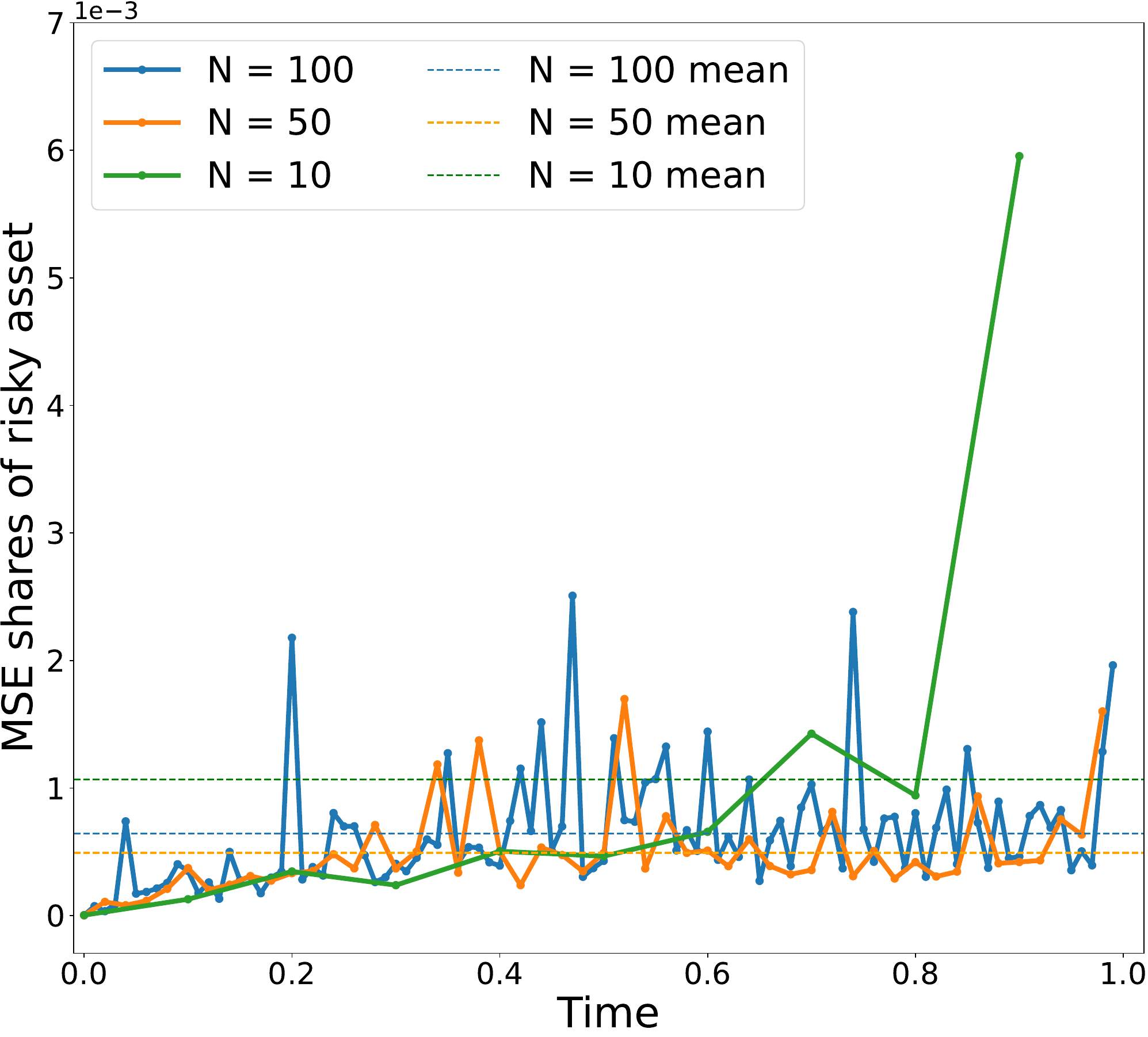}\\
		\end{tabular}
	}
	\caption{Mean Squared Error (MSE) for the mean-variance hedging (left) and local risk minimization (right). From above: MSE for the option price, MSE for the units of cash account and MSE for the shares of risky asset. \label{MSE}}
\end{figure}

\bibliographystyle{plain} 
\bibliography{biblio.bib} 

\appendix

\section{Semi-explicit solutions for mean-variance hedging and local risk minimization for the Heston model in one dimension}
\label{sec:benckmarks}
We derive semi-explicit solutions for the mean-variance hedging and for the local risk minimization by adapting, respectively, the results from \cite{cerny2008} and \cite{heath2001numerical}. These allow us to benchmark the price process and the trading strategy trajectories for the Heston model in dimension $1$, see Section \ref{sec:numerics}.

\subsection{The mean-variance hedging of {\v{C}}erný  and Kallsen \cite{cerny2008}}
\label{meanvariance1dim}
In the mean-variance hedging, we solve recursively two BSDEs, respectively equation \eqref{eq:StochasticRiccati} and equation \eqref{hedgingproblemX}, and we find (an approximation of) the two processes $L$ and $\tilde X^{\mathrm{mv}}$.

The process $L$ corresponds to what {\v{C}}erný  and Kallsen define as the \emph{opportunity process}, which characterizes the so-called \emph{opportunity-neutral measure} $\p^*$. This is a non-martingale equivalent measure such that the variance optimal martingale measure $\Q_{{\text{mv}}}$ introduced in Section \ref{sec:MeanVarTheory} can be computed as the minimal martingale measure under $\p^*$, see \cite{cerny2007} or \cite{cerny2008} for details.

Following the approach of \cite{cerny2008}, the opportunity process is of the form
\begin{equation}
	\label{definitionLCK}
	L^{\text{CK}}_t = \exp\left\{\chi_0(t)+\chi_1(t)Y^2_t\right\},
\end{equation}
$Y^2$ being the variance process for the one-dimensional Heston model \eqref{HestonModel}, while the functions $\chi_0$ and $\chi_1$ are solutions of a system of Riccati ODEs, see \cite[Proposition 3.2]{cerny2008}. 
In particular, by adapting \cite[Lemma 6.1]{cerny2008}, 
 the functions $\chi_0$ and $\chi_1$ at time $t\in[0, T]$ are
\begin{align}
&\label{chi0} \chi_{0}(t) = \mathfrak{F}\left(-\frac{\mathfrak{B}}{2\mathfrak{C}}(T-t) -\frac{1}{\mathfrak{C}}\log\left(\frac{(\mathfrak{B+D})e^{-\mathfrak{D}(T-t)/2}-(\mathfrak{B-D})e^{\mathfrak{D}(T-t)/2}}{2\mathfrak{D}} \right)\right),\\
&\label{chi1} \chi_{1}(t) = -\frac{\mathfrak{B}}{2\mathfrak{C}} + \frac{\mathfrak{D}}{2\mathfrak{C}}\frac{(\mathfrak{B+D})e^{-\mathfrak{D}(T-t)/2}+(\mathfrak{B-D})e^{\mathfrak{D}(T-t)/2}}{(\mathfrak{B+D})e^{-\mathfrak{D}(T-t)/2}-(\mathfrak{B-D})e^{\mathfrak{D}(T-t)/2}},
\end{align}
where
\begin{equation*}
	\begin{array}{l ll}
		\mathfrak{A} := -\mu^2,& \mathfrak{B} := -\kappa - 2\rho\sigma\mu,&
		\mathfrak{C} := \frac{1}{2}\sigma^2(1-2\rho^2), \\\mathfrak{D} := \sqrt{\mathfrak{B}^2-4\mathfrak{A}\mathfrak{C}},&
		\mathfrak{F} := \kappa \theta,&
	\end{array}
\end{equation*}
with $\mathfrak{C}, \mathfrak{D}\ne0$. We refer to \cite[Lemma 6.1]{cerny2008} for the case $\mathfrak{C} = 0$ and $\mathfrak{D}=0$.

As we mentioned above, the process $L^{\text{CK}}$ characterizes  the opportunity-neutral measure $\p^{*}$. In particular, the process
\begin{equation*}
	Z_t := \frac{L^{\text{CK}}_t}{L^{\text{CK}}_0}\exp\left\{-\int_0^t \left(\mu + \rho\sigma \chi_{1}(s)\right)^2Y_s^2\dd s \right\} = \mathcal{E}\left(\int_0^t\sigma \chi_{1}(s)Y_s\left(\rho \dd W_s+ \sqrt{1-\rho^2}\dd B_s\right)\right)
\end{equation*}
is a bounded positive martingale, and  by virtue of the Girsanov theorem
\begin{equation}
	\label{change1}
	\begin{aligned}
		&W^*_t := W_t - \int_0^t \rho \sigma\chi_{1}(s)Y_s \dd s,\\
		&B^*_t := B_t - \int_0^t \sqrt{1-\rho^2} \sigma\chi_{1}(s)Y_s \dd s,
	\end{aligned}
\end{equation}
are Brownian motions under $\p^*$, with $\dd \p^*/\dd \p = Z_T$. The variance optimal martingale measure coincides then with the minimal measure relative to $\p^*$:
\begin{equation*}
	\frac{\dd\mathbb{Q}_{\text{mv}}}{\dd\p^*} =  \mathcal{E}\left( -\int_0^T \left(\mu + \rho\sigma \chi_{1}(s)\right)Y_s \dd W^*_s\right).
\end{equation*}
By the Girsanov theorem,
\begin{equation}
	\label{change2}
	\begin{aligned}
		&W^{\text{mv}}_t := W^*_t + \int_0^T \left(\mu + \rho\sigma \chi_{1}(s)\right)Y_s \dd  s, \;\text{and}\\
		&B^{\text{mv}}_t := B^*_t 
	\end{aligned}
\end{equation}
are uncorrelated Brownian motions under $\mathbb{Q}_{{\text{mv}}}$. By combining equation \eqref{change1} and \eqref{change2} with the Heston dynamics \eqref{HestonModel}, the dynamics of $\tilde S$ and $Y^2$ under the variance optimal martingale measure become
\begin{equation}
	\label{HestonQ}
	\begin{cases}
		\dd \tilde S_t = \tilde S_t Y_t \dd W^{\text{mv}}_t, \\
		\dd Y_t^2 =  \left(\kappa\theta - \kappa_tY_t^2\right)\dd t + \sigma Y_t \left(\rho \dd W_t^{\text{mv}}+ \sqrt{1-\rho^2}\dd B_t^{\text{mv}}\right),\\
	\end{cases}
\end{equation}
with ${\kappa}_t := \kappa + \rho\sigma\mu -\chi_1(t)\sigma^2(1-\rho^2)$.

Let us now focus on the solution of the second BSDE, $\tilde X^{\text{mv}}$. Following \cite{cerny2008}, if the value of the contingent claim $\tilde H$ is given by $g(Y^2_T, \tilde S_T)$, with $g$ bounded and continuous function, we can consider a function $f\in \mathcal{C}^{1,2,2}$ such that 
\begin{equation}
	\label{definitionf}
	\tilde X^{\text{CK}}_t =  \mathbb{E}^{\mathbb{Q}_{{\text{mv}}}}\left[\left. \tilde H\right| \mathcal{F}_{t}\right] = f(T-t, Y^2_t, \tilde S_t),
\end{equation}
where $\tilde X^{\text{CK}}$ is the mean-variance hedging contingent claim price in the {\v{C}}erný  and Kallsen approach.
From \cite[Proposition 4.1]{cerny2008}, the function $f$ is the unique classical solution of the PDE
\begin{equation}
	\label{PDEf}
	\begin{cases}
	-f_1 + \frac{1}{2} y \left(\sigma^2 f_{22}+2\rho\sigma s f_{23} + s^2f_{33}\right)+ (\kappa \theta -(\kappa +\rho\sigma\mu -\chi_1(t)\sigma^2(1-\rho^2))y) f_2 =0,\\
	f(0, y, s) = g(y,s),
\end{cases}
\end{equation}
where $f_i := \frac{\partial f}{\partial x_i}$ and $f_{ij} := \frac{\partial^2 f}{\partial x_i \partial x_j}$, for $i, j\in \{1,2,3\}$. We point out that the PDE \eqref{PDEf} has time-dependent coefficients, hence we need to adapt the classical methodology for solving the Heston PDE to that. We consider the approach of \cite{in2010adi}, where a finite difference method is applied in the space dimension and a splitting scheme of the Alternative Direction Implicit (ADI) type is applied in the time dimension. We refer to Appendix \ref{sec:PDEsolver} for a brief presentation of the methodology, or to \cite{in2010adi} directly for a more detailed description. We denote by $\widehat{f}^{n}$ the approximated value of $f$ in $(t_n, \overline{Y}^2_n, \overline{S}_n )$ and by $\widehat{f}^n_{i}$ its derivatives, for $n=1, \dots, N$ and $i\in \{1,2,3\}$. Let $\widehat X^{\text{CK}}_n$ be the approximated $\tilde X_t^{\text{CK}}$ in $t=t_n$.

	\begin{oss}
		In the discussion above, we consider a  bounded and continuous payoff function $g$. This covers the case of European style put options, whose payoff function is of the form $g(x) = \max(K-x, 0)$ in the one-dimensional setting. However, it does not cover the case of European style call options with payoff function $g(x) = \max(x-K, 0)$, which is the case treated in the numerical experiments in Section \ref{sec:numerics}. However, due to the put-call parity, we can safely use also the call payoff function.
	\end{oss}

\begin{oss}
	\label{oss:interpolate}
By solving the PDE as described in Appendix \ref{sec:PDEsolver}, for each time step $n$  we find the values of the function $f$ on a grid for the bounded domain $[0, \mathbb{S}]\times [0, \mathbb{Y}]$, for a certain $ \mathbb{S}$ and a certain $ \mathbb{Y}$. Let $M_s\ge 1$ and $M_y\ge 1$ be the number of points, respectively, in the $s$- and $y$-direction, which we denote with $s_k$, $k=1, \dots, M_s$, and $y_{\ell}$, $\ell=1, \dots, M_y$. We then find $\widehat{f}^{(n, \ell, k)}\approx f(t_n, y_{\ell}, s_{k})$ as the approximation of $f$ in $(t_n, y_{\ell}, s_{k})$, for $n=1, \dots, N$.

If we want to compare the $\widehat X^{\text{mv}}$ found with the deep BSDE solver and the corresponding $\widehat X^{\text{CK}}$ from the {\v{C}}erný  and Kallsen approach for a specific realization of the forward process $(\tilde S, Y^2)$, we then need first to interpolate $\widehat{f}^{(n, \ell, k)}$ to find the right values for comparison. Namely, given a realization $(\overline S_n, \overline Y_n^2)$, for $n=1,\dots, N$,  for each time step $n$ we need to find where the simulated $\overline S_n$ and $\overline Y_n^2$ are located in the mesh, this means to find $k\in\{1, \dots, M_s-1\}$ and $\ell\in\{1, \dots, M_y-1\}$ such that $\overline S_n \in ({s_{k}}, {s_{k+1}}]$ and $\overline Y_n^2 \in ({y_{\ell}}, {y_{\ell+1}}]$. We then denote with $\widehat{f}^{n}$ the value that we obtain with a bi-linear interpolation from $\widehat{f}^{(n, \ell, k)}$, $\widehat{f}^{(n, \ell+1, k)}$, $\widehat{f}^{(n, \ell, k+1)}$ and $\widehat{f}^{(n, \ell+1, k+1)}$. Similarly if we want to compute the derivatives $\widehat{f}^n_{i}$. 
\end{oss}


The approximated optimal trading strategy and the corresponding value process can then be recursively computed starting from \cite[Equations (3.3) and (4.1)]{cerny2008} as follows:
\begin{equation}
\begin{cases}
\widehat{V}_0^{\text{CK}} = y,\\
\widehat{\xi}^{\text{CK}}_n= \widehat{f}^n_3 +\frac{\rho\sigma \widehat{f}^n_2}{\overline{{S}}_n}  +\frac{\mu + \rho\sigma \chi_{1}(t_n)}{\overline{{S}}_n}\left(\widehat{f}^n-  \widehat{V}_n^{\text{CK}} \right), \\
\widehat{\psi}^{\text{CK}}_n=  \widehat{V}_{n}^{\text{CK}} - \widehat{\xi}^{\text{CK}}_n\overline{{S}}_n,\\
\widehat{V}_{n+1}^{\text{CK}} = \widehat{V}_n^{\text{CK}} + \widehat{\xi}^{\text{CK}}_n\left(\overline{{S}}_{n+1} - \overline{{S}}_{n}\right),
\end{cases}\hfill{\text{for $n=0,\ldots, N-1$}}.
\end{equation}

\subsection{The local risk minimization of Heath, Platen and Schweizer \cite{heath2001numerical}}
\label{localrisk1dim}
In the local risk minimization we solve the BSDE \eqref{eq:localriskX} and we find (an approximation of) the process $\tilde X^{\text{lr}}$. If the dimension is $d=1$, we can alternatively follow the  approach of \cite{heath2001numerical}.

As in the mean-variance setting, we assume that the value of the contingent claim $\tilde H$ is given by a function $g(Y^2_T, \tilde S_T)$. Then, there exists a function $f\in \mathcal{C}^{1,2,2}$ such that 
\begin{equation}
	\label{definitionfLR}
	\tilde X^{\text{HPS}}_t =  \mathbb{E}^{\mathbb{Q}_{\text{lr}}}\left[\left. \tilde H\right| \mathcal{F}_{t}\right] = f(T-t, Y^2_t, \tilde S_t),
\end{equation}
where $\tilde X^{\text{HPS}}$ is the local risk minimizing contingent claim price in the {Heath, Platen and Schweizer approach.
By adapting the results of \cite[Section 2]{heath2001numerical}, the function $f$ is the unique classical solution of the PDE
\begin{equation}
	\label{PDEfLR}
	\begin{cases}
		 -f_1 + \frac{1}{2} y \left(\sigma^2 f_{22}+2\rho\sigma s f_{23} + s^2f_{33}\right)+ (\kappa \theta -(\kappa +\rho\sigma\mu )y) f_2 =0,\\
		 f(0, y, s) = g(y,s),
	\end{cases}
\end{equation}
where $f_i := \frac{\partial f}{\partial x_i}$ and $f_{ij} := \frac{\partial^2 f}{\partial x_i \partial x_j}$, for $i, j\in \{1,2,3\}$.

Once solved \eqref{PDEfLR} with the PDE solver described in Appendix \ref{sec:PDEsolver}, we denote by $\widehat{f}^{n}$ the approximated value of $f$ in $(t_n, \overline{Y}^2_n, \overline{S}_n )$ and by $\widehat{f}^n_{i}$ its derivatives, for $n=1, \dots, N$, and $i\in \{1,2,3\}$. Of course, the same arguments of Remark \ref{oss:interpolate} apply here. The approximated optimal trading strategy can then be  computed from \cite[Equation (2.3)]{heath2001numerical}:
\begin{equation}
	\begin{cases}
		\widehat \xi^{\text{HPS}}_n =  \widehat{f}^n_3 +\frac{\rho\sigma \widehat{f}^n_2}{\overline{{S}}_n}, \\
		\widehat \psi^{\text{HPS}}_n = \widehat{f}^n - \overline{S}_n \widehat \xi^{\text{HPS}}_n,
	\end{cases}\hfill{\text{for $n=0,\ldots, N-1$}}.
\end{equation}

\section{Proof of Proposition \ref{prop:Lim200433}}
\label{Proofprop}
To ease notation, we omit the superscript $^{\mathrm{mv}}$ in the computations below. Since we work with discounted quantities, for the sake of clarity we adapt the steps of \cite[Proposition 3.3]{lim2004}. The dynamics of the discounted wealth satisfy
\begin{align*}
\dd \tilde{V}_t=\xi^{\top}_t \operatorname{diag} (\tilde{S}_t)\left(\mu_{t}-r_t\mathbbm{1}\right) \dd t+\xi^{\top}_t \operatorname{diag}(\tilde{S}_t) \sigma_{t} \dd W_{t}.
\end{align*}
We also have
\begin{align*}
\dd\tilde{X}_t=\frac{1}{S_{t}^{0}}\left(\left(\phi_{t}^{\top} \eta_{1, t}-\frac{\Lambda_{2, t}^{\top} \eta_{2, t}}{L_{t}}\right) \dd t+\eta_{1, t}^{\top} \dd W_{t}+\eta_{2, t}^{\top} \dd B_{t}\right),
\end{align*}
hence
\begin{align*}
\begin{aligned} \dd(\tilde{X}-\tilde{V})_{t}&=\left(\frac{1}{S^0_{t}}\left(\phi_{t}^{\top} \eta_{1, t}-\frac{\Lambda_{2,t}^{\top} \eta_{2, t}}{L_{t}}\right)-\xi_t^{\top} \operatorname{diag}(\tilde{S}_t)\left(\mu_{t}-r_ t\mathbbm{1}\right)\right) \dd t\\ &+\left(\frac{1}{S_{t}^{0}} \eta_{1, t}^{\top}-\xi_t^{\top}  \operatorname {diag}(\tilde{S}_t) \sigma_{t}\right) \dd W_{t}+\frac{1}{S_{t}^{0}} \eta_{2, t}^{\top} \dd B_{t}. \end{aligned}
\end{align*}
Now,
\begin{align*}
\dd\left(\tilde{X}-\tilde{V}\right)^{2}_t=2\left(\tilde{X}_{t}-\tilde{V}_{t}\right) \dd(\tilde{X}-\tilde{V})_{t}+ \dd\langle\tilde{X}-\tilde{V}\rangle_{t}
\end{align*}
where
\begin{align*}
\begin{aligned} \dd\langle\tilde{X}-\tilde{V}\rangle_t=\left(\frac{1}{\left(S_{t}^{0}\right)^{2}} \eta_{1, t}^{\top} \eta_{1, t}\right.&+\left(\xi_t^{\top} \operatorname{diag}(\tilde{S}_t) \sigma_t\right)\left(\xi_t^{\top} \operatorname{diag}(\tilde{S}_t){\sigma_t}\right)^{\top} \\ &\left.-2 \frac{1}{S^0_{t}} \eta_{1, t}^{\top}\left(\xi_t^{\top} \operatorname{diag}(\tilde{S}_t) \sigma_{t}\right)+\frac{1}{\left(S_{t}^{0}\right)^{2}} \eta_{2, t}^{\top} \eta_{2, t}\right) \dd t. \end{aligned}
\end{align*}
Regrouping terms we get
\begin{align*}
\begin{aligned}\dd\left(\tilde{X}-\tilde{V}\right)^{2}_t&=\left[ 2\left(\tilde{X}_{t}-\tilde{V}_{t}\right)\frac{1}{S_{t}^{0}} \left[\phi_{t}^{\top} \eta_{1, t}-\frac{\Lambda_{2, t}^{\top} \eta_{2, t}}{L_{t}}\right]+\frac{1}{\left(S_{t}^{0}\right)^{2}} \eta_{1, t}^{\top} \eta_{1, t}+\frac{1}{\left(S_{t}^{0}\right)^{2}} \eta_{2, t}^{\top} \eta_{2, t}\right. \\ &+\xi_t^{\top} \operatorname{diag}(\tilde{S}_t) \sigma_{t} \sigma_{t}^{\top} \operatorname{diag}(\tilde{S}_t) \xi_t \\ &\left.-2 \xi_{t}^{\top} \operatorname{diag}(\tilde{S}_t)\left[\left(\mu_{t}-r_t \mathbbm{1}\right)\left(\tilde{X}_{t}-\tilde{V}_{t}\right)+\sigma_{t} \frac{\eta_{1, t}}{S_{t}^{0}}\right]\right] \dd t \\ &+2\left(\tilde{X}_{t}-\tilde{V}_{t}\right)\left(\frac{1}{S_{t}^{0}} \eta_{1, t}^{\top}-\xi_t^{\top} \operatorname{diag}(\tilde{S}_t) \sigma_{t}\right) \dd W_{t}+2\left(\tilde{X}_{t}-\tilde{V}_{t}\right) \frac{1}{S_{t}^{0}} \eta_{2, t}^{\top} \dd B_{t}. \end{aligned}
\end{align*}
We now apply again the It\^o formula:
\begin{align*}
\dd L_{t}\left(\tilde{X}_{t}-\tilde{V}_{t}\right)^{2}=\left(\tilde{X}_t-\tilde{V}_{t}\right)^{2} \dd L_{t}+L_{t} \dd(\tilde{X}-\tilde{V})_{t}^{2}+\dd\left\langle L,\left(\tilde{X}-\tilde{V}\right)\right\rangle_{t},
\end{align*}
where the covariation term is given by
\begin{align*}
\dd\left\langle L,\left(\tilde{X}-\tilde{V}\right)\right\rangle_{t}=\left(2\left(\tilde{X}_{t}-\tilde{V}_{t}\right)\left(\frac{1}{S_{t}^{0}} \eta_{1, t}^{\top}-\xi_t^{\top} \operatorname{diag}(\tilde{S}_t) \sigma_{t}\right) \Lambda_{1, t}+2\left(\tilde{X}_{t}-\tilde{V}_{t}\right) \frac{1}{S_{t}^{0}} \eta_{2, t}^{\top} \Lambda_{2, t} \right)\dd t,
\end{align*}
hence
\begin{align*}
\begin{aligned}
\dd L_{t}\left(\tilde{X}_{t}-\tilde{V}_{t}\right)^{2}&=\left(\tilde{X}_{t}-\tilde{V}_{t}\right)^{2}\left[\left(\left|\phi_{t}\right|^{2} L_{t}+2 \phi_{t}^{\top} \Lambda_{1, t}+\frac{\Lambda_{1, t}^{\top} \Lambda_{1, t}}{L_{t}}\right) \dd t+\Lambda_{1, t}^{\top} \dd W_{t}+\Lambda_{2, t}^{\top} \dd B_{t}\right]\\
&+L_t\left\{\left[ 2\left(\tilde{X}_{t}-\tilde{V}_{t}\right)\frac{1}{S_{t}^{0}}\left[ \phi_{t}^{\top} \eta_{1, t}-\frac{\Lambda_{2, t}^{\top} \eta_{2, t}}{L_{t}}\right]+\frac{1}{\left(S_{t}^{0}\right)^{2}} \eta_{1, t}^{\top} \eta_{1, t}+\frac{1}{\left(S_{t}^{0}\right)^{2}} \eta_{2, t}^{\top} \eta_{2, t}\right.\right. \\ &+\xi_t^{\top} \operatorname{diag}(\tilde{S}_t) \sigma_{t} \sigma_{t}^{\top} \operatorname{diag}(\tilde{S}_t) \xi_t \\ &\left.-2 \xi_{t}^{\top} \operatorname{diag}(\tilde{S}_t)\left[\left(\mu_{t}-r_t \mathbbm{1}\right)\left(\tilde{X}_{t}-\tilde{V}_{t}\right)+\sigma_{t} \frac{\eta_{1, t}}{S_{t}^{0}}\right] \right]\dd t \\ &+\left.2\left(\tilde{X}_{t}-\tilde{V}_{t}\right)\left(\frac{1}{S_{t}^{0}} \eta_{1, t}^{\top}-\xi_t^{\top} \operatorname{diag}(\tilde{S}_t) \sigma_{t}\right) \dd W_{t}+2\left(\tilde{X}_{t}-\tilde{V}_{t}\right) \frac{1}{S_{t}^{0}} \eta_{2, t}^{\top} \dd B_{t} \right\}\\
&+2\left(\tilde{X}_{t}-\tilde{V}_{t}\right)\left(\frac{1}{S_{t}^{0}} \eta_{1, t}^{\top}-\xi_t^{\top} \operatorname{diag}(\tilde{S}_t) \sigma_{t}\right) \Lambda_{1, t}+2\left(\tilde{X}_{t}-\tilde{V}_{t}\right) \frac{1}{S_{t}^{0}} \eta_{2, t}^{\top} \Lambda_{2, t} \dd t.
\end{aligned}
\end{align*}
Some lengthy computations show that the drift term can be expressed in the following form:
\begin{align*}
\dd L_{t}\left(\tilde{X}_{t}-\tilde{V}_{t}\right)^{2}&=\left\{L_{t}\left[\operatorname{diag}(\tilde{S}_{t}) \xi_{t}-\left(\sigma_{t} \sigma_{t}^{\top}\right)^{-1}\left(\left[\left(\mu_{t}-r_t\mathbbm{1}\right)^{\top}+\frac{\sigma_{t} \Lambda_{1, t}}{L_{t}}\right]\left(\tilde{X}_{t}-\tilde{V}_{t}\right)+\sigma_{t} \frac{\eta_{1, t}}{S_{t}^{0}}\right)\right]^{\top}\right.\\
&\times\sigma_t\sigma_t^\top\left[\operatorname{diag}(\tilde{S}_{t}) \xi_{t}-\left(\sigma_{t} \sigma_{t}^{\top}\right)^{-1}\left(\left[\left(\mu_{t}-r_t\mathbbm{1}\right)^{\top}+\frac{\sigma_{t} \Lambda_{1, t}}{L_{t}}\right]\left(\tilde{X}_{t}-\tilde{V}_{t}\right)+\sigma_{t} \frac{\eta_{1, t}}{S_{t}^{0}}\right)\right]\\
&\left.+\frac{L_t}{\left(S_{t}^{0}\right)^{2}} \eta_{2, t}^{\top} \eta_{2, t}\right\}\dd t\\
&+\left(2L_{t}\left(\tilde{X}_{t}-\tilde{V}_{t}\right)\left(\frac{1}{S_{t}^{0}} \eta_{1, t}^{\top}-\xi_t^{\top} \operatorname{diag}(\tilde{S}_t) \sigma_{t}\right) +\left(\tilde{X}_{t}-\tilde{V}_{t}\right)^2\Lambda_{1, t}^{\top} \right)\dd W_{t}\\
&+\left(2L_{t}\left(\tilde{X}_{t}-\tilde{V}_{t}\right) \frac{1}{S_{t}^{0}} \eta_{2, t}^{\top} +\left(\tilde{X}_{t}-\tilde{V}_{t}\right)^2\Lambda_{2, t}^{\top} \right)\dd B_{t}.
\end{align*}
Let us now introduce a localizing sequence $(\tau_k)_{k\in\mathbb{N}}$ with $\tau_k\nearrow \infty$ as $k\to\infty$. We fix $t\in[0,T]$, integrate on both sides and take expectations:
\begin{align}
\label{eq:stoppedIto}
\begin{aligned}
\mathbb{E}&\left[L_{t\wedge \tau_k}\left(\tilde{X}_{t\wedge \tau_k}-\tilde{V}_{t\wedge \tau_k}\right)^{2}\right]=L_{0}\left(\tilde{X}_{0}-\tilde{V}_{0}\right)^{2}\\
&+\mathbb{E}\left[\int_0^{t\wedge \tau_k}\left\{L_{s}\left[\operatorname{diag}(\tilde{S}_{s}) \xi_{s}-\left(\sigma_{s} \sigma_{s}^{\top}\right)^{-1}\left(\left[\left(\mu_{s}-r_{s}\mathbbm{1}\right)^{\top}+\frac{\sigma_{s} \Lambda_{1, s}}{L_{s}}\right]\left(\tilde{X}_{s}-\tilde{V}_{s}\right)+\sigma_{s} \frac{\eta_{1, s}}{S_{s}^{0}}\right)\right]^{\top}\right.\right.\\
&\quad\quad\quad\quad\times\sigma_s\sigma_s^\top\left[\operatorname{diag}(\tilde{S}_{s}) \xi_{s}-\left(\sigma_{s} \sigma_{s}^{\top}\right)^{-1}\left(\left[\left(\mu_{s}-r_{s}\mathbbm{1}\right)^{\top}+\frac{\sigma_{s} \Lambda_{1, s}}{L_{s}}\right]\left(\tilde{X}_{s}-\tilde{V}_{s}\right)+\sigma_{s} \frac{\eta_{1, s}}{S_{s}^{0}}\right)\right]\\
&\quad\quad\quad\quad\left.\left.+\frac{L_s}{\left(S_{s}^{0}\right)^{2}} \eta_{2, s}^{\top} \eta_{2, s}\right\}\dd s\right].
\end{aligned}
\end{align}
We let $k\to\infty$ and set $t=T$. The right-hand side converges by monotone convergence. From to the definition of admissibility that we currently assume, the left-hand side converges to $\mathbb{E}\left[\left(\tilde{X}_{T}-\tilde{V}_{T}\right)^{2}\right]$ since we also have $L_{T}=1$.
The expression is minimized by choosing
\begin{align*}
\xi^\star_{t}=\operatorname{diag}(\tilde{S}_{t})^{-1}\left(\sigma_{t} \sigma_{t}^{\top}\right)^{-1}\left(\left[\left(\mu_{t}-r_t\mathbbm{1}\right)^{\top}+\frac{\sigma_{t} \Lambda_{1, t}}{L_{t}}\right]\left(\tilde{X}_{t}-\tilde{V}_{t}\right)+\sigma_{t} \frac{\eta_{1, t}}{S_{t}^{0}}\right).
\end{align*}
It remains to show that $\xi^\star$ is admissible. For this aim, we substitute $\xi^\star$ into \eqref{eq:stoppedIto} and we obtain
\begin{align*}
\mathbb{E}&\left[L_{t\wedge \tau_k}\left(\tilde{X}_{t\wedge \tau_k}-\tilde{V}_{t\wedge \tau_k}\right)^{2}\right]=L_{0}\left(\tilde{X}_{0}-\tilde{V}_{0}\right)^{2}+\mathbb{E}\left[\int_0^{t\wedge \tau_k}\frac{L_s}{\left(S_{s}^{0}\right)^{2}} \eta_{2, s}^{\top} \eta_{2, s}\dd s\right].
\end{align*}
We know that $L$ is bounded in $(0,1]$. Also, since the interest rate $r$ is bounded, so is the denominator $\left(S_{\cdot}^{0}\right)^{2}$. We also know that $\eta_2\in L^2_{\FF}([0,T];\R^m)$, hence, there exists some $\mathcal{C}\in\mathbb{R}_+$ such that for any  $t$ and $k$ one has
\begin{align*}
\mathbb{E}\left[\int_0^{t\wedge \tau_k}\frac{L_s}{\left(S_{s}^{0}\right)^{2}} \eta_{2, s}^{\top} \eta_{2, s}\dd s\right]\leq \mathcal{C}\, \mathbb{E}\left[\int_0^{T} \eta_{2, s}^{\top} \eta_{2, s}\dd s\right]<\infty,
\end{align*}
hence the quantity $L_{t\wedge \tau_k}\left(\tilde{X}_{t\wedge \tau_k}-\tilde{V}_{t\wedge \tau_k}\right)^{2}$ is uniformly integrable for any localizing sequence $(\tau_k)_{k\in\mathbb{N}}$, and $\xi^\star$ is admissible.

\section{The Heston-PDE solver}
\label{sec:PDEsolver}
We briefly present in this section the numerical scheme that we use for solving the two Heston PDEs arising, respectively, in the mean-variance hedging, Section \ref{meanvariance1dim}, and in the local risk hedging, Section \ref{localrisk1dim}, for the one-dimensional case. The idea is to first discretize in the spatial variables, and then to apply a splitting scheme of the Alternating Direction Implicit (ADI) type.  In particular, we follow the approach proposed by \cite{in2010adi}, with the only difference that we allow the coefficient of the derivative w.r.t. the price to depend on time. Since the rest of the algorithm is unchanged, we shall omit details such as the meshes and the finite difference schemes applied. These can be found in \cite{in2010adi}.

Let $f(T-t, y, s)$ denote the price of a European-type option with maturity $T$ and payoff function $g:\R_+\times \R_+\to \R_+$, whose underlying has price at time $t$ equal to $s$ and volatility equal to $y$. Under the Heston model \eqref{HestonModel}, the price $f$ satisfies the parabolic PDE
\begin{equation}
	\label{PDE}
	\begin{cases}
		-f_1 + \frac{1}{2}y\left( \sigma^2 f_{22} +  2\rho\sigma s f_{23} + s^2f_{33}\right) + {\kappa}_t\left({\theta}_t-y\right) f_2 =0,\\
		f(0, y, s) = g(y,s),\\
		f(t, y, 0) = 0, \qquad \mbox{for } 0\le t \le T,
	\end{cases}
\end{equation}
where $f_i := \frac{\partial f}{\partial x_i}$ and $f_{ij} := \frac{\partial ^2 f}{\partial x_i\partial x_j}$, with $i,j \in \{1,2,3\}$. Here ${\kappa}:\R_+ \to \R_+$ and ${\theta}:\R_+ \to \R_+$. We point out that equation \eqref{PDE} reduces to the classical Heston PDE when ${\kappa}_t \equiv \kappa$ and ${\theta}_t \equiv \theta$ for every $0\le t \le T$, with $\kappa$ and $\theta$ as in equation \eqref{HestonModel}. Moreover, for the mean-variance hedging in Section \ref{meanvariance1dim} one has that
\begin{equation}
	\begin{cases}
		{\kappa}_t = \kappa + \rho\sigma\mu -\chi_1(t)\sigma^2(1-\rho^2), \\
		{\theta}_t = \frac{\kappa \theta}{{\kappa}_t},
	\end{cases}
\end{equation}
and for the local risk hedging in Section \ref{localrisk1dim} one has that
\begin{equation}
	\begin{cases}
		{\kappa}_t  = \kappa + \rho\sigma\mu, \\
		{\theta}_t  = \frac{\kappa \theta}{{\kappa}_t}.
	\end{cases}
\end{equation}

For numerical purposes, we restrict the spatial domain to a bounded set $[0, \mathbb{S}]\times [0, \mathbb{Y}]$. In particular, we shall consider $\mathbb{S} = 8K$, with $K$ the strike price, and $\mathbb{Y} = 5$. This is followed by two additional conditions at $s=\mathbb{S}$ and at $y = \mathbb{Y}$:
\begin{equation}
	\label{cond}
	f_3(t, y, \mathbb{S}) = 1 \quad \mbox{ and } \quad 	f(t, \mathbb{Y}, s) = s, \qquad \mbox{for } 0\le t \le T.
\end{equation}
We then apply non-uniform meshes in both the $s$- and $y$-direction, so that relatively many mesh points lie in the neighborhood of $s=K$ and $y=0$, respectively (see \cite[Section 2.2]{in2010adi} for details). We denote with $M_s\ge 1$ and $M_y\ge 1$ the number of points in the $s$- and $y$-direction. 

The finite difference discretization yields an initial value problem for a large system of ordinary differential equation of the form
\begin{equation}
	\label{ODE}
	\begin{cases}
		F'_t = A_tF_t + b_t, \qquad \mbox{for } 0\le t \le T,\\
		F_0 = f_0.
	\end{cases}
\end{equation}
Here $A_t \in \R^{\mathfrak{m}\times \mathfrak{m}}$, $b_t\in\R^\mathfrak{m}$ and $f_0\in\R^\mathfrak{m}$, for $\mathfrak{m} := M_sM_y$. In particular, $f_0$ is obtained from the initial condition in \eqref{PDE} and the vector function $b_t$ depends on the boundary conditions in \eqref{PDE} and \eqref{cond}. We solve \eqref{ODE} by applying the modified Craig-Sneyd scheme as described in \cite[Section 2.3]{in2010adi}, which we adapt so to allow the time-dependent coefficients ${\kappa}$ and ${\theta}$.

\end{document}